\newcommand{\gathen}[1]{#1}
\newcommand{\M}{\ensuremath{\mathsf{M}}}
\def\lg {\ensuremath{\mathrm{lg}}}
\def\pol {\ensuremath{\mathsf{pol}}}
\def\rev {\ensuremath{\mathsf{rev}}}
\def\red {\ensuremath{\mathsf{red}}}
\def\crt {\ensuremath{\mathsf{crt}}}
\def\comb {\ensuremath{\mathsf{comb}}}
\def\NN {\ensuremath{\mathbb{N}}}
\def\BB {\ensuremath{\mathbb{B}}}
\def\ZZ {\ensuremath{\mathbb{Z}}}
\def\RR {\ensuremath{\mathbb{R}}}
\def\A {\ensuremath{\mathbb{A}}}
\def\B {\ensuremath{\mathbb{B}}}
\def\J {\ensuremath{\mathbb{J}}}
\def\CC {\ensuremath{\mathbb{M}}}
\def\V {\ensuremath{\mathbb{V}}}
\def\W {\ensuremath{\mathbb{W}}}
\def\D {\ensuremath{\mathbb{D}}}
\def\U {\ensuremath{\mathbb{U}}}
\def\Y {\ensuremath{\mathbb{Y}}}
\def\I {\ensuremath{\mathbb{I}}}
\def\sMM {\ensuremath{\mathscr{C}_{\rm mul}}}
\def\sMS {\ensuremath{\mathscr{C}_{\rm solve}}}
\def\sMI {\ensuremath{\mathscr{C}_{\rm inv}}}
\def\sMMa {\ensuremath{\mathscr{M}_{\rm mat}}}
\def\mB {\ensuremath{\mathsf{B}}}
\def\mA {\ensuremath{\mathsf{A}}}
\def\mE {\ensuremath{\mathsf{E}}}
\def\mL {\ensuremath{\mathsf{L}}}
\def\mR {\ensuremath{\mathsf{R}}}
\def\mP {\ensuremath{\mathsf{P}}}
\def\mQ {\ensuremath{\mathsf{Q}}}
\def\mM {\ensuremath{\mathsf{M}}}
\def\mN {\ensuremath{\mathsf{N}}}
\def\mG {\ensuremath{\mathsf{G}}}
\def\mH {\ensuremath{\mathsf{H}}}
\def\mC {\ensuremath{\mathsf{C}}}
\def\mD {\ensuremath{\mathsf{D}}}
\def\sC {\ensuremath{\mathscr{C}}}
\def\sD {\ensuremath{\mathscr{D}}}
\def\sM {\ensuremath{\mathscr{M}}}
\def\mY {\ensuremath{\mathsf{Y}}}
\def\mZ {\ensuremath{\mathsf{Z}}}
\def\mS {\ensuremath{\mathsf{S}}}
\def\mU {\ensuremath{\mathsf{U}}}
\def\mV {\ensuremath{\mathsf{V}}}
\def\mW {\ensuremath{\mathsf{W}}}
\def\x {\ensuremath{\mathsf{x}}}
\def\y {\ensuremath{\mathsf{y}}}
\def\z {\ensuremath{\mathsf{z}}}
\def\t {\ensuremath{\mathsf{t}}}
\def\e {\ensuremath{\mathsf{e}}}
\def\g {\ensuremath{\mathsf{g}}}
\def\h {\ensuremath{\mathsf{h}}}
\def\a {\ensuremath{\mathsf{a}}}
\def\m {\ensuremath{\mathsf{m}}}
\def\n {\ensuremath{\mathsf{n}}}
\def\b {\ensuremath{\mathsf{b}}}
\def\r {\ensuremath{\mathsf{r}}}
\def\s {\ensuremath{\mathsf{s}}}
\def\u {\ensuremath{\mathsf{u}}}
\def\v {\ensuremath{\mathsf{v}}}
\def\w {\ensuremath{\mathsf{w}}}
\def\c {\ensuremath{\mathsf{c}}}
\def\F {\ensuremath{\mathbb{F}}}
\def\XX {\ensuremath{\mathbb{X}}}
\def\K {\ensuremath{\mathbb{K}}}
\def\vP {\ensuremath{\mathbf{P}}}
\def\vQ {\ensuremath{\mathbf{Q}}}
\def\mcL {\ensuremath{\mathcal{L}}}
\def\bdiv {\ensuremath{{\rm~div~}}}
\def\largestrec{{\sf largest\_rec}}
\def\largest{{\sf largest}}
\def\lpinv{{\sf lp\_inv}}
\renewcommand{\le}{\leqslant}
\renewcommand{\ge}{\geqslant}
\newcommand{\TheTitle}{On matrices with displacement structure:\\generalized operators and faster algorithms} 
\newcommand{\TheAuthors}{A. Bostan, C.-P. Jeannerod, C. Mouilleron, and \'E. Schost}
\title{{\TheTitle}} 
\author{
  A.~Bostan\thanks{Inria, France
  (\email{alin.bostan@inria.fr}).}
  \and
  C.-P.~Jeannerod\thanks{Inria, Universit\'e de Lyon,
  laboratoire LIP (CNRS, ENSL, Inria, UCBL),  France
  (\email{claude-pierre.jeannerod@inria.fr}).}
  \and
  C.~Mouilleron\thanks{ENSIIE, \'Evry, France
  (\email{christophe.mouilleron@ens-lyon.org}).}
  Part of this work was done when C.~Mouilleron was a member of laboratoire LIP (CNRS, ENSL, Inria, UCBL).
  \and
  \'E.~Schost\thanks{David R. Cheriton School of Computer Science, 
  University of Waterloo, ON, Canada
  (\email{eschost@uwaterloo.ca}).}
}
\DeclareMathOperator{\rank}{rank}
\DeclareMathOperator{\diag}{diag}
\begin{document}

\maketitle

\begin{abstract}
For matrices with displacement structure,
basic operations like multiplication,
inversion, and linear system solving can all be expressed in terms of the following task: 
evaluate the product $\mA\mB$, where $\mA$ is a
structured $n \times n$ matrix of displacement rank $\alpha$, and
$\mB$ is an arbitrary $n\times\alpha$ matrix. Given~$\mB$ and a
so-called {\em generator} of $\mA$, this product is classically
computed with a cost ranging from $O(\alpha^2 \sM(n))$ to $O(\alpha^2
\sM(n)\log(n))$ arithmetic operations, depending on the type of structure
of $\mA$; here, $\sM$ is a cost function for polynomial multiplication.
In this paper, we first generalize
classical displacement operators, based on block diagonal matrices with
companion diagonal blocks, and then design fast algorithms to perform
the task above for this extended class of structured matrices.  The
cost of these algorithms ranges from $O(\alpha^{\omega-1} \sM(n))$ to
$O(\alpha^{\omega-1} \sM(n)\log(n))$, 
with $\omega$ such that two $n \times n$ matrices over a field can be multiplied
using $O(n^\omega)$ field operations.
By combining this result with classical randomized regularization techniques,
we obtain faster Las Vegas algorithms for 
structured inversion and linear system solving.
\end{abstract}

\begin{keywords}
structured linear algebra, matrix multiplication, computational complexity
\end{keywords}

\begin{AMS}
65F05, 68Q25  
\end{AMS}

\section{Introduction}

Exploiting the structure of data is key to develop fast
algorithms and, in the context of linear algebra, this principle is at the
heart of algorithms for {\em displacement structured matrices}. 
These algorithms can speed up for instance the inversion of a given matrix whenever
this matrix has a structure close to that of a Toeplitz, Hankel,
Vandermonde, or Cauchy matrix. 
The idea is to represent structured matrices succinctly by means of their
{\em generators} with respect to suitable {\em displacement
  operators}, and to operate on this succinct data structure.

\smallskip\noindent{\bf Displacement operators.}
Let $\F$ be a field.  To measure the extent to which a matrix
$\mA \in \F^{m\times n}$ possesses some structure, it is customary to
use its {\em displacement rank}, that is, the rank of its image
through a {\em displacement operator}~\cite{KaKuMo79}.
There exist two broad classes of displacement operators: 
{\em Sylvester operators}, of the form
$$\nabla_{\mM,\mN}: \mA \in\F^{m\times n} \mapsto \mM\mA-\mA\mN
\in\F^{m\times n},$$ and {\em Stein operators}, of the
form $$\Delta_{\mM,\mN}: \mA \in\F^{m\times n} \mapsto \mA-\mM\mA\mN
\in\F^{m\times n};$$ in both cases, $\mM$ and $\mN$ are fixed
matrices in $\F^{m\times m}$ and $\F^{n\times n}$, respectively.  For
any such operator, say $\mcL$, the rank of $\mcL(\mA)$ is called
the $\mcL$-{\em displacement rank} of $\mA$, or simply its displacement
rank if $\mcL$ is clear from the context.  Loosely speaking, the matrix 
$\mA$ is called {\em structured} (with respect to the operator $\mcL$) if its
$\mcL$-displacement rank is small compared to its sizes $m$ and $n$.

We say that a matrix pair $(\mG,\mH)$ in $\F^{m\times \alpha}\times
\F^{n \times \alpha}$ is an $\mcL$-{\em generator of length} $\alpha$
of $\mA\in\F^{m\times n}$ if it satisfies $\mcL(A) = \mG\mH^t$, with
$\mH^t$ the transpose of $\mH$.  Again, if $\mcL$ is clear from the context,
we shall simply say {\em generator of length} $\alpha$.  Provided
$\mcL$ is invertible and when $\alpha$ is small, such a generator can
play the role of a succinct data structure to represent the matrix
$\mA$.  The smallest possible value for $\alpha$ is the rank of
$\mcL(\mA)$.

If $\mA$ is invertible
and structured with respect to $\mcL$, then its inverse $\mA^{-1}$ is
structured with respect to the operator~$\mcL'$ 
obtained by swapping $\mM$ and $\mN$:
\begin{equation}\label{eq:mcLP}
\mcL=\nabla_{\mM,\mN} \,\,\implies\,\, \mcL'=\nabla_{\mN,\mM}, \qquad 
  \mcL=\Delta_{\mM,\mN} \,\,\implies\,\, \mcL'=\Delta_{\mN,\mM}.
\end{equation}
More precisely, the above claim says that the ranks of $\mcL(\mA)$ and
$\mcL'(\mA^{-1})$ are the same.
(See~\cite[Theorem~1.5.3]{Pan01} and,
for the case not handled there, see~\cite[p.~771]{KaKuMo79};
for completeness, we give a proof in Appendix~\ref{app:proof-inverse-is-structured}.)

\smallskip\noindent{\bf Some classical operators.}
Consider first Toeplitz-like matrices.  For
$\varphi$ in $\F$, it is customary to define the $m\times m$ cyclic down-shift matrix
\begin{equation} \label{eq:Z_m_phi}
\ZZ_{m,\varphi}=
\begin{bmatrix}
 & & & \varphi\\
1 &  \\
&\ddots & \\
& & 1 & 
\end{bmatrix}
\in\F^{m\times m}.
\end{equation}
Then, using Sylvester operators, a matrix $\mA \in \F^{m \times n}$
will be called Toeplitz-like if $\ZZ_{m,\varphi} \, \mA - \mA\,
  \ZZ_{n,\psi}$ has a low rank compared to $m$ and $n$. (This rank is
  independent of the choice of $\varphi$ and $\psi$, 
up to an additive constant of absolute value at most two.)
Using Stein operators, one would consider instead
the rank of $ \mA -\ZZ_{m,\varphi} \,\mA\, \ZZ_{n,\psi}^t$.

Besides $\ZZ_{m,\varphi}$ and its transpose it is also useful to consider 
the diagonal matrix~$\D(\x)$, whose diagonal is given by $\x \in \F^m$.
Specifically, popular choices take 
\begin{equation} \label{eq:classical-disp-operators}
(\mM,\mN) \in \left\{\ZZ_{m,\varphi},\ZZ_{m,\varphi}^t,\D(\x)\right\} 
\times \left\{\ZZ_{n,\psi},\ZZ_{n,\psi}^t,\D(\y)\right\},
\end{equation}
with $\varphi,\psi\in\F$, $\x\in\F^m$ and $\y\in\F^n$,
for either Sylvester or Stein operators. This family covers in
particular 
Toeplitz and Hankel-like structures, where both $\mM$ and $\mN$
are (transposed) cyclic down-shift matrices; 
Vandermonde-like structures, where one of the matrices $\mM$ and $\mN$ is a
(transposed) cyclic down-shift matrix and the other one is diagonal;
and
Cauchy-like structures, where both $\mM$ and $\mN$ are diagonal.
We say that a displacement operator 
is of {\it Toeplitz \!/\! Hankel type} if it falls into the first category,
and of {\it Vandermonde \!/\! Cauchy type} if it falls into the second
or third one.

\smallskip\noindent{\bf Block-diagonal companion matrices.}
The first contribution of this paper is to generalize 
the classical displacement operators seen before. 
Let $\F[x]$ be the univariate polynomial ring over $\F$, and for
$\delta \ge 0$, let $\F[x]_\delta$ be the $\F$-vector space of polynomials of degree
less than $\delta$.
Furthermore, for
$F = \sum_i f_i x^i \in \F[x]$ monic of degree $\delta>0$, 
denote by $\CC_F$ the
$\delta \times \delta$ companion matrix associated with $F$:
\[
\CC_F = \begin{bmatrix}
   & & & -f_0 \\
1 & & & -f_1 \\       
& \ddots & & \vdots \\
& & 1 & -f_{\delta-1}
\end{bmatrix} \in \F^{\delta \times \delta}.
\]
Two special cases are the $m\times m$ matrix $\ZZ_{m,\varphi}$ 
defined in~(\ref{eq:Z_m_phi})
and the $1 \times 1$ matrix given by $x_0 \in \F$, 
which are associated with the polynomials $x^m-\varphi$ and $x-x_0$, respectively.

Given $d>0$,
let now $\vP=P_1,\dots,P_d$ denote a family of monic nonconstant polynomials in $\F[x]$,
and let $m=m_1+\cdots+m_d$ with $m_i=\deg(P_i) \ge 1$ for all $i$. 
We will call {\em block-diagonal companion matrix} associated with $\vP$ the $m
\times m$ block-diagonal matrix $\CC_\vP$ whose $i$th diagonal block is the $m_i \times m_i$
companion matrix $\CC_{P_i}$:
\[
\CC_\vP = \begin{bmatrix}
\CC_{P_1} & & \\ & \ddots & \\ & & \CC_{P_d}
\end{bmatrix} \in \F^{m \times m}.
\]
We write $P$ to denote the product $P_1 \cdots P_d$,
which is the characteristic polynomial of~$\CC_\vP$.
Finally, we associate with $\vP$ the
following assumption on its elements:
\[
\textnormal{$\mathscr{H}_\vP$: \quad $P_1,\dots,P_d$ are pairwise coprime.}
\]

\smallskip\noindent{\bf Associated displacement operators.}
With $\vP$ as above, we now consider another family of monic nonconstant
polynomials, namely $\vQ=Q_1,\dots,Q_e$, with respective degrees $n_1,\dots,n_e$; 
we let $Q= Q_1\cdots Q_e$ and $n=n_1+\cdots+n_e$. 
In what follows, we assume that both assumptions 
$\mathscr{H}_\vP$ and $\mathscr{H}_\vQ$ hold.

Let then $\CC_\vP \in \F^{m\times m}$ and $\CC_\vQ \in \F^{n\times n}$
be the block-diagonal companion matrices associated with $\vP$ and
$\vQ$. 
In this paper, we  consider the following eight operators:
$$ 
\nabla_{\CC_\vP,\CC_\vQ^t},\quad\nabla_{\CC_\vP^t,\CC_\vQ},\quad\nabla_{\CC_\vP,\CC_\vQ},\quad\nabla_{\CC_\vP^t,\CC_\vQ^t},
$$
$$
\Delta_{\CC_\vP,\CC_\vQ^t},\quad\Delta_{\CC_\vP^t,\CC_\vQ},\quad\Delta_{\CC_\vP,\CC_\vQ},\quad\Delta_{\CC_\vP^t,\CC_\vQ^t}.
$$
We shall call them the {\em operators associated with
$(\vP,\vQ)$ of Sylvester or Stein type}, respectively,
and say that they have {\em format} $(m,n)$.
Two of these operators will be highlighted,
$\nabla_{\CC_\vP,\CC_\vQ^t}$ and $\Delta_{\CC_\vP,\CC_\vQ^t}$, and we will
call them the {\em basic} operators associated with $(\vP,\vQ)$. Indeed,
we will be able to derive convenient formulas to invert them; this
will in turn allow us to deal with the other six operators by suitable
reductions.

Tables~\ref{tab3} and~\ref{tab3b} show that
the classical structures defined by~(\ref{eq:classical-disp-operators})
follow as special cases of these operators, for suitable choices of $(\vP,\vQ)$
making $\CC_\vP$ or $\CC_\vQ$ a diagonal or a (transposed) cyclic down-shift matrix.

\renewcommand{\arraystretch}{1.1}
\begin{table}[h]
\caption{Some particular cases for the Sylvester operator $\nabla_{\CC_\vP,\CC_\vQ^t}$.}\label{tab3}
\begin{center}
\begin{tabular}{c||c|c}
& $e=1$ and $Q_1=x^n - \psi$ & $e=n$ and $Q_j = x-y_j$ 
\\\hline\hline
$d=1$ and $P_1 = x^m-\varphi$ & Hankel-like & 
\begin{tabular}{c} 
Vandermonde${}^t$-like \vspace*{-0.1cm} \\  (for the points $1/y_j$) 
\end{tabular}
\\\hline
$d=m$ and $P_i = x-x_i$
& 
\begin{tabular}{c} 
Vandermonde-like \vspace*{-0.1cm} \\ 
(for the points $1/x_i$)
\end{tabular}
& 
\begin{tabular}{c} 
Cauchy-like \vspace*{-0.15cm} \\ (for the points $x_i$ and $y_j$)
\end{tabular}
\end{tabular}
\end{center}
\end{table}
\vspace*{-0.25cm}
\begin{table}[h]
\caption{Some particular cases for the Stein operator $\Delta_{\CC_\vP,\CC_\vQ^t}$.}\label{tab3b}
\begin{center}
\begin{tabular}{c||c|c}
& $e=1$ and $Q_1=x^n - \psi$ & $e=n$ and $Q_j = x-y_j$ 
\\\hline\hline
$d=1$ and $P_1 = x^m-\varphi$ & Toeplitz-like & 
\begin{tabular}{c} 
Vandermonde${}^t$-like \vspace*{-0.1cm} \\  (for the points $y_j$) 
\end{tabular}
\\\hline
$d=m$ and $P_i = x-x_i$ 
& 
\begin{tabular}{c} 
Vandermonde-like \vspace*{-0.1cm} \\ 
(for the points $x_i$)
\end{tabular}
& 
\begin{tabular}{c} 
Cauchy-like \vspace*{-0.15cm} \\ \hspace*{-0.15cm} (for the points $1/x_i$ and $y_j$)\hspace*{-0.15cm}
\end{tabular}
\end{tabular}
\end{center}
\end{table}
\renewcommand{\arraystretch}{1.0}
\bigskip
In other words, 
our family of operators covers all classical operators of
Toeplitz, Hankel, Vandermonde, and Cauchy type in a unified manner.
However, this formalism also includes many new cases as
``intermediate'' situations. For example, taking $e=1$ and $Q_1=x^n$,
we see that the matrix of the {\em multiple reduction map} modulo $P_1,\dots,P_d$
has displacement rank $1$ for the Stein operator
$\Delta_{\CC_\vP,\CC_\vQ^t}$, so our operators will allow us to address 
problems related to Chinese remaindering.

\smallskip\noindent{\bf Problem statement.} 
Our goal is to study the complexity of basic computations with
matrices that are structured for the operators seen above: multiply a
matrix $\mA\in \F^{m\times n}$ by a matrix $\mB$, invert $\mA$, and
solve the linear system $\mA\x=\b$.
Formally, let $\mcL:\F^{m\times n} \to \F^{m\times n}$ be a
displacement operator of either Sylvester or Stein type,
and assume that $\mcL$ is invertible.
Then, the problems we address read as follows.

\smallskip
\begin{description}
\item[${\sf mul}(\mcL,\alpha,\beta)\!:$] given an $\mcL$-generator
  $(\mG,\mH)\in \F^{m\times \alpha}\times \F^{n\times \alpha}$ of a
  matrix $\mA \in \F^{m\times n}$ with $\alpha \le \min(m,n)$ and
  given a matrix $\mB \in \F^{n\times \beta}$, compute the product
  $\mA \mB \in \F^{m\times \beta}$.
  
\smallskip
  
\item[${\sf inv}(\mcL,\alpha)\!:$] 
  given an $\mcL$-generator $(\mG,\mH)\in \F^{m\times \alpha} \times
  \F^{m\times \alpha}$ of a matrix $\mA \in \F^{m\times m}$ with
  $\alpha \le m$, return an $\mcL'$-generator $(\mG',\mH')$ of length
  at most $\alpha$ for $\mA^{-1}$, or assert that $\mA$ is not
  invertible. Here, $\mcL'$ is the operator in~\eqref{eq:mcLP}.
  
\smallskip
  
\item[${\sf solve}(\mcL,\alpha)\!:$] given an $\mcL$-generator
  $(\mG,\mH)\in \F^{m\times \alpha}\times \F^{n\times \alpha}$ of a
  matrix $\mA \in \F^{m\times n}$ with $\alpha \le \min(m,n)$ and
  given a vector $\b \in \F^m$, find a solution $\x\in\F^n$ to
  $\mA\,\x = \b$, or assert that no such solution exists.  (When
  $\b=0$ and $\mA$ has not full column rank, then a nonzero
  vector $\x$ should be returned; we call this a {\it nontrivial solution}.) 
\end{description}

\smallskip
\noindent
For the problems above to make sense, we need the displacement operator
$\mcL$ to be invertible. For the Sylvester operators associated with
$(\vP,\vQ)$, this occurs if and only if $\gcd(P,Q)=1$; 
for Stein operators, the condition becomes $\gcd(P,\widetilde Q)=1$
with $\widetilde Q=x^n Q(1/x)$ or, equivalently,
$\gcd(\widetilde P, Q)=1$ with $\widetilde P = x^m P(1/x)$~\cite[Theorem~4.3.2]{Pan01}.

We allow probabilistic algorithms. For instance, for inversion in size
$m$, we will see algorithms that use $r=O(m)$ random elements in $\F$,
for which success is conditional to avoiding a hypersurface in
$\overline \F^{r}$ of degree $m^{O(1)}$.

To analyze all upcoming algorithms, we count all arithmetic operations
in $\F$ at unit cost, so the {\em time} spent by an algorithm is
simply the number of such operations it performs.  Then, our goal is
to give upper bounds on the cost functions 
$\sMM(.,.,.)$, $\sMS(.,.)$, $\sMI(.,.)$, 
which are such that the problems 
${\sf mul}(\mcL,\alpha,\beta)$, 
${\sf solve}(\mcL,\alpha)$, 
${\sf inv}(\mcL,\alpha)$,
can be solved in respective times 
$$
\sMM(\mcL,\alpha,\beta), 
\quad
\sMS(\mcL,\alpha),
\quad
\sMI(\mcL,\alpha).
$$ 
In particular, $\sMM(\mcL,\alpha,1)$ denotes the cost of 
a structured matrix-vector product
$\mA \,\v$.

A few further problems can be solved as direct extensions of these operations.
Indeed, a simple transformation (see e.g.~\cite[Theorem~1.5.2]{Pan01}) shows
that if a matrix~$\mA$ is structured with respect to one of the operators
associated with $(\vP,\vQ)$, its transpose~$\mA^t$ is structured as well, 
with respect to one of the operators associated with $(\vQ,\vP)$. 
This is summarized as follows, 
where $(\mM,\mN) \in \{\CC_\vP,\CC_\vP^t\} \times \{\CC_\vQ,\CC_\vQ^t\}$:
\begin{align*}
\nabla_{\mM,\mN}(\mA) = \mG \mH^t & \iff  \nabla_{\mN^t,\mM^t}(\mA^t) = (-\mH)\, \mG^t, \\
\Delta_{\mM,\mN}(\mA) = \mG \mH^t & \iff  \Delta_{\mN^t,\mM^t}(\mA^t) = \mH\, \mG^t.
\end{align*}
Thus, from our results on multiplication, inversion, and linear system 
solving for the matrix $\mA$, one directly obtains similar results for 
the same operations with~$\mA^t$.

\smallskip\noindent{\bf Cost measures.} 
We let $\sM$ be a multiplication time function for $\F[x]$, that is, $\sM:\NN_{>0}
\to \RR_{>0}$ is such that two polynomials of degree less than $d$ 
can be multiplied in $\sM(d)$ arithmetic operations in $\F$;
$\sM$ must also satisfy the super-linearity properties of~\cite[Ch.~8]{GaGe13}.  
It follows from~\cite{ScSt71,Schonhage77} 
that we can always take $\sM(d) \in O(d \, \lg(d) \, \lg\lg(d))$,
where $\lg(d) = \log(\max(d, 2))$ and $\log$ is the binary logarithm.
(This definition of $\lg(d)$ ensures that expressions like 
$d \, \lg(d) \, \lg\lg(d)$ do not vanish at $d=1$.)
Hence, from now on we assume that $\sM$ is quasi-linear in $d$,
so that $\sM(d) = O(d^{1+\epsilon})$ for all $\epsilon>0$,
and that $\sM(1)=1$.

Here and hereafter, $\omega$ denotes any real number such that 
two $n\times n$ matrices over~$\F$ can be multiplied 
using $O(n^\omega)$ arithmetic operations in $\F$.
We have $\omega < 2.38$~\cite{LeGall14} as $n\to\infty$ and,
for moderate dimensions (say, $n<10^6$), upper bounds on $\omega$
range from $2.81$~\cite{Strassen69} down to $2.78$~\cite{Pan82};
on the other hand, we have the trivial lower bound $\omega \ge 2$.
(Our cost analyses will be given for $\omega>2$ for simplicity, 
but note that their extension to the case $\omega=2$ would be straightforward 
and imply no more than some extra logarithmic factors.)

The running time of our algorithms will depend on the cost of some
polynomial matrix operations. Let $\sMMa(d,n):\RR_{\ge 1}\times \NN_{>0} \to \RR_{>0}$ 
be such that two $n\times n$ matrices over $\F[x]_d$ can be
multiplied in $\sMMa(d,n)$ arithmetic operations~in~$\F$. (It will be convenient
to allow non-integer values of $d$; this does not change the
definition.) One can use the following estimates:
$$\sMMa(d,n)=\begin{cases}
  O(\sM(d)n^\omega) & \text{in general,}\\[2mm]
  O(d n^\omega+\sM(d)n^2) & \text{if char}(\F)=0 \text{~or~} |\F|\ge 2d;
\end{cases}$$
the former follows from~\cite{CaKa91} and the latter is
from~\cite{BoSc05}, using evaluation and interpolation at geometric
sequences; in both cases, this is $\widetilde{O}(d n^\omega)$,
where the soft-O notation $\widetilde{O}(\cdot)$  
means that we omit polylogarithmic factors.
Our costs will be expressed using not quite $\sMMa$, but two related functions
$\sMMa'$ and $\sMMa''$, which should be thought of as being ``close''
to $\sMMa$ up to logarithmic factors.  These two functions are defined
as follows.  For $n$ a power of two,
$$\sMMa'(d,n) = \sum_{k=0}^{\log(n)} 2^k \sMMa\left(2^k d,
\frac{n}{2^k} \right )$$ and, for general $n$,
$\sMMa'(d,n)=\sMMa'(d,\overline{n})$ with $\overline{n}$ the smallest
power of two greater than or equal to $n$.  Using the super-linearity of 
$\sM$, the above choices for $\sMMa$ give
\begin{equation} \label{eq:M_prime_mat}
\sMMa'(d,n)=\begin{cases}
  O(\sM(dn)n^{\omega-1}) & \text{in general,}\\[2mm]
  O(d n^\omega+\sM(dn)n\lg(n)) & \text{if char}(\F)=0 \text{~or~} |\F|\ge 2dn;
\end{cases}
\end{equation}
in both cases, this is again $\widetilde{O}(d n^\omega)$. 
Note on the other hand that $dn^2 \le n\,\sM(dn) \le \sMMa'(d,n)$.
Now, for $d$ a power of two, let
\[
\sMMa''(d,n) = \sum_{k=0}^{\log(d)} 2^k \sMMa'\left(\frac{d}{2^k}, n \right )
\]
and, for general $d$, let $\sMMa''(d,n)=\sMMa''(\bar{d},n)$ with $\bar d$ 
the smallest power of two greater than or equal to $d$.
Using the two previous bounds on $\sMMa'$, we deduce that 
$$\sMMa''(d,n)=\begin{cases}
  O\big(\sM(dn)n^{\omega-1} \lg(d) \big) & \text{in general,}\\[2mm]
  O\big((d n^\omega+\sM(dn) n\lg(n)) \lg(d) \big) & \text{if char}(\F)=0 \text{~or~} |\F|\ge 2dn;
\end{cases}$$
again, in both cases this is $\widetilde{O}(d n^\omega)$.  Conversely, we have
$dn^\omega = O\big(\sMMa''(d,n)\big)$, since $\sMMa''(d,n) \ge
d\,\sMMa'(1,n) \ge d\,\sMMa(1,n)$.  Remark also that if we assume,
similarly to the super-linearity assumptions on $\sM$, that for every
positive integer $n$ the function $d \mapsto \sMMa'(d,n)/d$ is
nondecreasing, then we have in all cases the simple bound \[
\sMMa''(d,n) \le \sMMa'(d,n)(1+ \log(d)). \]
Finally, note that for $d=1$ and $\omega > 2$,
$$\sMMa''(1,n) = \sMMa'(1,n) = O(n^\omega).$$ 
To check the second equality, one may fix some $\epsilon$ such that $0 < \epsilon < \omega -2$,
so that $\sM(m) = O(m^{1+\epsilon})$ and thus $\sMMa'(1,n) = O\big(\sum_{k \ge 0} n^\omega (2^k)^{2+\epsilon-\omega}\big)$ 
with $2+\epsilon-\omega < 0$.

To any family of polynomials $\vP=P_1,\dots,P_d$ with $m_i=\deg(P_i) >
0$ for all $i$, we will associate two cost measures, $\sC(\vP)$ and
$\sD(\vP)$, related to Chinese remaindering and (extended) GCDs; both
range between $O(m)$ or $O(\sM(m))$ and $O(\sM(m) \lg(m))$, where we
write $m=m_1+\cdots+m_d$ as before. In both cases, we give a general
formula and point out particular cases with smaller estimates.

First, in the particular case where 
$P_i = x-x_i$ for all $i$ with the $x_i$ in geometric progression,
we take $\sC(\vP)=\sM(m)$. In all other cases, we take
$$\sC(\vP) = \sM(m)\left ( 1+ \sum_{1 \le i \le d} \frac {m_i}m
\log\left (\frac{m}{m_i} \right )\right ).$$ Since the value of the
above sum ranges from zero to $\log (d)$ (see for
example~\cite[p.~38]{BuClSh97}), the function $\sC(\vP)$ always
satisfies \[ \sM(m) \le \sC(\vP) \le \sM(m)(1+\log (d)); \] in
particular, we have $\sC(\vP) = O(\sM(m) \lg(d))$.

The definition of the function $\sD(\vP)$ starts with a particular
case as well: when $d=1$ and $P_1=x^m-\varphi$ with $\varphi \in\F$,
we take $\sD(\vP)=m$. Otherwise, we write
$$\sD(\vP) = \sum_{1\le i \le d} \sM(m_i) \lg(m_i).$$

Table~\ref{tab:particular-cases-for-the-cost-measures-C-and-D}
displays the values of $\sC(\vP)$ and $\sD(\vP)$ in the two special
cases mentioned above, but also their estimates when $d=O(1)$
or $m_i=O(1)$ for $1 \le i \le d$.  We see that when $d=O(1)$ we are
essentially in the best case for $\sC(\vP)$ and in the worst case for
$\sD(\vP)$, and that the situation is reversed when $m_i = O(1)$ for
all $i$.  In other words, the families with best and worst cases for
$\sD(\vP)$ are the opposite of those for $\sC(\vP)$.

\renewcommand{\arraystretch}{1.5}
\begin{table}[ht]
\caption{Upper bounds for $\sC(\vP)$ and $\sD(\vP)$ in some special cases.}\label{tab:particular-cases-for-the-cost-measures-C-and-D}
\begin{center}
\begin{tabular}{ccc}
& $\sC(\vP)$ & $\sD(\vP)$ \\ \hline
\begin{tabular}{c}
$P_i=x-x_i$ for all $i$, with \vspace*{-0.3cm} \\
the $x_i$ in geometric progression 
\end{tabular} & $\sM(m)$ & $m$  \\
$d=1$ and $P_1=x^m-\varphi$ & $\sM(m)$ & $m$ \\
$m_i = O(1)$ for all $i$ & \quad $O(\sM(m)\lg(m))$ \quad & $O(m)$\\
$d=O(1)$ & $O(\sM(m))$ & \quad $O(\sM(m)\lg(m))$ \quad 
\end{tabular}
\end{center}
\end{table}
\renewcommand{\arraystretch}{1.0}

Finally, when considering two families of polynomials $\vP$ and $\vQ$, 
we will write $$\sC(\vP,\vQ)=\sC(\vP)+\sC(\vQ) 
\qquad \textnormal{and}\qquad 
\sD(\vP,\vQ)=\sD(\vP)+\sD(\vQ).$$

\smallskip\noindent{\bf Background work.} 
Following landmark publications such 
as~\cite{Morf74,KaKuMo79,M80,BA80}, the literature on structured linear
systems has vastly developed, and we refer especially to~\cite{Pan01, Pan15} 
for comprehensive overviews.
It was recognized early that a key ingredient for the development of
fast algorithms was the ability to perform matrix-vector products
efficiently. For Toeplitz-like and Hankel-like systems, this was done
in~\cite{M80,BA80}; for Vandermonde and Cauchy structures, this is
in~\cite{GoOl94,GoOl94b}. In our notation, these references establish bounds of the form
$$\sMM(\mcL,\alpha,1) = O(\alpha \sM(m)) \quad\text{or}\quad
\sMM(\mcL,\alpha,1) = O(\alpha \sM(m)\lg(m)),$$ for operators of
format $(m,m)$ of Toeplitz \!/\! Hankel type or Vandermonde \!/\!
Cauchy type, respectively.

If a matrix $\mB$ has $\beta$ columns, then the product $\mA \mB$ can be
computed by performing~$\beta$ matrix-vector products with $\mA$.  In other
words, we can take 
$$\sMM(\mcL,\alpha,\beta) \le \beta\, \sMM(\mcL,\alpha,1),$$
which yields the bounds $O(\alpha \beta \sM(m))$ and $O(\alpha
\beta \sM(m) \lg(m))$ for the cases seen above.

Such matrix-matrix products, with $\beta \simeq \alpha$, are the main
ingredients in the Morf \!/\! Bitmead-Anderson (MBA) algorithm to
invert structured matrices~\cite{BA80,M80} of Toeplitz / Hankel type,
which combines the displacement rank approach  
of~\cite{FrMoKaLj79,KaKuMo79,KaKuMo79b} with Strassen's divide-and-conquer 
approach for dense matrix inversion~\cite{Strassen69}. This algorithm 
initially required several
genericity conditions to hold but, based on the randomized regularization 
technique of Kaltofen and Saunders~\cite{KaSa91}, it was then further extended
in~\cite{K94,Kaltofen95} to handle arbitrary matrices 
(see also~\cite[pp.~204--208]{BiPa94} and~\cite[\S5.5--5.7]{Pan01}). 
For operators of format $(m,m)$ of Toeplitz, Hankel, Vandermonde or Cauchy
type, the cost of inversion $\sMI(\mcL,\alpha)$ can then (roughly
speaking) be taken in $$O(\sMM(\mcL,\alpha,\alpha)\lg(m));$$
see~\cite{PZ00} for the case of Vandermonde-like and
Cauchy-like matrices, 
and~\cite{OlsPan98,Pan99,Pan00} for a unified treatment of all four structures.
Using the above upper bound on the cost of
multiplication $\sMM(\mcL,\alpha,\alpha)$, we see that this is $\alpha
\lg(m)$ times the cost of a matrix-vector product (strictly speaking,
some precautions are necessary to make such a statement; for instance,
the algorithm becomes probabilistic).

This results in running times of the form $O(\alpha^2 \sM(m)\lg(m))$
or $O(\alpha^2 \sM(m) \lg(m)^2)$ for inverting respectively Toeplitz \!/\!
Hankel-like matrices and Vandermonde \!/\! Cauchy-like matrices of size
$m$ and displacement rank $\alpha$. In the Vandermonde \!/\! Cauchy case,
another approach due to Pan~\cite{Pan90} proceeds by reduction to the
Toeplitz \!/\! Hankel case; the running time is then reduced to $O(\alpha^2
\sM(m)\lg(m))$.

Going beyond the four basic structures, Olshevsky and Shokrollahi
introduced a common generalization thereof, where the displacement
matrices take the form of block-diagonal Jordan
matrices~\cite{OlSh00}. We will discuss their result in more detail
after stating our main theorems.  For the moment, we mention that their
paper gives an algorithm for matrix-vector product with running times
ranging from
$$\sMM(\mcL,\alpha,1) = O(\alpha \sM(m)) \quad\text{to}\quad
\sMM(\mcL,\alpha,1) = O(\alpha \sM(m)\lg(m)),$$ depending on the
block configuration of the supporting Jordan matrices; for the four
basic structures seen above, we recover the running times seen before.
In~\cite{OlSh99}, this result is used to sketch an extension of the
MBA algorithm to such matrices. As before, the running time for
inversion increases by a factor $\alpha\lg(m)$ compared to the time
for matrix-vector multiplication, ranging from 
$$\sMI(\mcL,\alpha) = O(\alpha^2 \sM(m)\lg(m)) \quad\text{to}\quad
\sMI(\mcL,\alpha) = O(\alpha^2 \sM(m)\lg(m)^2),$$ depending on the
operator.

When $\alpha=O(1)$, all above results for matrix-vector product or
matrix inversion are within a polylogarithmic factor of being
linear-time. However, when $\alpha$ is not constant, this is not the
case anymore; in the worst case where $\alpha \simeq m$, the running
times for inversion grow like $m^3$ (neglecting logarithmic factors
again), whereas dense linear algebra algorithms take time
$O(m^\omega)$. In~\cite{BoJeSc08}, Bostan, Jeannerod and Schost gave
algorithms for inversion of structured matrices of the four classical
types with running time $O(\alpha^{\omega-1} \sM(m)\lg(m)^2)$.  This
is satisfactory for $\alpha\simeq m$, since we recover the cost of dense
linear algebra, up to logarithmic factors. However, for $\alpha =
O(1)$, this algorithm is slightly slower (by a factor $\lg(m)$) than
the MBA algorithm. 

\smallskip\noindent{\bf Main results.} 
The results in this paper cover in a uniform manner the general class
of operators based on companion matrices introduced above. 
Inspired by~\cite{BoJeSc08}, we obtain 
algorithms for structured matrix multiplication, inversion, and linear system solving
whose running times grow with $\alpha$ as $\alpha^{\omega-1}$ instead
of $\alpha^2$; however, we manage to avoid the loss of the $\lg(m)$
factor observed in~\cite{BoJeSc08}, thus improving on the results of that paper. 
All our algorithms assume exact arithmetic (and we do not claim any 
kind of accuracy guarantee when using finite precision arithmetic).
Note also that they are deterministic for multiplication
and, similarly to~\cite{BoJeSc08},
randomized (Las Vegas) for inversion and linear system solving.

Let $\vP$ and $\vQ$ be as before. We give in Section~\ref{sec:invert}
inversion formulas for the basic operators
$\mcL=\nabla_{\CC_\vP,\CC_\vQ^t}$ or $\mcL=\Delta_{\CC_\vP,\CC_\vQ^t}$
that allow us to recover a matrix $\mA$ from one of its generators,
generalizing simultaneously previous results for Toeplitz, Hankel,
Vandermonde, and Cauchy displacement operators. From these formulas,
one readily deduces that a matrix-vector product $\mA \u$ can be
computed in time $$\sMM(\mcL,\alpha,1)=O( \sD(\vP,\vQ)+\alpha
\sC(\vP,\vQ)),$$ which is $\widetilde{O}(\alpha p)$ for $p=\max(m,n)$.

For matrix-matrix products, the direct approach which is based on the estimate
$\sMM(\mcL,\alpha,\beta) \le \beta \,\sMM(\mcL,\alpha,1)$ thus leads to
$\sMM(\mcL,\alpha,\beta) = \widetilde{O}(\alpha \beta p)$, which is sub-optimal, as we pointed out
before. The following theorem improves on this straightforward
approach, for all operators associated with $(\vP,\vQ)$.

\begin{theorem}\label{theo:mainSylv}
  Let $\vP=P_1,\dots,P_d$ and $\vQ=Q_1,\dots,Q_e$ be 
  two families of polynomials in $\F[x]$.
  Assume that in each of these families the polynomials are monic, nonconstant and  
	pairwise coprime,
  and denote $m = \sum_{i=1}^d \deg(P_i)$ and $n = \sum_{i=1}^e \deg(Q_i)$,
Then, for any invertible operator $\mcL$ associated with $(\vP,\vQ)$, we can take
  $$\sMM(\mcL, \alpha, \beta)=O\left
  (\frac{\beta'}{\alpha'}\sMMa'\left(\frac{p}{\alpha'},\alpha'\right)
  + \sD(\vP,\vQ)+\beta'\sC(\vP,\vQ) \right),$$
  with $p=\max(m,n)$,  $\alpha'=\min(\alpha,\beta)$, and $\beta'=\max(\alpha,\beta)$. 
Furthermore, 
  $$\sMI(\mcL, \alpha), \,\,\,  \sMS(\mcL, \alpha) \,=\, O\left (\sMMa''\left(\frac{p}{\alpha},\alpha\right) \right)\!.$$ 
\end{theorem}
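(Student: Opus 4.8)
The plan is to establish the multiplication bound first, and then derive the inversion and solving bounds from it through a Morf--Bitmead--Anderson (MBA) recursion combined with randomized regularization.

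\emph{Multiplication.} First I would use the change-of-operator reductions announced just after the definition of the basic operators to reduce all eight operators associated with $(\vP,\vQ)$ to the two basic ones $\nabla_{\CC_\vP,\CC_\vQ^t}$ and $\Delta_{\CC_\vP,\CC_\vQ^t}$, checking that the displacement rank grows by at most an additive $O(1)$ and that the reductions cost $O(\sD(\vP,\vQ)+\beta'\sC(\vP,\vQ))$. For a basic operator, the inversion formula of Section~\ref{sec:invert} lets me write $\mA=\sum_{i=1}^{\alpha}\mX_i\mY_i$, where $\mX_i$ represents multiplication by a polynomial modulo $P$ built from the $i$th column of $\mG$ (read in the block-CRT basis attached to $\vP$), and $\mY_i$ is a similar factor built from the $i$th column of $\mH$. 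Then $\mA\mB=\sum_{i=1}^{\alpha}\mX_i(\mY_i\mB)$, and the core of the argument is to evaluate this sum without looping over $i$: after writing every polynomial involved in base $x^{N}$ with $N\simeq p/\alpha'$, the length-$p$ polynomial arithmetic turns into arithmetic on polynomial matrices of inner dimension $\alpha'$ with entries of degree $O(p/\alpha')$, so that the whole sum collapses into $O(\beta'/\alpha')$ products of $\alpha'\times\alpha'$ polynomial matrices of degree $O(p/\alpha')$. The block-triangular Toeplitz shape of these matrices, handled by a divide-and-conquer on the matrix dimension, yields the term $\frac{\beta'}{\alpha'}\sMMa'(p/\alpha',\alpha')$, while the changes of basis between monomial and block-CRT coordinates and the modular reductions implementing the companion actions account for $\sD(\vP,\vQ)+\beta'\sC(\vP,\vQ)$.

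\emph{Inversion and solving.} For an invertible operator $\mcL$ of format $(m,m)$, I would run the MBA scheme on $\mA$: split it into four $\simeq m/2\times m/2$ blocks, recursively invert the leading block, form and recursively invert its Schur complement $\mS$, and recombine; restricting the basic operators to the block splitting keeps the displacement ranks in $O(\alpha)$, and each recursion level uses $O(1)$ structured products in half the size with $O(\alpha)$ columns, whose cost is bounded by the multiplication part. Stopping the recursion at size $\Theta(\alpha)$, where the reconstructed block is inverted densely in $O(\alpha^{\omega})$, the recurrence $T(m)=2T(m/2)+O(\sMMa'((m/2)/\alpha,\alpha))+(\text{lower order})$ unrolls --- by the very definition of $\sMMa''$ --- to $O(\sMMa''(m/\alpha,\alpha))$, the lower-order $\sD$ and $\sC$ contributions being dominated by $\sMMa''(m/\alpha,\alpha)$. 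To dispose of the genericity conditions MBA needs (invertibility of the leading blocks encountered), I would apply the Kaltofen--Saunders regularization, pre- and post-multiplying $\mA$ by random structured matrices of displacement rank $O(1)$: the preconditioned matrix stays structured with displacement rank $\alpha+O(1)$ and is suitably generic outside a hypersurface of degree $m^{O(1)}$ in the $O(m)$ random parameters, at the cost of $O(1)$ structured products, again absorbed. Since $p=m$ here, this gives $\sMI(\mcL,\alpha)=O(\sMMa''(p/\alpha,\alpha))$; linear system solving reduces to inversion followed by one matrix--vector product when $\mA$ is square and invertible, and, for rectangular or rank-deficient $\mA$, to computing the rank together with one solution or one nontrivial kernel vector via the same regularization, at the same asymptotic cost, yielding $\sMS(\mcL,\alpha)=O(\sMMa''(p/\alpha,\alpha))$.

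\emph{Main obstacle.} I expect the crux to be the reorganization in the multiplication step: proving that $\sum_{i=1}^{\alpha}\mX_i(\mY_i\mB)$ really reduces to polynomial-matrix products of inner dimension $\alpha'$ (rather than to $\alpha$ separate products, which would cost an extra factor $\alpha$ and destroy the bound), and that the block-triangular structure survives the divide-and-conquer closely enough to produce $\sMMa'$ and not $\sMMa$. A secondary difficulty is the bookkeeping for general companion-based operators in the MBA recursion: restricting $\nabla_{\CC_\vP,\CC_\vQ^t}$ and $\Delta_{\CC_\vP,\CC_\vQ^t}$ to the $2\times2$ splitting requires the block structure of $\CC_\vP$ and $\CC_\vQ$ (hence the families $\vP,\vQ$) to be compatible with the split, which may call for a preliminary balancing of these families and a careful verification that the coprimality hypotheses $\mathscr{H}_\vP$, $\mathscr{H}_\vQ$ and the displacement ranks are preserved throughout.
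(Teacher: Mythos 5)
Your overall architecture for the multiplication bound (reduce the eight operators to the two basic ones, apply the explicit inversion formulas, and reorganize the resulting $\alpha$-fold sum into polynomial-matrix arithmetic) is the paper's, but the central step is missing, and as stated it is not correct. After the reduction, the computation is $R_i=\sum_{k\le\alpha}\gamma_k\,(\eta_k B_i \bmod Q)$, and the whole difficulty is the truncation: without the $\bmod\,Q$ the sum collapses trivially, and with it a single-scale chunking in base $x^{N}$, $N\simeq p/\alpha'$, does \emph{not} collapse into $O(\beta'/\alpha')$ products of $\alpha'\times\alpha'$ polynomial matrices of degree $O(p/\alpha')$ --- if it did, the cost would be $\frac{\beta'}{\alpha'}\sMMa(p/\alpha',\alpha')$ and the function $\sMMa'$ would never need to appear. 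The missing idea, which you correctly identify as the crux but do not supply, is the identity of Lemma~\ref{lem:recursive-formula-for-R}: $\mV\mW^t\bmod x^{\nu}=\mV_0\mW_0^t+x^{\nu/2}\big([\mV_0\,\,\mV_1][\mW_1\,\,\mW_0]^t\bmod x^{\nu/2}\big)$, which halves the truncation order while doubling the inner dimension; iterating it produces matrix products at every scale $(2^k p/\alpha',\,\alpha'/2^k)$, and it is precisely this multi-scale family that the definition of $\sMMa'$ accounts for (algorithm {\sf mul\_rec}, Lemma~\ref{prop:recursive-routine}). You also omit the reduction from a general modulus $Q$ to $x^n$ (Subsection~\ref{ssec:Rgen}: replace remainders by quotients via $V_kW_i=(V_kW_i\,\mathrm{div}\,Q)\,Q+(V_kW_i\bmod Q)$ and compute reversed quotients by power-series division), which is needed before the recursion applies.

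For inversion and solving, your route differs from the paper's and has an unresolved gap that you yourself flag: running MBA directly on a companion-based operator requires the $2\times 2$ splitting to be compatible with the block structure of $\CC_\vP$ and $\CC_\vQ$, but a cut at $m/2$ generally goes through a companion block, so the leading principal submatrix, its Schur complement, and your proposed rank-$O(1)$ random preconditioners are no longer structured for operators associated with sub-families of $(\vP,\vQ)$; no ``balancing'' of the families can ensure this in general since the degrees $m_i$ are arbitrary, and it is also unclear which random structured preconditioners would guarantee generic rank profile for such operators. The paper sidesteps all of this by first reducing inversion and solving to the Hankel operator $\nabla_{\ZZ_{m,0},\ZZ_{m,1}^t}$ via Pan's multiplicative transformation (Section~\ref{sec:equiv}): pre-/post-multiplication by explicit matrices $\mL,\mR$ of displacement rank $1$ raises the displacement rank only to $\alpha+2$, after which Kaltofen--Saunders preconditioning by triangular Toeplitz matrices and the compression-free MBA recursion apply cleanly; your recurrence $T(m)=2T(m/2)+O\big(\sMMa'(\tfrac{m}{2\alpha},\alpha)\big)$ with base case $O(\alpha^{\omega})$ is then exactly the paper's analysis and does unroll to $O(\sMMa''(m/\alpha,\alpha))$. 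Without this reduction (or a worked-out substitute for general block-companion operators), the inversion and solving half of your argument does not go through.
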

Using the estimates for $\sMMa'$ given in~(\ref{eq:M_prime_mat}), the 
estimate for $\sMM(\mcL, \alpha, \beta)$ becomes
$$\begin{cases}
 O\left( {\alpha'}^{\omega-2}\beta'\sM(p) + \beta'\sM(p)\lg(p) \right)  & \text{in general,}\\[2mm]
 O\left( {\alpha'}^{\omega-2}\beta' p  + \beta' \sM(p)\lg(p) \right)  & \text{if char}(\F)=0 \text{~or~} |\F|\ge 2p.
\end{cases}$$
Thus, these results provide a continuum between two extreme cases.
When $\alpha=\beta=O(1)$, the cost is $O(\sM(p)\lg(p))$; when $\alpha$ 
and $\beta$ are large, the first term is dominant, with total
costs respectively 
$O({\alpha'}^{\omega-2}\beta'\sM(p))$ and
$O({\alpha'}^{\omega-2}\beta'p)$.  
Disregarding logarithmic factors,
this is always $\widetilde{O}( {\alpha'}^{\omega-2}\beta'p)$: this
matches the cost (up to logarithmic factors) of dense, 
unstructured linear algebra algorithms for multiplying matrices of dimensions $\alpha' \times p$ and $p \times \beta'$ with $\alpha' \le
\min(p,\beta')$. 

In fact, when $\alpha = p \le \beta$
we recover exactly the cost bound $O(\beta p^{\omega-1})$
of the multiplication of two dense unstructured matrices of dimensions $p \times p$ and $p \times \beta$,
since then $\sMM(\mcL, p, \beta) = O\big( \frac{\beta}{p} \sMMa'(1,p) + \M(p)\lg(p)\big)$
and $\sMMa'(1,p) = O(p^\omega)$.

For inversion and system solving, the bounds given above on $\sMMa''$ yield
$$\begin{cases}
  O\big(\alpha^{\omega-1}\sM(p)\lg(p)\big) & \text{in general,}\\[2mm]
  O\big( \alpha^{\omega-1}p \,\lg(p)+ \alpha\lg(\alpha)\sM(p)\lg(p) \big) & \text{if char}(\F)=0 \text{~or~} |\F|\ge 2p;
\end{cases}$$
and, when $\alpha = p$ (so that the input matrix is $p\times p$ and not structured with respect to the operator~$\mathcal{L}$),
the obtained cost is $O(\sMMa''(1,p)) \subset O(p^\omega)$, as can be expected.

The following theorem highlights situations where we can take
both $\sC(\vP,\vQ)$ and $\sD(\vP,\vQ)$ in $O(\sM(p))$.  In this case,
we get slightly better bounds than in the general case for the
problems ${\sf mul}$ (for inversion and solve, the terms
above are always negligible, for any choice of $\vP$ and $\vQ$). In
view of Tables~\ref{tab3} and~\ref{tab3b}, we see that these special
cases correspond to operators of Toeplitz \!/\! Hankel type, or
Vandermonde \!/\!  Cauchy type, when their supports are points in
geometric progression.

\begin{theorem}\label{theo:mainSylv2}
  All notation and assumptions being as in Theorem~\ref{theo:mainSylv},
  suppose in addition that one of the following holds:
  \begin{itemize}
  \item either $d=1$ and $P_1=x^m-\varphi$ for some $\varphi\in\F$, or
    $d=m$ and
    there exist  $u,q \in \F$ such that 
    $P_i=x- u q^i$ for all $i$;
  \item either $e=1$ and $Q_1=x^n-\psi$ for some $\psi\in\F$, or
    $e=n$ and
    there exist  $v,r \in \F$ such that 
    $Q_j=x- v r^j$ for all $j$.
  \end{itemize}
Then, for any invertible operator $\mcL$ associated with $(\vP,\vQ)$, we can take
$$\sMM(\mcL, \alpha, \beta)=O\left (\frac{\beta'}{\alpha'}\sMMa'\left(\frac{p}{\alpha'},\alpha'\right)
+\beta'\sM(p) \right)\!,$$
with $p=\max(m,n)$,  $\alpha'=\min(\alpha,\beta)$, and $\beta'=\max(\alpha,\beta)$.  
\end{theorem}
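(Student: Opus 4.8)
The plan is to reduce Theorem~\ref{theo:mainSylv2} to Theorem~\ref{theo:mainSylv} by simply checking that, under the extra hypotheses on $\vP$ and $\vQ$, the cost measures $\sC(\vP,\vQ)$ and $\sD(\vP,\vQ)$ appearing in the general bound both collapse to $O(\sM(p))$. Since Theorem~\ref{theo:mainSylv} already gives
$$
\sMM(\mcL,\alpha,\beta)=O\!\left(\frac{\beta'}{\alpha'}\sMMa'\!\left(\frac{p}{\alpha'},\alpha'\right)+\sD(\vP,\vQ)+\beta'\sC(\vP,\vQ)\right),
$$
it suffices to bound the last two terms: $\sD(\vP,\vQ)=O(\sM(p))$ and $\beta'\sC(\vP,\vQ)=O(\beta'\sM(p))$, which are then absorbed into the stated estimate. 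So the whole proof is a finite case analysis on the four admissible shapes of $\vP$ (and symmetrically $\vQ$).

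First I would handle $\vP$. There are two cases. If $d=1$ and $P_1=x^m-\varphi$, then by the definition of $\sC$ we have $\sC(\vP)=\sM(m)$ (the general formula with $d=1$ also gives $\sM(m)$, since the sum is empty), and by the special case in the definition of $\sD$ we have $\sD(\vP)=m\le\sM(m)$. If instead $d=m$ and $P_i=x-uq^i$ for all $i$, then the points $uq^i$ form a geometric progression, so the special case in the definition of $\sC$ applies and $\sC(\vP)=\sM(m)$; moreover each $m_i=1$, so $\sD(\vP)=\sum_{i=1}^m\sM(1)\lg(1)$, which by the normalization $\sM(1)=1$ and $\lg(1)=\log(\max(1,2))=1$ equals $m\le\sM(m)$. (One should note that the hypothesis only asks for $P_i=x-uq^i$; if $u=0$ or $q=0$ the $P_i$ would not be pairwise coprime, contradicting $\mathscr{H}_\vP$, so in fact $u,q\neq 0$ and the points are genuinely a geometric progression with distinct terms—this is exactly what the special case of $\sC$ requires.) In all cases $\sC(\vP),\sD(\vP)=O(\sM(m))=O(\sM(p))$.

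The treatment of $\vQ$ is identical after replacing $m$ by $n$ and the progression parameters $u,q$ by $v,r$, giving $\sC(\vQ),\sD(\vQ)=O(\sM(n))=O(\sM(p))$. Summing, $\sC(\vP,\vQ)=\sC(\vP)+\sC(\vQ)=O(\sM(p))$ and likewise $\sD(\vP,\vQ)=O(\sM(p))$. Substituting into the bound from Theorem~\ref{theo:mainSylv} replaces $\sD(\vP,\vQ)+\beta'\sC(\vP,\vQ)$ by $O(\sM(p)+\beta'\sM(p))=O(\beta'\sM(p))$, since $\beta'\ge 1$, which is exactly the asserted estimate. The inversion and solving bounds require nothing new, as the theorem statement does not repeat them—Theorem~\ref{theo:mainSylv} already absorbs the $\sC,\sD$ terms there.

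I do not expect a genuine obstacle here; the only thing to be careful about is making sure every admissible shape of $\vP$ and $\vQ$ really does fall into one of the two distinguished special cases in the definitions of $\sC$ and $\sD$ (in particular that $P_i=x-uq^i$ with the coprimality assumption forces a bona fide geometric progression of distinct nonzero points), and that the normalizations $\sM(1)=1$, $\lg(1)=1$ are invoked correctly so that $\sum_{i}\sM(m_i)\lg(m_i)=m$ when all $m_i=1$. Everything else is bookkeeping.
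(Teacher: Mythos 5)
Your proposal is correct and matches the paper's own (brief) justification: Theorem~\ref{theo:mainSylv2} is obtained from Theorem~\ref{theo:mainSylv} by observing that under the stated hypotheses the special-case definitions give $\sC(\vP,\vQ),\sD(\vP,\vQ)=O(\sM(p))$, so the extra terms collapse to $O(\beta'\sM(p))$. The only nitpick is your parenthetical for $d=1$: the sum in the definition of $\sC(\vP)$ is not empty but consists of the single term $\frac{m_1}{m}\log(m/m_1)=0$, which does not affect the conclusion.
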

Using once again the bounds on $\sMMa'$ given in~(\ref{eq:M_prime_mat}), this is thus
$$\begin{cases}
  O({\alpha'}^{\omega-2}\beta'\sM(p)) & \text{in general,}\\[2mm]
  O({\alpha'}^{\omega-2}\beta' p +\beta'\lg(\alpha') \sM(p)) & \text{if char}(\F)=0 \text{~or~} |\F|\ge 2p.
\end{cases}$$

\smallskip\noindent{\bf Comparison with previous work.} 
Let us briefly compare our
results with previous ones, in increasing order of generality.

For classical operators of Hankel, Toeplitz, Vandermonde or Cauchy type,
our results match classical ones in the cases $\alpha=\beta=O(1)$ (for
multiplication) or $\alpha=O(1)$ (for inversion). When we drop such
assumptions, our results improve on all previous ones. For instance,
for the inversion of Toeplitz-like or Hankel-like matrices, the best previous
results were either $O({\alpha}^2 \sM(m)\lg(m))$
in~\cite{M80,BA80,K94}, or $O({\alpha}^{\omega-1}\sM(m)\lg(m)^2)$
in~\cite{BoJeSc08}. We improve them simultaneously, by dropping the cost to
$O({\alpha}^{\omega-1}\sM(m)\lg(m))$.

We mentioned earlier the work of Olshevsky and
Shokrollahi~\cite{OlSh99,OlSh00}, who consider Sylvester operators
where the displacement matrices are block-diagonal, with Jordan blocks. A
Jordan block of size $m$ associated with $\lambda \in \F$ is similar
to the companion matrix of $(x-\lambda)^m$; the similarity matrix is
the ``Pascal'' matrix of the mapping $F(x)\mapsto F(x+\lambda)$, for
$F \in \F[x]_m$. Because the similarity matrix and its inverse can be
applied in time $O(\sM(m))$ to a vector~\cite{AhStUl75}, the problem
considered by Olshevsky and Shokrollahi is essentially equivalent to a
particular case of ours.

As it turns out, for these particular cases, the results
in~\cite{OlSh99,OlSh00} for matrix \!/\!  vector product and inversion
are equivalent to ours, when the displacement rank $\alpha$ is
$O(1)$. For larger $\alpha$, our results improve on those
of~\cite{OlSh99,OlSh00}, whose costs are quadratic in $\alpha$.

Finally, to the best of our knowledge, no previous work addressed the
general case of block-companion matrices that we deal with in this
paper.

\smallskip\noindent{\bf An application.}
We are able to address problems related to {\em simultaneous
  approximation} using our operators. Suppose that we are given
$\vP=P_1,\dots,P_d$, with sum of degrees $m$, together with ``residuals''
$$R_{1,1},\dots,R_{1,\alpha},\dots,R_{d,1},\dots,R_{d,\alpha},$$ with
all $R_{i,j}$ in $\F[x]$ such that $\deg(R_{i,j}) < \deg(P_i)$ for all
$i,j$. We are looking for polynomials $f_1,\dots,f_\alpha$ in $\F[x]$
such that $f_1 R_{i,1} + \cdots + f_\alpha R_{i,\alpha} = 0 \bmod P_i$
holds for $i=1,\dots,d$, and with prescribed degree bounds (the
polynomial $f_j$ should be in $\F[x]_{n_j}$, with $n_1 + \cdots +
n_\alpha = O(m)$).  

The matrix of the corresponding linear system, where the coefficients
of $f_1,\dots,f_\alpha$ are unknown, has size $O(m)$ and displacement
rank $O(\alpha)$ for the operator $\Delta_{\CC_\vP,\CC_\vQ^t}$, where
$\vQ$ is the ``vector'' consisting of the unique polynomial
$Q_1=x^m-\varphi$, where $\varphi$ is chosen such that the assumptions
of Theorem~\ref{theo:mainSylv} hold. Hence, we can solve such a system
using $O(\alpha^{\omega-1} \sM(m) \log(m))$ base field operations. 
This is to be compared with an algorithm given in~\cite{ChJeNeScVi15},
that has cost $O((\alpha+d)^{\omega-1} \sM(m) \log(m)^2)$, but which does
not require the assumption that $P_1,\dots,P_d$ be pairwise coprime.

This problem generalizes Hermite-Pad\'e (with $d=1$ and $P_1=x^m$) and
so-called M-Pad\'e approximation, with $P_i = (x-x_i)^{m_i}$, for
pairwise distinct $x_i$ in $\F$. A typical instance is the
reconstruction of the minimal polynomial of an algebraic function.
Suppose that $f$ is a root of a polynomial $S \in \F[x][y]$, so $f$
lies in the algebraic closure of $\F(x)$. Given the power series
expansion of $f$ at high enough precision around a point~$x_1$, it is
possible to reconstruct $S$ by means of Hermite-Pad\'e approximation.
Using our algorithm, we obtain the same output for the
same cost, starting this time from approximations at points $x_1,\dots,x_d$,
each of them requiring smaller precision.

\smallskip\noindent{\bf Organization of the paper.} 
Section~\ref{sec:prelim} introduces notation used 
all along this paper, and discusses a few classical questions about
polynomial arithmetic, such as Chinese
remaindering. Section~\ref{sec:invert} gives {\em inversion formulas}
for the operators we discuss, which generalize well-known results for
classical operators. In Section~\ref{sec:equiv}, we use these formulas
in order to reduce most of our questions to similar questions for {\em
  basic operators}, or even Toeplitz / Hankel operators.
Section~\ref{sec:multiplication_alogrithms} gives the main
algorithmic result in this paper, an improved algorithm for the
multiplication of structured matrices. 
Finally, in
Section~\ref{sec:inverse}, we show how this new multiplication algorithm
can be used within the MBA approach to accelerate structured inversion and structured system solving.

\smallskip\noindent{\bf Acknowledgements.} 
We thank an anonymous referee for many detailed comments.

\section{Preliminaries}\label{sec:prelim}

In this section, we review some classical operations on polynomials
and matrices: we introduce a short list of useful matrices, such as
multiplication matrices, reversal matrices and Krylov matrices, and we
discuss questions related to modular computations, or multiple
reduction and Chinese remaindering.

Throughout the article, we will use some well-known
complexity results on polynomial arithmetic; as a general rule, they
can be found in the books~\cite{GaGe13},~\cite{BuClSh97}, or~\cite{Pan01}.

\subsection{Basic notation}

Matrices (resp.~vectors) are written in upper-case (resp. lower-case)
{\sf sans-serif} font. If $\mA$ (resp.~$\mB$, $\mC$, \dots) is a
matrix, $\a_i$ (resp.~$\b_i$, $\c_i$, \dots) is its $i$th column.  If
$\x$ (resp.~$\y$, $\z$, \dots) is a vector, its $i$th entry is written
$x_i$ (resp.~$y_i,z_i$, \dots).  Special matrices (cyclic, companion,
diagonal,~\dots) will be written with blackboard bold letters ($\ZZ$,
$\CC$, $\D$, \dots). The entries of an $m \times n$ matrix $\mA$ are
indexed from $0$ to $m-1$ (row indices) and from $0$ to $n-1$ (column
indices).

We will also use the following notation. 
\begin{itemize}
\item For $\u=[u_0~\cdots~u_{m-1}]^t\in
  \F^m$, we write $\pol(\u)$ to denote the polynomial $u_0 + \cdots +
  u_{m-1}x^{m-1} \in \F[x]_m$. 
\item For any polynomial $F\in\F[x]$ of degree at most $d$, $\rev(F,d)$
  denotes its reverse polynomial $x^dF(1/x) \in \F[x]$.
\end{itemize}

For $m \ge 1$, $\J_m$ is the $m \times m$ reversal matrix, with $1$s
on the antidiagonal only.  More generally, for $1 \le \ell \le m$,
$\J_{\ell,m}$ is the $m \times m$ matrix with $1$s on the upper
antidiagonal entries of indices $(\ell-1,0),\dots,(0,\ell-1)$, so
$\J_{m,m}=\J_m$; by convention, for $\ell=0$, $\J_{0,m}$ is the zero
matrix of size $m$.

Finally, for $\mA \in \F^{m\times m}$, $\v \in \F^m$, and $\ell \ge 1$, 
we write $\K(\mA, \v, \ell)$ to denote the Krylov matrix in $\F^{m\times \ell}$ whose
columns are $\v, \mA\v, \dots, \mA^{\ell-1}\v$.

\subsection{Chinese remaindering and related problems}\label{ssec:poly}

In this subsection, we introduce additional notation and give basic results
related to multiple reduction and Chinese remaindering.  Consider
pairwise-coprime monic polynomials $\vP=P_1,\dots,P_d$ in $\F[x]$ with
$\deg(P_i) = m_i$, and let $m=m_1+\cdots+m_d$ and $P=P_1 \cdots P_d$. 

Recall that we associated with the family $\vP$ the cost functions
$\sC(\vP)$ and $\sD(\vP)$. They will help us measure the
cost of the following operations. To $\vP$, we associate the {\em
  multiple reduction} mapping, and its inverse, {\em Chinese
  remaindering}, defined by
$$\begin{array}{cccc}
\red_\vP:~ & \F[x]_m & \to & \F[x]_{m_1} \times \cdots \times \F[x]_{m_d}\\
& A & \mapsto & (A \bmod P_1,\dots,A \bmod P_d)
 \end{array}$$
and
$$\begin{array}{cccc}
\crt_\vP:~ & \F[x]_{m_1} \times \cdots \times \F[x]_{m_d} & \to & \F[x]_m \\
  & (A_1,\dots,A_d) & \mapsto & A,\text{~such that $A \bmod P_i=A_i$ for all $i$.}
 \end{array}$$
A related question is {\em linear recombination}, defined as the following isomorphism:
$$\begin{array}{cccc}
\comb_\vP:~ & \F[x]_{m_1} \times \cdots \times \F[x]_{m_d} & \to & \F[x]_m \\
  & (A_1,\dots,A_d) & \mapsto & A_1 P_2 \cdots P_d + \cdots + P_1 \cdots P_{d-1} A_d.
 \end{array}$$
Fast algorithms for these three operations lead to the costs given in the next lemma.
Although such costs can be found in or deduced from~\cite{BuClSh97, GaGe13}, 
they do not seem to have been presented in this way, with a common precomputation
involving both functions $\sC(\vP)$ and $\sD(\vP)$, and then an extra cost involving only $\sC(\vP)$.

\begin{lemma}\label{lemma:WX}
  Let $\vP=P_1,\ldots,P_d$ as above be given and assume that $\mathscr{H}_\vP$ holds. Then,
  after a precomputation of time $O(\sC(\vP)+\sD(\vP))$
  that yields $P$ as a by-product, one can apply the maps $\red_\vP$,
  its inverse $\crt_\vP$, as well as $\comb_\vP$ and its inverse, to
  any vector in time $O(\sC(\vP))$.
\end{lemma}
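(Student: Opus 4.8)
The goal is to establish that, after an $O(\sC(\vP)+\sD(\vP))$ precomputation, each of $\red_\vP$, $\crt_\vP$, $\comb_\vP$ and its inverse can be applied to an arbitrary input in time $O(\sC(\vP))$. The natural strategy is the classical divide-and-conquer (subproduct-tree) approach, but organized so as to split the total running time into two parts: an input-independent part — building the subproduct tree and a few auxiliary quantities attached to its nodes — which costs $O(\sC(\vP)+\sD(\vP))$, and an input-dependent part — the actual descent/ascent through the tree for a given argument — which costs only $O(\sC(\vP))$. I would treat the case $d=1$ separately (where all four maps are essentially the identity, or a single division with remainder, costing $O(m)\le O(\sC(\vP))$), and likewise flag the geometric-progression special case of $\sC(\vP)=\sM(m)$, since the definitions of $\sC$ and $\sD$ single these out; in those cases one invokes the corresponding fast evaluation/interpolation at geometric points.

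First I would build the \emph{subproduct tree}: a balanced binary tree whose leaves are $P_1,\dots,P_d$ (with leaf $i$ carrying weight $m_i$, so the tree is weight-balanced rather than count-balanced — this is what makes the $\sum_i \frac{m_i}{m}\log(m/m_i)$ term appear) and whose internal nodes carry the product of the polynomials below them. Building this tree costs $O(\sC(\vP))$ by the standard analysis of subproduct trees with the super-linearity of $\sM$ (see \cite[Ch.~10]{GaGe13} or \cite{BuClSh97}); the root yields $P=P_1\cdots P_d$ as the advertised by-product. Then, still in the precomputation, at each internal node with children polynomials $L,R$ and product $N=LR$, I would compute and store the Bézout cofactors — polynomials $U,V$ with $UL+VR=1$, obtained by an extended GCD on $L,R$ — reduced modulo the appropriate factor. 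Performing an extended GCD at a node of degree $n_{\text{node}}$ costs $O(\sM(n_{\text{node}})\lg(n_{\text{node}}))$, and summing over the $d$ leaves (the dominant contribution being at the leaf level, each leaf of degree $m_i$ contributing $\sM(m_i)\lg(m_i)$) gives exactly $\sum_i \sM(m_i)\lg(m_i) = \sD(\vP)$, up to the contributions of higher levels which the super-linearity of $\sM$ absorbs into the $\sC(\vP)$ term. This is where I expect the bookkeeping to be most delicate: one must check that the \emph{internal}-node GCD costs telescope correctly against $\sC(\vP)$ and that only the leaf-level GCDs require the separate $\sD(\vP)$ budget.

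With the tree and cofactors precomputed, each of the four maps is a pure tree traversal of cost $O(\sC(\vP))$:
\begin{itemize}
\item $\red_\vP$: descend the tree, at each node reducing the current polynomial modulo the two children's products via Euclidean division; the total cost over all levels is $O(\sC(\vP))$ by the usual multiple-remaindering analysis.
\item $\crt_\vP$: ascend the tree, combining a pair $(A_L,A_R)$ of residues into the residue modulo $N=LR$ by the formula $A_L + L\cdot\big((A_R-A_L)U \bmod R\big)$, using the stored cofactor $U$; again $O(\sC(\vP))$.
\item $\comb_\vP$ and its inverse: ascending, resp. descending, the tree performing the linear-combination update $A_L\cdot R + L\cdot A_R$ at each node (its inverse being a division-with-remainder descent); same cost bound. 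Note $\comb_\vP$ needs no cofactors, only the tree.
\end{itemize}
Since $\sC(\vP)\ge\sM(m)\ge m$, all the $O(m)$ overheads (reading inputs, assembling outputs) are absorbed, and the two special cases in the definitions of $\sC$ and $\sD$ are handled by the analogous geometric-point algorithms, which fit the same bounds. Assembling these pieces — tree construction and cofactor precomputation for the $O(\sC(\vP)+\sD(\vP))$ claim, tree traversal for the $O(\sC(\vP))$ claim — completes the proof; the only real work is the cost accounting in the precomputation, which I would carry out level by level on the weight-balanced tree.
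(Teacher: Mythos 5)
There is a genuine gap in your precomputation accounting: you charge an extended GCD at \emph{every internal node} of the subproduct tree to $O(\sC(\vP)+\sD(\vP))$, and that does not work. The function $\sD(\vP)=\sum_i \sM(m_i)\lg(m_i)$ only pays for GCD-type work at the \emph{leaf} level, and the super-linearity of $\sM$ only controls $\sum_k 2^k\sM(m/2^k)$ per level; it does not absorb the extra $\lg$ factor coming from a fast extended Euclidean call at each of the $\Theta(\lg d)$ internal levels. Concretely, take $d=\sqrt{m}$ pairwise coprime factors of equal degree $\sqrt{m}$ (not in geometric progression): then $\sC(\vP)+\sD(\vP)=O(\sM(m)\lg(m))$, whereas the per-node GCD bounds you are summing contribute about $2^k\sM(m/2^{k})\lg(m/2^{k})=\Theta(\sM(m)\lg(m))$ at each level $k$ of the tree, for a total of $\Theta(\sM(m)\lg(m)^2)$ --- a factor $\lg(m)$ over budget (the same happens with $m_i=1$ for all $i$). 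A secondary slip: inverting $\comb_\vP$ by a ``division-with-remainder descent'' is not correct as stated; from $A=A_L R+L A_R$ one cannot recover $A_L,A_R$ by Euclidean division alone, one needs $L^{-1}\bmod R$ (or $R^{-1}\bmod L$), i.e.\ cofactors again, so your inverse-$\comb_\vP$ step silently relies on the very data whose precomputation is over budget.

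The paper's proof is organized precisely so that no internal-node cofactors are ever needed. It takes $\red_\vP$ and $\comb_\vP$ as the primitives, each applicable in $O(\sC(\vP))$ after an $O(\sC(\vP))$ precomputation (citing the refined bounds of B\"urgisser--Clausen--Shokrollahi, and the geometric-progression algorithms of Bostan--Schost for the special case), and reduces the other two maps to these plus \emph{leaf-level} data only: with $E_i=P/P_i\bmod P_i$ and $F_i=1/E_i\bmod P_i$, one has $\crt_\vP=\comb_\vP\circ(\text{multiply by }F_i\bmod P_i)$ and $\comb_\vP^{-1}=(\text{multiply by }F_i\bmod P_i)\circ\red_\vP$. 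The key trick is that all the $E_i$ are obtained within $O(\sC(\vP))$ by applying $\comb_\vP$ to $(1,\dots,1)$, which gives $P^\star=\sum_j P/P_j$, and then reducing $P^\star$ modulo the $P_i$ by one call to $\red_\vP$; only the $d$ leaf inversions $F_i=1/E_i\bmod P_i$ use extended GCDs, at total cost $\sum_i O(\sM(m_i)\lg(m_i))=O(\sD(\vP))$, which is exactly what the $\sD$ budget is designed to cover. To salvage your tree-with-cofactors architecture you would have to produce the node cofactors without per-node GCDs, which essentially forces you back to an argument of this kind.
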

\begin{proof}
 For $1\le i\le d$, let us define $E_i = P/P_i \bmod P_i$ and $F_i = 1/E_i\bmod P_i$.
 (In particular, if $d=1$ then $E_1 = F_1 = 1$, while if $P_i$ is the linear polynomial $x-x_i$ 
 then $E_i = 1/F_i$ is the value of the derivative of $P$ at $x_i$.)
 Using these polynomials, we first note how the mappings $\crt_\vP$ and $\comb_\vP$ relate to each other: 
 Chinese remaindering of $A_1,\dots,A_d$ is done by computing $A_iF_i \bmod P_i$ for all $i$, and then applying
 $\comb_\vP$ to these products; conversely, to perform linear recombination it suffices to
 compute the products $A_iE_i\bmod P_i$ and then to apply $\crt_\vP$.
 Note also that the inverse of $\comb_\vP$ can be obtained by first using $\red_\vP$ and then
 multiplying by the $F_i$'s modulo $P_i$.
 
  Suppose we are in the case where $P_i=x-x_i$ with the $x_i$'s in
  {\em geometric} progression. Then, we can compute $P, E_1,\ldots,E_d$ and
  apply $\red_\vP$ and $\crt_\vP$ in time $O(\sC(\vP))$ using the
  algorithms of~\cite{BoSc05}. Using the remark above on the relation
  between $\crt_\vP$ and $\comb_\vP$, the claim carries over to
  $\comb_\vP$ and its inverse, so the lemma is proved in this case.

  In the general case, we can precompute $P$, as well as apply
  $\red_\vP$ and $\comb_\vP$ (without any further precomputation) in
  time $O(\sC(\vP))$ using respectively Theorems~2.19,~3.19 and Step
  2 (or 5) of Theorem 3.21 in~\cite{BuClSh97}.
 
  Let $P^\star = P/P_1 + \cdots + P/P_d$. Since $P^\star$ is obtained
  by applying $\comb_\vP$ to the polynomials $(1,\dots,1)$, it can
  thus be precomputed in time $O(\sC(\vP))$. The modular images
  $P^\star \bmod P_i$, which coincide with the polynomials $E_i$ seen
  above, can be precomputed in the same time by applying $\red_\vP$.
  Finally, we precompute the inverses $F_i=1/E_i \bmod P_i$ in time
  $O(\sD(\vP))$ by fast extended GCD (when $d=1$, this is
  straightforward, since we have seen that $E_1=1$; otherwise, each
  $F_i$ is obtained in time $O(\sM(m_i)\lg(m_i))$ using the fast
  extended Euclidean algorithm~\cite[Cor.~3.14]{BuClSh97}).

  Using the first paragraph, we can then perform Chinese 
  remaindering for the cost $O(\sC(\vP))$ of $\comb_\vP$, plus 
  modular multiplications by the $F_i$'s. The cost of the latter
  is $O(\sM(m))$, by the super-linearity of $\sM$, which is
  $O(\sC(\vP))$. The same holds for the inverse of $\comb_\vP$,
  which is reduced to $\red_\vP$ and modular multiplications by the
  $F_i$'s.
\end{proof}

\medskip

In matrix terms, we will need the following notation in the 
next sections (here and hereafter, we use canonical monomial
bases for vector spaces such as $\F[x]_m$):
\begin{itemize}
\item $\XX_\vP \in\F^{m \times m}$ is the matrix of the mapping $\comb_\vP$;
\item $\W_\vP \in\F^{m \times m}$ is the matrix of the mapping $\red_\vP$.
\end{itemize}

\begin{lemma}\label{lemma:W}
  Let $\vP=P_1,\ldots,P_d$ as above be given and assume that
  $\mathscr{H}_\vP$ holds.  Then, after a precomputation of time
  $O(\sC(\vP)+\sD(\vP))$, one can compute  $\XX_\vP\,\u$, as well as $\W_\vP\,\u$,
  $\W_\vP^t\,\u$, $\W_\vP^{-1}\,\u$ and $\W_\vP^{-t}\,\u$ for any $\u$
  in $\F^m$ in time $O(\sC(\vP))$.
\end{lemma}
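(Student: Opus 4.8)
The plan is to reduce everything to Lemma~\ref{lemma:WX} together with the transposition principle, using a single shared precomputation. First I would note that, by the very definitions of $\W_\vP$ and $\XX_\vP$, the three maps $\u\mapsto\W_\vP\,\u$, $\u\mapsto\W_\vP^{-1}\,\u$ and $\u\mapsto\XX_\vP\,\u$ are nothing but $\red_\vP$, its inverse $\crt_\vP$, and $\comb_\vP$ applied to the polynomial $\pol(\u)$ (then reading off coefficient vectors in the monomial basis). Hence Lemma~\ref{lemma:WX} directly yields all three, with a common precomputation of cost $O(\sC(\vP)+\sD(\vP))$ (which also returns $P$) and a per-vector cost of $O(\sC(\vP))$.

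It remains to handle the two transposed maps $\u\mapsto\W_\vP^t\,\u$ and $\u\mapsto\W_\vP^{-t}\,\u$, for which there is no equally clean combinatorial description; here I would invoke the transposition principle, also known as Tellegen's principle (see, e.g., \cite{BuClSh97}). It states that any algorithm computing a linear map $\v\mapsto\mM\v$ on $\F^m$ by a straight-line program using only $\F$-linear operations in $\v$ — additions, and multiplications by constants that may depend on precomputed data but not on $\v$ — can be transformed into an algorithm computing $\v\mapsto\mM^t\v$ with the same number of multiplications, at most $O(m)$ extra additions, and the same precomputed constants. The algorithms underlying Lemma~\ref{lemma:WX} have exactly this shape: once $P$, the subproduct tree, and the quantities $E_i$ and $F_i=1/E_i \bmod P_i$ have been precomputed, evaluating $\red_\vP$ or $\crt_\vP$ consists only of polynomial multiplications, additions, reductions modulo the precomputed $P_i$ (or modulo nodes of the precomputed subproduct tree), and modular multiplications by the precomputed $F_i$ — all of which are $\F$-linear in the input vector with precomputed coefficients. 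Transposing these programs therefore produces algorithms for $\W_\vP^t$ and $\W_\vP^{-t}$ with the same precomputation $O(\sC(\vP)+\sD(\vP))$ and per-vector cost $O(\sC(\vP))+O(m)=O(\sC(\vP))$, the last equality holding because $\sC(\vP)\ge\sM(m)\ge m$.

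The only point that genuinely needs verification — and thus the main, though mild, obstacle — is checking that the routines quoted in the proof of Lemma~\ref{lemma:WX}, namely the geometric-progression evaluation/interpolation algorithms of \cite{BoSc05} in the special case $P_i=x-x_i$, and the subproduct-tree based multiple reduction and Chinese remaindering of \cite{BuClSh97} in the general case, can indeed be written as linear straight-line programs in the input vector, with all branching and all divisions confined to the precomputation stage. This is standard (reduction modulo a fixed polynomial, multiplication by a fixed polynomial, and evaluation/interpolation at a fixed geometric sequence are all $\F$-linear in their polynomial argument), but it is precisely what makes the transposition principle applicable, and hence the only thing that needs to be said beyond quoting Lemma~\ref{lemma:WX}.
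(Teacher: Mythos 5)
Your proposal is correct, and for the untransposed maps ($\XX_\vP\,\u$, $\W_\vP\,\u$, $\W_\vP^{-1}\,\u$) it coincides with the paper's proof: both simply invoke Lemma~\ref{lemma:WX}. Where you diverge is on $\W_\vP^t\,\u$ and $\W_\vP^{-t}\,\u$: you appeal to the transposition principle, arguing that the multiple-reduction and Chinese-remaindering routines of \cite{BuClSh97} and \cite{BoSc05} become linear straight-line programs in the input vector once the data depending only on $\vP$ (subproduct tree, the $E_i$, $F_i$, etc.) has been precomputed, so that their transposes run within the same $O(\sC(\vP))$ bound. The paper explicitly mentions this route as possible (citing \cite{BoLeSc03}) but deliberately avoids it, instead using Algorithm {\sf TSimulMod} of \cite{BoLeSaScWi04}, which reduces $\W_\vP^t\,\u$ directly to one application of $\comb_\vP$ plus $O(\sM(m))$ worth of power series multiplications and inversions involving only $P$ and the $P_i$'s, and then obtains $\W_\vP^{-t}\,\u$ by inverting that algorithm step by step. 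Your route buys brevity and generality---no specific transposed algorithm needs to be exhibited---at the price of being non-constructive and of requiring exactly the verification you single out, namely that the forward algorithms are branch-free and $\F$-linear in the input with all divisions confined to the precomputation; the paper's route yields explicit direct algorithms and sidesteps any appeal to program transformation. Both arguments give the stated complexity, so your proof is acceptable as written.
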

\begin{proof}
  For $\XX_\vP\,\u$, $\W_\vP\,\u$ and $\W_\vP^{-1}\,\u$, this is
  nothing else than the previous lemma. To compute with the transpose
  of $\W_\vP$, we could rely on the transposition
  principle~\cite{BoLeSc03}, but it is actually possible to give
  direct algorithms.  Algorithm~{\sf TSimulMod} in~\cite{BoLeSaScWi04}
  shows how to reduce the computation of $\W_\vP^t\,\u$ to an
  application of $\comb_\vP$ and a few power series multiplications /
  inversions that involve only $P$ and the $P_i$'s, and whose costs
  add up to $O(\sM(m))$. Thus, the claimed $O(\sC(\vP))$ follows from
  the previous lemma. It is straightforward to invert that algorithm
  step-by-step, obtaining an algorithm for computing $\W_\vP^{-t}\,\u$
  in time $O(\sC(\vP))$ as well.
\end{proof}

\subsection{Modular arithmetic}\label{ssec:modular}

Given a monic polynomial $P \in \F[x]$ of degree $m$, we will
frequently work with the residue class ring $\A=\F[x]/P$; doing so, we
will identify its elements with polynomials in $\F[x]_m$. In this
subsection, we review several questions related to computations in
$\A$, such as modular multiplication and its transpose.

As a preamble, we define two useful matrices related to $P$. First,
for $\ell \ge 1$, $\W_{P,\ell} \in \F^{m\times \ell}$ denotes the
matrix of reduction modulo $P$, which maps a polynomial $F \in
\F[x]_\ell$ to $F \bmod P$. Computationally, applying $\W_{P,\ell}$
to a vector amounts to doing a Euclidean division.
Next, writing $P=p_{0} + p_{1} x + \cdots +p_{m} x^{m}$ (so that $p_{m}=1$), we
will denote by $\Y_P$ the $m \times m$ Hankel matrix
\[
\Y_P = 
\begin{bmatrix} p_{1} & \cdots & p_{m-1} & 1\\[-1mm]
\vdots & \iddots & 1 & \\[-1mm]
p_{m-1} & \iddots & & \\
1 & & &
 \end{bmatrix}.
\]
Operations with $\Y_P$ are fast, due to the triangular Hankel nature
of this matrix.  The only non-trivial part of the following lemma
concerns the inverse of $\Y_P$. It is proved in~\cite[\S2.5]{Pan01}
for triangular Toeplitz matrices; the extension to the Hankel case is
straightforward,
since $\Y_P^{-1} = (\J_m \Y_P)^{-1} \J_m$ with $\J_m \Y_P$ triangular Toeplitz.
\begin{lemma}\label{lemma:Y}
  Given $P$, for $\u$ in $\F^m$, one can compute 
  $\Y_P\,\u$ and $\Y_P^{-1}\,\u$ in $O(\sM(m))$.
\end{lemma}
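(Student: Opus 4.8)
The plan is to prove Lemma~\ref{lemma:Y} by reducing everything to the well-known triangular Toeplitz case. Recall that $\Y_P$ is the $m\times m$ triangular Hankel matrix built from the top $m$ coefficients $p_1,\dots,p_m=1$ of $P$, with these coefficients filling the antidiagonal and everything below the antidiagonal equal to zero. The key structural observation is that $\J_m \Y_P$ is an \emph{upper} triangular Toeplitz matrix whose first row is $(p_1,\dots,p_{m-1},1)$ read appropriately; equivalently $\Y_P \J_m$ is a lower triangular Toeplitz matrix. So $\Y_P$ is a triangular Toeplitz matrix post-composed with the (involutory) reversal $\J_m$.

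\textbf{First} I would handle the forward product $\Y_P\,\u$. Multiplying a vector by $\J_m$ is free (it just reverses the entries), and multiplying a vector by an $m\times m$ triangular Toeplitz matrix is a polynomial multiplication truncated to degree $<m$, hence costs $O(\sM(m))$ by definition of $\sM$. Composing, $\Y_P\,\u = (\J_m\Y_P)(\J_m\,\u)$ is computed in $O(\sM(m))$ arithmetic operations in $\F$. \textbf{Then} for the inverse, I would invoke the identity $\Y_P^{-1} = (\J_m\Y_P)^{-1}\J_m$, which is immediate from $\Y_P = \J_m(\J_m\Y_P)$ and $\J_m^2 = I_m$. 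Since $\J_m\Y_P$ is triangular Toeplitz with invertible (indeed unit, as $p_m=1$) diagonal, its inverse is again triangular Toeplitz, and the inverse's defining first row can be obtained by a truncated power series inversion of the generating polynomial $p_1 + p_2 x + \cdots + p_{m-1}x^{m-2} + x^{m-1}$ (after normalizing so the constant term is a unit — here it already is, up to the ordering, since the relevant triangular Toeplitz matrix has $1$ on its diagonal). This power series inversion modulo $x^m$ costs $O(\sM(m))$ by Newton iteration, as recalled in~\cite[\S2.5]{Pan01} for the Toeplitz case; applying the resulting triangular Toeplitz matrix to $\J_m\,\u$ is another $O(\sM(m))$, so $\Y_P^{-1}\,\u$ is computed in $O(\sM(m))$ overall.

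\textbf{The only point needing care} is the bookkeeping of how the Hankel coefficient pattern of $\Y_P$ translates into the Toeplitz coefficient pattern of $\J_m\Y_P$ — i.e., verifying that left-multiplication by $\J_m$ does send the triangular Hankel structure to triangular Toeplitz structure with the expected diagonal $1$'s — but this is a routine index chase and is exactly the reduction cited in the statement. Once that is in place, the two claims follow from the corresponding facts for triangular Toeplitz matrices, and there is no genuine obstacle; the main content of the lemma is really just recording this cost for later use.
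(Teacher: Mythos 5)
Your proposal follows essentially the same route as the paper: reduce to the triangular Toeplitz case via the reversal matrix $\J_m$, use the identity $\Y_P^{-1}=(\J_m\Y_P)^{-1}\J_m$, and invoke the $O(\sM(m))$ bounds for triangular Toeplitz matrix--vector multiplication and inversion from~\cite[\S2.5]{Pan01}. Only some bookkeeping is off---$\J_m\Y_P$ is \emph{lower} (not upper) triangular Toeplitz, with unit diagonal and generating polynomial $1+p_{m-1}x+\cdots+p_1x^{m-1}$, and the forward product should read $\Y_P\,\u=\J_m\bigl((\J_m\Y_P)\,\u\bigr)$ (or $(\Y_P\J_m)(\J_m\u)$) rather than $(\J_m\Y_P)(\J_m\u)$---but these slips do not affect the $O(\sM(m))$ cost claims.
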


Let us now discuss multiplication in $\A$, often called {\em modular
  multiplication}. Computationally, it is done by a standard
polynomial multiplication, followed by a Euclidean division;
altogether, this takes time $O(\sM(m))$.

In matrix terms, recall that $\CC_P$ denotes the $m \times m$
companion matrix associated with $P$; equivalently, this is the matrix
of the multiplication-by-$x$ endomorphism in $\A$. More generally, for
$F$ in $\F[x]$, we denote by $\CC_{F,P}$ the $m \times m$ matrix of
multiplication by $F$ in $\A$; that is, the matrix whose $(i,j)$ entry
is the coefficient of $x^i$ in $Fx^j \bmod P$,
for $i,j=0,\dots,m-1$. Thus, by what was said above, applying $\CC_{F,P}$
to a vector can be done in time $O(\sM(m))$. Note also that
$\CC_{x,P}=\CC_P$ and that $\CC_{F,P}=F(\CC_P)$.

Yet more generally, for $\ell \ge 1$, $\CC_{F,P,\ell} \in \F^{m\times
  \ell}$ is the matrix whose entries are the coefficients of the same
polynomials $Fx^j \bmod P$ as above, but now for
$j=0,\dots,\ell-1$. The following easy lemma simply says that for $F$
in $\F[x]_m$ and $G$ in $\F[x]_\ell$, $(GF) \bmod P = ((G \bmod P)F)
\bmod P$.

\begin{lemma}\label{lemma:CC}
  Let $F$ be in $\F[x]_m$ and $\ell \ge 1$. Then
  $\CC_{F,P,\ell}=\CC_{F,P}\, \W_{P,\ell}$.
\end{lemma}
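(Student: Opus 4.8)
The plan is to establish the matrix identity column by column, using only that reduction modulo $P$ is compatible with polynomial multiplication. First I would fix a column index $j\in\{0,\dots,\ell-1\}$ and describe the $j$th columns of the three matrices in play. By definition, the $j$th column of $\W_{P,\ell}$ is the coefficient vector of $R_j:=x^j\bmod P\in\F[x]_m$, and the $j$th column of $\CC_{F,P,\ell}$ is the coefficient vector of $Fx^j\bmod P$. Since $\CC_{F,P}$ is, by construction, the matrix in the monomial basis of $\A=\F[x]/P$ of the multiplication-by-$F$ map, applying $\CC_{F,P}$ to the coefficient vector of $R_j$ yields the coefficient vector of $(FR_j)\bmod P$; hence the $j$th column of the product $\CC_{F,P}\,\W_{P,\ell}$ is the coefficient vector of $(FR_j)\bmod P$.

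It then remains to check that $(FR_j)\bmod P=(Fx^j)\bmod P$. This is immediate: $R_j\equiv x^j$ modulo $P$ by the definition of $R_j$, so $FR_j\equiv Fx^j$ modulo $P$, and two polynomials that are congruent modulo $P$ share the same remainder in $\F[x]_m$. Equivalently, this is the elementary identity $(GF)\bmod P=((G\bmod P)\,F)\bmod P$ specialized to $G=x^j$. Since $j$ was arbitrary, all $\ell$ columns of $\CC_{F,P,\ell}$ and of $\CC_{F,P}\,\W_{P,\ell}$ coincide, which gives $\CC_{F,P,\ell}=\CC_{F,P}\,\W_{P,\ell}$.

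I do not anticipate any real difficulty here: the statement is a direct unwinding of the definitions of $\CC_{F,P,\ell}$, $\CC_{F,P}$, and $\W_{P,\ell}$. The only point requiring a little care is bookkeeping of conventions --- columns are indexed from $0$, and the hypothesis $F\in\F[x]_m$ is what makes $\CC_{F,P}$ the $m\times m$ multiplication matrix introduced just above --- after which the argument is the one-line observation that reducing modulo $P$ commutes with multiplying by $F$.
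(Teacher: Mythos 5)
Your proof is correct and matches the paper's argument: the paper justifies the lemma by the single observation that $(GF)\bmod P = ((G\bmod P)F)\bmod P$ for $G\in\F[x]_\ell$, which is exactly the identity you invoke (with $G=x^j$), your column-by-column unwinding being just a more explicit rendering of it.
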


\medskip

We will also use a dual operation, involving the space $\A^*$ of
$\F$-linear forms $\A \to \F$. Such a linear form will be given by the
values it takes on the monomial basis $1,x,\dots,x^{m-1}$ of
$\A$. Then, {\em transposed multiplication} is the operation mapping
$(F,\lambda) \in \A\times\A^*$ to the linear form $\lambda'=F\circ
\lambda$, defined by $\lambda'(x^i) = \lambda(x^i F)$, where the
multiplication takes place in $\A$. The name reflects the fact that
for fixed $F$, transposed multiplication by $F$ is indeed the dual map
of the multiplication-by-$F$ endomorphism of $\A$. In matrix terms,
transposed product by $F$ amounts to multiplication by
$\CC_{F,P}^t$.

Transposed products can be computed in the same time $O(\sM(m))$ as
``standard'' modular multiplications, by a dual form of modular
multiplication~\cite{BoLeSc03}. However, such an algorithm relies on
middle product techniques~\cite{HaQuZi02}, which are not
straightforward to describe. The following lemma shows an alternative
approach, with same cost up to a constant factor: to
perform a transposed product by $F$, it suffices to do a modular
multiplication by $F$, with a pre- and post-multiplication by $\Y_P$
and its inverse (which can both be done in $O(\sM(m))$, see
Lemma~\ref{lemma:Y}). For $F=x$, this result is sometimes referred
to as saying that $\Y_P$ is a {\em symmetrizer} 
for the polynomial $P$; see~\cite[p.~455]{LaTi85}. 

\begin{lemma}\label{lemma:transp}
  For all $F \in \F[x]_{m}$, we have $\Y_P\, \CC_{F,P}^t= \, \CC_{F,P} \,
  \Y_P$. Equivalently, $\CC_{F,P}^t= \Y_P^{-1}\, \, \CC_{F,P} \,
  \Y_P$.
\end{lemma}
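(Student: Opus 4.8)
The plan is to reduce the identity to the single case $F=x$, to dispatch that case by an entrywise computation with the explicit forms of $\CC_P$ and $\Y_P$, and then to deduce the ``equivalently'' part from the invertibility of $\Y_P$. So the whole statement rests on one matrix identity, $\Y_P\,\CC_P^t=\CC_P\,\Y_P$, together with elementary linear algebra.

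For the reduction, recall from the discussion preceding the statement that $\CC_{F,P}=F(\CC_P)$. Writing $F=\sum_{i=0}^{m-1}f_i x^i$, we get $\CC_{F,P}=\sum_i f_i\,\CC_P^{\,i}$, hence $\CC_{F,P}^t=\sum_i f_i\,(\CC_P^t)^i$. It therefore suffices to prove $\Y_P\,(\CC_P^t)^i=\CC_P^{\,i}\,\Y_P$ for every $i\ge 0$ and then to take the $\F$-linear combination with coefficients $f_i$. These identities follow by a one-line induction on $i$ from the base identity $\Y_P\,\CC_P^t=\CC_P\,\Y_P$: the case $i=0$ is trivial, and the inductive step reads $\Y_P\,(\CC_P^t)^{i+1}=\bigl(\Y_P\,(\CC_P^t)^i\bigr)\CC_P^t=\CC_P^{\,i}\,\Y_P\,\CC_P^t=\CC_P^{\,i}\,\CC_P\,\Y_P$.

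It remains to prove $\Y_P\,\CC_P^t=\CC_P\,\Y_P$. First, $\Y_P$ is symmetric: its $(i,j)$ entry equals $p_{i+j+1}$, with the convention $p_k=0$ for $k>m$ (and $p_m=1$, since $P$ is monic). Hence $(\CC_P\,\Y_P)^t=\Y_P^t\,\CC_P^t=\Y_P\,\CC_P^t$, so the identity to prove is equivalent to the assertion that the matrix $\CC_P\,\Y_P$ is symmetric. I would establish this by a direct entrywise computation, using that the $k$th column of $\CC_P$ is the $(k+1)$th canonical basis vector for $0\le k\le m-2$, and is $-(p_0,\dots,p_{m-1})^t$ for $k=m-1$. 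Carrying this out, one finds that $(\CC_P\,\Y_P)_{i,j}$ equals $-p_0$ when $i=j=0$, equals $p_{i+j}$ when $i,j\ge 1$ and $i+j\le m$, and vanishes in all other cases (in particular whenever exactly one of $i,j$ is zero, where a cancellation $p_i-p_i=0$ occurs thanks to $p_m=1$). This expression is manifestly symmetric in $i$ and $j$, which yields the claim.

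Finally, $\Y_P$ is anti-triangular with ones on the anti-diagonal, so $\det\Y_P=\pm 1$ and $\Y_P$ is invertible (this also follows from Lemma~\ref{lemma:Y}); thus $\Y_P\,\CC_{F,P}^t=\CC_{F,P}\,\Y_P$ can be rewritten as $\CC_{F,P}^t=\Y_P^{-1}\,\CC_{F,P}\,\Y_P$, which is the ``equivalently'' part. I expect the only genuine bookkeeping to be the entrywise computation for $F=x$: one has to handle with care the wrap-around contribution of the last column of $\CC_P$ and the index pairs $(i,j)$ lying on or beyond the anti-diagonal band of $\Y_P$, since that is exactly where the boundary terms interact. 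Once that is organized, the symmetry of $\CC_P\,\Y_P$---equivalently, the classical fact that $\Y_P$ is a symmetrizer for $P$---falls out, and the rest of the argument is purely formal.
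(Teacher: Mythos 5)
Your proof is correct and follows essentially the same route as the paper: reduce by linearity (via $\CC_{F,P}=F(\CC_P)$) to the single case $F=x$, verify $\Y_P\,\CC_P^t=\CC_P\,\Y_P$ by inspection, and propagate to higher powers by a short induction using the commutation of powers of $\CC_P$. The only difference is cosmetic: where the paper cites \cite[Ex.~3, Ch.~13]{LaTi85} and says the base case follows ``by simple inspection,'' you spell out that inspection as the symmetry of $\CC_P\,\Y_P$, and your entrywise formula (entry $-p_0$ at $(0,0)$, $p_{i+j}$ for $i,j\ge 1$ with $i+j\le m$, zero elsewhere, with the cancellation $p_i-p_i$ coming from $p_m=1$) is accurate.
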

\begin{proof}
  By linearity, it is enough to consider the case of $F=x^\ell$, for
  $0\le \ell <m$. The case $\ell=0$ is clear, since
  $\CC_{1,P}$ is the identity matrix. For $\ell=1$, this result is
  well-known, see for instance~\cite[Ex.~3, Ch.~13]{LaTi85}; it can be
  proved by simple inspection. Once the claim is established for $x$,
  an easy induction shows that it holds for $x^\ell$, for an arbitrary
  $\ell \ge 1$, using the fact that $\CC_{x^{\ell+1},P} = \CC_P
  \,\CC_{x^\ell,P} = \CC_{x^\ell,P}\, \CC_P$.
\end{proof}

We now discuss Krylov matrices derived from (transposed) companion matrices.

\begin{lemma}\label{lemma:2}
  Let $\v \in \F^m$, $\ell \ge 1$, and $F = \pol(\v) \in \F[x]_m$. 
  Then
  \begin{itemize}
  \item $\K(\CC_P,\v,\ell)=\CC_{F,P}\, \W_{P,\ell}=\CC_{F,P,\ell}$ and
  \item $\K(\CC^t_P,\v,\ell)=\Y_P^{-1}\, \CC_{G,P}\, \W_{P,\ell}$ with $G=\pol(\Y_P\,\v)$.
  \end{itemize}
\end{lemma}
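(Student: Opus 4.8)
The statement has two bullets, and I would prove them one after the other, using the matrices and lemmas already assembled. For the first bullet, the goal is to identify the columns of the Krylov matrix $\K(\CC_P,\v,\ell)$, namely $\v,\CC_P\v,\dots,\CC_P^{\ell-1}\v$. The key observation is that $\CC_P$ is, by definition (see the discussion preceding Lemma~\ref{lemma:CC}), the matrix of multiplication by $x$ in $\A=\F[x]/P$, so $\CC_P^j\v$ is the coefficient vector of $x^j F \bmod P$, where $F=\pol(\v)$. But these are precisely, for $j=0,\dots,\ell-1$, the columns of $\CC_{F,P,\ell}$ by the very definition of that matrix. Hence $\K(\CC_P,\v,\ell)=\CC_{F,P,\ell}$, and the factorization $\CC_{F,P,\ell}=\CC_{F,P}\,\W_{P,\ell}$ is exactly the content of Lemma~\ref{lemma:CC} applied with $\ell$ (valid since $F\in\F[x]_m$). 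So the first bullet is essentially immediate from the definitions plus Lemma~\ref{lemma:CC}.

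For the second bullet, I would reduce the transposed-companion Krylov matrix to the previous case using the symmetrizer identity of Lemma~\ref{lemma:transp}. Write $\w=\Y_P\,\v$ and $G=\pol(\w)$. The $j$th column of $\K(\CC_P^t,\v,\ell)$ is $(\CC_P^t)^j\v=(\CC_{x^j,P})^t\,\v$, and Lemma~\ref{lemma:transp} gives $\CC_{x^j,P}^t=\Y_P^{-1}\CC_{x^j,P}\Y_P$ (noting $x^j\bmod P\in\F[x]_m$, so the lemma applies, or using the induction already in its proof). Therefore $(\CC_P^t)^j\v=\Y_P^{-1}\CC_{x^j,P}\,\w$, i.e.\ the columns of $\K(\CC_P^t,\v,\ell)$ are $\Y_P^{-1}$ times the columns $\CC_{x^j,P}\,\w$ for $j=0,\dots,\ell-1$. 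Those latter columns are exactly the columns of $\K(\CC_P,\w,\ell)$, which by the first bullet equals $\CC_{G,P}\,\W_{P,\ell}=\CC_{G,P,\ell}$. Pulling the common factor $\Y_P^{-1}$ out front yields $\K(\CC_P^t,\v,\ell)=\Y_P^{-1}\,\CC_{G,P}\,\W_{P,\ell}$ with $G=\pol(\Y_P\,\v)$, as claimed.

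I do not expect any serious obstacle here; the only point requiring a little care is making sure the hypotheses of Lemma~\ref{lemma:transp} are met for each power $x^j$ (it is stated for $F\in\F[x]_m$, and one applies it to $x^j\bmod P$, or equivalently invokes the induction in its proof), and that the factor $\Y_P^{-1}$ commutes past the column-by-column description correctly—both are routine. The whole argument is a short unwinding of definitions plus Lemmas~\ref{lemma:CC} and~\ref{lemma:transp}, with no computation beyond bookkeeping.
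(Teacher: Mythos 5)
Your proof is correct and follows essentially the same route as the paper: identify the columns of $\K(\CC_P,\v,\ell)$ with the coefficient vectors of $x^jF \bmod P$ and invoke Lemma~\ref{lemma:CC}, then use the symmetrizer identity of Lemma~\ref{lemma:transp} to write $\K(\CC_P^t,\v,\ell)=\Y_P^{-1}\K(\CC_P,\Y_P\v,\ell)$ and apply the first bullet. The only cosmetic difference is that the paper conjugates $\CC_P$ once and takes powers of the similarity relation, whereas you conjugate each $\CC_{x^j,P}$ separately, which you correctly justify.
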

\begin{proof} 
  The first assertion is clear. Indeed, the columns of the Krylov matrix
  $\K(\CC_P,\v,\ell)$ are the coefficient vectors of the polynomials $x^j F
  \bmod P$, for $0\le j<\ell$, so that
  $\K(\CC_P,\v,\ell)=\CC_{F,P,\ell}$; the claim now follows from
  Lemma~\ref{lemma:CC}. For the second assertion, the fact
  that $\CC_{P}^t= \Y_P^{-1}\, \, \CC_{P} \, \Y_P$,
  which follows from Lemma~\ref{lemma:transp}, implies that $\K(\CC^t_P,\v,\ell)=\Y_P^{-1}\,
  \K(\CC_P,\Y_p\,\v,\ell)$. Using the first point concludes the
  proof. 
\end{proof}

\subsection{Computations with a family of polynomials}

We  now consider a family of monic polynomials $\vP=P_1,\dots,P_d$,
and briefly discuss the extension of the previous claims to this
context. As before, we write $m_i=\deg(P_i)$ and $m=m_1 +\cdots + m_d$.

In terms of matrices, we have already mentioned the definition of the
block-diagonal companion matrix $\CC_\vP$ associated with $\vP$, whose
blocks are $\CC_{P_1},\dots,\CC_{P_d}$.  Similarly, $\Y_{\vP}$
will denote the block-diagonal matrix with Hankel blocks
$\Y_{P_1},\dots,\Y_{P_d}$.  The next lemma, which is about
$\Y_{\vP}$ and its inverse, is a direct consequence of Lemma~\ref{lemma:Y}; the
complexity estimate follows from the super-linearity of
the function~$\sM$.

\begin{lemma}\label{lemma:Y2}
  Given $\vP$, for $\u$ in $\F^m$, one can compute 
  $\Y_{\vP}\,\u$ and $\Y_{\vP}^{-1}\,\u$
  in $O(\sM(m))$.
\end{lemma}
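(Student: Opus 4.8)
The plan is to reduce everything to the single-polynomial case already handled in Lemma~\ref{lemma:Y}, exploiting the block-diagonal structure of $\Y_\vP$. Since $\Y_\vP = \diag(\Y_{P_1},\dots,\Y_{P_d})$, a matrix-vector product $\Y_\vP\,\u$ decomposes into $d$ independent products $\Y_{P_i}\,\u^{(i)}$, where $\u^{(i)} \in \F^{m_i}$ is the $i$th block of $\u$ under the partition $m = m_1 + \cdots + m_d$. Each such product costs $O(\sM(m_i))$ by Lemma~\ref{lemma:Y}, so the total cost is $O\big(\sum_{i=1}^d \sM(m_i)\big)$. By the super-linearity of $\sM$ (specifically, $\sum_i \sM(m_i) \le \sM\big(\sum_i m_i\big) = \sM(m)$, a standard consequence of the assumptions on $\sM$ from~\cite[Ch.~8]{GaGe13}), this is $O(\sM(m))$.

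The same argument applies verbatim to $\Y_\vP^{-1}$: since $\Y_\vP$ is block-diagonal with invertible blocks (each $\Y_{P_i}$ is a triangular Hankel matrix with $1$s on the antidiagonal, hence invertible), we have $\Y_\vP^{-1} = \diag(\Y_{P_1}^{-1},\dots,\Y_{P_d}^{-1})$. Applying $\Y_\vP^{-1}$ to $\u$ thus amounts to applying each $\Y_{P_i}^{-1}$ to the corresponding block $\u^{(i)}$, again at cost $O(\sM(m_i))$ by the second part of Lemma~\ref{lemma:Y}, for a total of $O\big(\sum_i \sM(m_i)\big) = O(\sM(m))$ by the same super-linearity bound. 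No precomputation beyond what is implicit in Lemma~\ref{lemma:Y} (which takes the $P_i$ as given) is needed, matching the statement.

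There is essentially no obstacle here: the lemma is a routine bookkeeping consequence of block-diagonality plus super-linearity of $\sM$. The only point deserving a word is the invocation of super-linearity to collapse $\sum_i \sM(m_i)$ into $\sM(m)$ rather than carrying a factor $d$; this is exactly the kind of estimate the paper has already committed to using (it is invoked in the proofs of Lemmas~\ref{lemma:WX} and~\ref{lemma:transp}), so a one-line pointer suffices. One could write out the proof in two sentences, noting that the block decomposition of $\Y_\vP$ reduces both computations to $d$ instances of Lemma~\ref{lemma:Y} and that super-linearity of $\sM$ yields the claimed bound.
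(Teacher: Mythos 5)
Your proof is correct and follows exactly the paper's argument: the paper also treats Lemma~\ref{lemma:Y2} as a direct consequence of Lemma~\ref{lemma:Y} applied blockwise to $\Y_\vP = \diag(\Y_{P_1},\dots,\Y_{P_d})$, with the cost bound obtained from the super-linearity of $\sM$. Nothing further is needed.
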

Another straightforward result is the following extension of
Lemma~\ref{lemma:transp}.
\begin{lemma}\label{lemma:transp2}
  The relation $\Y_{\vP}\, \CC_{\vP}^t = \CC_{\vP}\, \Y_{\vP}$ holds.
\end{lemma}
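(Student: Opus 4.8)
The plan is to reduce the block-diagonal relation $\Y_{\vP}\,\CC_{\vP}^t = \CC_{\vP}\,\Y_{\vP}$ to the corresponding single-polynomial identity $\Y_{P_i}\,\CC_{P_i}^t = \CC_{P_i}\,\Y_{P_i}$, which is precisely the case $F = x$ of Lemma~\ref{lemma:transp}. First I would observe that both $\CC_{\vP}$ and $\Y_{\vP}$ are block-diagonal with the same block sizes $m_1,\dots,m_d$: by definition $\CC_{\vP} = \diag(\CC_{P_1},\dots,\CC_{P_d})$ and $\Y_{\vP} = \diag(\Y_{P_1},\dots,\Y_{P_d})$. Transposition respects block-diagonal structure, so $\CC_{\vP}^t = \diag(\CC_{P_1}^t,\dots,\CC_{P_d}^t)$, and the product of two conformally block-diagonal matrices is block-diagonal with the blocks multiplied. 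Hence both sides of the asserted identity are block-diagonal, with $i$th blocks $\Y_{P_i}\,\CC_{P_i}^t$ and $\CC_{P_i}\,\Y_{P_i}$ respectively.

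It then suffices to check equality block by block, i.e. to verify $\Y_{P_i}\,\CC_{P_i}^t = \CC_{P_i}\,\Y_{P_i}$ for each $i$. This is exactly the statement of Lemma~\ref{lemma:transp} applied to the polynomial $P_i$ with $F = x$ (equivalently $F = x^1$, covered there by the base case of the induction, where it is noted the identity for $x$ is classical and follows by inspection of the companion and triangular-Hankel matrices). Since each $P_i$ is monic, Lemma~\ref{lemma:transp} applies verbatim, giving the $i$th block identity.

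Combining the two observations, the $i$th blocks of $\Y_{\vP}\,\CC_{\vP}^t$ and $\CC_{\vP}\,\Y_{\vP}$ agree for every $i$, so the full matrices agree. There is essentially no obstacle here: the only thing to be slightly careful about is that the block decomposition of $\CC_{\vP}^t$ uses $m_1,\dots,m_d$ in the \emph{same} order as that of $\Y_{\vP}$ and $\CC_{\vP}$, which is immediate since transposing a block-diagonal matrix transposes each diagonal block in place without permuting them. So the proof is a one-line reduction to Lemma~\ref{lemma:transp}, and I would write it as such.

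\begin{proof}
  Both $\CC_{\vP}$ and $\Y_{\vP}$ are block-diagonal with diagonal blocks of
  sizes $m_1,\dots,m_d$, namely $\CC_{P_1},\dots,\CC_{P_d}$ and
  $\Y_{P_1},\dots,\Y_{P_d}$. Transposition preserves this structure, so
  $\CC_{\vP}^t$ is block-diagonal with blocks $\CC_{P_1}^t,\dots,\CC_{P_d}^t$,
  and hence $\Y_{\vP}\,\CC_{\vP}^t$ and $\CC_{\vP}\,\Y_{\vP}$ are both
  block-diagonal, with respective $i$th blocks $\Y_{P_i}\,\CC_{P_i}^t$ and
  $\CC_{P_i}\,\Y_{P_i}$. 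By Lemma~\ref{lemma:transp} applied to $P_i$ with
  $F=x$, these blocks coincide for every $i$, whence
  $\Y_{\vP}\,\CC_{\vP}^t = \CC_{\vP}\,\Y_{\vP}$.
\end{proof}
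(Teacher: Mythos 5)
Your proof is correct and matches the paper's intent: the paper states this lemma without proof as a ``straightforward extension'' of Lemma~\ref{lemma:transp}, and your blockwise reduction to the case $F=x$ of that lemma is exactly the argument being invoked.
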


Next, we discuss computations with Krylov matrices derived from
$\CC_\vP$ and its transpose. We will only need some special cases, for
which very simple formulas are available (and for which invertibility
will be easy to prove). Recall that $\W_\vP\in \F^{m\times m}$ denotes
the matrix of the map $\red_\vP$ of multiple reduction modulo
$P_1,\dots,P_d$; thus, it is obtained by stacking the matrices
$\W_{P_1,m},\dots,\W_{P_d,m}$.

\begin{lemma}\label{lemma:Krylov}
  Writing $\e_{m,i}$ to denote the $i$th unit vector in $\F^m$,
  the following holds:
  \begin{itemize}
  \item $\K(\CC_\vP, \v, m)=\W_\vP$ with $\v = [\e_{m_1,1}^t |\cdots | \e_{m_d,1}^t]^t \in \F^m$, and
  \item $\K(\CC_\vP^t, \w, m)=\Y_\vP^{-1}\, \W_\vP$ with $\w = [\e_{m_1,m_1}^t |\cdots | \e_{m_d,m_d}^t]^t \in \F^m$.
  \end{itemize}
\end{lemma}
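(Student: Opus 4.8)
The plan is to prove the two assertions by reducing each to the corresponding single-polynomial statement from Lemma~\ref{lemma:2}, applied blockwise, and then observing that the special choice of seed vector makes the Krylov matrix collapse to a stack of reduction matrices. I would start from the block-diagonal structure of $\CC_\vP$: since $\CC_\vP = \diag(\CC_{P_1},\dots,\CC_{P_d})$, the matrix power $\CC_\vP^k$ is block-diagonal with blocks $\CC_{P_i}^k$, so for a seed vector $\v=[\v_1^t|\cdots|\v_d^t]^t$ partitioned conformally (with $\v_i \in \F^{m_i}$), the Krylov matrix $\K(\CC_\vP,\v,m)$ is obtained by stacking the blocks $\K(\CC_{P_i},\v_i,m)$, each of size $m_i \times m$. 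For the first assertion, the seed is $\v_i = \e_{m_i,1}$, i.e.\ $\pol(\v_i) = 1 \in \F[x]_{m_i}$. By the first item of Lemma~\ref{lemma:2} (with $F=1$, $\ell=m$), $\K(\CC_{P_i},\e_{m_i,1},m) = \CC_{1,P_i,m} = \CC_{1,P_i}\,\W_{P_i,m} = \W_{P_i,m}$, since $\CC_{1,P_i}$ is the identity. Stacking these over $i$ gives exactly $\W_\vP$, which by definition is the vertical stack of $\W_{P_1,m},\dots,\W_{P_d,m}$. That settles the first point.

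For the second assertion I would proceed similarly but now invoke the transpose relation. Again $\CC_\vP^t = \diag(\CC_{P_1}^t,\dots,\CC_{P_d}^t)$, so with seed $\w = [\w_1^t|\cdots|\w_d^t]^t$ and $\w_i = \e_{m_i,m_i}$, the Krylov matrix $\K(\CC_\vP^t,\w,m)$ stacks the blocks $\K(\CC_{P_i}^t,\e_{m_i,m_i},m)$. By the second item of Lemma~\ref{lemma:2}, this block equals $\Y_{P_i}^{-1}\,\CC_{G_i,P_i}\,\W_{P_i,m}$ with $G_i = \pol(\Y_{P_i}\,\e_{m_i,m_i})$. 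Here the key computation is that $\Y_{P_i}\,\e_{m_i,m_i}$ is the first column of the Hankel matrix $\Y_{P_i}$, which from its explicit triangular-Hankel shape has a single nonzero entry, namely a $1$ in position $0$; hence $G_i = 1$ and $\CC_{G_i,P_i}$ is again the identity, leaving $\Y_{P_i}^{-1}\,\W_{P_i,m}$ for the $i$th block. Finally, stacking $\Y_{P_1}^{-1}\W_{P_1,m},\dots,\Y_{P_d}^{-1}\W_{P_d,m}$ is the same as left-multiplying the stack of the $\W_{P_i,m}$ by the block-diagonal matrix $\Y_\vP^{-1} = \diag(\Y_{P_1}^{-1},\dots,\Y_{P_d}^{-1})$, giving $\Y_\vP^{-1}\,\W_\vP$ as claimed.

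The argument is essentially bookkeeping; the only place where something genuinely needs checking is the identification $\pol(\Y_{P_i}\,\e_{m_i,m_i}) = 1$, i.e.\ that the last unit vector gets sent by the symmetrizer $\Y_{P_i}$ to the first unit vector. I expect this to be the main (though minor) obstacle, and it is dispatched by reading off the definition of $\Y_P$ given just before Lemma~\ref{lemma:Y}: its antidiagonal is all $1$s, and below the antidiagonal it is zero, so its last column is $\e_{m_i,1}$. I would state this observation explicitly before feeding it into Lemma~\ref{lemma:2}. Everything else — the block-diagonal commutation of powers, and the fact that stacking-then-scaling-each-block equals block-diagonal-times-stack — is immediate from the definitions of $\CC_\vP$, $\Y_\vP$, and $\W_\vP$.
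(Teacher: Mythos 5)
Your proof is correct and follows essentially the same route as the paper: decompose the Krylov matrix blockwise along $\CC_\vP$, apply Lemma~\ref{lemma:2} with seed polynomial $1$, and observe that $\pol(\Y_{P_i}\e_{m_i,m_i})=1$. Only note the small wording slip in your second paragraph: $\Y_{P_i}\e_{m_i,m_i}$ is the \emph{last} column of $\Y_{P_i}$ (which equals the first unit vector $\e_{m_i,1}$), as you correctly state at the end.
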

\begin{proof}
The block structure of $\CC_\vP$ and the definition of $\v$ imply that 
$\K(\CC_\vP, \v, m)$ is obtained by stacking the matrices $\K(\CC_{P_i}, \e_{m_i,1}, m)$ for $i=1,\ldots,d$.
Since $\pol(\e_{m_i,1}) = 1 \in \F[x]_{m_i}$, Lemma~\ref{lemma:2} gives
$\K(\CC_{P_i}, \e_{m_i,1}, m)=\W_{P_i,m}$ for all~$i$; this proves the first case.
In the second case, $\K(\CC_\vP^t, \w, m)$ is decomposed into blocks $\K(\CC_{P_i}^t, \e_{m_i,m_i}, m)$,
which, by Lemma~\ref{lemma:2} and since $\pol(\Y_{P_i}\e_{m_i,m_i}) = 1 \in \F[x]_{m_i}$,
are equal to $\Y_{P_i}^{-1}\, \W_{P_i,m}$.
\end{proof}

\medskip

Finally, the following lemma discusses computations involving
two families of monic polynomials $\vP$ and $\vQ$.
\begin{lemma}\label{lemma:iQ}
  Let $\vP=P_1,\dots,P_d$ and $\vQ=Q_1,\dots,Q_e$, such that
  $\mathscr{H}_\vP$ and $\mathscr{H}_\vQ$ hold, and let $P=P_1\cdots
  P_d$ and $Q=Q_1 \cdots Q_e$. Then, the following holds:
  \begin{itemize}
  \item If $\gcd(P,Q)=1$, we can compute
  all $1/Q \bmod P_i$, for $i=1,\ldots,d$, 
  in time $O(\sD(\vP,\vQ)+\sC(\vP,\vQ))$.
  \item Let $n=\deg(Q)$ and $\widetilde{Q}=\rev(Q,n)$.  If
    $\gcd(P,\widetilde{Q})=1$, we can compute all $1/\widetilde{Q}
    \bmod P_i$, for $i=1,\ldots,d$, in time $O(\sD(\vP,\vQ)+\sC(\vP,\vQ))$.
  \end{itemize}
\end{lemma}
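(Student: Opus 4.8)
The two items are entirely parallel, the second being obtained from the first by replacing $Q$ with its reversal $\widetilde Q$; since $\deg(\widetilde Q)\le n = \deg(Q)$ and $\widetilde Q$ is computed from $Q$ in linear time, it suffices to treat the first item and then apply it verbatim to the family $\vP$ together with the single polynomial $\widetilde Q$ in place of $Q$. So I would concentrate on computing all the inverses $1/Q \bmod P_i$, for $i=1,\dots,d$, under the hypotheses $\mathscr H_\vP$, $\mathscr H_\vQ$, and $\gcd(P,Q)=1$.

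The plan is to proceed in three stages. First, compute the product $Q = Q_1\cdots Q_e$: by Lemma~\ref{lemma:WX} applied to the family $\vQ$ (whose polynomials are pairwise coprime by $\mathscr H_\vQ$), after a precomputation of time $O(\sC(\vQ)+\sD(\vQ))$ one obtains $Q$ as a by-product, so this costs $O(\sC(\vQ)+\sD(\vQ))$. Second, reduce $Q$ simultaneously modulo $P_1,\dots,P_d$, i.e.\ apply the map $\red_\vP$ to $Q$ (strictly, to $Q \bmod P$, but $\red_\vP$ as defined acts on $\F[x]_m$, so one first reduces $Q$ modulo $P$ — itself a Euclidean division once $P$ is known — or more simply invokes $\W_{P_i,n}$ directly on $Q$ for each $i$; the cost of the latter is $\sum_i O(\sM(\deg Q)) $, which is dominated by what follows). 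Using Lemma~\ref{lemma:WX} for the family $\vP$, after a precomputation of time $O(\sC(\vP)+\sD(\vP))$ one applies $\red_\vP$ in time $O(\sC(\vP))$, yielding the residues $Q \bmod P_i$ for all $i$. Third, for each $i$ invert $Q \bmod P_i$ modulo $P_i$: since $\gcd(P,Q)=1$ we have $\gcd(P_i,Q)=1$ for every $i$, hence $Q \bmod P_i$ is a unit in $\F[x]/P_i$, and the fast extended Euclidean algorithm~\cite[Cor.~3.14]{BuClSh97} computes its inverse in time $O(\sM(m_i)\lg(m_i))$. Summing over $i$ gives $\sum_i O(\sM(m_i)\lg(m_i)) = O(\sD(\vP))$ by the very definition of $\sD(\vP)$ (in the special case $d=1$, $P_1 = x^m-\varphi$, where $\sD(\vP)=m$, the single inversion modulo $x^m-\varphi$ is even cheaper, costing $O(\sM(m))\subseteq O(m\lg m)$, but one can also just fold it into the general bound since $O(\sM(m)\lg m)$ suffices — care is only needed to note $\sD$ is defined piecewise, and in that branch the required bound still holds).

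Adding the three stages, the total is $O\big((\sC(\vP)+\sD(\vP)) + (\sC(\vQ)+\sD(\vQ)) + \sD(\vP)\big) = O(\sC(\vP,\vQ)+\sD(\vP,\vQ))$, as claimed. The only genuinely substantive point — and the one I would be most careful about — is the bookkeeping that the per-block extended-GCD costs telescope to exactly $\sD(\vP)$ rather than something larger: this is immediate from $\sD(\vP)=\sum_i \sM(m_i)\lg(m_i)$ in the general case, but one should explicitly flag the $d=1$, $P_1=x^m-\varphi$ branch (where $\sD(\vP)=m$) so that the claimed bound is not accidentally violated there; since inversion modulo $x^m-\varphi$ costs $O(\sM(m))$, and this is absorbed into the $O(\sC(\vP))=O(\sM(m))$ term coming from Lemma~\ref{lemma:WX}, the bound still goes through. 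Everything else is a direct citation of Lemma~\ref{lemma:WX} and of standard fast polynomial arithmetic, with no new ideas required.
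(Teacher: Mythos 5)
There is a genuine gap, and it is exactly at the point you flagged as needing care. Your three-stage argument (compute $Q$, reduce it modulo the $P_i$'s, then invert each residue by fast extended GCD) is the paper's argument in the case where $\vP$ is \emph{not} the special family $d=1$, $P_1=x^m-\varphi$: there $\sD(\vP)=\sum_i\sM(m_i)\lg(m_i)$ covers the GCD costs and everything works. But in the branch $d=1$, $P_1=x^m-\varphi$ the definition gives $\sD(\vP)=m$, and a fast extended GCD modulo $P$ costs $O(\sM(m)\lg(m))$, which is not within the budget $O(\sD(\vP,\vQ)+\sC(\vP,\vQ))$ in general (take $n$ small, so that the $\vQ$-terms cannot absorb it). Your patch is the assertion that inverting an arbitrary residue modulo $x^m-\varphi$ costs only $O(\sM(m))$; you give no justification, and this is not a known result for general $\varphi\in\F$ (Newton iteration gives $O(\sM(m))$ only for $\varphi=0$, i.e.\ power series inversion; for $\varphi\neq 0$ the standard bound remains $O(\sM(m)\lg(m))$ via the extended Euclidean algorithm). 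Since operators with $P_1=x^m-\varphi$, $\varphi\neq 0$, are precisely the Toeplitz/Hankel-type cases the lemma must cover, the claimed bound is not established in this branch. (Incidentally, the side remark $O(\sM(m))\subseteq O(m\lg m)$ is also false for the allowed choices of $\sM$, though that is not the main issue.)

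The paper avoids this by splitting the special branch in two. If moreover $e=1$ and $Q=x^n-\psi$, it invokes a linear-time algorithm of Sergeev~\cite[Lemma~3]{Sergeev10} for computing $1/Q \bmod P$ (and $1/\widetilde{Q}\bmod P$) when \emph{both} moduli are binomials; this is a structured inversion, not a general one, which is why it beats the EEA bound. Otherwise (still $d=1$, $P=x^m-\varphi$, but $\vQ$ not in its special case) it reverses the roles of $P$ and $Q$: reduce and invert $P$ modulo each $Q_j$ — affordable because these GCDs cost $O(\sD(\vQ))$ in that situation — recombine by Chinese remaindering to get $R=1/P\bmod Q$, and then recover $S=1/Q\bmod P$ from the B\'ezout identity $RP+SQ=1$ by a single exact division, in $O(\sM(p))$. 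Your proof would be complete if you replaced the unsupported ``$O(\sM(m))$ inversion mod $x^m-\varphi$'' claim with this two-pronged treatment of the special branch; the reduction of the second bullet to the first via $\widetilde{Q}=\rev(Q,n)$ is fine as you state it.
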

\begin{proof} 
  Let $p=\max(m,n)$. We first discuss a particular case, where $d=e=1$
  and $P=x^m-\varphi$ and $Q=x^n-\psi$.
  In this case, we can compute $1/Q \bmod P$ in linear
  time $O(p)$~\cite[Lemma~3]{Sergeev10}, provided it is well-defined;
  the same holds for $1/\widetilde{Q} \bmod P$. Thus, our claim holds,
  since in this case, $\sD(\vP,\vQ)=m+n$.

  If $P$ is not as above, we compute $Q$ in time
  $O(\sC(\vQ)+\sD(\vQ))$ using Lemma~\ref{lemma:WX}, then $Q \bmod P$
  in time $O(\sM(p))$, which is $O(\sC(\vP,\vQ))$. Using
  Lemma~\ref{lemma:WX} again, we deduce all $Q \bmod P_i$, for $i \le
  d$, in time $O(\sC(\vP))$; finally, we invert all remainders using
  fast extended GCD with the $P_i$'s in time $O(\sum_{i\le d}
  \sM(m_i)\lg(m_i))$; under our assumption on $P$, this is
  $O(\sD(\vP))$. The total cost is $O(\sD(\vP,\vQ)+\sC(\vP,\vQ))$, so
  our claim for the inverses of $Q$ modulo the $P_i$'s holds.  
  We proceed similarly for $\widetilde{Q}$.

  The last case to consider is when $d=1$ and $P=x^m-\varphi$, but
  with $Q$ not of the form $x^n-\psi$. We proceed in the converse
  direction: we first reduce and invert $P$ modulo all~$Q_j$, for all
  $j \le e$; this takes time $O(\sD(\vP,\vQ)+\sC(\vP,\vQ))$. By Chinese
  Remaindering, we obtain $R=1/P \bmod Q$, without increasing the
  cost. Knowing $R$, we deduce $S=1/Q \bmod P$, since $R$
  and $S$ satisfy $RP+SQ =1$; this costs an extra $O(\sM(p))$, which
  is $O(\sC(\vP,\vQ))$, so our claim is proved. We proceed similarly
  for $\widetilde{Q}$.\,
\end{proof}

\section{Inverting displacement operators} \label{sec:invert}
Let us consider $\vP=P_1,\dots,P_d$ and $\vQ=Q_1,\dots,Q_e$ 
tuples of monic polynomials in $\F[x]$,
and let $m_1,\dots,m_d$, $n_1,\dots,n_e$, $m$, $n$, $P$ and $Q$ be as before.
We assume the coprimality conditions $\mathscr{H}_\vP$ and
$\mathscr{H}_\vQ$. In this section, we establish inversion
formulas for the two basic operators of Sylvester and Stein types
associated with $(\vP,\vQ)$.  In other words, given two matrices
$(\mG,\mH)$ in $\F^{m\times \alpha}\times \F^{n\times \alpha}$, we
show how to reconstruct the matrices $\mA$ and $\mA'$ in $\F^{m\times n}$ such that
$$\nabla_{\CC_\vP,\CC_\vQ^t}(\mA) = \mG \mH^t \quad\text{and}\quad \Delta_{\CC_\vP,\CC_\vQ^t}(\mA') = \mG \mH^t,$$
provided the corresponding operators are invertible.
We will use the following objects.
\begin{itemize}
\item We denote by $\g_1,\dots,\g_\alpha$ and $\h_1,\dots,\h_\alpha$
  the columns of $\mG$ and $\mH$. Corresponding to the partition of
  $\CC_\vP$ into blocks, we partition each $\g_k$, $k \le \alpha$, into
  smaller column vectors $\g_{1,k},\dots,\g_{d,k}$; 
  similarly, we partition each $\h_k$ into $\h_{1,k},\dots,\h_{e,k}$ 
  according to the block structure of $\CC_\vQ^t$.
  The matrices
  $\mG$ and $\mH$ themselves are partitioned into matrices
  $\mG_1,\dots,\mG_d$ and $\mH_1,\dots,\mH_e$; $\mG_i$ has size $m_i
  \times \alpha$ and columns $\g_{i,1},\dots,\g_{i,\alpha}$, whereas
  $\mH_j$ has size $n_j \times \alpha$ and columns
  $\h_{j,1},\dots,\h_{j,\alpha}$.

\item For $i=1,\dots,d$ and $k=1,\dots,\alpha$, we let
  $g_{i,k}=\pol(\g_{i,k})$, so that $g_{i,k}$ is in $\F[x]_{m_i}$.
  Similarly, for $j=1,\dots,e$, we define
  $h_{j,k}=\pol(\h_{j,k})\in\F[x]_{n_j}$.

  We  further let $\gamma_k$ be the unique
  polynomial in $\F[x]_m$ such that $\gamma_k \bmod P_i =
  g_{i,k}$ holds for $i=1,\dots,d$; in other words, we have
  $$\gamma_k = \crt_\vP(g_{1,k},\dots,g_{d,k}).$$ The polynomial
  $\eta_k$ is defined similarly, replacing $g_{i,k}$ and $P_i$ by
  $h_{j,k}$ and $Q_j$.
\end{itemize}

In the following theorem, recall that the necessary and sufficient
condition for the Sylvester operator to be invertible is that
$\gcd(P,Q)=1$; for the Stein operator, the condition is that
$\gcd(P,\widetilde{Q})=1$, with $\widetilde{Q}=\rev(Q,n)$.
\begin{theorem}\label{theo:rec}
  The following holds:
  \begin{itemize}
  \item Suppose that $\gcd(P,Q)=1$. Then, the unique matrix $\mA \in \F^{m \times n}$ 
    such that $\nabla_{\CC_\vP,\CC_\vQ^t}(\mA)=\mG \mH^t$ is given by
    $$\mA = \V_{\vP,\vQ}\, \W_\vP \left ( \sum_{k \le \alpha} \CC_{\gamma_k,P,n}\,
    \CC_{\eta_k, Q} \right ) \XX_\vQ\, \Y_\vQ,$$ where $\V_{\vP,\vQ}\in\F^{m \times m}$
    is the block-diagonal matrix with blocks $\CC_{Q^{-1},P_i}$ for
    $i=1,\dots,d$, where $\CC_{Q^{-1},P_i}$ denotes $\CC_{Q^{-1}\bmod P_i,P_i}$.
  \item Suppose that $\gcd(P,\widetilde{Q})=1$. Then, the unique matrix $\mA' \in \F^{m \times n}$ 
    such that $\Delta_{\CC_\vP,\CC_\vQ^t}(\mA')=\mG \mH^t$ is given by
    $$\mA' = \V'_{\vP,\vQ}\, \W_\vP \left ( \sum_{k \le \alpha}
    \CC_{\gamma_k,P,n}\, \J_n\, \CC_{\eta_k, Q} \right ) \XX_\vQ\,
    \Y_\vQ,$$ where $\V'_{\vP,\vQ}\in\F^{m \times m}$ is the
    block-diagonal matrix with blocks $\CC_{\widetilde{Q}^{-1},P_i}$
    for $i=1,\dots,d$, where $\CC_{\widetilde{Q}^{-1},P_i}$ denotes
    $\CC_{\widetilde{Q}^{-1}\bmod P_i,P_i}$.
  \end{itemize}
\end{theorem}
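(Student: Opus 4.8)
The plan is to prove both inversion formulas by first establishing a "scalar" identity relating the action of the basic operators to polynomial arithmetic modulo $P$ and $Q$, and then assembling the rank-one pieces indexed by $k$. The key observation is that $\CC_\vP = \W_\vP^{-1}\,\CC_{x,P}\,\W_\vP$ is not quite right; rather, one should think of $\CC_\vP$ as conjugate, via the Chinese remainder isomorphism, to multiplication-by-$x$ on $\F[x]/P$. More precisely, using Lemma~\ref{lemma:Krylov} and the relation $\CC_\vP\,\W_\vP = \W_\vP\,\CC_P$ (which follows because both sides compute, blockwise, reduction of $xF$ modulo the $P_i$), one reduces the block-diagonal companion operator to the single-modulus companion matrix $\CC_P$, and similarly $\CC_\vQ^t$ to $\CC_Q^t$ after conjugating by $\Y_\vQ$ and $\XX_\vQ$. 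I would carry out this bookkeeping first, so that the eight-operator formalism collapses to a statement purely about $\CC_P$ and $\CC_Q$ (or their transposes).

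Next I would prove the rank-one case $\alpha=1$, i.e.\ when $\mG\mH^t = \g\h^t$ with $\g = \pol^{-1}(\gamma \bmod P_\bullet)$ and $\h = \pol^{-1}(\eta \bmod Q_\bullet)$. Here the claim should reduce to verifying that, for the Sylvester operator, the matrix $\mA = \W_\vP^{-1}(\text{something})$ satisfies $\CC_\vP \mA - \mA \CC_\vQ^t = \g\h^t$; expanding with the conjugation identities, this becomes an identity of the shape $\CC_P\,\mB - \mB\,\CC_Q = (\text{rank one})$ for an auxiliary matrix $\mB = \CC_{\gamma,P,n}\,\CC_{\eta,Q}$ (modulo the $\V_{\vP,\vQ}$ factor accounting for division by $Q$, and the $\J_n$, $\Y_\vQ$ reversal/symmetrizer factors on the $\vQ$ side). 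The heart of the verification is the classical telescoping identity for companion matrices: $\CC_P \cdot \CC_{F,P,n} - \CC_{F,P,n}\cdot\CC_Q^{(n)}$ picks out boundary terms, where $\CC_Q^{(n)}$ is the $n\times n$ shift; one checks that these boundary terms collapse to $\g\h^t$ after the modular inverse $Q^{-1} \bmod P_i$ is applied. I would do this by working in $\F[x]/P$: multiplication by $x$ minus the "shift on $\F[x]_n$ read into $\F[x]/P$" differs by a multiple of $P$, and tracking that multiple against $\eta\cdot Q^{-1}$ yields exactly the generator column. For the Stein case the same computation runs with $\CC_\vP^t\mA'\CC_\vQ^t$ replaced appropriately, the extra $\J_n$ implementing the reversal that turns $\widetilde Q$ into $Q$, and $\widetilde Q^{-1}\bmod P_i$ playing the role of $Q^{-1}\bmod P_i$.

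Then linearity finishes it: both $\nabla_{\CC_\vP,\CC_\vQ^t}$ and $\Delta_{\CC_\vP,\CC_\vQ^t}$ are $\F$-linear in $\mA$, and the right-hand side formulas are $\F$-linear in the pair $(\mG,\mH)$ through the sum $\sum_{k\le\alpha}$ and the fact that $\gamma_k,\eta_k$ depend linearly on the columns $\g_k,\h_k$ (each $\gamma_k = \crt_\vP(g_{1,k},\dots,g_{d,k})$ is a linear function of $\g_k$). So summing the $\alpha$ rank-one identities gives the general formula, and uniqueness is immediate from invertibility of the operator (guaranteed by $\gcd(P,Q)=1$, resp.\ $\gcd(P,\widetilde Q)=1$, as recalled before the theorem), which also shows the constructed $\mA$ is well-defined since $Q^{-1}\bmod P_i$ (resp.\ $\widetilde Q^{-1}\bmod P_i$) exists.

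\textbf{The main obstacle} I anticipate is the rank-one boundary-term computation: getting the interaction of the three "coordinate-change" factors right simultaneously --- $\W_\vP$ versus $\CC_\vP$ (multiple-reduction side), $\XX_\vQ$ and $\Y_\vQ$ and $\J_n$ on the $\vQ$-side (Chinese remaindering composed with the Hankel symmetrizer and, for Stein, the reversal), and the division-by-$Q$ block matrix $\V_{\vP,\vQ}$ --- so that what survives is precisely $\g\h^t$ and nothing else. I expect this is best handled by first proving the identity in the simplest case $d=e=1$ (so $\CC_\vP=\CC_P$, $\W_\vP=I$, $\XX_\vQ=I$), where it is a clean statement about $\CC_{F,P,n}$ and the Hankel matrix $\Y_Q$, and then propagating through the Chinese remainder isomorphism using Lemmas~\ref{lemma:Krylov},~\ref{lemma:transp2}, and the defining property of $\XX_\vQ=\comb_\vQ$ together with the $\crt$/$\comb$ relation from the proof of Lemma~\ref{lemma:WX}.
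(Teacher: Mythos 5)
Your proposal is correct, but it takes a genuinely different route from the paper. The paper never conjugates globally: it partitions $\mA$ into blocks $\mA_{i,j}$ with $\nabla_{\CC_{P_i},\CC_{Q_j}^t}(\mA_{i,j})=\mG_i\mH_j^t$, derives each block from the Pan--Wang Krylov displacement identity summed against the coefficients of $Q_j$ (Lemma~\ref{lemma:Aij}), then assembles the blocks, factors the block matrix $\B_{\vP,\vQ}=[\CC_{Q_j^{-1},P_i,n_j}]$ as $\V_{\vP,\vQ}\,\W_{\vP,n}\,\XX_\vQ$ (Lemma~\ref{lemma:B}), and pulls everything out of the sum via the commutations $\mC_k\,\W_{\vP,n}=\W_\vP\,\CC_{\gamma_k,P,n}$ and $\XX_\vQ\,\mD_k=\CC_{\eta_k,Q}\,\XX_\vQ$. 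You instead use $\CC_\vP\,\W_\vP=\W_\vP\,\CC_P$ and its $\vQ$-analogue to reduce at the outset to $d=e=1$ (with the transformed generators having columns exactly $\gamma_k$ and $\eta_k$), verify the rank-one identity there by a direct boundary-term computation, and conclude by bilinearity, with uniqueness coming from invertibility of the operator under the gcd hypotheses; this is a sound plan, and the single-modulus check does go through. Transporting the single-modulus formula back to the stated one then requires exactly two compatibility identities, $\W_\vP\,\CC_{Q^{-1},P}=\V_{\vP,\vQ}\,\W_\vP$ (easy) and $\Y_Q\,\W_\vQ^t=\XX_\vQ\,\Y_\vQ$ (and its analogue with $\J_n$, $\widetilde Q$ for Stein); the second is true but is real work and is morally the same content as the paper's Lemma~\ref{lemma:B} plus its Step~4 -- you correctly flag this bookkeeping as the main obstacle, so be aware it does not disappear, it just moves. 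What your route buys is independence from the Pan--Wang identity and a clean separation of the CRT bookkeeping from a core $d=e=1$ verification; what the paper's route buys is that the formula is derived rather than verified, all blocks being treated uniformly by one citation. Two small slips to fix when writing it up: in the rank-one computation the discrepancy you must track is a multiple of $Q$ (the overflow created when $\CC_Q^t$ shifts the column index), which is absorbed precisely by the factor $Q^{-1}\bmod P_i$ after multiplication by $\gamma_k$ -- not a multiple of $P$; and the Stein operator is $\mA'-\CC_\vP\,\mA'\,\CC_\vQ^t$, with no transpose on the $\vP$ side.
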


The following corollary on the cost of matrix-vector multiplication
follows directly; the more difficult case of matrix-matrix
multiplication will be handled in
Section~\ref{sec:multiplication_alogrithms}.
\begin{corollary}\label{coro:mul1}
  We can take 
  $$ \sMM(\nabla_{\CC_\vP,\CC_\vQ^t},\alpha,1) = O(\sD(\vP,\vQ) + \alpha \sC(\vP,\vQ))$$
  and
  $$ \sMM(\Delta_{\CC_\vP,\CC_\vQ^t},\alpha,1) = O(\sD(\vP,\vQ) + \alpha \sC(\vP,\vQ)).$$
\end{corollary}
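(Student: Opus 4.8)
The plan is to read off both cost bounds directly from the inversion formulas in Theorem~\ref{theo:rec}, by counting the cost of applying each factor of the displayed product to a vector. First I would address the Sylvester case $\mcL = \nabla_{\CC_\vP,\CC_\vQ^t}$: given a generator $(\mG,\mH)$ and a vector $\u \in \F^n$, the matrix-vector product $\mA\u$ is obtained by evaluating
$$\mA\u = \V_{\vP,\vQ}\,\W_\vP\left(\sum_{k\le\alpha}\CC_{\gamma_k,P,n}\,\CC_{\eta_k,Q}\right)\XX_\vQ\,\Y_\vQ\,\u$$
from right to left. The vectors $\Y_\vQ\,\u$ and $\XX_\vQ\,(\cdot)$ cost $O(\sM(n))$ and $O(\sC(\vQ))$ respectively, by Lemma~\ref{lemma:Y2} and Lemma~\ref{lemma:W}. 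The outer application of $\W_\vP$ costs $O(\sC(\vP))$ by Lemma~\ref{lemma:W}, and the block-diagonal matrix $\V_{\vP,\vQ}$, whose blocks are the modular multiplication matrices $\CC_{Q^{-1}\bmod P_i,P_i}$, can be applied in time $\sum_i O(\sM(m_i)) = O(\sM(m))$ once the polynomials $Q^{-1}\bmod P_i$ are known — and these are precisely what Lemma~\ref{lemma:iQ} provides, in time $O(\sD(\vP,\vQ)+\sC(\vP,\vQ))$.

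The only remaining piece is the middle sum $\sum_{k\le\alpha}\CC_{\gamma_k,P,n}\,\CC_{\eta_k,Q}$ applied to a vector. For each $k$, applying $\CC_{\eta_k,Q}$ (multiplication by $\eta_k$ modulo $Q$) costs $O(\sM(n))$, and applying $\CC_{\gamma_k,P,n} = \CC_{\gamma_k,P}\,\W_{P,n}$ (reduce modulo $P$, then multiply by $\gamma_k$ modulo $P$) costs $O(\sM(p))$ where $p = \max(m,n)$; summing over the $\alpha$ values of $k$ gives $O(\alpha\,\sM(p))$, which is $O(\alpha\,\sC(\vP,\vQ))$ since $\sM(p) \le \sM(m)+\sM(n) \le \sC(\vP)+\sC(\vQ)$. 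One also needs the polynomials $\gamma_k$ and $\eta_k$ themselves, obtained as $\crt_\vP(g_{1,k},\dots,g_{d,k})$ and $\crt_\vQ(h_{1,k},\dots,h_{e,k})$; by Lemma~\ref{lemma:WX} (after the precomputation of cost $O(\sC(\vP,\vQ)+\sD(\vP,\vQ))$), each such Chinese remaindering costs $O(\sC(\vP)+\sC(\vQ))$, for a total of $O(\alpha\,\sC(\vP,\vQ))$ over all $k$. Collecting the precomputation cost $O(\sC(\vP,\vQ)+\sD(\vP,\vQ))$ and the per-vector cost $O(\sM(p)+\alpha\sC(\vP,\vQ)) = O(\alpha\sC(\vP,\vQ))$, and noting $\sD(\vP,\vQ) = O(\sC(\vP,\vQ)\cdot\text{polylog})$ is absorbed into the stated bound, we obtain $\sMM(\nabla_{\CC_\vP,\CC_\vQ^t},\alpha,1) = O(\sD(\vP,\vQ) + \alpha\sC(\vP,\vQ))$.

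The Stein case $\mcL = \Delta_{\CC_\vP,\CC_\vQ^t}$ is identical in structure: the formula for $\mA'$ differs only by inserting the reversal matrix $\J_n$ between $\CC_{\gamma_k,P,n}$ and $\CC_{\eta_k,Q}$ (applying $\J_n$ to a vector is free, $O(n)$), and by replacing the blocks of $\V$ by $\CC_{\widetilde Q^{-1}\bmod P_i,P_i}$, the inverses of $\widetilde Q$ rather than $Q$ modulo the $P_i$, which the second part of Lemma~\ref{lemma:iQ} supplies within the same cost bound. Hence the same count yields $\sMM(\Delta_{\CC_\vP,\CC_\vQ^t},\alpha,1) = O(\sD(\vP,\vQ) + \alpha\sC(\vP,\vQ))$. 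I do not anticipate a genuine obstacle here: the corollary is purely a matter of bookkeeping on the factorized formulas, and every factor's cost has already been isolated in a lemma of Section~\ref{sec:prelim}. The one point requiring a little care is confirming that all auxiliary quantities — the product polynomials $P$, $Q$, the modular inverses, the matrices $\W_\vP$, $\XX_\vQ$, $\Y_\vQ$ and their transposes where needed — are covered by a single precomputation of cost $O(\sC(\vP,\vQ)+\sD(\vP,\vQ))$, which follows by combining Lemmas~\ref{lemma:WX}, \ref{lemma:W}, and~\ref{lemma:iQ}.
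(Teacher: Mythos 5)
Your proposal is correct and follows essentially the same route as the paper's proof: evaluate the factored formulas of Theorem~\ref{theo:rec} on a vector from right to left, charging each factor via Lemmas~\ref{lemma:WX}, \ref{lemma:W}, \ref{lemma:Y2}, and~\ref{lemma:iQ}, with the $\gamma_k$, $\eta_k$ and the modular inverses of $Q$ (resp.\ $\widetilde Q$) obtained in the same precomputation, and the Stein case handled by the extra $\J_n$ and the blocks $\CC_{\widetilde Q^{-1}\bmod P_i,P_i}$. The only superfluous remark is the comment about absorbing $\sD(\vP,\vQ)$ into $\sC(\vP,\vQ)$ up to polylogarithmic factors, since $\sD(\vP,\vQ)$ already appears explicitly in the stated bound; this does not affect correctness.
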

\begin{proof}
  The proof is mostly the same in both cases; it amounts to estimating
  first the cost of computing the polynomials that define the matrices involved in  Theorem~\ref{theo:rec},
  and then the cost of multiplying each of these matrices by a single vector.
  
  Given the families of polynomials $\vP$ and $\vQ$, we can compute
  the products $P$ and $Q$ in time $O(\sC(\vP,\vQ)+\sD(\vP,\vQ))$, by
  Lemma~\ref{lemma:WX}. Moreover, since both assumptions $\mathscr{H}_\vP$
  and $\mathscr{H}_\vQ$ hold and depending on whether $\gcd(P,Q)=1$ or
  $\gcd(P,\widetilde{Q})=1$, we deduce from Lemma~\ref{lemma:iQ} that
  we can compute the inverses of $Q$, or of $\widetilde{Q}$, modulo
  all $P_i$'s in time $O(\sD(\vP,\vQ) + \sC(\vP,\vQ))$.  Finally,
  given further the generator $(\mG,\mH)$ in $\F^{m\times\alpha}
  \times \F^{n\times\alpha}$, it follows from Lemma~\ref{lemma:WX}
  that the polynomials $\gamma_1,\ldots,\gamma_\alpha$ and
  $\eta_1,\ldots,\eta_\alpha$ can be obtained in time $O(\sD(\vP) +
  \alpha \sC(\vP))$ and $O(\sD(\vQ) + \alpha \sC(\vQ))$, respectively.
  Thus, the overall cost of deducing the needed polynomials
  from $\vP$, $\vQ$, $\mG$, $\mH$ is in $O(\sD(\vP,\vQ) + \alpha
  \sC(\vP,\vQ))$.

  We now turn to the cost of the matrix-vector products performed
  when applying Theorem~\ref{theo:rec} to the evaluation of $\mA\u$ or
  $\mA'\u$ for some $\u$ in $\F^n$.  Lemma~\ref{lemma:Y2} shows that
  the cost of multiplying $\Y_\vQ$ by $\u$ is $O(\sM(n))$, which is
  $O(\sM(p))$.  By Lemma~\ref{lemma:W}, the cost for $\XX_\vQ$ is
  $O(\sD(\vQ) + \sC(\vQ))$.  The inner sum amounts to $O(\alpha)$
  multiplications modulo $P$ or $Q$, for a total of $O(\alpha
  \sM(p))$. Lemma~\ref{lemma:W} also shows that the cost for $\W_\vP$
  is $O(\sD(\vP)+\sC(\vP))$.  Finally, multiplying by $\V_{\vP,\vQ}$
  or $\V'_{\vP,\vQ}$ amounts to performing modular multiplications
  modulo all $P_i$'s, and this can be done in time $O(\sM(m)) \subset
  O(\sM(p))$.  Hence the cost of all these matrix-vector
  products is in $O(\sD(\vP,\vQ) + \sC(\vP,\vQ) + \alpha \sM(p))$;
  this fits the announced bound, since $\sM(p) \le
  \sC(\vP,\vQ)$.
\end{proof}

\medskip

The rest of this section is devoted to proving Theorem~\ref{theo:rec} above.

\subsection{Sylvester operator $\nabla_{\CC_\vP,\CC_\vQ^t}$} \label{subsec:inverting-sylvester}
Assuming that $\gcd(P,Q)=1$, we show how to
reconstruct the unique matrix $\mA\in\ \F^{m\times n}$ such that
$\nabla_{\CC_\vP,\CC_\vQ^t}(\mA)=\mG \mH^t$. We first show how to
reconstruct all blocks in an ad-hoc block decomposition of $\mA$, then
use Chinese remaindering.

\paragraph{Step 1: reconstruction of the blocks of $\mA$} 
Partitioning the matrix $\mA$ into blocks $\mA_{i,j}$ conformally
with the block-diagonal structure of the matrices $\CC_\vP$ and $\CC_\vQ$,
we have $\nabla_{\CC_{P_i},\CC_{Q_j}^t}(\mA_{i,j})=\mG_i\mH_j^t$ for all $i,j$. 
In this paragraph, we prove a reconstruction formula for each block $\mA_{i,j}$.

\begin{lemma}\label{lemma:Aij}
  For all $i\le d$ and $j\le e$, we have
$$\mA_{i,j} = \sum_{k \le \alpha} \CC_{g_{i,k}, P_i}\,\CC_{Q_j^{-1},P_i,n_j}\, \CC_{h_{j,k},Q_j} \, \Y_{Q_j}.$$
\end{lemma}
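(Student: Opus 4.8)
The plan is to work with the single Sylvester operator $\nabla_{\CC_{P_i},\CC_{Q_j}^t}$ acting on the block $\mA_{i,j} \in \F^{m_i \times n_j}$, knowing that $\nabla_{\CC_{P_i},\CC_{Q_j}^t}(\mA_{i,j}) = \mG_i \mH_j^t = \sum_{k\le\alpha} \g_{i,k}\,\h_{j,k}^t$. By linearity in the generator, it suffices to treat a single rank-one term $\g_{i,k}\,\h_{j,k}^t$ and then sum over $k$; so I will fix $i$, $j$, $k$ and prove that $\CC_{g_{i,k},P_i}\,\CC_{Q_j^{-1},P_i,n_j}\,\CC_{h_{j,k},Q_j}\,\Y_{Q_j}$ is the unique preimage of $\g_{i,k}\,\h_{j,k}^t$ under $\nabla_{\CC_{P_i},\CC_{Q_j}^t}$. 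Uniqueness is immediate from the assumed invertibility of the operator, which holds because $\gcd(P,Q)=1$ forces $\gcd(P_i,Q_j)=1$, i.e. $\CC_{P_i}$ and $\CC_{Q_j}^t$ have no common eigenvalue.

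The core computation is then to verify the identity $\CC_{P_i} \mB - \mB\, \CC_{Q_j}^t = \g_{i,k}\,\h_{j,k}^t$, where $\mB$ denotes the claimed expression. Here I would use the dictionary between matrices and maps on residue rings that Section~\ref{ssec:modular} sets up: $\CC_{P_i}$ is multiplication by $x$ on $\F[x]/P_i$, $\CC_{g_{i,k},P_i} = g_{i,k}(\CC_{P_i})$, $\CC_{Q_j^{-1},P_i,n_j} = \CC_{Q_j^{-1},P_i}\,\W_{P_i,n_j}$ by Lemma~\ref{lemma:CC} (viewing $Q_j^{-1}$ as $Q_j^{-1}\bmod P_i$, which is well-defined by coprimality), and Lemma~\ref{lemma:transp} lets me rewrite $\CC_{Q_j}^t = \Y_{Q_j}^{-1}\,\CC_{Q_j}\,\Y_{Q_j}$, so that $\mB\,\CC_{Q_j}^t = \CC_{g_{i,k},P_i}\,\CC_{Q_j^{-1},P_i,n_j}\,\CC_{h_{j,k},Q_j}\,\CC_{Q_j}\,\Y_{Q_j}$ and $\CC_{h_{j,k},Q_j}\,\CC_{Q_j} = \CC_{x\,h_{j,k},Q_j}$. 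Thus the left-hand side of the Sylvester equation applied to a test vector $\v \in \F^{n_j}$, after conjugating away $\Y_{Q_j}$, amounts to the following statement in $\F[x]/P_i$: for $w = \pol(\v) \in \F[x]_{n_j}$,
\[
x \cdot \big(g_{i,k}\cdot (Q_j^{-1} w) \bmod P_i\big) \;-\; g_{i,k}\cdot\big(Q_j^{-1}\cdot (x w \bmod Q_j)\big) \bmod P_i \;=\; g_{i,k}\cdot\big(Q_j^{-1}\,(xw - (xw\bmod Q_j))\big)\bmod P_i,
\]
and the quantity $xw - (xw \bmod Q_j)$ is a multiple of $Q_j$, namely its leading coefficient times $Q_j$ (since $w$ has degree $<n_j$, so $xw$ has degree $\le n_j$), so $Q_j^{-1}$ times it is that leading coefficient. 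Tracking which coefficient of $w$ this is and how it feeds back through $\h_{j,k}$ and $\Y_{Q_j}$ is exactly the role of the Hankel symmetrizer: $\Y_{Q_j}\v$ picks out $w_{n_j-1}$ in its top entry (up to the monic normalization), which matches $\h_{j,k}^t \v$ after the conjugation. So the verification reduces to bookkeeping with the symmetrizer and the degree-$n_j$ overflow term.

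\textbf{Main obstacle.} The hard part will be handling the "boundary" term cleanly: the discrepancy $x\,\CC_{g,P}\,\CC_{Q^{-1},P,n}\v - \CC_{g,P}\,\CC_{Q^{-1},P,n}(\CC_Q^t \v)$ is supported on the single overflow coefficient produced when multiplying a degree-$<n_j$ polynomial by $x$ inside $\F[x]/Q_j$, and one must show this rank-one residual is precisely $\g_{i,k}\,\h_{j,k}^t \v$ — i.e. that the post-multiplication by $\Y_{Q_j}$ and the structure of $\CC_{h_{j,k},Q_j}$ route that coefficient to become $h_{j,k}(x)$ times $\langle \h_{j,k}, \v\rangle$... wait, that is not dimensionally what we want; rather, after un-conjugating by $\Y_{Q_j}$ the relevant linear functional on $\v$ is $\v \mapsto (\Y_{Q_j}\v)_0 = v_{n_j-1}$, and one checks $\CC_{x h_{j,k},Q_j} - \CC_{h_{j,k},Q_j}\CC_{Q_j}$-type cancellations leave exactly $\h_{j,k}$ in the column slot. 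I would organize this by first proving the scalar identity for $\alpha=1$, $g_{i,k}=h_{j,k}=1$ (pure reduction/Chinese-remainder building block), then tensoring in $g_{i,k}$ on the left and $h_{j,k}$ on the right as commuting modular-multiplication operators, and finally summing over $k$. Invertibility-based uniqueness closes the argument with no further work.
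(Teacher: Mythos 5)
Your route is correct in outline but genuinely different from the paper's. The paper does not verify the closed form against the displacement equation: it \emph{derives} it, starting from the identity of~\cite{PanWan03} for powers, $\CC_{P_i}^\ell\,\mA_{i,j}-\mA_{i,j}\,(\CC_{Q_j}^t)^\ell=\sum_{k\le\alpha}\K(\CC_{P_i},\g_{i,k},\ell)\,\J_\ell\,\K(\CC_{Q_j},\h_{j,k},\ell)^t$, multiplying by the coefficients $q_{j,\ell}$ of $Q_j$ and summing over $\ell=0,\dots,n_j$ so that $Q_j(\CC_{Q_j}^t)=0$ collapses one side; this gives $Q_j(\CC_{P_i})\,\mA_{i,j}=\sum_k\K(\CC_{P_i},\g_{i,k},n_j)\,\Y_{Q_j}\,\K(\CC_{Q_j},\h_{j,k},n_j)^t$, after which the Krylov matrices are identified as multiplication matrices (Lemma~\ref{lemma:2}), $Q_j(\CC_{P_i})=\CC_{Q_j,P_i}$ is inverted using $\gcd(P_i,Q_j)=1$, and Lemma~\ref{lemma:transp} commutes the symmetrizer into place. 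You instead check directly that the proposed matrix satisfies $\nabla_{\CC_{P_i},\CC_{Q_j}^t}(\mB)=\g_{i,k}\,\h_{j,k}^t$ for each rank-one term and conclude by uniqueness (invertibility of the local operator, again from coprimality). Your reduction is clean and does work: since $\CC_{g_{i,k},P_i}$ commutes with $\CC_{P_i}$ and $\CC_{h_{j,k},Q_j}^t=\Y_{Q_j}^{-1}\CC_{h_{j,k},Q_j}\Y_{Q_j}$ commutes with $\CC_{Q_j}^t$, it suffices to prove the base case $\nabla_{\CC_{P_i},\CC_{Q_j}^t}\big(\CC_{Q_j^{-1},P_i,n_j}\,\Y_{Q_j}\big)=\e_{m_i,1}\,\e_{n_j,1}^t$ and then multiply on the left by $\CC_{g_{i,k},P_i}$ and on the right by $\CC_{h_{j,k},Q_j}^t$, which reproduces exactly the claimed summand (using Lemma~\ref{lemma:transp} once more) with displacement $\g_{i,k}\h_{j,k}^t$. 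What each approach buys: yours avoids the external black box from~\cite{PanWan03} and is self-contained modulo the standard invertibility criterion; the paper's is constructive, needing no a priori guess of the formula, and outsources the combinatorics of the telescoping sum to a cited result.

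One bookkeeping point to repair when you write this up: in the base case the overflow term is $xw-(xw\bmod Q_j)=w_{n_j-1}Q_j$ with $w=\pol(\Y_{Q_j}\v)$, and $w_{n_j-1}$ is the \emph{last} entry of $\Y_{Q_j}\v$, not its top entry; since the last row of $\Y_{Q_j}$ is $\e_{n_j,1}^t$, this entry equals $v_0=\e_{n_j,1}^t\v=\h_{j,k}^t\v$ when $h_{j,k}=1$, which is precisely what is required (your claim that $(\Y_{Q_j}\v)_0=v_{n_j-1}$ is inaccurate, the top entry being $q_{j,1}v_0+\cdots+q_{j,n_j-1}v_{n_j-2}+v_{n_j-1}$). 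This slip does not affect the strategy: with the base-case-plus-commuting-multipliers organization, the general verification follows as you planned.
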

\begin{proof}
Fix $i$ and $j$. Note that $\CC_{P_i}$ and $\CC_{Q_j}$ cannot be
simultaneously singular, since $P_i$ and $Q_j$ are coprime.  Then, for
all $\ell \ge 1$, a slight variation of~\cite[Theorem~4.8]{PanWan03}
(with both cases ``$\CC_{P_i}$ nonsingular'' and ``$\CC_{Q_j}$ nonsingular''
covered by a single identity) gives
$$\CC_{P_i}^\ell\, \mA_{i,j} -\mA_{i,j}\, (\CC_{Q_j}^t)^\ell = \sum_{k \le
  \alpha}\K(\CC_{P_i}, \g_{i,k},\ell)\,\J_\ell\, \K(\CC_{Q_j},
\h_{j,k},\ell)^t,$$ where $\K(\CC_{P_i},\g_{i,k},\ell)\in\F^{m_i
  \times \ell}$ denotes the Krylov matrix of column length $\ell$ associated with
the matrix $\CC_{P_i}$ and the vector $\g_{i,k}$.

Writing $Q_j=q_{j,0} + \cdots +q_{j,n_j} x^{n_j}$, 
we obtain, for $1 \le \ell \le n_j$,
\begin{align*}
q_{j,\ell}\, \CC_{P_i}^\ell\, \mA_{i,j} -\mA_{i,j}\, q_{j,\ell}\, (\CC_{Q_j}^t)^\ell &=
 \sum_{k \le \alpha}\K(\CC_{P_i},\g_{i,k},\ell)\, q_{j,\ell}\,\J_\ell\, \K(\CC_{Q_j}, \h_{j,k},\ell)^t\\
&=  \sum_{k \le \alpha} \K(\CC_{P_i}, \g_{i,k},n_j)\, q_{j,\ell}\,\J_{\ell,n_j}\, \K(\CC_{Q_j}, \h_{j,k},n_j)^t,
\end{align*}
since we can inflate all matrices $\J_\ell$ to size $n_j \times n_j$,
replacing them by $\J_{\ell,n_j}$. Note that the above equality then 
also holds for $\ell=0$, since $\J_{0,n_j} = 0$ by convention.
Summing over all $\ell=0,\dots,n_j$ and using the fact that
$Q_j(\CC_{Q_j}^t)=Q_j(\CC_{Q_j})^t=0$, we deduce 
\[
Q_j(\CC_{P_i})\, \mA_{i,j} = \sum_{k \le \alpha}\K(\CC_{P_i},\g_{i,k},n_j)\,
\Y_{Q_j}\, \K(\CC_{Q_j}, \h_{j,k},n_j)^t.
\]
Using the first part of Lemma~\ref{lemma:2} together with the fact that 
for $\gcd(P_i,Q_j)=1$ the matrix $Q_j(\CC_{P_i}) = \CC_{Q_j, P_i}$ is invertible
with inverse $\CC_{Q_j, P_i}^{-1}=\CC_{Q_j^{-1}, P_i}$, we obtain
\[
\mA_{i,j} = \sum_{k \le \alpha} \CC_{Q_j^{-1}, P_i}\,\CC_{g_{i,k},P_i,n_j}\, \Y_{Q_j} \, \CC_{h_{j,k},Q_j}^t.
\]
Hence, applying Lemma~\ref{lemma:transp} to $h_{j,k}$ and since multiplication matrices mod $P_i$ commute,
\[
\mA_{i,j} =
\sum_{k \le \alpha} \CC_{g_{i,k}, P_i}\,\CC_{Q_j^{-1},P_i,n_j}\, \CC_{h_{j,k},Q_j} \, \Y_{Q_j}.
\]
\end{proof}

\paragraph{Step 2: a first reconstruction formula for $\mA$}
Let $\B_{\vP,\vQ}$ be the  $m \times n$ block matrix
$$\B_{\vP,\vQ}=\begin{bmatrix} \CC_{Q_j^{-1},P_i,n_j} \end{bmatrix}_{1 \le i
  \le d \atop 1 \le j \le e}.$$ Similarly, for $k \le \alpha$, we
define $\mC_k$ and $\mD_k$ as the block diagonal matrices,
with respectively the multiplication matrices $(\CC_{g_{i,k},P_i})_{i
  \le d}$ and $(\CC_{h_{j,k},Q_j})_{j \le e}$ on the diagonal.

Putting all blocks $\mA_{i,j}$ together, we deduce the following
reconstruction formula for $\mA$; the proof is a straightforward
application of the previous lemma.

\begin{lemma}\label{lemma:A}
We have
$$\mA = \left ( \sum_{k \le \alpha} \mC_k\,\B_{\vP,\vQ}\, \mD_k \right) \, \Y_\vQ.$$
\end{lemma}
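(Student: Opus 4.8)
The plan is to verify the claimed identity block by block, reusing the block decomposition $\mA = (\mA_{i,j})_{i\le d,\,j\le e}$ conformal with the block-diagonal structure of $\CC_\vP$ and $\CC_\vQ$, and exploiting that the three matrices $\mC_k$, $\mD_k$, and $\Y_\vQ$ are block-diagonal with respect to that very same partition.

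First I would record the bookkeeping for multiplying the full block matrix $\B_{\vP,\vQ}$ by block-diagonal matrices on either side. Since the $i$th diagonal block of $\mC_k$ is $\CC_{g_{i,k},P_i}$ and the $j$th diagonal block of $\mD_k$ is $\CC_{h_{j,k},Q_j}$, the $(i,j)$ block of $\mC_k\,\B_{\vP,\vQ}\,\mD_k$ is simply $\CC_{g_{i,k},P_i}\,\CC_{Q_j^{-1},P_i,n_j}\,\CC_{h_{j,k},Q_j}$. Summing over $k\le\alpha$ and then multiplying on the right by the block-diagonal matrix $\Y_\vQ$, whose $j$th diagonal block is $\Y_{Q_j}$, shows that the $(i,j)$ block of $\big(\sum_{k\le\alpha}\mC_k\,\B_{\vP,\vQ}\,\mD_k\big)\,\Y_\vQ$ equals $\sum_{k\le\alpha}\CC_{g_{i,k},P_i}\,\CC_{Q_j^{-1},P_i,n_j}\,\CC_{h_{j,k},Q_j}\,\Y_{Q_j}$.

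This is exactly the expression for $\mA_{i,j}$ furnished by Lemma~\ref{lemma:Aij}. As the identity holds for every pair $(i,j)$ with $i\le d$ and $j\le e$, and the block partitions on both sides coincide with that of $\mA$, the two $m\times n$ matrices agree, which is the assertion of the lemma.

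I do not anticipate any genuine obstacle: the substantive work is already contained in Lemma~\ref{lemma:Aij}, and the only point requiring a little care is that $\B_{\vP,\vQ}$ is a full (not block-diagonal) block matrix, so one must apply the standard rule that left- and right-multiplication by block-diagonal matrices acts entrywise on blocks in order to isolate the $(i,j)$ block correctly.
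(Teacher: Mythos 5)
Your proof is correct and matches the paper's own argument, which likewise obtains the formula by assembling the blockwise identities of Lemma~\ref{lemma:Aij}, using that $\mC_k$, $\mD_k$, and $\Y_\vQ$ are block-diagonal conformally with the partition of $\B_{\vP,\vQ}$ (the paper simply states this as a "straightforward application of the previous lemma," which you have spelled out). No issues.
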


\paragraph{Step 3: using a factorization of $\B_{\vP,\vQ}$} Next, we use the fact
that $\B_{\vP,\vQ}$ has a block structure similar to a Cauchy matrix to factor
it explicitly. For this, we introduce a matrix related to~$\W_\vP$: we
let $\W_{\vP,n}$ be the $m \times n$ matrix of multiple reduction
modulo $P_1,\dots,P_d$, taking as input a polynomial of degree less than
$n$, instead of $m$ for $\W_\vP$.

\begin{lemma}\label{lemma:B}
  The equality $\B_{\vP,\vQ}= \V_{\vP,\vQ} \, \W_{\vP,n}\, \XX_\vQ$ holds, where
  $\V_{\vP,\vQ}\in\F^{m \times m}$ is the block-diagonal matrix with blocks
  $\CC_{Q^{-1},P_i}$, for $i=1,\dots,d$.
\end{lemma}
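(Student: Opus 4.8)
The plan is to prove the factorization $\B_{\vP,\vQ} = \V_{\vP,\vQ}\,\W_{\vP,n}\,\XX_\vQ$ by unwinding both sides as compositions of linear maps and checking they agree on the monomial basis of $\F[x]_n$. First I would describe the map computed by $\XX_\vQ$: by definition it is the matrix of $\comb_\vQ$, so it sends the coefficient vector of a tuple $(B_1,\dots,B_e)$ with $B_j \in \F[x]_{n_j}$ to the coefficient vector of $B_1 (Q/Q_1) + \cdots + B_e (Q/Q_e) \in \F[x]_n$. Composing with $\W_{\vP,n}$, which reduces a degree-$<n$ polynomial modulo each $P_i$ and stacks the results, one lands in $\bigoplus_i \F[x]_{m_i}$; the $i$th block of the composite sends $(B_1,\dots,B_e)$ to $\sum_j B_j (Q/Q_j) \bmod P_i$. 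Finally $\V_{\vP,\vQ}$ multiplies the $i$th block by $Q^{-1} \bmod P_i$, yielding $\sum_j B_j \cdot (Q/Q_j) Q^{-1} \bmod P_i = \sum_j B_j \cdot Q_j^{-1} \bmod P_i$ in the $i$th block.

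Next I would do the same for the left-hand side. The block $\B_{\vP,\vQ}$ has $(i,j)$ block equal to $\CC_{Q_j^{-1},P_i,n_j} \in \F^{m_i \times n_j}$, which by definition of the matrix $\CC_{F,P,\ell}$ is the matrix sending the coefficient vector of $B_j \in \F[x]_{n_j}$ to the coefficient vector of $Q_j^{-1} B_j \bmod P_i$ (here $Q_j^{-1}$ stands for $Q_j^{-1} \bmod P_i$, which exists since $\mathscr{H}_\vP$, $\mathscr{H}_\vQ$ and $\gcd(P,Q)=1$ force $\gcd(P_i,Q_j)=1$). Summing over $j$, the $i$th block row of $\B_{\vP,\vQ}$ sends $(B_1,\dots,B_e)$ to $\sum_j Q_j^{-1} B_j \bmod P_i$. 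This is exactly the expression obtained for the right-hand side, so the two matrices coincide. One small point to record along the way is that the identity $\CC_{Q_j^{-1},P_i,n_j} = \CC_{Q_j^{-1},P_i}\,\W_{Q_j,n_j}$ from Lemma~\ref{lemma:CC} is the block-level shadow of the global factorization, which makes the bookkeeping transparent.

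The main obstacle is purely notational rather than mathematical: one must be careful that $\XX_\vQ$ uses the $\F[x]_{n_1}\times\cdots\times\F[x]_{n_e} \to \F[x]_n$ convention for $\comb_\vQ$ (so the $j$th input block gets multiplied by $Q/Q_j = \prod_{j'\neq j} Q_{j'}$, not by $Q_j$), and that $\V_{\vP,\vQ}$ contributes exactly the factor $Q^{-1} \bmod P_i$ needed to turn $(Q/Q_j) \bmod P_i$ into $(Q/Q_j)\,Q^{-1} \equiv Q_j^{-1} \pmod{P_i}$. Once these two conventions are lined up, the three composed maps telescope to the single map $(B_j)_j \mapsto \big(\sum_j Q_j^{-1} B_j \bmod P_i\big)_i$ on both sides, and the lemma follows. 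I would also remark that all the matrices involved are square or rectangular of the stated sizes ($\V_{\vP,\vQ}$ is $m\times m$, $\W_{\vP,n}$ is $m\times n$, $\XX_\vQ$ is $n\times n$), so the dimensions of the product match those of $\B_{\vP,\vQ}$, and invertibility of $\V_{\vP,\vQ}$ (needed implicitly for the later reconstruction) is immediate from $\gcd(P_i,Q)=1$ for all $i$.
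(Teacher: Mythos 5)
Your proof is correct and follows essentially the same route as the paper: both identify $\XX_\vQ$, $\W_{\vP,n}$, and $\V_{\vP,\vQ}$ with the maps $\comb_\vQ$, multiple reduction modulo the $P_i$, and multiplication by $Q^{-1}\bmod P_i$, and check that their composition computes $(B_j)_j \mapsto \big(\sum_j Q_j^{-1}B_j \bmod P_i\big)_i$, which is exactly the action of $\B_{\vP,\vQ}$. Your explicit blockwise verification of the left-hand side via $\CC_{Q_j^{-1},P_i,n_j}$ is only a slightly more detailed rendering of the step the paper takes for granted.
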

\begin{proof}
  Observe that, given the coefficient vectors of polynomials
  $F_1,\dots,F_e$, with $F_j \in \F[x]_{n_j}$ for all $j$,
  the matrix $\B_{\vP,\vQ}$ returns the coefficients of 
  \begin{align*}
  G_i &= \frac {F_1}{Q_1} + \cdots + \frac {F_e}{Q_e} \bmod P_i \\[1mm]
   &= \frac {F_1 Q_2\cdots Q_e + \cdots + Q_1 \cdots Q_{e-1}F_e}{Q_1 \cdots Q_e} \bmod P_i, 
  \end{align*}
  for $i=1,\dots,d$. Computing the polynomials $G_i$ can be done as follows:
  \begin{itemize}
  \item compute $H=F_1 Q_2\cdots Q_e + \cdots + Q_1 \cdots Q_{e-1}F_e$ by calling $\comb_\vP$;
  \item compute $H_i = H \bmod P_i$, for $i=1,\dots,d$;
  \item deduce $G_i = H_i/Q \bmod P_i$, for $i=1,\dots,d$.
  \end{itemize}
This proves the requested factorization of $\B_{\vP,\vQ}$.
\end{proof}

\medskip

As a result, we deduce the following equalities for $\mA$:
\begin{align*}
\mA &= \left ( \sum_{k \le \alpha} \mC_k\, \V_{\vP,\vQ} \, \W_{\vP,n}\, \XX_\vQ\, \mD_k \right) \, \Y_\vQ\\
 &= \V_{\vP,\vQ} \,\left ( \sum_{k \le \alpha} \mC_k\,  \W_{\vP,n}\, \XX_\vQ\, \mD_k \right) \, \Y_\vQ.
\end{align*}
The last equality is due to the fact that the block-diagonal matrices
$\mC_k$ and $\V_{\vP,\vQ}$ commute (since all pairwise
corresponding blocks are multiplication matrices modulo the same
polynomials $P_i$).

\paragraph{Step 4: using Chinese remaindering}  The next step will allow us
to take $\W_{\vP,n}$ and $\XX_\vP$ out of the inner sum. For any polynomial
$H$, any $i \le d$ and any $k \le \alpha$, it is equivalent to {\it
  (i)} reduce $H$ modulo $P_i$ and multiply it by $g_{i,k}$ modulo
$P_i$ and {\it (ii)} multiply $H$ by the polynomial $\gamma_k$
(defined above) modulo $P$, and reduce it modulo~$P_i$.  In other
words, we have the commutation relation
$\mC_k\,  \W_{\vP,n} = \W_\vP \, \CC_{\gamma_k, P, n}$.
Similarly, $\XX_\vQ\, \mD_k=\CC_{\eta_k, Q} \,\XX_\vQ$
and this concludes the proof of the first part of Theorem~\ref{theo:rec}.

\subsection{Stein operator $\Delta_{\CC_\vP,\CC_\vQ^t}$}

The proof for Stein operator case follows the same steps as for the
Sylvester case. Let $\mA'$ be such that $\Delta_{\CC_\vP,\CC_\vQ^t}(\mA')=\mG
\mH^t$. Just like we decomposed $\mA$ into blocks $\mA_{i,j}$, we
decompose $\mA'$ into blocks $\mA'_{i,j}$, such that
$\Delta_{\CC_{P_i},\CC_{Q_j}^t}(\mA'_{i,j})=\mG_i \mH_j^t$. Extending the
notation used above, we write $\widetilde{Q_j}=\rev(Q_j,n_j)$, so that
we have $\widetilde{Q}=\widetilde{Q_1}\cdots \widetilde{Q_e}$.
The following lemma is then the analogue of Lemma~\ref{lemma:Aij} for
Stein operators and, 
as detailed in Appendix~\ref{app:proof-lem:inverse-for-stein}, 
can be proved in the same way using~\cite[Theorem~4.7]{PanWan03}.
 
\begin{lemma} \label{lem:inverse-for-stein}
  For all $i\le d$ and $j\le e$, we have
$$\mA'_{i,j} = \sum_{k \le \alpha} \CC_{g_{i,k}, P_i}\,\CC_{\widetilde{Q_j}^{-1},P_i,n_j}\,\J_{n_j}\, \CC_{h_{j,k},Q_j} \, \Y_{Q_j}.$$
\end{lemma}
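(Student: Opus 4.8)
The plan is to mimic the proof of Lemma~\ref{lemma:Aij} step by step, replacing the Sylvester identity of~\cite[Theorem~4.8]{PanWan03} by its Stein counterpart~\cite[Theorem~4.7]{PanWan03}. Fix $i \le d$ and $j \le e$; since $P_i$ and $\widetilde{Q_j}$ are coprime (this follows from $\gcd(P,\widetilde Q)=1$ together with $\mathscr{H}_\vP$, $\mathscr{H}_\vQ$), at least one of $\CC_{P_i}$, $\CC_{Q_j}$ is invertible, and the analogue of the Krylov identity reads, for all $\ell \ge 1$,
\[
\mA'_{i,j} - \CC_{P_i}^\ell\, \mA'_{i,j}\, (\CC_{Q_j}^t)^\ell = \sum_{k \le \alpha}\K(\CC_{P_i}, \g_{i,k},\ell)\,\J_\ell\, \K(\CC_{Q_j}, \h_{j,k},\ell)^t
\]
(up to checking the exact shape of the right-hand side against Pan--Wang; I expect the same $\J_\ell$-sandwiched Krylov product, possibly with a shift by one in the Krylov length, which is the one subtle point to get right).

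First I would take the coefficient $q_{j,\ell}$ of $Q_j$, multiply the identity for $\ell$ by it, inflate every $\J_\ell$ to $\J_{\ell,n_j}$ and every Krylov matrix to column length $n_j$ exactly as in the Sylvester proof (again the $\ell=0$ term vanishes because $\J_{0,n_j}=0$), and sum over $\ell=0,\dots,n_j$. The left-hand side telescopes into $\mA'_{i,j} - $ (something)$\cdot\mA'_{i,j}\cdot\widetilde{Q_j}(\CC_{Q_j}^t)$-type expression; here the key algebraic fact is that $\sum_\ell q_{j,\ell}\CC_{P_i}^\ell\,\mA'_{i,j}\,(\CC_{Q_j}^t)^\ell$ relates to $\widetilde{Q_j}$ evaluated at the companion matrices, because reversing the coefficient order of $Q_j$ is precisely what the two-sided powering produces. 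After using $Q_j(\CC_{Q_j})=0$ on the surviving term, I would be left with
\[
\widetilde{Q_j}(\CC_{P_i})\,\mA'_{i,j} = \sum_{k\le\alpha} \K(\CC_{P_i},\g_{i,k},n_j)\,\Y_{Q_j}^{\flat}\,\K(\CC_{Q_j},\h_{j,k},n_j)^t
\]
for an appropriate Hankel-type matrix built from $Q_j$; matching this against the Sylvester case, the reversal is exactly what inserts the extra $\J_{n_j}$ into the final formula.

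From there the endgame is identical to Lemma~\ref{lemma:Aij}: invert $\widetilde{Q_j}(\CC_{P_i}) = \CC_{\widetilde{Q_j},P_i}$ (legitimate since $\gcd(P_i,\widetilde{Q_j})=1$) to get $\CC_{\widetilde{Q_j}^{-1},P_i}$ out front, rewrite $\K(\CC_{P_i},\g_{i,k},n_j) = \CC_{g_{i,k},P_i,n_j}$ via Lemma~\ref{lemma:2}, and on the $\vQ$ side use $\K(\CC_{Q_j},\h_{j,k},n_j)^t = \CC_{h_{j,k},Q_j,n_j}^t$ and Lemma~\ref{lemma:transp} (the symmetrizer identity $\Y_{Q_j}\CC_{h_{j,k},Q_j}^t = \CC_{h_{j,k},Q_j}\Y_{Q_j}$) to move the transpose and collect the trailing $\Y_{Q_j}$; finally commute the multiplication matrices modulo $P_i$ to arrange the factors as $\CC_{g_{i,k},P_i}\CC_{\widetilde{Q_j}^{-1},P_i,n_j}\J_{n_j}\CC_{h_{j,k},Q_j}\Y_{Q_j}$, which is the claimed expression. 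The main obstacle I anticipate is purely bookkeeping: pinning down the precise statement of~\cite[Theorem~4.7]{PanWan03} (Krylov length, placement of $\J_\ell$, and which of $\CC_{P_i}$, $\CC_{Q_j}$ carries which power) and tracking how the coefficient reversal $Q_j \rightsquigarrow \widetilde{Q_j}$ flows through the summation to produce the single extra $\J_{n_j}$ — everything else is a verbatim transcription of the Sylvester argument, which is why the excerpt defers it to the appendix.
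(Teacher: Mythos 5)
Your overall route is the same as the paper's (mimic Lemma~\ref{lemma:Aij}, replacing~\cite[Theorem~4.8]{PanWan03} by~\cite[Theorem~4.7]{PanWan03}, then finish with Lemma~\ref{lemma:2}, Lemma~\ref{lemma:transp} and commutation of multiplication matrices mod $P_i$), but the step you defer as ``bookkeeping'' is in fact the crux, and your sketch of it would fail as written. Two concrete problems. First, the Stein identity of Pan--Wang has \emph{no} $\J_\ell$ in the middle: it reads
$\mA'_{i,j} - \CC_{P_i}^\ell\, \mA'_{i,j}\, (\CC_{Q_j}^t)^\ell = \sum_{k \le \alpha}\K(\CC_{P_i}, \g_{i,k},\ell)\, \K(\CC_{Q_j}, \h_{j,k},\ell)^t$,
so the ``inflate $\J_\ell$ to $\J_{\ell,n_j}$ exactly as in the Sylvester proof'' move has nothing to act on; inflating $\K(\CC_{P_i},\g_{i,k},\ell)\,\K(\CC_{Q_j},\h_{j,k},\ell)^t$ to length $n_j$ would insert $\diag(\I_\ell,0)$, not $\J_{\ell,n_j}$. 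Second, and more importantly, multiplying the identity only by $q_{j,\ell}$ and summing does not make the left-hand side collapse: you get $Q_j(1)\,\mA'_{i,j} - \sum_{\ell} q_{j,\ell}\,\CC_{P_i}^\ell\,\mA'_{i,j}\,(\CC_{Q_j}^t)^\ell$, and the coupled-powers sum neither telescopes nor vanishes, so no expression in $\widetilde{Q_j}(\CC_{P_i})$ emerges.

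The missing idea, which is what the paper does, is to multiply the $\ell$-th identity on the left by $q_{j,\ell}\,\CC_{P_i}^{\,n_j-\ell}$ before summing. This single move does all the work at once: summing the first terms produces $\sum_{\ell} q_{j,\ell}\,\CC_{P_i}^{\,n_j-\ell} = \widetilde{Q_j}(\CC_{P_i})$ (this is where the coefficient reversal really enters); the second terms sum to $\CC_{P_i}^{\,n_j}\,\mA'_{i,j}\,Q_j(\CC_{Q_j}^t) = 0$; and on the right-hand side the identity
$\CC_{P_i}^{\,n_j-\ell}\,\K(\CC_{P_i},\g_{i,k},\ell) = \K(\CC_{P_i},\g_{i,k},n_j)\,\J_{n_j}\,\J_{\ell,n_j}$
(the last $\ell$ columns of the length-$n_j$ Krylov matrix) is what introduces the extra $\J_{n_j}$ and lets $\sum_\ell q_{j,\ell}\,\J_{\ell,n_j} = \Y_{Q_j}$ form, yielding
$\widetilde{Q_j}(\CC_{P_i})\,\mA'_{i,j} = \sum_{k\le\alpha}\K(\CC_{P_i},\g_{i,k},n_j)\,\J_{n_j}\,\Y_{Q_j}\,\K(\CC_{Q_j},\h_{j,k},n_j)^t$.
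From there your endgame (inverting $\CC_{\widetilde{Q_j},P_i}$, Lemma~\ref{lemma:2}, Lemma~\ref{lemma:transp}, commutation mod $P_i$) is correct and identical to the paper's. So the architecture of your proposal is right, but without the $\CC_{P_i}^{\,n_j-\ell}$ device the central derivation does not go through.
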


Mimicking the construction in the previous section, we introduce the $m
\times n$ block matrix $\BB_{\vP,\vQ}'$ given by
$$\BB_{\vP,\vQ}'=\begin{bmatrix}
\CC_{\widetilde{Q_j}^{-1},P_i,n_j} \end{bmatrix}_{1 \le i \le d \atop 1
  \le j \le e}.$$ We will use again the matrices $\mC_k$ and $\mD_k$
introduced before, as well as the block-diagonal matrix $\D(\J_{n_j})$
having blocks $\J_{n_j}$ on the diagonal. This leads us to the
following analogue of Lemma~\ref{lemma:A}, whose proof is
straightforward.
\begin{lemma}\label{lemma:Ap}
We have
$$\mA' = \left ( \sum_{k \le \alpha} \mC_k\,\BB_{\vP,\vQ}'\, \D(\J_{n_j})\,\mD_k \right) \, \Y_\vQ.$$
\end{lemma}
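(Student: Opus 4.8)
The plan is to assemble the block-reconstruction formula of Lemma~\ref{lem:inverse-for-stein} into a single matrix identity for $\mA'$, exactly paralleling the passage from Lemma~\ref{lemma:Aij} to Lemma~\ref{lemma:A} in the Sylvester case. First I would recall that the block decomposition $\mA' = [\mA'_{i,j}]_{i \le d,\, j \le e}$ is conformal with the block-diagonal structures of $\CC_\vP$ and $\CC_\vQ$, so that it suffices to check that each $(i,j)$-block of the right-hand side of the claimed identity equals the expression given in Lemma~\ref{lem:inverse-for-stein}.

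Next I would unfold the proposed right-hand side $\bigl(\sum_{k\le\alpha}\mC_k\,\BB'_{\vP,\vQ}\,\D(\J_{n_j})\,\mD_k\bigr)\Y_\vQ$ block by block. By definition, $\mC_k$ is block-diagonal with $i$th block $\CC_{g_{i,k},P_i}$, and $\mD_k$ is block-diagonal with $j$th block $\CC_{h_{j,k},Q_j}$; $\D(\J_{n_j})$ is block-diagonal with $j$th block $\J_{n_j}$; $\Y_\vQ$ is block-diagonal with $j$th block $\Y_{Q_j}$; and $\BB'_{\vP,\vQ}$ has $(i,j)$-block $\CC_{\widetilde{Q_j}^{-1},P_i,n_j}$. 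Multiplying a block-diagonal matrix on the left selects the row block and on the right selects the column block, so the $(i,j)$-block of the product is
$$\sum_{k \le \alpha} \CC_{g_{i,k}, P_i}\,\CC_{\widetilde{Q_j}^{-1},P_i,n_j}\,\J_{n_j}\, \CC_{h_{j,k},Q_j} \, \Y_{Q_j},$$
which is precisely the formula of Lemma~\ref{lem:inverse-for-stein}. Here one should be a little careful about dimensions: $\CC_{g_{i,k},P_i} \in \F^{m_i\times m_i}$, $\CC_{\widetilde{Q_j}^{-1},P_i,n_j} \in \F^{m_i\times n_j}$, $\J_{n_j}\in\F^{n_j\times n_j}$, and $\CC_{h_{j,k},Q_j}, \Y_{Q_j}\in\F^{n_j\times n_j}$, so the product is indeed an $m_i\times n_j$ matrix, as required for the $(i,j)$-block of $\mA'\in\F^{m\times n}$.

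Since this holds for every $i,j$, the two $m\times n$ matrices agree block by block and hence are equal. There is essentially no obstacle here — the only thing to watch is the bookkeeping of which factors in $\sum_k \mC_k\,\BB'_{\vP,\vQ}\,\D(\J_{n_j})\,\mD_k$ are block-diagonal (namely all but $\BB'_{\vP,\vQ}$) so that the block extraction is the naive one, and that the rectangular middle factor $\CC_{\widetilde{Q_j}^{-1},P_i,n_j}$ carries the change in block size from $m_i$ to $n_j$. Thus the proof is a direct verification, identical in spirit to the one establishing Lemma~\ref{lemma:A} from Lemma~\ref{lemma:Aij}.
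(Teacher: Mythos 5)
Your blockwise verification is correct and is exactly the argument the paper has in mind — it states the lemma's proof is a straightforward assembly of Lemma~\ref{lem:inverse-for-stein}, in direct analogy with the passage from Lemma~\ref{lemma:Aij} to Lemma~\ref{lemma:A}. The dimension bookkeeping you spell out (all factors block-diagonal except $\BB'_{\vP,\vQ}$, whose $(i,j)$-block $\CC_{\widetilde{Q_j}^{-1},P_i,n_j}$ carries the size change from $m_i$ to $n_j$) is precisely what makes the block extraction immediate.
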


The next step is to use the following factorization of
$\BB_{\vP,\vQ}'$, or more precisely of $\BB_{\vP,\vQ}'\,
\D(\J_{n_j})$. The proof is the same as that of Lemma~\ref{lemma:B},
up to taking into account the reversals induced by the matrices
$\J_{n_j}$.
\begin{lemma}\label{lemma:Bp}
  The equality $\BB_{\vP,\vQ}'\,\D(\J_{n_j}) = \V'_{\vP,\vQ} \,
  \W_{\vP,n}\, \J_n\, \XX_\vQ$ holds, where $\V'_{\vP,\vQ}\in\F^{m
    \times m}$ is the block-diagonal matrix with blocks
  $\CC_{\widetilde{Q}^{-1},P_i}$, for $i=1,\dots,d$.
\end{lemma}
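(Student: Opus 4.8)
The plan is to follow the proof of Lemma~\ref{lemma:B} almost verbatim, reading each matrix as the polynomial operation it implements and then composing these operations; the one genuinely new ingredient is keeping track of the degrees imposed by the reversal matrices $\J_{n_j}$ and $\J_n$.

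First I would unfold the left-hand side. Let $F_1,\dots,F_e$ be polynomials with $F_j\in\F[x]_{n_j}$, given by their coefficient vectors. The block-diagonal matrix $\D(\J_{n_j})$ replaces each $F_j$ by its reverse $\rev(F_j,n_j-1)$; then, by Lemma~\ref{lemma:CC}, the $(i,j)$ block $\CC_{\widetilde{Q_j}^{-1},P_i,n_j}$ maps $\rev(F_j,n_j-1)$ to $\rev(F_j,n_j-1)\,\widetilde{Q_j}^{-1}\bmod P_i$. The inverses here are well defined: from $\gcd(P,\widetilde Q)=1$ together with $\widetilde Q=\widetilde{Q_1}\cdots\widetilde{Q_e}$ one gets $\gcd(P_i,\widetilde{Q_j})=1$ for all $i,j$. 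Hence the $i$th block of $\BB_{\vP,\vQ}'\,\D(\J_{n_j})$ applied to $(F_1,\dots,F_e)$ is $\sum_{j\le e}\rev(F_j,n_j-1)\,\widetilde{Q_j}^{-1}\bmod P_i$. Next I would unfold the right-hand side on the same input: applying $\XX_\vQ$, the matrix of $\comb_\vQ$, gives $H=\sum_{j\le e}(Q/Q_j)F_j\in\F[x]_n$; since $\deg\big((Q/Q_j)F_j\big)\le n-1$ for every $j$, the matrix $\J_n$ returns $\rev(H,n-1)$; then $\W_{\vP,n}$ reduces it modulo the $P_i$, and finally $\V'_{\vP,\vQ}$ multiplies the $i$th block by $\widetilde Q^{-1}\bmod P_i$, so that the $i$th output block is $\rev(H,n-1)\,\widetilde Q^{-1}\bmod P_i$.

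It then remains to establish, for every $i\le d$, the identity
\[
\rev(H,n-1)\,\widetilde Q^{-1}\,\equiv\,\sum_{j\le e}\rev(F_j,n_j-1)\,\widetilde{Q_j}^{-1}\pmod{P_i}.
\]
For this I would invoke the multiplicativity of reversal, $\rev(AB,a+b)=\rev(A,a)\,\rev(B,b)$ whenever $\deg A\le a$ and $\deg B\le b$, together with the fact that $Q$ and each $Q_j$ are monic, which yields $\rev(Q/Q_j,n-n_j)=\widetilde Q/\widetilde{Q_j}$. Applying this to each summand $(Q/Q_j)F_j$ of $H$ (with $a=n-n_j$ and $b=n_j-1$) gives $\rev(H,n-1)=\sum_{j\le e}(\widetilde Q/\widetilde{Q_j})\,\rev(F_j,n_j-1)$, and multiplying by $\widetilde Q^{-1}$ modulo $P_i$ and using $\widetilde Q=\widetilde{Q_j}\cdot(\widetilde Q/\widetilde{Q_j})$ finishes the proof. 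In words, this says that the three-step procedure in the proof of Lemma~\ref{lemma:B} (apply $\comb_\vQ$; reduce modulo the $P_i$; divide by $Q$) becomes here: apply $\comb_\vQ$; reverse; reduce modulo the $P_i$; divide by $\widetilde Q$.

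The main obstacle is precisely this degree bookkeeping in the reversals — choosing the target degrees $n-1$, $n_j-1$ and $n-n_j$ consistently so that $\rev(\cdot)$ stays polynomial and the product rule applies — since everything else is a formal transcription of the Sylvester case already treated in Lemma~\ref{lemma:B}.
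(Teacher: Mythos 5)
Your proof is correct and follows exactly the route the paper intends: the paper's own proof of this lemma is just the remark that one repeats the argument of Lemma~\ref{lemma:B} while tracking the reversals coming from $\J_{n_j}$ and $\J_n$, which is precisely what you carry out (including the key identities $\rev(Q/Q_j,n-n_j)=\widetilde Q/\widetilde{Q_j}$ and $\widetilde Q=\widetilde{Q_1}\cdots\widetilde{Q_e}$, and the coprimality of $P_i$ with $\widetilde{Q_j}$). No gaps; your version simply makes explicit the degree bookkeeping the paper leaves implicit.
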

We conclude the proof of our theorem as before, using the relations
\begin{align*}
\mC_k\,\V'_{\vP,\vQ} &=\V'_{\vP,\vQ}\,\mC_k,  \\[1mm]
\mC_k\, \W_{\vP,n} &= \W_\vP \, \CC_{\gamma_k, P, n},\\[1mm]
\XX_\vQ\, \mD_k&=\CC_{\eta_k, Q} \,\XX_\vQ.
\end{align*}

\section{Using operator equivalences} \label{sec:equiv}

Let $\vP,\vQ$ be as in Theorems~\ref{theo:mainSylv}
and~\ref{theo:mainSylv2}, and let as before $p=\max(m,n)$.  We are now
going to extend the complexity estimates for matrix-vector
multiplication given in Corollary~\ref{coro:mul1} to more operators
(not only the basic ones), by providing reductions to the Hankel
operator $\nabla_{\ZZ_{m,0},\ZZ_{n,1}^t}$.

\begin{theorem}\label{theo:equiv}
  Suppose that $\mathscr{H}_\vP$ and $\mathscr{H}_\vQ$ hold.
  Then for any displacement operator $\mcL$ associated with $(\vP,\vQ)$, we can 
  take
\begin{align*}
 \sMM(\mcL, \alpha, \beta) &\le  \sMM(\nabla_{\ZZ_{m,0},\ZZ_{n,1}^t}, \alpha+2, \beta) 
    +O\big (\sD(\vP,\vQ) + (\alpha+\beta)\sC(\vP,\vQ)\big ), \\[1mm]
 \sMS(\mcL, \alpha) &\le \sMS(\nabla_{\ZZ_{m,0},\ZZ_{n,1}^t}, \alpha+2)
    +O\big(\sD(\vP,\vQ) + \alpha\sC(\vP,\vQ) \big ),\\[1mm]
 \sMI(\mcL, \alpha) &\le \sMI(\nabla_{\ZZ_{m,0},\ZZ_{m,1}^t}, \alpha+2)
 +O\big(\sD(\vP,\vQ) + \alpha\sC(\vP,\vQ) + \alpha^{\omega-1} m\big ).
\end{align*}
\end{theorem}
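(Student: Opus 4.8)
The plan is to reduce each of the eight operators associated with $(\vP,\vQ)$ to the single Hankel operator $\nabla_{\ZZ_{m,0},\ZZ_{n,1}^t}$ (or its square analogue $\nabla_{\ZZ_{m,0},\ZZ_{m,1}^t}$ for inversion), in two stages. The first stage handles the difference between Sylvester and Stein operators, and between the four sign/transpose variants, reducing everything to one of the two \emph{basic} operators $\nabla_{\CC_\vP,\CC_\vQ^t}$ and $\Delta_{\CC_\vP,\CC_\vQ^t}$; these reductions should cost $O(\sD(\vP,\vQ)+(\alpha+\beta)\sC(\vP,\vQ))$ and change the generator length by a small additive constant (at most two). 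The second stage reduces a basic operator to the Hankel operator $\nabla_{\ZZ_{m,0},\ZZ_{n,1}^t}$ by exploiting the reconstruction formulas of Theorem~\ref{theo:rec}, which express a structured matrix $\mA$ as a product of $\mA_{\mathrm{Hankel}}$-type factors sandwiched between matrices $\V_{\vP,\vQ}, \W_\vP, \XX_\vQ, \Y_\vQ$ (and their transposes), each of which can be applied to a vector in time $O(\sD+\sC)$ or $O(\sC)$ by Lemmas~\ref{lemma:W}, \ref{lemma:Y2}, \ref{lemma:Y}, and \ref{lemma:WX}.

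For the \textbf{${\sf mul}$} bound: given an $\mcL$-generator $(\mG,\mH)$ of $\mA$ and a matrix $\mB\in\F^{n\times\beta}$, I would first convert $(\mG,\mH)$ into a $\nabla_{\CC_\vP,\CC_\vQ^t}$- or $\Delta_{\CC_\vP,\CC_\vQ^t}$-generator of $\mA$ of length $\alpha+2$; the ``$+2$'' comes from the standard identities relating $\nabla$ and $\Delta$ operators and from swapping $\CC_\vP\leftrightarrow\CC_\vP^t$ (each such swap introduces a rank-one or rank-two correction term, cf.\ the remarks after~\eqref{eq:classical-disp-operators} and the transpose identities in the introduction). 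Then, using Theorem~\ref{theo:rec}, computing $\mA\mB$ reduces to: (i) applying $\XX_\vQ\,\Y_\vQ$ (or $\J_n\,\XX_\vQ\,\Y_\vQ$) to each of the $\beta$ columns of $\mB$, at cost $O(\beta(\sD(\vQ)+\sC(\vQ)))$; (ii) computing $\sum_k \CC_{\gamma_k,P,n}\CC_{\eta_k,Q}$ applied to $\beta$ vectors — but the key point is that this is itself a matrix product that can be folded into a call to the Hankel solver, because the Krylov/multiplication-matrix structure mod $P$ and mod $Q$ is exactly Hankel-like after conjugation by the $\Y$'s; (iii) applying $\W_\vP$ and the block-diagonal $\V_{\vP,\vQ}$ (or $\V'_{\vP,\vQ}$), at cost $O(\beta(\sD(\vP)+\sC(\vP)))$. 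The polynomials $\gamma_k,\eta_k$ and the inverses $Q^{-1}\bmod P_i$ are precomputed once in time $O(\sD(\vP,\vQ)+\alpha\sC(\vP,\vQ))$ by Lemmas~\ref{lemma:WX} and~\ref{lemma:iQ}. Collecting terms gives the stated bound.

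For \textbf{${\sf solve}$} and \textbf{${\sf inv}$}: the same generator conversion applies (again ``$+2$''), and then I would transport the linear system $\mA\x=\b$ (resp.\ the inversion problem) through the invertible outer factors of Theorem~\ref{theo:rec}: since $\mA = \V\,\W_\vP\,(\cdots)\,\XX_\vQ\,\Y_\vQ$ with $\V,\W_\vP,\XX_\vQ,\Y_\vQ$ all invertible (invertibility of $\W_\vP$ and $\XX_\vQ$ following from $\mathscr{H}_\vP$, $\mathscr{H}_\vQ$), solving $\mA\x=\b$ is equivalent to solving a Hankel-type system of generator length $\alpha+2$ after $O(1)$ vector applications of those factors, each costing $O(\sD+\sC)$. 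For inversion the extra $O(\alpha^{\omega-1}m)$ term accounts for recovering an $\mcL'$-generator of $\mA^{-1}$ from an $\nabla_{\ZZ_{m,0},\ZZ_{m,1}^t}$-generator of the conjugated inverse: this requires multiplying the $m\times(\alpha+2)$ generator matrices by the transposed factors, and one step — reconstructing a short generator from the product, i.e.\ compressing back to length $\le\alpha$ — involves an $m\times(\alpha+2)$ by $(\alpha+2)\times(\alpha+2)$-type computation costing $O(\alpha^{\omega-1}m)$ via the cost of rank computations / Gaussian elimination on $\alpha\times\alpha$ blocks replicated $m/\alpha$ times.

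The \textbf{main obstacle} is the second-stage reduction for ${\sf inv}$: producing a genuine $\mcL'$-generator of $\mA^{-1}$ (not just a black box for $\mA^{-1}\v$) requires composing the inverse-generator of the Hankel system with the outer factors $\V,\W_\vP,\XX_\vQ,\Y_\vQ$ \emph{and their inverses}, keeping track of which of the eight operators $\mcL'$ one lands in, and then re-compressing to length at most $\alpha$. Getting the generator-length bookkeeping right (why $\alpha+2$ and not more, and why the output length is $\le\alpha$ rather than $\le\alpha+2$) across all eight cases simultaneously, together with the invertibility hypotheses $\gcd(P,Q)=1$ vs.\ $\gcd(P,\widetilde Q)=1$, is the delicate part; the cost bounds themselves then follow routinely from Lemmas~\ref{lemma:WX}–\ref{lemma:Krylov}.
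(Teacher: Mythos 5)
There is a genuine gap at the heart of your second stage. The paper's proof of Theorem~\ref{theo:equiv} does \emph{not} pass through the reconstruction formula of Theorem~\ref{theo:rec} to reach the Hankel operator; it uses Pan's multiplicative transformation. One forms $\mA'=\mL\,\mA\,\mR$ with the explicit matrices $\mL=\J_m\W_\vP^t\Y_\vP^{-1}$ and $\mR=\Y_\vQ^{-1}\W_\vQ\J_n$, proves (Lemma~\ref{lemma:LR}) that $\nabla_{\ZZ_{m,0},\CC_\vP}(\mL)$ and $\nabla_{\CC_\vQ^t,\ZZ_{n,1}^t}(\mR)$ have rank one, and then the product rule $\nabla_{\mM,\mN}(\mA\mB\mC)=\nabla_{\mM,\mQ}(\mA)\mB\mC+\mA\nabla_{\mQ,\mR}(\mB)\mC+\mA\mB\nabla_{\mR,\mN}(\mC)$ shows that $\mA'$ has $\nabla_{\ZZ_{m,0},\ZZ_{n,1}^t}$-displacement rank at most $\alpha+2$, with an explicit generator whose extra columns ($\mL\mA\r$, $\mR^t\mA^t\u$, etc.) are obtained by $O(1)$ matrix-vector products with $\mA$ and $\mA^t$ via Corollary~\ref{coro:mul1}. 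This is exactly where the ``$+2$'' comes from, and all three problems are then transported through $\mL,\mR$ and their inverses, which are cheap to apply. Your substitute step --- asserting that the middle factor $\sum_{k\le\alpha}\CC_{\gamma_k,P,n}\CC_{\eta_k,Q}$ of Theorem~\ref{theo:rec} ``is exactly Hankel-like after conjugation by the $\Y$'s'' and can be ``folded into a call to the Hankel solver'' --- is unsubstantiated: that matrix is naturally structured for an operator built from $\CC_P$ and $\CC_Q$, and transferring to $\nabla_{\ZZ_{m,0},\ZZ_{n,1}^t}$ is neither shown to give length $\alpha+2$ nor accompanied by an algorithm producing such a generator within the cost budget. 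The difficulty is worst precisely for {\sf solve} and {\sf inv}, where you would need a short Hankel generator of that middle matrix (not merely a black box for applying it), and your plan never constructs one.

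Two further bookkeeping points conflict with the paper's argument. First, the stage-one reduction of the eight operators to the two basic ones (Table~\ref{tab4}, Lemma~\ref{lemma:redBasic}) uses the symmetrizer identities $\Y_\vP\CC_\vP^t=\CC_\vP\Y_\vP$ and does \emph{not} change the generator length at all, nor does it convert Stein into Sylvester: a Stein operator stays Stein, and Proposition~\ref{prop:redHank} handles $\Delta_{\CC_\vP,\CC_\vQ^t}$ with a separate product formula for $\Delta_{\mM,\mN}(\mA\mB\mC)$ together with the extra reversal $\J_n$ and the relations $\ZZ_{n,1}^t\J_n=\J_n\ZZ_{n,1}$, $\ZZ_{n,1}^t=\ZZ_{n,1}^{-1}$; so attributing the ``$+2$'' to transpose/Stein conversions is incorrect, and a direct Stein-to-Sylvester conversion is not available among the allowed operators. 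Second, for inversion you correctly anticipate a compression step of cost $O(\alpha^{\omega-1}m)$, but the object being compressed in the paper is the length-$(\alpha+4)$ generator $(\mG_{\rm inv},\mH_{\rm inv})$ of $\mA^{-1}=\mR\,{\mA'}^{-1}\mL$ produced by Lemma~\ref{lemma:generation_of_the_inverse_of_A}; without the $\mL,\mR$ construction you have no analogous explicit generator of $\mA^{-1}$ to compress, so the inversion bound does not follow from your plan as written.
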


The rest of this section is devoted to the proof of this theorem.  
In what follows, we write several formulas involving some matrices $\mG$ and~$\mH$.
In these formulas, $\mG$ and $\mH$ will always be taken in
$\F^{m\times \alpha}$ and $\F^{n\times \alpha}$, respectively, for some
$\alpha \ge 1$ (and with $n = m$ when dealing with matrix inversion).

\subsection{Reduction to basic operators} 

Table~\ref{tab4} shows that if a matrix $\mA$ is structured for one of
the operators associated with $(\vP,\vQ)$, then simple pre- and
post-multiplications transform $\mA$ into a matrix which is structured
for one of the two {\em basic} operators associated with $(\vP,\vQ)$.  All
equivalences in this table follow directly from 
the identities $\Y_{\vP}\, \CC_{\vP}^t = \CC_{\vP}\, \Y_{\vP}$ and
$\Y_{\vQ}\, \CC_{\vQ}^t = \CC_{\vQ}\, \Y_{\vQ}$ (Lemma~\ref{lemma:transp2})
and from $\Y_\vP$ and $\Y_\vQ$ being invertible and symmetric. 
Combining these equivalences with Lemma~\ref{lemma:Y2} will lead to 
the cost bounds in Lemma~\ref{lemma:redBasic} below, 
expressed in terms of basic operators.

\begin{table}[h]
\caption{Reduction to the basic operators $\nabla_{\CC_\vP,\CC_\vQ^t}$ and $\Delta_{\CC_\vP,\CC_\vQ^t}$.}\label{tab4}
\vspace*{-0.5cm}
\begin{align*}
\nabla_{\CC_\vP,\CC_\vQ}(\mA) = \mG\, \mH^t & \iff \nabla_{\CC_\vP,\CC_\vQ^t}(\mA\, \Y_\vQ) = \mG\, (\Y_\vQ\mH)^t  \\
\nabla_{\CC_\vP^t,\CC_\vQ^t}(\mA) = \mG\, \mH^t & \iff  \nabla_{\CC_\vP,\CC_\vQ^t}(\Y_\vP\, \mA) = (\Y_\vP\mG)\, \mH^t  \\
\nabla_{\CC_\vP^t,\CC_\vQ}(\mA) = \mG\, \mH^t & \iff  \nabla_{\CC_\vP,\CC_\vQ^t}(\Y_\vP\, \mA\, \Y_\vQ) = (\Y_\vP\mG) (\Y_\vQ\mH)^t  \\[3mm]
\Delta_{\CC_\vP,\CC_\vQ}(\mA) = \mG\, \mH^t & \iff  \Delta_{\CC_\vP,\CC_\vQ^t}(\mA \,\Y_\vQ) = \mG\, (\Y_\vQ\mH)^t  \\
\Delta_{\CC_\vP^t,\CC_\vQ^t}(\mA) = \mG\, \mH^t & \iff  \Delta_{\CC_\vP,\CC_\vQ^t}(\Y_\vP\, \mA) = (\Y_\vP\mG)\, \mH^t  \\
\Delta_{\CC_\vP^t,\CC_\vQ}(\mA) = \mG\, \mH^t & \iff  \Delta_{\CC_\vP,\CC_\vQ^t}(\Y_\vP\, \mA\, \Y_\vQ) = (\Y_\vP\mG) (\Y_\vQ\mH)^t.
\end{align*}
\end{table}

\begin{lemma}\label{lemma:redBasic}
  Let $\mcL$ be a displacement operator associated with $(\vP,\vQ)$. Then:
  \begin{itemize}
  \item If $\mcL$ is a Sylvester operator, we can take
     \begin{align*}
   \sMM(\mcL,\alpha,\beta)& \le \sMM(\nabla_{\CC_\vP,\CC_\vQ^t},\alpha,\beta)+O((\alpha+\beta) \sM(p)),\\[1mm]
  \sMS(\mcL,\alpha)& \le \sMS(\nabla_{\CC_\vP,\CC_\vQ^t},\alpha)+O(\alpha \sM(p)),\\[1mm]
  \sMI(\mcL,\alpha)& \le \sMI(\nabla_{\CC_\vP,\CC_\vQ^t},\alpha)+O(\alpha \sM(m));
    \end{align*}
  \item If $\mcL$ is a Stein operator, we can take
     \begin{align*}
   \sMM(\mcL,\alpha,\beta)& \le \sMM(\Delta_{\CC_\vP,\CC_\vQ^t},\alpha,\beta)+O((\alpha+\beta) \sM(p)),\\[1mm]
  \sMS(\mcL,\alpha)& \le \sMS(\Delta_{\CC_\vP,\CC_\vQ^t},\alpha)+O(\alpha \sM(p)),\\[1mm]
  \sMI(\mcL,\alpha)& \le \sMI(\Delta_{\CC_\vP,\CC_\vQ^t},\alpha)+O(\alpha \sM(m)).
    \end{align*}
  \end{itemize}
\end{lemma}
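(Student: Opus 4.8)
The plan is to read off each claimed inequality directly from the equivalences displayed in Table~\ref{tab4}, using Lemma~\ref{lemma:Y2} to bound the cost of the pre- and post-multiplications by $\Y_\vP$ and $\Y_\vQ$ (and their inverses), which is $O(\sM(m))$ or $O(\sM(n))$, hence $O(\sM(p))$. The key observation is that each reduction has exactly the same shape: to solve a problem for $\mcL$ it suffices to transform the input generator $(\mG,\mH)$ into a generator for a matrix structured with respect to the corresponding basic operator $\nabla_{\CC_\vP,\CC_\vQ^t}$ (Sylvester case) or $\Delta_{\CC_\vP,\CC_\vQ^t}$ (Stein case), call the basic-operator algorithm, and then transform the output back.

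First I would treat ${\sf mul}(\mcL,\alpha,\beta)$. Suppose for instance $\mcL=\nabla_{\CC_\vP^t,\CC_\vQ}$ and we are given an $\mcL$-generator $(\mG,\mH)$ of $\mA$ together with $\mB\in\F^{n\times\beta}$. By the third line of Table~\ref{tab4}, $(\Y_\vP\mG,\Y_\vQ\mH)$ is a $\nabla_{\CC_\vP,\CC_\vQ^t}$-generator of $\Y_\vP\,\mA\,\Y_\vQ$; forming these two new generator matrices costs $O((\alpha+\beta)\sM(p))$ — actually $O(\alpha\sM(p))$ for the generator plus, separately, we must handle $\mB$. Concretely: compute $\mB' = \Y_\vQ\,\mB$ (cost $O(\beta\sM(p))$ by Lemma~\ref{lemma:Y2} applied columnwise), call the basic-operator multiplication routine to get $(\Y_\vP\mA\Y_\vQ)\mB' = \Y_\vP\mA\Y_\vQ^2\mB$ — wait, that is not quite what we want. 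The clean route is instead: from $\mA\mathsf{B}$ we want, note $\mA = \Y_\vP^{-1}\,\mathsf{A}_{\mathrm{basic}}\,\Y_\vQ^{-1}$ where $\mathsf{A}_{\mathrm{basic}}=\Y_\vP\mA\Y_\vQ$; so $\mA\mB = \Y_\vP^{-1}\big(\mathsf{A}_{\mathrm{basic}}(\Y_\vQ^{-1}\mB)\big)$. Thus: compute $\mB'=\Y_\vQ^{-1}\mB$ ($O(\beta\sM(p))$), compute the basic generator $(\Y_\vP\mG,\Y_\vQ\mH)$ ($O(\alpha\sM(p))$), call ${\sf mul}(\nabla_{\CC_\vP,\CC_\vQ^t},\alpha,\beta)$ to get $\mathsf{A}_{\mathrm{basic}}\mB'$, then left-multiply by $\Y_\vP^{-1}$ ($O(\beta\sM(p))$). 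The simpler cases ($\mM$ or $\mN$ already untransposed) only involve one of the two sides and are handled the same way; in every case the overhead is $O((\alpha+\beta)\sM(p))$. The Stein rows of Table~\ref{tab4} are identical in form, so the Stein bullet follows verbatim with $\Delta$ in place of $\nabla$.

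For ${\sf solve}(\mcL,\alpha)$, given $(\mG,\mH)$ and $\b\in\F^m$, I would again pass to the basic operator: e.g.\ for $\mcL=\nabla_{\CC_\vP^t,\CC_\vQ^t}$ we have $\mathsf{A}_{\mathrm{basic}}=\Y_\vP\mA$, with basic generator $(\Y_\vP\mG,\mH)$; since $\mA\x=\b \iff \mathsf{A}_{\mathrm{basic}}\x = \Y_\vP\b$, one computes $\b'=\Y_\vP\b$ ($O(\sM(p))$), forms the basic generator ($O(\alpha\sM(p))$), calls ${\sf solve}(\nabla_{\CC_\vP,\CC_\vQ^t},\alpha)$ on $(\,(\Y_\vP\mG,\mH),\b'\,)$, and returns its output $\x$ unchanged (when the post-multiplier is $\Y_\vQ$, i.e.\ $\mathsf{A}_{\mathrm{basic}}=\mA\Y_\vQ$, one instead post-processes $\x\mapsto\Y_\vQ\x$, still $O(\sM(p))$). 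A nontrivial solution maps to a nontrivial solution since $\Y_\vP,\Y_\vQ$ are invertible, and "no solution" is preserved. The overhead is $O(\alpha\sM(p))$. For ${\sf inv}(\mcL,\alpha)$ with $n=m$: from an $\mcL$-generator of $\mA$ we produce a basic-operator generator of $\mathsf{A}_{\mathrm{basic}}$ ($O(\alpha\sM(m))$ by Lemma~\ref{lemma:Y2}), call ${\sf inv}(\nabla_{\CC_\vP,\CC_\vQ^t},\alpha)$ to get an $\mcL'_{\mathrm{basic}}=\nabla_{\CC_\vQ^t,\CC_\vP}$-generator $(\mG',\mH')$ of $\mathsf{A}_{\mathrm{basic}}^{-1}$, and then convert it into an $\mcL'$-generator of $\mA^{-1}$. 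The last conversion uses the same table read in reverse: for instance $\mA^{-1}=\Y_\vQ\,\mathsf{A}_{\mathrm{basic}}^{-1}\,\Y_\vP$ (when $\mathsf{A}_{\mathrm{basic}}=\Y_\vP\mA\Y_\vQ$), and Table~\ref{tab4} with the roles of $\vP,\vQ$ swapped tells us which multiplications of $\mG',\mH'$ by $\Y_\vP,\Y_\vQ$ give an $\mcL'$-generator; again $O(\alpha\sM(m))$, which is absorbed in the stated $O(\alpha\sM(m))$ overhead (and is $O(\alpha^{\omega-1}m)$ as well, explaining why the coarser bound in Theorem~\ref{theo:equiv} swallows it).

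The only mildly delicate point — and the step I expect to take the most care — is bookkeeping which basic operator and which pre/post multiplier corresponds to each of the six non-basic operators, and correspondingly how the generator and the right-hand side (resp.\ the inverse's generator with $\vP,\vQ$ exchanged) must be transformed; but this is entirely dictated by Table~\ref{tab4} together with the symmetry and invertibility of $\Y_\vP,\Y_\vQ$, and in every case the extra work consists of a bounded number of applications of $\Y_\vP^{\pm1}$, $\Y_\vQ^{\pm1}$ to $\alpha$ or $\beta$ vectors, costing $O((\alpha+\beta)\sM(p))$, $O(\alpha\sM(p))$, or $O(\alpha\sM(m))$ respectively. No genuine obstacle arises.
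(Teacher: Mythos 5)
Your reduction is correct and essentially identical to the paper's own proof: the same Table~\ref{tab4} equivalences, the same use of Lemma~\ref{lemma:Y2} to bound the pre-/post-multiplications by $\Y_\vP^{\pm 1},\Y_\vQ^{\pm 1}$, and the same treatment of multiplication ($\mA\mB=\Y_\vP^{-1}\mA'\,\Y_\vQ^{-1}\mB$ evaluated right to left), system solving, and inversion (converting the $\nabla_{\CC_\vQ^t,\CC_\vP}$-generator of the basic matrix's inverse back to an $\mcL'$-generator of $\mA^{-1}$). The only inaccuracy is your parenthetical claim that the $O(\alpha \sM(m))$ overhead is also $O(\alpha^{\omega-1}m)$ — false for small $\alpha$; the paper instead absorbs such overheads into $O(\alpha\,\sC(\vP,\vQ))$ using $\sM(p)\le\sC(\vP,\vQ)$ — but that aside is not needed for the lemma itself.
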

\begin{proof}
  We give the proof for the Sylvester operator
  $\mcL=\nabla_{\CC_\vP^t,\CC_\vQ}$, which appears in the third row of
  Table~\ref{tab4}; the other five cases in that table can be
  handled similarly.

  Suppose first that, given an $\mcL$-generator $(\mG,\mH)$ for $\mA
  \in \F^{m\times n}$ as well as a matrix $\mB \in \F^{n\times
    \beta}$, we want to compute $\mA \mB$.  We begin by computing a
  $\nabla_{\CC_\vP,\CC_\vQ^t}(\mA')$-generator $(\Y_\vP\mG,\Y_\vQ\mH)$
  for $\mA'=\Y_\vP\, \mA\, \Y_\vQ$
  (cf.~the third row of Table~\ref{tab4}).  Lemma~\ref{lemma:Y2}
  implies that this takes time $O(\alpha \sM(p))$.  Then, we evaluate
  $\mA \mB = \Y_\vP^{-1}\,\mA' \,\Y_\vQ^{-1}\,\mB$ from right to
  left. The products by $\Y_\vP^{-1}$ and $\Y_\vQ^{-1}$ take time
  $O(\beta \sM(p))$, and the product by $\mA'$ takes time
  $\sMM(\nabla_{\CC_\vP,\CC_\vQ^t},\alpha,\beta)$; thus, the first
  claim is proved.

  Suppose now that we are given a vector $\b \in \K^m$ in addition to the matrix $\mA$.
In order to solve the system $\mA \x =
  \b$, we solve $\mA' \x' = \b'$, with $\mA'$ as defined above and
  $\b' = \Y_\vP \b$. The latter system admits a solution if and only
  if the former one does; if $\x'$ is a solution of $\mA' \x' = \b'$, then $\x=
  \Y_\vQ \x'$ is a solution of $\mA \x = \b$.
  Hence, as before, we set up an
  $\nabla_{\CC_\vP,\CC_\vQ^t}$-generator $(\Y_\vP\mG,\Y_\vQ\mH)$ for
  $\mA'$, using $O(\alpha \sM(p))$ operations in~$\F$, and compute
  $\b'$ in time $O(\sM(m))$. Then, in time
  $\sMS(\nabla_{\CC_\vP,\CC_\vQ^t},\alpha)$ we either assert that the
  system $\mA' \x' = \b'$ has no solution, or find such a solution
  $\x'$. Finally, we recover $\x$ in time $O(\sM(n))$. This
  proves the second claim.

  Finally, assume that $m=n$ and consider the question of inverting $\mA$.  This matrix
  is invertible if and only if the matrix $\mA'$ defined above is
  invertible. Again, we set up a
  $\nabla_{\CC_\vP,\CC_\vQ^t}$-generator $(\Y_\vP\mG,\Y_\vQ\mH)$ for
  $\mA'$, using $O(\alpha \sM(m))$ operations in $\F$.  Then, in time
  $\sMI(\nabla_{\CC_\vP,\CC_\vQ^t},\alpha)$ we either assert that
  $\mA'$ is not invertible or deduce a
  $\nabla_{\CC_\vQ^t,\CC_\vP}$-generator for $\mA'^{-1}$, say
  $(\mG_{\rm inv}',\mH_{\rm inv}')$. Finally, if $\mA'$ is invertible
  then, using $\CC_{\vP}^t = \Y_{\vP}^{-1}\, \CC_{\vP}\, \Y_{\vP}$ and
  $\CC_{\vQ} = \Y_{\vQ}\, \CC_{\vQ}^t \Y_{\vQ}^{-1}$, we obtain
  a $\nabla_{\CC_\vQ,\CC_\vP^t}$-generator $(\Y_\vQ\mG_{\rm inv}',
  \Y_\vP\mH_{\rm inv}')$ for $\mA^{-1}$ in time $O(\alpha
  \sM(m))$. This proves the last claim.
\end{proof}

\subsection{Reduction to the Hankel case} 

Our second reduction is less straightforward: we use Pan's idea of
multiplicative transformation of operators~\cite{Pan90} to reduce
{\it basic} operators to an operator of Hankel type. There is some
flexibility in the choice of the target operator, here
the Sylvester operator $\nabla_{\ZZ_{m,0},\ZZ_{n,1}^t}$; the only (natural) requirement is
that this target operator remains invertible.

The following proposition summarizes the transformation
process. Although the formulas are long, they describe simple
processes: for an operation such as multiplication, inversion or system solving, 
this amounts to e.g.~the analogue operation for an operator of Hankel type,
several products with simple matrices derived from $\vP$ and~$\vQ$,
and $O(1)$ matrix-vector products with the input matrix or its
transpose.

\begin{proposition}\label{prop:redHank}
  Let $\mcL \in \{\nabla_{\CC_\vP,\CC_\vQ^t}, \Delta_{\CC_\vP,\CC_\vQ^t}\}$,
  and suppose that $\mathscr{H}_\vP$ and $\mathscr{H}_\vQ$ hold.
  Then
  \begin{align}
\sMM(\mcL, \alpha, \beta) & \le 
    \sMM(\nabla_{\ZZ_{m,0},\ZZ_{n,1}^t}, \alpha+2, \beta) 
    +O\big (\sD(\vP,\vQ) + (\alpha+\beta)\sC(\vP,\vQ)\big ),\label{prop:i}\\[1mm]
\sMS(\mcL, \alpha) & \le
    \sMS(\nabla_{\ZZ_{m,0},\ZZ_{n,1}^t}, \alpha+2)
    +O\big(\sD(\vP,\vQ) + \alpha\sC(\vP,\vQ) \big ),\label{prop:ii}\\[1mm]
\sMI(\mcL, \alpha) & \le
    \sMI(\nabla_{\ZZ_{m,0},\ZZ_{m,1}^t}, \alpha+2)
    +O\big(\sD(\vP,\vQ) + \alpha\sC(\vP,\vQ) + \alpha^{\omega-1} m\big )\label{prop:iii}.
  \end{align}
\end{proposition}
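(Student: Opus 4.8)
The plan is to derive \eqref{prop:i}--\eqref{prop:iii} by composing two changes of representation, following Pan's idea of multiplicatively transforming displacement operators~\cite{Pan90}. Write $P=P_1\cdots P_d$ and $Q=Q_1\cdots Q_e$. The \emph{first} change removes the block structure. Since $P_1,\dots,P_d$ are pairwise coprime, the Chinese remainder theorem says that the multiple-reduction matrix $\W_\vP$ conjugates the single companion matrix $\CC_P$ to the block-diagonal one, i.e.\ $\CC_\vP=\W_\vP\,\CC_P\,\W_\vP^{-1}$, and likewise $\CC_\vQ=\W_\vQ\,\CC_Q\,\W_\vQ^{-1}$. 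Hence, putting $\mA_1=\W_\vP^{-1}\,\mA\,\W_\vQ^{-t}$, a direct computation shows that $\nabla_{\CC_\vP,\CC_\vQ^t}(\mA)=\mG\mH^t$ is equivalent to $\nabla_{\CC_P,\CC_Q^t}(\mA_1)=(\W_\vP^{-1}\mG)(\W_\vQ^{-1}\mH)^t$, and that the analogue holds with $\Delta$ replacing $\nabla$; in both cases the generator length is unchanged. By Lemma~\ref{lemma:W}, after a precomputation of time $O(\sC(\vP,\vQ)+\sD(\vP,\vQ))$ that also produces $P$ and $Q$, each of $\W_\vP^{\pm1},\W_\vP^{\pm t},\W_\vQ^{\pm1},\W_\vQ^{\pm t}$ can be applied to a vector in time $O(\sC(\vP,\vQ))$.

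The \emph{second} change replaces $\CC_P,\CC_Q$ by the Hankel shift matrices. The matrices $\CC_P$ and $\ZZ_{m,0}$ share all columns but the last (coefficients of $P$ versus $0$), and similarly $\CC_Q$ and $\ZZ_{n,1}$, so $\CC_P-\ZZ_{m,0}$ and $\CC_Q-\ZZ_{n,1}$ each have rank one. Substituting these rank-one corrections into $\nabla_{\CC_P,\CC_Q^t}(\mA_1)$ shows that $\nabla_{\ZZ_{m,0},\ZZ_{n,1}^t}(\mA_1)$ equals the rank-$\alpha$ matrix from the first step plus two rank-one terms, whose non-trivial factors are explicit coefficient vectors of $P,Q$ together with one row and one column of $\mA_1$. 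These last two vectors are obtained from a single matrix--vector product by $\mA$ and a single one by $\mA^t$, using $\mA_1\,\v=\W_\vP^{-1}\big(\mA(\W_\vQ^{-t}\v)\big)$ and $\mA_1^t\,\w=\W_\vQ^{-1}\big(\mA^t(\W_\vP^{-t}\w)\big)$ together with the fact that $\mA^t$ is structured for $\nabla_{\CC_\vQ,\CC_\vP^t}$ with the transposed generator; by Corollary~\ref{coro:mul1} each such product costs $O(\sD(\vP,\vQ)+\alpha\sC(\vP,\vQ))$. We thereby obtain, within the same bound, a $\nabla_{\ZZ_{m,0},\ZZ_{n,1}^t}$-generator of $\mA_1$ of length at most $\alpha+2$. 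For the Stein operator the same two corrections first yield that $\Delta_{\ZZ_{m,0},\ZZ_{n,1}^t}(\mA_1)$ has displacement rank $\le\alpha+2$; one then reaches the Sylvester Hankel operator by post-multiplying by the permutation matrix $\ZZ_{n,1}$ (so $\ZZ_{n,1}^{-1}=\ZZ_{n,1}^t$), obtaining $\nabla_{\ZZ_{m,0},\ZZ_{n,1}}$, and by conjugating with $\J_n$, the symmetrizer of $x^n-1$ (Lemma~\ref{lemma:transp}), which turns $\ZZ_{n,1}$ into $\ZZ_{n,1}^t$; this only replaces $\mA_1$ by $\mA_1\J_n$ and costs $O(\alpha\,\sM(p))$ more.

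It then remains to run the three tasks by transporting the data through these changes of variables, via $\mA=\W_\vP\,\mA_1\,\W_\vQ^t$. For \eqref{prop:i}: given $\mB$, form $\mB'=\W_\vQ^t\mB$ (cost $O(\beta\,\sC(\vP,\vQ))$), call the Hankel multiplication subroutine on the length-$(\alpha+2)$ generator and $\mB'$ to get $\mA_1\mB'$, and return $\W_\vP(\mA_1\mB')=\mA\mB$. For \eqref{prop:ii}: $\mA\x=\b$ is solved by solving $\mA_1\x_1=\W_\vP^{-1}\b$ with the Hankel solver and returning $\x=\W_\vQ^{-t}\x_1$; the degenerate outcomes transport unchanged since $\W_\vP,\W_\vQ$ are invertible. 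For \eqref{prop:iii} (now $m=n$): apply the Hankel inversion to the length-$(\alpha+2)$ generator of $\mA_1$ to get a generator of $\mA_1^{-1}$ of length $\le\alpha+2$ relative to the swapped Hankel operator, then thread it back up the chain --- each reverse step being a change of basis by $\W_\vP^{\pm t},\W_\vQ^{\pm1},\J_m$ together with a rank-$\le2$ correction requiring two products by $\mA_1^{\pm t}$, which are cheap because $\mA_1^{-1}$ is now represented relative to a Hankel-type operator, so Corollary~\ref{coro:mul1} applies with the minimal cost measures $\sC=O(\sM(m))$, $\sD=O(m)$ --- to reach an $\mcL'$-generator of $\mA^{-1}$ of length $\le\alpha+2$, and finally compress it to length $\le\alpha$ by column operations on its factors, in time $O(m\,\alpha^{\omega-1})$, which accounts for the extra term in \eqref{prop:iii}. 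I expect the delicate point to be the cost accounting rather than the algebra: the single polynomials $P,Q$ may have strictly larger cost measures $\sC,\sD$ than the families $\vP,\vQ$, so every matrix--vector product used to enlarge a generator must be carried out with the original structured data (with $\mA$ or $\mA^t$, and in the inversion case with $\mA_1^{-1}$ through its Hankel generator), never with $\mA_1$ itself; all remaining operations are changes of basis priced by Lemmas~\ref{lemma:WX}--\ref{lemma:Y2} or the single $O(m\,\alpha^{\omega-1})$ compression.
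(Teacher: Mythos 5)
Your argument is correct, and it reaches the conclusion by a mechanism genuinely different from the paper's. The paper follows Pan's multiplicative transformation: it exhibits fixed matrices $\mL=\J_m\,\W_\vP^t\,\Y_\vP^{-1}$ and $\mR=\Y_\vQ^{-1}\,\W_\vQ\,\J_n$ whose displacement ranks for the mixed operators $\nabla_{\ZZ_{m,0},\CC_\vP}$ and $\nabla_{\CC_\vQ^t,\ZZ_{n,1}^t}$ are $1$ (Lemma~\ref{lemma:LR}), and the ``$+2$'' in the generator length of $\mA'=\mL\,\mA\,\mR$ comes from these two rank-one displacements via the product rule~\eqref{eq:general-formula-ABC}. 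You instead split the transformation into an exact CRT conjugation, $\CC_\vP=\W_\vP\,\CC_P\,\W_\vP^{-1}$ and $\CC_\vQ=\W_\vQ\,\CC_Q\,\W_\vQ^{-1}$, which changes the operator to $\nabla_{\CC_P,\CC_Q^t}$ (resp.\ $\Delta$) with no increase of generator length, followed by an additive rank-one perturbation of the displacement matrices themselves, $\CC_P=\ZZ_{m,0}+(\mathrm{rank}\ 1)$ and $\CC_Q=\ZZ_{n,1}+(\mathrm{rank}\ 1)$; here the ``$+2$'' comes from these two corrections, whose nontrivial factors are a row and a column of $\mA_1=\W_\vP^{-1}\mA\,\W_\vQ^{-t}$, obtained (as in the paper, for $\mA\r$ and $\mA^t\u$) by $O(1)$ structured matrix--vector products priced by Corollary~\ref{coro:mul1}, all basis changes being priced by Lemmas~\ref{lemma:WX}--\ref{lemma:Y2}. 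Both routes land on the same invertible Hankel operator, use the same $\J_n$/$\ZZ_{n,1}$ trick to fold the Stein case into the Sylvester one, and both need, for inversion, $O(1)$ products with the Hankel-represented inverse (cheap since for $x^m$ and $x^m-1$ one has $\sC=\sM(m)$, $\sD=m$) plus the final $O(\alpha^{\omega-1}m)$ compression of~\cite[Remark~4.6.7]{Pan01}; your cost bookkeeping matches the claimed bounds, including the observation that $\alpha\sM(p)\le\alpha\,\sC(\vP,\vQ)$. Two harmless slips: in the inversion chain the intermediate generator of $\mA^{-1}$ has length $\alpha+4$, not $\alpha+2$ (the paper's intermediate is also $\alpha+4$; compression absorbs it), and the description of the $\Delta\to\nabla$ step is a right-multiplication by $\ZZ_{n,1}\J_n$ acting on the generator rather than a conjugation, as the identity $\nabla_{\ZZ_{m,0},\ZZ_{n,1}^t}(\mA_1\J_n)=-\Delta_{\ZZ_{m,0},\ZZ_{n,1}^t}(\mA_1)\,\ZZ_{n,1}\,\J_n$ makes precise. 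What your route buys is more transparent algebra (exact conjugation plus explicit rank-one differences of companion matrices, no analogue of Lemma~\ref{lemma:LR} needed); what the paper's buys is that the single matrices $\mL,\mR$ and their inverses are applied as black boxes throughout, which makes the bookkeeping of the inversion step (Lemma~\ref{lemma:generation_of_the_inverse_of_A}) entirely formulaic, whereas your ``threading back'' has to re-derive the rank-$2$ correction at the level of the inverse.
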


The proof of Proposition~\ref{prop:redHank} will occupy the rest of
this subsection.  Before that, though, we point out that
Theorem~\ref{theo:equiv} follows directly from this proposition and
Lemma~\ref{lemma:redBasic}.  In particular, since $p \le \sM(p) \le
\sM(m)+\sM(n) \le \sC(\vP,\vQ)$ for $p = \max(m,n)$, overheads such as
$O(\alpha \sM(m))$ or $O(\alpha \sM(p))$ that appear in that lemma are
absorbed into terms such as $O(\alpha \sC(\vP,\vQ))$ that appear in
the proposition.

\paragraph{Preliminaries}
To establish each of the claims in the proposition above, it will be
useful to have simple matrices $\mL$ and $\mR$ for
pre-/post-multiplying $\mA$ and whose displacement rank with respect
to $\nabla_{\ZZ_{m,0},\CC_\vP}$ and $\nabla_{\CC_\vQ^t,\ZZ_{n,1}^t}$,
respectively, is small.  Lemma~\ref{lemma:LR} below shows that a
possible choice, leading to displacement ranks at most $1$, is
\[
\mL = \J_m  \, \W_\vP^t \, \Y_\vP^{-1} \quad \text{and} \quad \mR = \Y_\vQ^{-1}\, \W_\vQ\,\J_n.
\]
In order to define generators for such matrices, we first rewrite the
products $P = P_1\cdots P_d$ and $Q=Q_1\cdots Q_e$ as $P =
p_0+\cdots+p_{m-1}x^{m-1}+x^m$ and $Q=q_0+\cdots+q_{n-1}x^{n-1}+x^n$,
and let $\m = [p_0~\cdots~p_{m-1}]^t$ and $\n =
[q_0~\cdots~q_{n-1}]^t$ be the coefficient vectors of the polynomials
$P-x^m$ and $Q-x^n$.  Then, we let
\[
\t = \left [ \begin{matrix} 1 \\ 0 \\ \vdots \\ 0 
  \end{matrix} \right ]\in \F^m, \quad
\u = \Y^{-1}_\vP\,\W_\vP\,\m \in \F^m, \quad
\s = \left [ \begin{matrix} 1 \\ 0 \\ \vdots \\ 0 \end{matrix} \right ]\in \F^n, \quad
\r = -\Y_\vQ^{-1}\,\W_\vQ\, (\n + \s) \in \F^n.
\]

\begin{lemma}\label{lemma:LR}
  The relations $\nabla_{\ZZ_{m,0},\CC_\vP}(\mL) = \t\, \u^t$ and
  $\nabla_{\CC_\vQ^t,\ZZ_{n,1}^t}(\mR) = \r\, \s^t$ hold.
\end{lemma}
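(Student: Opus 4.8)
The plan is to verify the two displacement identities directly, by unfolding the definitions of $\mL$ and $\mR$ and using the relation between block-diagonal companion matrices, the reduction matrix $\W_\vP$, and the Hankel symmetrizers $\Y_\vP$. For the first identity, recall that $\CC_\vP = \K(\CC_\vP,\v_0,m)^{-1}\,(\textnormal{shift})\,\cdots$ — more usefully, that $\W_\vP = \K(\CC_\vP,\v,m)$ with $\v = [\e_{m_1,1}^t|\cdots|\e_{m_d,1}^t]^t$ by Lemma~\ref{lemma:Krylov}, and that $\ZZ_{m,0}$ is the nilpotent down-shift. The key algebraic fact I would use is the Krylov intertwining relation: if $\mK = \K(\mN,\v,m)$ is the Krylov matrix of a vector $\v$ under an $m\times m$ matrix $\mN$ whose characteristic polynomial is $P$, then $\mN\,\mK = \mK\,\CC_P + (\text{rank-one correction supported on the last column})$; equivalently $\mK^{-1}\mN\mK = \CC_{P}$ up to the single column where the recurrence "wraps around". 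Concretely, $\ZZ_{m,0}\,\W_\vP^t$ versus $\W_\vP^t\,\CC_\vP$ differ by a matrix of the form $\t\,(\ast)^t$, because shifting the columns of $\W_\vP^t$ down by one and applying $\CC_\vP$ on the right agree except in the top row, which is exactly where the polynomial relations $P_i \equiv 0$ enter.

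First I would compute $\nabla_{\ZZ_{m,0},\CC_\vP}(\mL) = \ZZ_{m,0}\,\mL - \mL\,\CC_\vP$ with $\mL = \J_m\,\W_\vP^t\,\Y_\vP^{-1}$. Using Lemma~\ref{lemma:transp2}, $\Y_\vP^{-1}\CC_\vP = \CC_\vP^t\,\Y_\vP^{-1}$, so $\mL\,\CC_\vP = \J_m\,\W_\vP^t\,\CC_\vP^t\,\Y_\vP^{-1}$. Next, since $\J_m\,\ZZ_{m,0} = \ZZ_{m,0}^t\,\J_m$ (the reversal conjugates the down-shift into the up-shift), I get $\ZZ_{m,0}\,\mL = \ZZ_{m,0}\,\J_m\,\W_\vP^t\,\Y_\vP^{-1} = \J_m\,\ZZ_{m,0}^t\,\W_\vP^t\,\Y_\vP^{-1}$. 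Thus the whole expression equals $\J_m\,\big(\ZZ_{m,0}^t\,\W_\vP^t - \W_\vP^t\,\CC_\vP^t\big)\,\Y_\vP^{-1} = \J_m\,\big(\W_\vP\,\ZZ_{m,0} - \CC_\vP\,\W_\vP\big)^t\,\Y_\vP^{-1}$. Now I must identify $\W_\vP\,\ZZ_{m,0} - \CC_\vP\,\W_\vP$ as a rank-one matrix. Columns of $\W_\vP\,\ZZ_{m,0}$ are the columns of $\W_\vP$ shifted; column $j$ of $\W_\vP$ is $x^j \bmod (P_1,\dots,P_d)$, so column $j$ of $\W_\vP\,\ZZ_{m,0}$ is $x^{j-1}\bmod(P_i)$ for $j\ge 1$ and $0$ for $j=0$, while column $j$ of $\CC_\vP\,\W_\vP$ is $x\cdot(x^j\bmod P_i)\bmod P_i = x^{j+1}\bmod P_i$. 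Re-indexing, these agree for $j = 1,\dots,m-1$ against $j-1$, and the only discrepancy is concentrated in one column (the one detecting the $x^m$-vs-$P$ relation), giving $\W_\vP\,\ZZ_{m,0} - \CC_\vP\,\W_\vP = -\,(\W_\vP\,\m)\,\e_{m,0}^t$ or a similarly explicit rank-one term; transposing and multiplying through by $\J_m$ on the left and $\Y_\vP^{-1}$ on the right, and using $\J_m\,\e_{m,0} = \e_{m,m-1}$ together with the definition $\u = \Y_\vP^{-1}\W_\vP\,\m$, collapses this to exactly $\t\,\u^t$.

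The second identity is handled symmetrically: expand $\nabla_{\CC_\vQ^t,\ZZ_{n,1}^t}(\mR) = \CC_\vQ^t\,\mR - \mR\,\ZZ_{n,1}^t$ with $\mR = \Y_\vQ^{-1}\W_\vQ\,\J_n$, push the symmetrizer through via Lemma~\ref{lemma:transp2} to replace $\CC_\vQ^t$ by $\CC_\vQ$ acting on $\W_\vQ$, conjugate $\ZZ_{n,1}^t$ by $\J_n$ (which turns the $\varphi=1$ cyclic down-shift into a shift with the wrap-around at the opposite corner), and then identify the rank-one defect $\W_\vQ\,\ZZ_{n,1} - \CC_\vQ\,\W_\vQ$; here the cyclic term $\varphi=1$ produces the $\n+\s$ combination, which is precisely why $\r$ was defined as $-\Y_\vQ^{-1}\W_\vQ(\n+\s)$ and why the right-hand factor is $\s^t = \e_{n,0}^t$. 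I expect the main obstacle to be bookkeeping the indexing and sign conventions in the rank-one correction term — in particular getting the single "wrap-around" column right and making sure the reversal matrices $\J_m,\J_n$ are threaded through $\ZZ_{m,0}$ and $\ZZ_{n,1}$ correctly — rather than anything conceptually deep; once the correction is pinned down, the definitions of $\t,\u,\s,\r$ are manifestly rigged to match.
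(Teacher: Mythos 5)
Your plan is sound and, in substance, it is the same argument as the paper's: both proofs reduce the displacement to a single ``wrap-around'' column coming from the relation $x^m\equiv-(P-x^m)\bmod P_i$; the paper packages this via the Krylov interpretation $\Y_\vP^{-1}\W_\vP=\K(\CC_\vP^t,\w,m)$ of Lemma~\ref{lemma:Krylov}, while you work directly with the commutation defect of $\W_\vP$, which is a legitimate (and equivalent) reorganization. Two details in your sketch need to be fixed, precisely at the bookkeeping step you flagged. First, the defect sits in the \emph{last} column and carries a plus sign: column $j$ of $\W_\vP\ZZ_{m,0}$ equals column $j+1$ of $\W_\vP$ for $j<m-1$ and is zero for $j=m-1$, whereas the last column of $\CC_\vP\W_\vP$ is the stacked reduction of $x^m\bmod P_i$, i.e.\ $-\W_\vP\,\m$; hence $\W_\vP\ZZ_{m,0}-\CC_\vP\W_\vP=(\W_\vP\,\m)\,\e_{m,m}^t$, not $-(\W_\vP\,\m)\,\e_{m,0}^t$. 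Second, to collapse $\J_m\bigl((\W_\vP\,\m)\,\e_{m,m}^t\bigr)^t\,\Y_\vP^{-1}$ to $\t\,\u^t$ you need both $\J_m\,\e_{m,m}=\e_{m,1}=\t$ and the symmetry of $\Y_\vP$ (block-diagonal with Hankel blocks), since $(\W_\vP\,\m)^t\,\Y_\vP^{-1}=(\Y_\vP^{-1}\W_\vP\,\m)^t=\u^t$ only because $\Y_\vP^t=\Y_\vP$; make that explicit. With these corrections the first identity follows, and the second goes through exactly as you describe: $\J_n\,\ZZ_{n,1}^t=\ZZ_{n,1}\,\J_n$, then $\W_\vQ\ZZ_{n,1}-\CC_\vQ\W_\vQ=\W_\vQ(\n+\s)\,\e_{n,n}^t$ (the extra $\s$ coming from the corner entry $\varphi=1$), so that $\CC_\vQ^t\,\mR-\mR\,\ZZ_{n,1}^t=-\Y_\vQ^{-1}\W_\vQ(\n+\s)\,\e_{n,n}^t\,\J_n=\r\,\s^t$, no symmetry of $\Y_\vQ$ being needed there.
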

\begin{proof}
According to the second part of Lemma~\ref{lemma:Krylov},
$\Y_\vP^{-1}\,\W_\vP$ is equal to the Krylov matrix $\K(\CC_\vP^t, \w, m)$,
so that the columns of $\mL^t = \Y_\vP^{-1}\,\W_\vP\,\J_m$ are
$(\CC_\vP^t)^{m-1}\w,\dots,$ $ \CC_\vP^t \w, \w$. 
Hence only the first column of $\CC_\vP^t \,\mL^t - \mL^t \,\ZZ_{m,0}^t$ is nonzero,
and it is equal to $\tilde \u := (\CC_\vP^t)^m\,\w$.
After transposition, this leads to $\nabla_{\ZZ_{m,0},\CC_\vP}(\mL) = -\t\, {\tilde\u}^t$.
We can now check that $\tilde \u = -\u$ as follows.
First, from Lemma~\ref{lemma:transp2} and since $\Y_\vP$ is invertible,
we see that $\tilde\u = \Y_\vP^{-1}\,\CC_\vP^m\,\Y_\vP\,\w$.
Then, the shape of $\w$ implies that $\Y_\vP\,\w$ is the vector $\v$
defined in the first part of Lemma~\ref{lemma:Krylov},
so that $\tilde\u = \Y_\vP^{-1}\,\CC_\vP^m\,\v$.
Now, the special shape of $\v$ implies that $\CC_\vP^m\,\v$ 
is the vector whose $i$th subvector of length $m_i$ contains the coefficients of
$x^m\bmod P_i = - (P-x^m) \bmod P_i$.
In other words, $\CC_\vP^m\,\v = -\W_\vP\,\m$.
This shows that $\tilde\u =  -\Y_\vP^{-1}\,\W_\vP\,\m= -\u$, 
which concludes the proof of the first relation.

For the second relation, the proof is the same, taking into account
that we are now considering the operator
$\nabla_{\CC_\vQ^t,\ZZ_{n,1}^t}$
and that $\ZZ_{n,1}^t = \ZZ_{n,0}^t + \J_n \,\s\, \s^t$.
\end{proof}

\paragraph{Proof of~\eqref{prop:i} for $\mcL=\nabla_{\CC_\vP,\CC_\vQ^t}$}
Let $\mA$ be in $\F^{m\times n}$ and let $(\mG,\mH)$ be a
$\nabla_{\CC_\vP,\CC_\vQ^t}$-generator for $\mA$. With
$\mL,\mR,\r,\s,\t,\u$ as in the previous paragraph, define further 
$${\mG'} =\big [\, \t\ |\ \mL\,\mG\ |\ \mL\,\mA\,\r\, \big]
\quad\text{and}\quad 
 {\mH'} =\big [\, \mR^t \,\mA^t\,\u\ |\ \mR^t\,\mH\ |\ \s\, \big].$$
\begin{lemma}\label{lemma:multPan}
  The matrix $\mA'=\mL\,\mA\,\mR$ satisfies $\nabla_{\ZZ_{m,0},\ZZ_{n,1}^t}(\mA') = {\mG'} {\mH'}^t$.
\end{lemma}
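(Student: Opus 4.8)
The strategy is to express the displacement of $\mA' = \mL\,\mA\,\mR$ in terms of the displacements of the three factors, using the standard product rule for Sylvester operators, and then to recognize each piece as one of the three rank-one or rank-$\alpha$ contributions defining $\mG'\mH'^t$. Concretely, I would start from the identity
$$\nabla_{\ZZ_{m,0},\ZZ_{n,1}^t}(\mL\,\mA\,\mR) = \ZZ_{m,0}\,\mL\,\mA\,\mR - \mL\,\mA\,\mR\,\ZZ_{n,1}^t$$
and insert the telescoping terms $\pm\,\mL\,\CC_\vP\,\mA\,\mR$ and $\pm\,\mL\,\mA\,\CC_\vQ^t\,\mR$, so that the right-hand side splits as
$$\big(\ZZ_{m,0}\,\mL - \mL\,\CC_\vP\big)\mA\,\mR \;+\; \mL\big(\CC_\vP\,\mA - \mA\,\CC_\vQ^t\big)\mR \;+\; \mL\,\mA\big(\CC_\vQ^t\,\mR - \mR\,\ZZ_{n,1}^t\big).$$

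The middle term is immediately $\mL\big(\nabla_{\CC_\vP,\CC_\vQ^t}(\mA)\big)\mR = \mL\,\mG\,\mH^t\,\mR = (\mL\mG)(\mR^t\mH)^t$, which accounts for the ``inner'' block $\mL\mG$ of $\mG'$ paired with the block $\mR^t\mH$ of $\mH'$. For the first term, Lemma~\ref{lemma:LR} gives $\ZZ_{m,0}\,\mL - \mL\,\CC_\vP = \nabla_{\ZZ_{m,0},\CC_\vP}(\mL) = \t\,\u^t$, so this term equals $\t\,(\u^t\mA\,\mR) = \t\,(\mR^t\mA^t\u)^t$, matching the first column $\t$ of $\mG'$ against the first column $\mR^t\mA^t\u$ of $\mH'$. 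Symmetrically, Lemma~\ref{lemma:LR} gives $\CC_\vQ^t\,\mR - \mR\,\ZZ_{n,1}^t = \nabla_{\CC_\vQ^t,\ZZ_{n,1}^t}(\mR) = \r\,\s^t$, so the last term is $(\mL\mA\,\r)\,\s^t$, matching the last column $\mL\mA\r$ of $\mG'$ against the last column $\s$ of $\mH'$. Summing the three contributions yields exactly
$$\nabla_{\ZZ_{m,0},\ZZ_{n,1}^t}(\mA') = \t\,(\mR^t\mA^t\u)^t + (\mL\mG)(\mR^t\mH)^t + (\mL\mA\r)\,\s^t = \mG'\,\mH'^t,$$
by block multiplication of $\mG' = [\,\t\mid\mL\mG\mid\mL\mA\r\,]$ with $\mH'^t$.

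The only thing requiring care — and the closest thing to an obstacle — is getting the telescoping signs and the orientation of the rank-one outer products right, since $\nabla_{\ZZ_{m,0},\CC_\vP}$ uses $\ZZ_{m,0}$ on the left and $\CC_\vP$ on the right (so that the sign works out to $+\t\u^t$ rather than $-\t\u^t$), whereas $\nabla_{\CC_\vQ^t,\ZZ_{n,1}^t}$ has the roles of the two shift-type matrices swapped; one must check that the factor $\mL$ multiplies $\mA\mR$ from the left in the first term and that $\mR$ multiplies $\mL\mA$ from the right in the third, matching the definitions of the columns of $\mG'$ and $\mH'$. This is a routine verification once the telescoping decomposition is written down, and no genuinely new idea is needed beyond Lemma~\ref{lemma:LR} and the bilinearity of the Sylvester operator.
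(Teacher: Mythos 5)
Your proof is correct and follows essentially the same route as the paper: the paper applies the product rule $\nabla_{\mM,\mN}(\mA\mB\mC) = \nabla_{\mM,\mQ}(\mA)\mB\mC + \mA \nabla_{\mQ,\mR}(\mB)\mC + \mA\mB\nabla_{\mR,\mN}(\mC)$ (cited from Pan's book) together with Lemma~\ref{lemma:LR}, which is exactly your telescoping decomposition followed by the same identification of the three blocks of $\mG'{\mH'}^t$. The only cosmetic difference is that you derive the product rule directly by inserting $\pm\,\mL\,\CC_\vP\,\mA\,\mR$ and $\pm\,\mL\,\mA\,\CC_\vQ^t\,\mR$ instead of invoking the reference.
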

\begin{proof}
Applying the general formula 
\begin{equation} \label{eq:general-formula-ABC}
\nabla_{\mM,\mN}(\mA\mB\mC) =
\nabla_{\mM,\mQ}(\mA)\mB\mC + \mA \nabla_{\mQ,\mR}(\mB)\mC +
\mA\mB\nabla_{\mR,\mN}(\mC)
\end{equation}
(which follows directly
from~\cite[Theorem~1.5.4]{Pan01}) and then using Lemma~\ref{lemma:LR},
we obtain
\begin{align*}
\nabla_{\ZZ_{m,0},\ZZ_{n,1}^t}(\mL\,\mA\,\mR) &=
\nabla_{\ZZ_{m,0},\CC_\vP}(\mL)\,\mA\,\mR +
\mL\,\nabla_{\CC_\vP,\CC_\vQ^t}(\mA)\,\mR +
\mL\,\mA\,\nabla_{\CC_\vQ^t,\ZZ_{n,1}^t}(\mR)\\
&= \t\, \u^t\,\mA\,\mR + \mL\,\mG\,\mH^t\,\mR + \mL\,\mA\, \r\, \s^t,
\end{align*}
which is the announced equality.
\end{proof}

\medskip

To compute a product of the form $\mA\mB$ with
$\mB\in\F^{n\times\beta}$, we first compute the matrices $\mG'$ and
$\mH'$ described above.  To obtain $\mG'$, it suffices to set up the
vectors $\r$ and $\mA\r$ and to compute $\alpha+1$ matrix-vector
products by~$\mL$.  Given $(\mG,\mH)$ and $\r$,
Corollary~\ref{coro:mul1} implies that $\mA\r$ is obtained in time
\[ 
O(\sD(\vP,\vQ) + \alpha \sC(\vP,\vQ)).\] On the other hand, it follows
from Lemmas~\ref{lemma:W} and~\ref{lemma:Y2} that 
the vector $\r$ is obtained in time $O(\sC(\vQ) + \sD(\vQ))$
and, after some
precomputation of time $O(\sC(\vP) + \sD(\vP))$, that
the $\alpha+1$ products by $\mL$ are obtained in time $O(\alpha
\sC(\vP))$.  Thus, overall, $\mG'$ is obtained in time $O(\sD(\vP,\vQ)
+ \alpha \sC(\vP,\vQ))$.

We proceed similarly for $\mH'$. The only differences are that we have
to do matrix-vector products involving $\mR^t$ and $\mA^t$ instead of
$\mL$ and $\mA$. For the former, Lemmas~\ref{lemma:W}
and~\ref{lemma:Y2} show that after a precomputation of cost
$O(\sC(\vQ) + \sD(\vQ))$, the cost of one such product is
$O(\sC(\vQ))$.  For the latter, recall that, as pointed out in the
introduction, from the given $\nabla_{\CC_\vP,\CC_\vQ^t}$-generator of
$\mA$, we can deduce in negligible time a
$\nabla_{\CC_\vQ,\CC_\vP^t}$-generator of $\mA^t$ of the same length
$\alpha$.  This allows us to do matrix-vector products with $\mA^t$
with the same asymptotic cost 
$$O(\sD(\vQ,\vP) + \alpha \sC(\vQ,\vP)) =O(\sD(\vP,\vQ) + \alpha
\sC(\vP,\vQ))$$ as for $\mA$. Thus, the overall cost for
computing $\mH'$ is asymptotically the same as for $\mG'$.

Let us now bound the cost of deducing the product $\mA\mB$ from
$\mG',\mH',\mB$.  Note first that under the coprimality assumptions
$\mathscr{H}_\vP$ and $\mathscr{H}_\vQ$ the matrices $\W_\vP$ and
$\W_\vQ$ are invertible, and so are $\mL$ and $\mR$.  Consequently,
$\mA\mB = \mL^{-1} \mA' \,\mR^{-1} \mB$ and it suffices to bound the
cost of each of the three products $\mB' := \mR^{-1}\mB$, $\mB'' :=
\mA' \mB'$, and $\mL^{-1} \mB''$.  By reusing the same precomputation
as for $\mH'$ and applying again Lemmas~\ref{lemma:W}
and~\ref{lemma:Y2}, we obtain $\mB'$ for an additional cost of
$O(\beta \sC(\vQ))$, via $\beta$ matrix-vector products by~$\mR^{-1}$.
Then, Lemma~\ref{lemma:multPan} says that $\mA'$ is a Hankel-like
matrix for which a $\nabla_{\ZZ_{m,0},\ZZ_{n,1}^t}$-generator of length
$\alpha+2$ is $(\mG',\mH')$.  Hence the cost for deducing $\mB''$
from $\mG',\mH',\mB'$ is at most $\sMM(\nabla_{\ZZ_{m,0},\ZZ_{n,1}^t},
\alpha+2, \beta)$.  Finally, $\mL^{-1} \mB''$ is obtained for an
additional cost of $O(\beta \sC(\vP))$, by reusing the same
precomputation as for $\mG'$ and by performing $\beta$ matrix-vector
products by $\mL^{-1}$.

To summarize, we have shown that $(\mG',\mH')$ can be obtained in time
$O(\sD(\vP,\vQ) + \alpha \sC(\vP,\vQ))$ and that $\mA\mB$ can be
deduced from $\mG',\mH',\mB$ in time
$\sMM(\nabla_{\ZZ_{m,0},\ZZ_{n,1}^t}, \alpha+2, \beta) + O(\beta
\sC(\vP,\vQ))$.  Adding these two costs thus proves the bound~\eqref{prop:i}
in Proposition~\ref{prop:redHank} for  $\mcL=\nabla_{\CC_\vP,\CC_\vQ^t}$.

\paragraph{Proof of~\eqref{prop:ii} for $\mcL=\nabla_{\CC_\vP,\CC_\vQ^t}$}
For $\mA \in \F^{m\times n}$ given by
some $\nabla_{\CC_\vP,\CC_\vQ^t}$-generator $(\mG,\mH)$ of length
$\alpha$ and given $\b$ in $\F^m$, we now want to find a (nontrivial) solution of $\mA \x
= \b$, or determine that no solution exists. 

Define $\b'=\mL \b$. Because $\mL$ and $\mR$ are invertible matrices,
solving the system $\mA' \x'= \b'$ is equivalent to solving $\mA \x
=\b$, with then $\x= \mR \x'$. As in the previous paragraph, we can
compute a generator $(\mG',\mH')$ of $\mA'$ for the operator
$\nabla_{\ZZ_{m,0},\ZZ_{n,1}^t}$ in time $O(\sD(\vP,\vQ) + \alpha
\sC(\vP,\vQ))$. The cost of computing $\b'$ from $\b$ (and $\x$ from
$\x'$) fits into the same bound, and solving the new system $\mA' \x'=
\b'$ takes time $\sMS(\nabla_{\ZZ_{m,0},\ZZ_{n,1}^t}, \alpha+2)$.
Summing these costs, we prove the second item in the proposition.

\paragraph{Proof of~\eqref{prop:iii} for $\mcL=\nabla_{\CC_\vP,\CC_\vQ^t}$}
Assume now that $m=n$ and that $\mA \in \F^{m\times m}$ is given by
a $\nabla_{\CC_\vP,\CC_\vQ^t}$-generator $(\mG,\mH)$ of length
$\alpha$.  As before, $\mL$ and $\mR$ are invertible by assumption, so
that $\mA$ is invertible if and only if $\mA' = \mL \, \mA \, \mR$ is
invertible.  By Lemma~\ref{lemma:multPan} this matrix $\mA'$ has
displacement rank at most $\alpha+2$ for
$\nabla_{\ZZ_{m,0},\ZZ_{m,1}^t}$.  Thus, as explained in the
introduction, if $\mA'$ is invertible then its inverse has
displacement rank at most $\alpha+2$ for
$\nabla_{\ZZ_{m,1}^t,\ZZ_{m,0}}$.  The next lemma shows that if
$(\mG_{\rm inv}', \mH_{\rm inv}')$ is a 
$\nabla_{\ZZ_{m,1}^t,\ZZ_{m,0}}$-generator for the inverse of $\mA'$,
then a $\nabla_{\CC_\vQ^t,\CC_\vP}$-generator for the inverse of
$\mA$ is given by the matrices
\[
\mG_{\rm inv} = \big [\, \r\ |\ \mR\,\mG_{\rm inv}'\ |\ \mR\,{\mA'}^{-1}\,\t\ \big]
\quad\text{and}\quad 
\mH_{\rm inv} = \big [\ \mL^t \,{\mA'}^{-t}\,\s\ |\ \mL^t\,\mH_{\rm inv}'\ |\ \u\ \big].
\]

\begin{lemma} \label{lemma:generation_of_the_inverse_of_A}
  The matrix $\mA^{-1}$ satisfies $\nabla_{\CC_\vQ^t,\CC_\vP}(\mA^{-1}) = \mG_{\rm inv} \mH_{\rm inv}^t$.
\end{lemma}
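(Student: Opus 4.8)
The plan is to reduce everything to the product rule~\eqref{eq:general-formula-ABC} applied to a suitable factorization of $\mA^{-1}$. Since $\mA' = \mL\,\mA\,\mR$ with $\mL$ and $\mR$ invertible (as noted just before the lemma, $\W_\vP$ and $\W_\vQ$ are invertible under $\mathscr{H}_\vP$ and $\mathscr{H}_\vQ$), we have $\mA^{-1} = \mR\,{\mA'}^{-1}\,\mL$. I would then apply~\eqref{eq:general-formula-ABC} to this triple product, with outer operator $\nabla_{\CC_\vQ^t,\CC_\vP}$ and with the two intermediate matrices chosen precisely as the shift matrices occurring in Lemma~\ref{lemma:LR}, namely $\ZZ_{m,1}^t$ between $\mR$ and ${\mA'}^{-1}$, and $\ZZ_{m,0}$ between ${\mA'}^{-1}$ and $\mL$ (recall that $n=m$ in the inversion setting). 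This gives
$$\nabla_{\CC_\vQ^t,\CC_\vP}(\mA^{-1}) = \nabla_{\CC_\vQ^t,\ZZ_{m,1}^t}(\mR)\,{\mA'}^{-1}\,\mL + \mR\,\nabla_{\ZZ_{m,1}^t,\ZZ_{m,0}}({\mA'}^{-1})\,\mL + \mR\,{\mA'}^{-1}\,\nabla_{\ZZ_{m,0},\CC_\vP}(\mL).$$

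Next I would substitute the three displacement identities at hand: $\nabla_{\CC_\vQ^t,\ZZ_{m,1}^t}(\mR) = \r\,\s^t$ and $\nabla_{\ZZ_{m,0},\CC_\vP}(\mL) = \t\,\u^t$ by Lemma~\ref{lemma:LR}, and $\nabla_{\ZZ_{m,1}^t,\ZZ_{m,0}}({\mA'}^{-1}) = \mG_{\rm inv}'\,{\mH_{\rm inv}'}^t$ by the hypothesis (stated in the text preceding the lemma) that $(\mG_{\rm inv}',\mH_{\rm inv}')$ is a $\nabla_{\ZZ_{m,1}^t,\ZZ_{m,0}}$-generator for ${\mA'}^{-1}$. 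The right-hand side then becomes
$$\r\,\bigl(\s^t\,{\mA'}^{-1}\,\mL\bigr) + \bigl(\mR\,\mG_{\rm inv}'\bigr)\bigl({\mH_{\rm inv}'}^t\,\mL\bigr) + \bigl(\mR\,{\mA'}^{-1}\,\t\bigr)\,\u^t.$$

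Finally I would tidy up the transposes, using $\s^t\,{\mA'}^{-1}\,\mL = \bigl(\mL^t\,{\mA'}^{-t}\,\s\bigr)^t$ and ${\mH_{\rm inv}'}^t\,\mL = \bigl(\mL^t\,\mH_{\rm inv}'\bigr)^t$, so that the three summands are exactly the outer products $\r\,\bigl(\mL^t\,{\mA'}^{-t}\,\s\bigr)^t$, $\bigl(\mR\,\mG_{\rm inv}'\bigr)\bigl(\mL^t\,\mH_{\rm inv}'\bigr)^t$, and $\bigl(\mR\,{\mA'}^{-1}\,\t\bigr)\,\u^t$; their sum is $\mG_{\rm inv}\,\mH_{\rm inv}^t$ for the $\mG_{\rm inv}$ and $\mH_{\rm inv}$ in the statement, which completes the argument. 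I do not expect a genuine obstacle here; the computation is routine once the factorization is in place, and the only points requiring care are selecting the intermediate matrices $\ZZ_{m,1}^t$ and $\ZZ_{m,0}$ in the correct order so that Lemma~\ref{lemma:LR} applies verbatim, and carefully tracking the transpositions that turn the row-vector factors $\s^t\,{\mA'}^{-1}\,\mL$ and ${\mH_{\rm inv}'}^t\,\mL$ into the columns of $\mH_{\rm inv}$.
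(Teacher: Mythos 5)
Your proposal is correct and follows essentially the same route as the paper: it applies the product rule~\eqref{eq:general-formula-ABC} (i.e.\ \cite[Theorem~1.5.4]{Pan01}) to the factorization $\mA^{-1}=\mR\,{\mA'}^{-1}\,\mL$ with intermediate matrices $\ZZ_{m,1}^t$ and $\ZZ_{m,0}$, then substitutes Lemma~\ref{lemma:LR} and the generator of ${\mA'}^{-1}$. The bookkeeping of transposes that identifies the three summands with the columns of $\mG_{\rm inv}$ and $\mH_{\rm inv}$ is exactly what the paper's proof does implicitly.
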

\begin{proof}
As for Lemma~\ref{lemma:multPan} the proof 
follows from~\cite[Theorem~1.5.4]{Pan01} and Lemma~\ref{lemma:LR}:
\begin{align*}
\nabla_{\CC_\vQ^t,\CC_\vP}(\mR\,{\mA'}^{-1}\,\mL) &=
\nabla_{\CC_\vQ^t,\ZZ_{m,1}^t}(\mR)\,{\mA'}^{-1}\,\mL +
\mR\,\nabla_{\ZZ_{m,1}^t,\ZZ_{m,0}}({\mA'}^{-1})\,\mL \\ 
& \hspace*{6cm} + \mR\,{\mA'}^{-1}\,\nabla_{\ZZ_{m,0},\CC_\vP}(\mL)\\
&= \r\, \s^t\,{\mA'}^{-1}\,\mL + \mR\,\mG_{\rm inv}'\,(\mH_{\rm inv}')^t\,\mL + \mR\,{\mA'}^{-1}\, \t\, \u^t.
\end{align*}
To conclude, we remark that $\mA^{-1}=\mR\,{\mA'}^{-1}\,\mL$.
\end{proof}

\medskip

The bound in~\eqref{prop:iii} can now be established as follows.  We
begin by computing $(\mG',\mH')$, which by Lemma~\ref{lemma:multPan}
is a $\nabla_{\ZZ_{m,0},\ZZ_{m,1}^t}$-generator of length $\alpha+2$
of~$\mA'$; as shown in the previous paragraph, this is done in time
$O(\sD(\vP,\vQ) + \alpha \sC(\vP,\vQ))$.  Then, in time
$\sMI(\nabla_{\ZZ_{m,0},\ZZ_{m,1}^t}, \alpha+2)$ we either conclude
that $\mA'$ is singular, or produce a
$\nabla_{\ZZ_{m,1}^t,\ZZ_{m,0}}$-generator $(\mG_{\rm inv}',\mH_{\rm
  inv}')$ of length $\alpha+2$ for the inverse of $\mA'$.
Finally, if $\mA'$ is invertible we proceed in two steps: First, using
the same amount of time as for $\mG'$ and $\mH'$, we set up the
matrices $\mG_{\rm inv}$ and $\mH_{\rm inv}$ introduced before
Lemma~\ref{lemma:generation_of_the_inverse_of_A}.  Then we reduce each
of these two matrices to arrive at a generator of length
$\alpha$; this generator compression step can be done in time
$O(\alpha^{\omega-1} m)$, in view of~\cite[Remark~4.6.7]{Pan01}.
Overall, these costs add up to the result reported in~\eqref{prop:iii}.

\paragraph{Proof of~\eqref{prop:i}--~\eqref{prop:iii} for $\mcL=\Delta_{\CC_\vP,\CC_\vQ^t}$}
We only sketch the proof in these cases. 
This time, we rely on the (easily verified) general formula
$$\Delta_{\mM,\mN}(\mA\mB\mC) 
= -\nabla_{\mM,\mP}(\mA)\mB\mQ\mC + \mA \Delta_{\mP,\mQ}(\mB)\mC + \mM\mA\mB\nabla_{\mQ,\mN}(\mC).$$
We use it to write
$$\Delta_{\ZZ_{m,0},\ZZ_{n,1}^t}(\mL\,\mA\,\mR) =
- \nabla_{\ZZ_{m,0},\CC_\vP}(\mL)\,\mA\,\CC_\vQ^t\,\mR
+\mL\, \Delta_{\CC_\vP,\CC_\vQ^t}(\mA)\, \mR
+ \ZZ_{m,0}\,\mL\,\mA\nabla_{\CC_\vQ^t,\ZZ_{n,1}^t}(\mR);
$$
this allows us to reduce questions (multiplication, linear system solving, inversion) related to $\mA$ to
the same questions for $\mA' = \mL\,\mA\,\mR$, where $\mA'$ is 
given through a $\Delta_{\ZZ_{m,0},\ZZ_{n,1}^t}$-generator of length $\alpha+2$.
Then, the relations $\ZZ_{n,1}^t \J_n = \J_n \ZZ_{n,1}$ and 
$\ZZ_{n,1}^t = \ZZ_{n,1}^{-1}$ imply 
$$\nabla_{\ZZ_{m,0},\ZZ_{n,1}^t}(\mA'\,\J_n) = - \Delta_{\ZZ_{m,0},\ZZ_{n,1}^t}(\mA')\, \ZZ_{n,1}\, \J_n;$$
this allows us to reduce our problems to computations with the matrix
$\mA'\,\J_n$ given by means of a $\nabla_{\ZZ_{m,0},\ZZ_{n,1}^t}$-generator of length $\alpha+2$.

For inversion (where $m=n$),
inverting $\mA'\,\J_n$ leads to a $\nabla_{\ZZ_{m,1}^t,\ZZ_{m,0}}$-generator
of $(\mA'\,\J_m)^{-1}=\J_m {\mA'}^{-1}$. 
Then, using the relations on $\ZZ_{m,1}$ given above,
we obtain
$$\Delta_{\ZZ_{m,1}^t, \ZZ_{m,0}}( {\mA'}^{-1} ) = 
 \ZZ_{m,1}^t\, \J_m \, \nabla_{\ZZ_{m,1}^t, \ZZ_{m,0}} (\J_m {\mA'}^{-1})$$
and thus a $\Delta_{\ZZ_{m,1}^t, \ZZ_{m,0}}$-generator of ${\mA'}^{-1}$. 
The general formula above further leads to
\begin{align*}
\Delta_{\CC_\vQ^t,\CC_\vP}(\mR\,{\mA'}^{-1}\,\mL) &= -
\nabla_{\CC_\vQ^t,\ZZ_{m,1}^t}(\mR)\,{\mA'}^{-1}\,\ZZ_{m,0}\,\mL
+\mR\, \Delta_{\ZZ_{m,1}^t,\ZZ_{m,0}}({\mA'}^{-1})\, \mL 
\\ &
\hspace*{6cm} + \CC_\vQ^t\,\mR\,{\mA'}^{-1} \nabla_{\ZZ_{m,0},\CC_\vP}(\mL);
\end{align*}
since $\mR\,{\mA'}^{-1}\,\mL = \mA^{-1}$, this gives
a $\Delta_{\CC_\vQ^t,\CC_\vP}$-generator of $\mA^{-1}$ whose length 
$\alpha+2$ is finally reduced to $\alpha$.

In all cases Lemma~\ref{lemma:LR} can be reused, and the requested
complexity bounds are then derived in the same way as for the three
cases above, up to a minor difference: we also need to take into
account the cost of multiplication by $\CC_\vQ$ or its transpose with
a vector. However, due to the sparse nature of this matrix, this cost
is easily seen to be $O(n)$, so it does not impact the asymptotic
estimate.

\section{Multiplication algorithms} \label{sec:multiplication_alogrithms}

In this section, we prove bounds on the cost of the product
$\mA\mB$, where $\mA$ is a structured matrix in $\F^{m\times n}$ and
$\mB$ is an arbitrary matrix in $\F^{n\times \beta}$; this will prove
the claims in Theorems~\ref{theo:mainSylv} and~\ref{theo:mainSylv2}
regarding multiplication.

\begin{theorem}\label{theo:mul}
  For any invertible operator $\mcL$ associated with $(\vP,\vQ)$, we can take
  $$\sMM(\mcL, \alpha, \beta)=O\left
  (\frac{\beta'}{\alpha'}\sMMa'\left(\frac{p}{\alpha'},\alpha'\right)
  + \sD(\vP,\vQ)+(\alpha+\beta)\sC(\vP,\vQ) \right),$$
  with $p=\max(m,n)$,  $\alpha'=\min(\alpha,\beta)$, and $\beta'=\max(\alpha,\beta)$.  
\end{theorem}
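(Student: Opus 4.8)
The plan is to carry out two successive reductions and then one core algorithm.

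\emph{Reduction to the Hankel operator.} By Theorem~\ref{theo:equiv} it suffices to bound $\sMM(\nabla_{\ZZ_{m,0},\ZZ_{n,1}^t},\alpha+2,\beta)$: the extra cost incurred there is $O(\sD(\vP,\vQ)+(\alpha+\beta)\sC(\vP,\vQ))$, which is exactly the second group of terms in the statement, and replacing $\alpha$ by $\alpha+2$ is harmless since $\alpha\ge1$. Note that $\nabla_{\ZZ_{m,0},\ZZ_{n,1}^t}$ is invertible, since $\gcd(x^m,x^n-1)=1$, and that it is the basic operator attached to $\vP=(x^m)$ and $\vQ=(x^n-1)$; for this pair $\sC$ and $\sD$ are $O(\sM(p))$ and $O(p)$, so no further term appears. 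Next I would reduce to the \emph{balanced} case. If $\beta\ge\alpha$, split $\mB$ column-wise into $\lceil\beta/\alpha\rceil$ blocks of at most $\alpha$ columns and treat them independently; if $\beta<\alpha$, split the generator $(\mG,\mH)$ column-wise into $\lceil\alpha/\beta\rceil$ blocks of at most $\beta$ columns, which writes $\mA$ as a sum of $\lceil\alpha/\beta\rceil$ Hankel-like matrices of displacement rank $\le\beta$, and multiply each by $\mB$. In either case one gets $O(\beta'/\alpha')$ sub-instances, each a Hankel-like matrix of format $(p,p)$ (after zero-padding, which raises the displacement rank by $O(1)$) and displacement rank $\alpha'$ multiplied by a $p\times\alpha'$ matrix; the generator-dependent preprocessing (the polynomials $\gamma_k$, $\eta_k$, the products $P$, $Q$, the inverse $Q^{-1}\bmod P$, $\dots$) is shared and costs $O(\sD(\vP,\vQ)+(\alpha+\beta)\sC(\vP,\vQ))$ by the argument of Corollary~\ref{coro:mul1}. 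It thus remains to prove: for $\mA\in\F^{p\times p}$ given by a $\nabla_{\ZZ_{p,0},\ZZ_{p,1}^t}$-generator of length $\alpha$ and $\mC\in\F^{p\times\alpha}$, the product $\mA\mC$ can be computed in $O(\sMMa'(p/\alpha,\alpha)+\alpha\sM(p))$.

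\emph{Core algorithm.} I would instantiate Theorem~\ref{theo:rec} with $d=e=1$, $P=x^p$, $Q=x^p-1$; here $\V_{\vP,\vQ}=-I$ and $\Y_Q=\J_p$, so the reconstruction reads $\mA=-\big(\sum_{k\le\alpha}\CC_{\gamma_k,x^p}\,\CC_{\eta_k,x^p-1}\big)\J_p$ with $\gamma_k=\pol(\g_k)$ and $\eta_k=\pol(\h_k)$ of degree $<p$. Hence, setting $\mC'=\J_p\mC$, the task becomes the evaluation of $\sum_{k\le\alpha}\CC_{\gamma_k,x^p}\CC_{\eta_k,x^p-1}\mC'$, i.e.\ for each column polynomial $c$ of $\mC'$ (of degree $<p$), the polynomial $\sum_{k\le\alpha}\gamma_k\cdot(\eta_k c\bmod(x^p-1))\bmod x^p$. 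Writing $\eta_kc=u_k+x^pw_k$ with $\deg u_k,\deg w_k<p$, one has $\eta_kc\bmod(x^p-1)=u_k+w_k$, and since $x^p-1\equiv-1\pmod{x^p}$ this polynomial equals
\[
\Lambda\,c\;+\;\sum_{k\le\alpha}\gamma_k\,w_k\pmod{x^p},\qquad \Lambda:=\textstyle\sum_{k\le\alpha}\gamma_k\eta_k ,
\]
where $w_k=\lfloor\eta_kc/x^p\rfloor$ is the upper half of $\eta_kc$. The polynomial $\Lambda$ is computed once in $O(\alpha\sM(p))$ and then $\Lambda c\bmod x^p$ for all $\alpha$ columns costs $O(\alpha\sM(p))$; so the heart of the matter is the family of sums $\sum_k\gamma_kw_k$.

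\emph{The crux, and the main obstacle.} Put $s=\lceil p/\alpha\rceil$ and split every polynomial of degree $<p$ into $\alpha$ chunks of degree $<s$: $\gamma_k=\sum_a\gamma_{k,a}x^{as}$, $\eta_k=\sum_b\eta_{k,b}x^{bs}$, and likewise each column $c$. Expanding $w_k$ and then $\sum_k\gamma_kw_k$ in these chunks, one finds that, up to $O(\alpha)$ block-boundary carry terms (top halves of products of two degree-$<s$ polynomials), the quantity $\sum_k\gamma_kw_k$ is a fixed combination of shifted products $\big(\sum_k\gamma_{k,a}\eta_{k,b}\big)\cdot(\text{chunk of }c)$; the decisive point is that summing over $k$ \emph{first} collapses the generator into $\Phi_{a,b}:=\sum_{k\le\alpha}\gamma_{k,a}\eta_{k,b}$, which are precisely the entries of the product of two $\alpha\times\alpha$ polynomial matrices of degree $<s$, hence obtained with one $\sMMa(s,\alpha)$. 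Reassembling the $\Phi_{a,b}$ into the auxiliary polynomials needed and forming the column-dependent outputs amounts to one further polynomial matrix multiplication of an $O(\alpha)\times\alpha$ by an $\alpha\times\alpha$ matrix of degree $<s$, plus $O(\alpha\sM(p))$ of rearrangement; the carry terms form a structurally identical instance whose displacement rank has been halved. Carrying this out recursively — halving the displacement rank and paying one polynomial matrix multiplication of the current dimension and degree to recombine — yields a cost of the shape $\sum_{k=0}^{\lceil\log\alpha\rceil}2^k\,\sMMa(2^kp/\alpha,\alpha/2^k)$, which is $O(\sMMa'(p/\alpha,\alpha))$ by the very definition of $\sMMa'$. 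Adding the outer $O(\alpha\sM(p))\subseteq O(\beta'\sC(\vP,\vQ))$ gives the core claim and hence the theorem. I expect the delicate part to be exactly this last step: arranging the sums $\sum_k\gamma_kw_k$ as \emph{balanced} polynomial matrix products — dimension $\alpha$, degree $\approx p/\alpha$, so that the cost is $\widetilde O(p\,\alpha^{\omega-1})$ and not $\widetilde O(p\,\alpha^{\omega})$ — while tracking the wrap-around modulo $x^p-1$, the truncation modulo $x^p$, the non-power-of-two chunk sizes, and the block-boundary carries so that the corrective multiplications telescope into $\sMMa'$ rather than dominate.
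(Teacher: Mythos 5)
Your setup is sound and, up to the route chosen, parallels the paper: reducing to the Hankel operator via Theorem~\ref{theo:equiv} is a legitimate shortcut (the paper mentions it explicitly, but proves Theorem~\ref{theo:mul} directly for the basic operators, handling a general $Q$ in Theorem~\ref{theo:Rgen} through Euclidean quotients and reversals); your splitting into $O(\beta'/\alpha')$ balanced sub-instances is exactly the paper's treatment of the unbalanced cases; and your reformulation of the core task as computing $\Lambda c+\sum_{k}\gamma_k w_k \bmod x^p$ with $w_k=\eta_k c\bdiv x^p$ is correct and is, up to reversal, the same reduction to a truncated-product problem that the paper performs before invoking Theorem~\ref{theo:R}.

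The genuine gap is the crux itself, i.e., the step you yourself flag as "delicate." In the paper this is the entire content of Section~\ref{sec:multiplication_alogrithms}: Lemma~\ref{lem:recursive-formula-for-R} rewrites $\mV\mW^t\bmod x^{\nu}$ as $\mV_0\mW_0^t+x^{\nu/2}\big([\mV_0\,\,\mV_1][\mW_1\,\,\mW_0]^t\bmod x^{\nu/2}\big)$, so the recursion halves the truncation degree and doubles the inner dimension while the generator length $\alpha$ stays fixed; the level-$k$ cost $2^k\,\sMMa(2^kp/\alpha,\alpha/2^k)$ then comes from rewriting $\mU^t=[1\;x^c\;\cdots\;x^{(\gamma-1)c}]\,\mU'$ so that each level is evaluated via balanced $\gamma\times\gamma$ blocks (proof of Lemma~\ref{prop:recursive-routine}). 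Your substitute mechanism --- "the carry terms form a structurally identical instance whose displacement rank has been halved," recursed upon --- does not hold: in any chunking, the block-boundary sub-problem $\sum_k\gamma_k\big(\text{top half of a bilinear term in }\eta_k\text{ and }c\big)$ still involves all $\alpha$ generator columns, so the rank does not decrease; what can be made to decrease is the degree, at the price of doubling the number of column pairs, which is precisely the paper's (different) invariant. Your final formula is also internally inconsistent with that mechanism: if the rank were $\alpha/2^k$ at level $k$ and you paid "one" product per level, nothing explains the factor $2^k$ nor the degree $2^kp/\alpha$ in the sum you write down --- that shape is simply copied from the definition of $\sMMa'$ rather than derived. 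Finally, the non-carry part is also only asserted: the collapsed polynomials $\sum_k\gamma_k\eta_{k,b}$ have degree about $p$, not $p/\alpha$, so turning their recombination with the $c$-chunks into a constant number of balanced $\sMMa(p/\alpha,\alpha)$ products requires an explicit re-chunking and degree bookkeeping that you do not supply; without it one falls back to an arrangement costing on the order of $\alpha^2\sM(p)$, which is the bound one is trying to beat. Since the missing recursion and its cost telescoping are exactly what Theorems~\ref{theo:R} and~\ref{theo:Rgen} establish, the proposal as written does not yet prove the multiplication bound.
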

Using Lemma~\ref{lemma:redBasic}, it will be sufficient to consider
multiplication for the basic operators associated with
$(\vP,\vQ)$. Thus, throughout this section we let $\mcL:\F^{m\times n}
\to \F^{m\times n}$ be one of the two operators
$ \nabla_{\CC_\vP,\CC_\vQ^t}$ and $\Delta_{\CC_\vP,\CC_\vQ^t}$
and,
given $(\mG,\mH)\in \F^{m\times \alpha}\times \F^{n\times \alpha}$ and
$\mB \in \F^{n\times \beta}$, we show how to compute the product $\mA
\mB \in \F^{m\times \beta}$ with $\mA$ the
unique matrix in $\F^{m\times n}$ such that $\mcL(\mA)=\mG \mH^t$.

In Subsection~\ref{ssec:mulSS} we deal with the cases covered by Theorem~\ref{theo:rec}; 
a key step of the proof (corresponding to the case $Q = x^n$) 
is deferred to Subsection~\ref{section:R} due to its length. 
We then extend our results to all remaining cases in Subsection~\ref{ssec:Rgen}.

In view of Theorem~\ref{theo:equiv}, it would actually be sufficient
to assume that $\mcL$ is the Hankel operator $\nabla_{\ZZ_{m,0},\ZZ_{n,1}^t}$.
This would allow us to bypass most of the derivation in
Subsection~\ref{ssec:Rgen}; however, such a restriction does not seem
to lead to a better cost estimate than our direct approach, which we
include for completeness.

\subsection{Applying Theorem~\ref{theo:rec}}\label{ssec:mulSS}

Let us first assume that $\mcL=\nabla_{\CC_\vP,\CC_\vQ^t}$. Given
$\vP$, $\vQ$ and the generator $(\mG,\mH)$, and a matrix $\mB \in
\F^{n \times \beta}$, Theorem~\ref{theo:rec} shows that $\mA \mB$ can
be computed as follows:
\begin{enumerate}
\item\label{s0} compute the polynomials $P$, $Q$, $\gamma_k$, $\eta_k$, $Q^{-1}\bmod P_i$
 for $k\le\alpha$ and $i\le d$;
\item\label{s1} compute the $n \times \beta$ matrix $\mB' = \XX_\vQ\,\Y_\vQ \, \mB$;
\item\label{s2} compute the $m \times \beta$ matrix $\mC$ given by
  $$\mC=\sum_{k \le \alpha} \CC_{\gamma_k,P,n}\, \CC_{\eta_k, Q}\, \mB';$$
\item\label{s3} return the $m \times \beta$ matrix $\V_{\vP,\vQ}\,\W_\vP\,\mC$. 
\end{enumerate}
Proceeding as in the proof of Corollary~\ref{coro:mul1}, we see that
Steps~\ref{s0}, \ref{s1}, \ref{s3} can be completed in time
$O(\sD(\vP,\vQ)+(\alpha+\beta)\sC(\vP,\vQ))$. We are thus left with analyzing
Step~\ref{s2}.

Let $\b_1,\dots,\b_\beta$ be the columns of $\mB'$, and for
$i=1,\dots,\beta$ let $B_i=\pol(\b_i) \in \F[x]_n$ be the polynomial
whose coefficients are the entries of $\b_i$. In polynomial terms,
given $B_1,\dots,B_\beta$ in $\F[x]_n$, we want to compute
$C_1,\dots,C_\beta$ in $\F[x]_m$ such that
$$C_i=\sum_{k \le \alpha} \gamma_k (\eta_k B_i \bmod Q) \bmod P$$
for $i \le \beta$;
then, the $m$ coefficients of $C_i$ will make up the $i$th column
of the matrix $\mC$.

To compute $C_i$, we can compute the sum first, and then reduce it modulo
$P$. The core problem is thus to compute
$$R_i=\sum_{k \le \alpha} \gamma_k (\eta_k B_i \bmod Q) $$ 
in $\F[x]_{m+n-1}$
for all $i \le \beta$; the cost is given in Theorem~\ref{theo:Rgen}
of Subsection~\ref{ssec:Rgen} below:
writing $\alpha'=\min(\alpha,\beta)$ and $\beta'=\max(\alpha,\beta)$,
one can compute $R_1,\dots,R_\beta$ in time
\begin{equation} \label{eq:cos-of-thm31}
O\left(\frac{\beta'}{\alpha'}\sMMa'\left(\frac{p}{\alpha'},\alpha'\right)\right).
\end{equation}
Finally, 
since $R_i$ has degree less than $m+n-1$ and since $P$ has degree $m$,
computing each $C_i = R_i \bmod P$ takes time $O(\sM(m)+\sM(n)) \subset O(\sM(p))$,
so we can deduce $C_1,\ldots,C_\beta$ from $R_1,\ldots,R_\beta$ 
for a negligible total cost of $O(\alpha \sM(p))$.

In the case $\mcL=\Delta_{\CC_\vP,\CC_\vQ^t}$, the approach is almost
the same. A minor difference is that we replace $\V_{\vP,\vQ}$ by
$\V'_{\vP,\vQ}$: this has no influence on the cost. The other
difference is the matrix~$\J_n$ appearing in the inversion formula for
$\Delta_{\CC_\vP,\CC_\vQ^t}$ (second part of Theorem~\ref{theo:rec}); in polynomial terms, this now leads us
to compute the sums
$$R'_i=\sum_{k \le \alpha} \gamma_k \,\rev(\eta_k B_i \bmod Q, n-1)$$
for $i=1,\ldots,\beta$, and then to reduce each of them modulo $P$.

This problem is actually an instance of the question treated 
above, for we have
$$\rev(R'_i, m+n-2) = \sum_{k \le \alpha} \rev(\gamma_k,m-1) (\eta_k
B_i \bmod Q).$$ Since knowing $\rev(R'_i, m+n-2)$ gives us $R'_i$ for
free, we can appeal here as well to Theorem~\ref{theo:Rgen}
to compute $R'_i$, using $\rev(\gamma_k,m-1)$ instead of $\gamma_k$.
The cost estimate is the same as before, namely,~(\ref{eq:cos-of-thm31}),
thus leading to Theorem~\ref{theo:mul}
regarding a product of the form $\mA \mB$ for the operator $\mcL=\Delta_{\CC_\vP,\CC_\vQ^t}$.

\subsection{Computing the $R_i$'s when $Q=x^n$}\label{section:R}

The previous subsection has shown that for both Sylvester's and
Stein's displacement operators, computing the product $\mA\mB$ essentially
reduces to the following basic problem: given as input
\begin{itemize}
\item a monic polynomial $Q$ of degree $n$,
\item polynomials
  $U_1,\dots,U_\alpha$ in $\F[x]_m$
  and
  $V_1,\dots,V_\alpha, W_1,\dots,W_\beta$ in $\F[x]_n$, 
\end{itemize}
compute $$R_i=\sum_{k \le \alpha} U_k (V_k W_i \bmod
Q) \qquad \mbox{for $i=1,\dots,\beta$.}$$ 
The naive algorithm computes all sums
independently; writing $p=\max(m,n)$, its cost is $O(\alpha\beta\sM(p))$.
In this subsection we establish the following improved complexity estimate 
in the special case where $Q=x^n$.
(This result will be further extended to the case of a general $Q$ of degree $n$ in Subsection~\ref{ssec:Rgen}.)

\begin{theorem}\label{theo:R}
  Assume $Q=x^n$ and $\alpha \le n$, and let $p = \max(m,n)$, 
  $\alpha'=\min(\alpha,\beta)$, and $\beta'=\max(\alpha,\beta)$.  Then
  one can compute the polynomials $R_1,\dots,R_\beta$ defined above in time
  $$O\left(\frac{\beta'}{\alpha'}\sMMa'\left(\frac{p}{\alpha'},\alpha'\right)\right).$$
\end{theorem}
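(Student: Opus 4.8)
The plan is to peel off two easy reductions and then concentrate on a balanced, homogeneous core case, which I would attack by chopping the polynomials into $\Theta(\alpha)$ blocks and performing a \emph{truncated} product of polynomial matrices of size $\Theta(\alpha)$ and entry degree $\Theta(n/\alpha)$. \emph{First, balancing $\alpha$ and $\beta$.} If $\beta<\alpha$, partitioning $\{1,\dots,\alpha\}$ into $\lceil\alpha/\beta\rceil$ groups of size at most $\beta$ turns the task into $\lceil\alpha/\beta\rceil$ instances in which the number of pairs $(U_k,V_k)$ and the number of polynomials $W_i$ are both at most $\beta=\alpha'$ (summing the partial results over the groups is free); symmetrically, if $\alpha<\beta$ one partitions the $W_i$'s into $\lceil\beta/\alpha\rceil$ groups of size at most $\alpha$. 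This reduces everything to $\lceil\beta'/\alpha'\rceil$ instances with both parameters at most $\alpha'$, which accounts for the factor $\beta'/\alpha'$ in the claimed bound. \emph{Second, assuming $m\le n$.} If $m>n$, write $U_k=\sum_{c\ge 0}U_{k,c}\,x^{cn}$ with $\deg U_{k,c}<n$; then $R_i=\sum_c x^{cn}\bigl(\sum_k U_{k,c}(V_kW_i\bmod x^n)\bigr)$ is a sum of $\lceil m/n\rceil$ instances in which $U_k$ has degree $<n$, and $\lceil m/n\rceil\,\sMMa'(n/\alpha',\alpha')=O(\sMMa'(m/\alpha',\alpha'))$ follows from the monotonicity of $d\mapsto\sMMa'(d,\alpha')/d$. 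It thus suffices to treat the core case $\alpha=\beta$ and $m\le n$ (so $p=n$), producing $R_1,\dots,R_\alpha$ in time $O(\sMMa'(n/\alpha,\alpha))$.

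\emph{The core case.} Put $s=\lceil n/\alpha\rceil$ and write each of $U_k$, $V_k$, $W_i$ in base $x^s$, i.e.\ as a tuple of at most $\alpha$ chunks of degree $<s$. In these coordinates $R_i=\sum_kU_k(V_kW_i\bmod x^n)$ splits into a cheap ``full-product'' part and an expensive ``truncated'' part. For the cheap part, $W_i$ does not depend on $k$, so $\sum_kU_kV_kW_i=\Phi\,W_i$ with $\Phi=\sum_kU_kV_k$ a single polynomial computed once in time $O(\alpha\,\sM(n))$, after which the $\alpha$ products $\Phi\,W_i$ cost another $O(\alpha\,\sM(n))$; both are absorbed into the target since $\sMMa'(n/\alpha,\alpha)\ge\alpha\,\sM(n)$. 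For the remainder, the quantities $\sum_kU_k^{[c]}V_k^{[a]}$ (over all block indices $c,a$) are precisely the entries of the product of the two $\alpha\times\alpha$ polynomial matrices with entries of degree $<s$ built from the chunks of the $U_k$ and of the $V_k$; this product costs $\sMMa(s,\alpha)=O(\sM(n)\alpha^{\omega-1})$, within budget. What is left is to assemble these aggregated quantities against the chunks of the $W_i$ subject to the reduction mod $x^n$; because the truncation couples the $V$- and $W$-block indices, this assembly is not a plain matrix multiplication but a \emph{banded} (short) product of polynomial matrices of size $\Theta(\alpha)$ and degree $\Theta(s)$ from which only a prescribed window of coefficient degrees is retained.

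\emph{The banded product, and the crux.} This banded product has to be evaluated in time $O(\sMMa'(n/\alpha,\alpha))$, whereas the naive route of computing it as $\Theta(\alpha)$ separate $\alpha\times\alpha$ polynomial matrix multiplications would cost a factor $\alpha$ too much. The remedy is a divide-and-conquer mirroring the recursive identity $\sMMa'(d,\alpha)=\sMMa(d,\alpha)+2\,\sMMa'(2d,\alpha/2)$: halving $n$ (equivalently, splitting the block count in half while doubling the chunk size) reduces a banded product of size $\alpha$, degree $s$, to one genuine $\alpha\times\alpha$ polynomial matrix multiplication of degree $s$ plus banded products of half the size and twice the degree, while the $\Phi W_i$-type corrections spawned along the way contribute only $O\bigl(\lg(\alpha)\,(\alpha+\beta)\,\sM(n)\bigr)$, which for $\omega>2$ is again absorbed; the contributions then telescope exactly to $\sMMa'(s,\alpha)=\sMMa'(n/\alpha,\alpha)$. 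That this cannot be closed by an outright reduction is to be expected: after reversal, $\sum_kU_k\lfloor V_kW_i/x^n\rfloor$ is once more an instance of the same problem, so the degree really must be cut down step by step. I expect the main obstacle to lie exactly here: setting up the chunk-coordinate identities so that the reduction mod $x^n$ turns into a clean banded product of matrices of the right dimensions, keeping that recursion essentially two-way (not four-way) so its cost telescopes to $\sMMa'$ without a spurious factor $\alpha$ or $\lg$, handling the base case through a single polynomial matrix multiplication, and coping with the off-by-one block counts when $\alpha\nmid n$ — each routine but delicate.
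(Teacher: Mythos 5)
Your two outer reductions (balancing $\alpha$ and $\beta$, and splitting $U_k$ in base $x^n$ so that $m\le n$) are fine and essentially match the paper, and your guiding principle---make the per-level work a genuine polynomial matrix product so that the costs telescope to $\sMMa'$---is the right one. But the core of the theorem is exactly the step you leave open, and what you put in its place is not established. The paper never separates the $U$--$V$ aggregation from $W$: it recurses directly on the truncated product $\mV\mW^t \bmod x^\nu$ of $\alpha\times\gamma$ polynomial matrices via the exact identity $\mV\mW^t \bmod x^\nu = \mV_0\mW_0^t + x^{\nu/2}\big([\,\mV_0~\mV_1\,][\,\mW_1~\mW_0\,]^t \bmod x^{\nu/2}\big)$ (Lemma~\ref{lem:recursive-formula-for-R}, taken from~\cite{BoJeSc08}), i.e.\ one full product plus a \emph{single} recursive call in which the width doubles while the degree halves, and it multiplies by $\mU^t$, chunked at the current granularity $\gamma$, at \emph{every} level. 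The truncation is absorbed exactly, there are no boundary corrections, and the level costs sum to $\sMMa'(p/\alpha,\alpha)$; the one $\alpha\times\alpha$ product of degree about $n/\alpha$ occurs at the base case, not at the top.

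By contrast, you aggregate over $k$ first, forming $M_{c,a}=\sum_k U_k^{[c]}V_k^{[a]}$, and then need a ``banded'' assembly of these against the $W$-chunks under the coupling $a+b\lesssim\alpha$; the recursion you assert for it (one $\sMMa(s,\alpha)$ product, two half-size double-degree banded products, negligible corrections) is precisely what has to be proved, and it faces concrete obstructions. First, the truncation $\bmod\,x^n$ acts on $V_kW_i$ \emph{before} the multiplication by $U_k$, so on the partially truncated anti-diagonals it does not commute with the $k$-aggregation that produced $M$; these boundary terms are not ``$\Phi W_i$-type'' corrections recoverable from $M$, but require going back to the $U_k,V_k,W_i$ at full $(k,i)$-resolution, and at recursion depth $j$ (block count $\alpha/2^j$, chunk size $2^js$) the natural way to handle even $O(1)$ such anti-diagonals is an $\alpha\times(\alpha/2^j)$ by $(\alpha/2^j)\times\alpha$ product of degree-$2^js$ matrices, costing on the order of $4^j\,\sMMa(2^js,\alpha/2^j)$---a factor $2^j$ above the telescoping budget $2^j\,\sMMa(2^js,\alpha/2^j)$, so the claimed bound $O(\lg(\alpha)(\alpha+\beta)\sM(n))$ for the corrections is unjustified. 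Second, you cannot keep both indices $c$ and $i$ unaggregated (the intermediate data would have size $\Theta(\alpha^2 n)$, already over budget), but once you fold $c$ in via $x^{cs}$ the polynomials $N_a=\sum_c M_{c,a}x^{cs}$ to be convolved with the $W$-chunks have degree $\Theta(p)$, which does not shrink as the block count halves, so the ``cheap full-product'' parts of your divide-and-conquer cost $\Theta(\alpha\,\sM(p))$ per subproblem and $\Theta(\alpha^2\sM(p))$ overall. Resolving this tension is exactly what the paper's level-by-level application of $\mU^t$ and its correction-free identity achieve; as written, your crux---the clean two-way banded recursion with negligible corrections---is missing, and with it the proof.
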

To prove Theorem~\ref{theo:R}, we first rephrase our problem in polynomial matrix terms. 
Let $$\mU \in \F[x]_m^{\alpha\times 1},\quad \mV \in \F[x]_n^{\alpha\times
  1},\quad \mW \in \F[x]_n^{\beta\times 1}$$ be the polynomial vectors
with respective entries $(U_i)$, $(V_i)$ and $(W_i)$. Then, our
problem amounts to computing the polynomial row vector $\mR \in
\F[x]_{m+n-1}^{1\times\beta}$ such that 
\[ 
\mR = \mU^t(\mV\mW^t \bmod x^n). 
\]
We first solve that problem in the special case where $\alpha=\beta$
(this is where most of the difficulty takes place); then, we reduce
the other cases of arbitrary $\alpha,\beta$ to this case.

\medskip

\paragraph{Case $\alpha=\beta$}
The following lemma
is the basis of our algorithm in this case; it is taken from the proof
of~\cite[Lemma~12]{BoJeSc08}.
\begin{lemma} \label{lem:recursive-formula-for-R}
Let $\alpha,\gamma,\nu$ be in $\NN_{>0}$ with $\nu$ even. 
Let $\mV,\mW$ be in $\F[x]_\nu^{\alpha\times\gamma}$ and define
\[
\mV_0 = \mV \bmod x^{\nu/2},\quad \mV_1 = \mV \bdiv x^{\nu/2},
\]
\[
\mW_0 = \mW \bmod x^{\nu/2},\quad \mW_1 = \mW \bdiv x^{\nu/2}.
\]
Then the matrices $[\mV_0 \,\, \mV_1]$ and $[\mW_1 \,\, \mW_0]$ are
in $\F[x]_{\nu/2}^{\alpha \times 2\gamma}$, and we have
\[
\mV\mW^t \bmod x^\nu = \mV_0\mW_0^t + x^{\nu/2} \big([\mV_0 \,\, \mV_1][\mW_1 \,\, \mW_0]^t \bmod x^{\nu/2}\big).
\]
\end{lemma}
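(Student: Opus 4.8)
\emph{Proof proposal.} The plan is to expand $\mV\mW^t$ using the splitting of $\mV$ and $\mW$ at degree $\nu/2$ and then reduce modulo $x^\nu$ while keeping track of degrees; no deeper idea is needed.

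First I would dispose of the degree claim: each entry of $\mV_0,\mV_1$ (resp.\ $\mW_0,\mW_1$) is the remainder or quotient of an entry of $\mV$ (resp.\ $\mW$) in the Euclidean division by $x^{\nu/2}$, and since the entries of $\mV,\mW$ lie in $\F[x]_\nu$ both the remainder and the quotient lie in $\F[x]_{\nu/2}$; hence the horizontal concatenations $[\mV_0 \,\, \mV_1]$ and $[\mW_1 \,\, \mW_0]$ lie in $\F[x]_{\nu/2}^{\alpha\times 2\gamma}$, which is the first assertion.

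For the identity I would start from $\mV = \mV_0 + x^{\nu/2}\mV_1$ and, after transposing $\mW = \mW_0 + x^{\nu/2}\mW_1$, the relation $\mW^t = \mW_0^t + x^{\nu/2}\mW_1^t$; multiplying these out gives
\[
\mV\mW^t = \mV_0\mW_0^t + x^{\nu/2}\bigl(\mV_1\mW_0^t + \mV_0\mW_1^t\bigr) + x^\nu\,\mV_1\mW_1^t .
\]
Then I would reduce modulo $x^\nu$: the $x^\nu\,\mV_1\mW_1^t$ term drops; the term $\mV_0\mW_0^t$ has entries that are sums of products of two polynomials of degree less than $\nu/2$, hence of degree at most $\nu-2<\nu$, so it survives unchanged; and for the middle term I would use that $x^{\nu/2}Z \bmod x^\nu = x^{\nu/2}\bigl(Z \bmod x^{\nu/2}\bigr)$ for any polynomial matrix $Z$, applied with $Z = \mV_1\mW_0^t + \mV_0\mW_1^t$. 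This yields
\[
\mV\mW^t \bmod x^\nu = \mV_0\mW_0^t + x^{\nu/2}\bigl((\mV_1\mW_0^t + \mV_0\mW_1^t)\bmod x^{\nu/2}\bigr).
\]
It then remains to recognize the inner matrix as a block product: expanding by blocks, $[\mV_0 \,\, \mV_1]\,[\mW_1 \,\, \mW_0]^t = \mV_0\mW_1^t + \mV_1\mW_0^t$, where the transpose pairs $\mW_1$ with $\mV_0$ and $\mW_0$ with $\mV_1$; this completes the proof.

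I do not expect any genuine obstacle, as the argument is an elementary expansion. The two points that deserve a little care are that the ``swapped'' ordering $[\mW_1 \,\, \mW_0]$ (rather than $[\mW_0 \,\, \mW_1]$) is chosen precisely so that the two cross terms $\mV_0\mW_1^t$ and $\mV_1\mW_0^t$ merge into a single block product, and that the degree bound on $\mV_0\mW_0^t$ is what allows the reduction modulo $x^\nu$ to commute past the explicit factor $x^{\nu/2}$.
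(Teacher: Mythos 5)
Your proof is correct and is essentially the standard argument: the paper itself gives no proof here, simply citing~\cite[Lemma~12]{BoJeSc08}, where the identity is obtained by the same elementary expansion $\mV=\mV_0+x^{\nu/2}\mV_1$, $\mW=\mW_0+x^{\nu/2}\mW_1$, truncation modulo $x^\nu$, and recognition of the cross terms as the block product $[\mV_0\,\,\mV_1][\mW_1\,\,\mW_0]^t$. Your handling of the two delicate points (the degree bound on $\mV_0\mW_0^t$ and the identity $x^{\nu/2}Z\bmod x^\nu=x^{\nu/2}(Z\bmod x^{\nu/2})$) is exactly what is needed.
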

We will want to apply this lemma recursively, starting from $\nu=n$
and $\gamma=1$.  It will be convenient to have both $n$ and $\alpha$
be powers of two.
\begin{itemize}
\item If $n$ is not a power of two, we define ${\overline n} =
  2^{\lceil \log n \rceil}$ and $\delta = {\overline n}-n$. One may then
  check that $a \bmod b = x^{-\delta} \big((x^\delta a) \bmod
  (x^\delta b)\big)$ for any $a$ and nonzero $b$ in $\F[x]$; applying
  this identity componentwise to the definition of $\mR$ gives
  \begin{equation} \label{eq:R-via-scaling}
    \mR = x^{-\delta}\, \mU^t \big( \mV (x^\delta\,\mW)^t \bmod x^{\overline n}\big),
  \end{equation}
  where $\mV$ and $x^{\delta}\mW$ have degree less than $\overline n$. 
\item If $\alpha$ is not a power of two, we define $\overline \alpha =
  2^{\lceil \log \alpha \rceil}$ and $\mu =\overline \alpha-\alpha$. Then,
  we can introduce $\mu$ dummy polynomials $(U_i)$, $(V_i)$ and
  $(W_i)$, all equal to zero, without affecting the value of
  $R_1,\dots,R_\alpha$.  
\end{itemize}
The resulting algorithm is given in Figure~\ref{fig1}, with a
top-level procedure {\sf mul} that handles the cases when $n$ or
$\alpha$ are not powers of two by defining $\overline n$ and
$\overline \alpha$ as above, and a main recursive procedure {\sf
  mul\_rec}.  We initially set $\nu=\overline n$ and $\gamma=1$;
through each recursive call, the width $\gamma$ of the matrices
$\mV$ and $\mW$ is doubled, while their degree $\nu$ is divided by
two, so that the invariant $\gamma \nu =\overline n$ is maintained.

\begin{figure}[htbp]
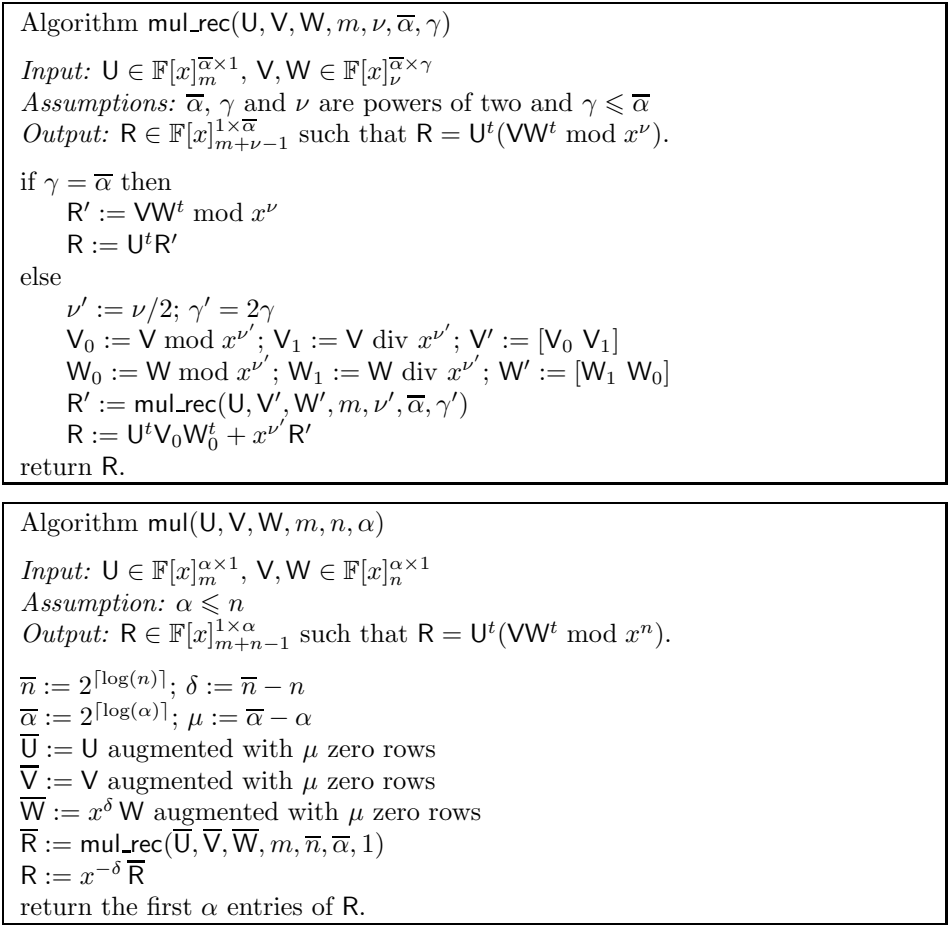

\begin{center}
\fbox{
\begin{minipage}{12.2cm}
Algorithm {\sf mul\_rec}($\mU,\mV,\mW,m,\nu,\overline \alpha,\gamma$)

\medskip

{\it Input:} $\mU\in\F[x]_m^{\overline \alpha\times 1}$,
             $\mV,\mW\in\F[x]_\nu^{\overline \alpha\times \gamma}$ \\
{\it Assumptions:} $\overline \alpha$, $\gamma$ and $\nu$ are powers of two and
                   $\gamma\le \overline \alpha$\\
{\it Output:} $\mR\in\F[x]_{m+\nu-1}^{1\times \overline \alpha}$ such that $\mR = \mU^t (\mV \mW^t \bmod x^\nu)$.

\medskip

if $\gamma = \overline \alpha$ then \\
\hspace*{0.5cm} $\mR' := \mV \mW^t \bmod x^\nu$ \\
\hspace*{0.5cm} $\mR := \mU^t \mR'$ \\
else\\
\hspace*{0.5cm} $\nu':=\nu/2$; $\gamma'=2\gamma$\\
\hspace*{0.5cm} $\mV_0 := \mV \bmod x^{\nu'};\, \mV_1 := \mV \bdiv x^{\nu'};\,\mV' := [\mV_0 \,\,\mV_1]$\\
\hspace*{0.5cm} $\mW_0 := \mW \bmod x^{\nu'};\, \mW_1 := \mW \bdiv x^{\nu'};\,\mW' := [\mW_1 \,\,\mW_0]$ \\
\hspace*{0.5cm} $\mR' := {\sf mul\_rec}(\mU, \mV', \mW',m,\nu',\overline \alpha,\gamma')$\\
\hspace*{0.5cm} $\mR := \mU^t\mV_0\mW_0^t + x^{\nu'}\mR' $\\
return $\mR$.
\end{minipage}
}
\end{center}

\medskip

\begin{center}
\fbox{
\begin{minipage}{12.2cm}
Algorithm {\sf mul}($\mU,\mV,\mW,m,n,\alpha$)
\medskip

{\it Input:} $\mU\in\F[x]_m^{\alpha\times 1}$,
              $\mV,\mW\in\F[x]_n^{\alpha\times 1}$ \\
{\it Assumption:} $\alpha \le n$ \\
{\it Output:} $\mR\in\F[x]_{m+n-1}^{1\times\alpha}$ such that $\mR = \mU^t (\mV \mW^t \bmod x^n)$.

 \medskip

$\overline n := 2^{\lceil\log(n)\rceil}$; $\delta := \overline n -n$\\
$\overline \alpha := 2^{\lceil\log(\alpha)\rceil}$; $\mu := \overline \alpha -\alpha$\\
${\overline\mU} := \mU$ augmented with $\mu$ zero rows \\
${\overline\mV} := \mV$ augmented with $\mu$ zero rows \\
${\overline\mW} := x^\delta \, \mW$ augmented with $\mu$ zero rows \\
${\overline\mR} := {\sf mul\_rec}(\overline \mU,\overline\mV,{\overline\mW},m, \overline n , \overline \alpha, 1)$\\
$\mR := x^{-\delta} \, {\overline\mR}$ \\
return the first $\alpha$ entries of $\mR$.
\end{minipage}
}
\caption{Algorithms {\sf mul} and {\sf mul\_rec}.}\label{fig1}
\end{center}
\end{figure}

\begin{lemma} \label{prop:recursive-routine}
Let $p = \max(m, n)$. 
Algorithm {\sf mul} of Figure~\ref{fig1}
works correctly in time $$O\left(\sMMa'\left(\frac{p}{\alpha},\alpha\right)\right).$$
\end{lemma}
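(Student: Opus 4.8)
The plan is to prove correctness and the cost bound separately, and to reduce everything to analyzing the recursive procedure {\sf mul\_rec}. First I would dispose of the top-level wrapper {\sf mul}: the two reductions to powers of two are exactly the two bullet-point identities established just before the figure, namely $a \bmod b = x^{-\delta}\big((x^\delta a)\bmod(x^\delta b)\big)$ componentwise (which gives~\eqref{eq:R-via-scaling}) and the padding of $\mU,\mV,\mW$ with $\mu$ zero rows, which leaves $R_1,\dots,R_\alpha$ unchanged since the extra summands vanish. So correctness and cost of {\sf mul} follow from those of {\sf mul\_rec} called with $\nu=\overline n$, $\gamma=1$, $\overline\alpha=2^{\lceil\log\alpha\rceil}$, plus $O(n)$ extra operations for the shifts, which is negligible.

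For {\sf mul\_rec}, correctness is an induction on $\log(\overline\alpha/\gamma)$. The base case $\gamma=\overline\alpha$ is immediate from the definition of the output. For the inductive step, the invariant $\gamma\nu=\overline n$ is maintained ($\gamma'\nu'=2\gamma\cdot\nu/2=\gamma\nu$), and $\gamma'=2\gamma\le\overline\alpha$ as long as $\gamma<\overline\alpha$, so the recursive call is legitimate; Lemma~\ref{lem:recursive-formula-for-R} (applied with the current $\alpha=\overline\alpha$, the current $\gamma$, and the current even $\nu$) gives
$$\mV\mW^t\bmod x^\nu = \mV_0\mW_0^t + x^{\nu'}\big([\mV_0\,\,\mV_1][\mW_1\,\,\mW_0]^t\bmod x^{\nu'}\big),$$
so left-multiplying by $\mU^t$ and invoking the induction hypothesis for $\mR' = \mU^t\big(\mV'\mW'^t\bmod x^{\nu'}\big)$ yields exactly the line $\mR := \mU^t\mV_0\mW_0^t + x^{\nu'}\mR'$. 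Here one should note that the polynomial $x^{\nu'}$ does not cause degree overflow in the stated output space $\F[x]_{m+\nu-1}$ since $\mR'$ has degree less than $m+\nu'-1$.

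The cost analysis is the main point. Let $T(\gamma)$ denote the time spent by {\sf mul\_rec} at recursion depth corresponding to width $\gamma$ and degree $\nu=\overline n/\gamma$, excluding the recursive call. Splitting $\mV,\mW$ into high/low halves and forming $\mV',\mW'$ is free (just repartitioning coefficients). The two nontrivial operations at that level are: the product $\mU^t\mV_0\mW_0^t$, and — only at the bottom level $\gamma=\overline\alpha$ — the products $\mV\mW^t\bmod x^\nu$ and $\mU^t\mR'$. I would bound $\mU^t\mV_0\mW_0^t$ by first forming $\mV_0\mW_0^t$, an $\overline\alpha\times\gamma$ by $\gamma\times\overline\alpha$ product of polynomials of degree $<\nu'<\nu$, then multiplying by the $1\times\overline\alpha$ vector $\mU^t$ whose entries have degree $<m$; a clean way is to pad all matrices to square size $\overline\alpha$ over $\F[x]_{\max(m,\nu)}$ and use $\sMMa$, getting $O\big((\overline\alpha/\gamma)\,\sMMa(\max(m,\nu),\gamma)\big)$ for the degree-$<\nu$ part after slicing the $\overline\alpha$-dimension into $\overline\alpha/\gamma$ blocks of size $\gamma$, and similarly for the $\mU^t$ multiplication. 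Using $\nu=\overline n/\gamma$ and $\overline n = \Theta(n)$, $\overline\alpha=\Theta(\alpha)$, and the super-linearity of $\sM$, each level costs $O\big((\alpha/\gamma)\,\sMMa(p\gamma/\alpha,\gamma)\big)$ with $p=\max(m,n)$ (absorbing the degree-$m$ contribution, which is no larger since $m\le p$). Summing over $\gamma = 1,2,4,\dots,\overline\alpha$ — equivalently over $k$ with $\gamma=2^k$ — gives $\sum_{k=0}^{\log\overline\alpha} (\overline\alpha/2^k)\,\sMMa(p\,2^k/\overline\alpha, 2^k)$, which after the substitution $j = \log\overline\alpha - k$ and the scaling identity $2^j\sMMa(2^j d, m/2^j)$ summed over $j$, is precisely $\sMMa'(p/\alpha,\alpha)$ up to the constant hidden in $\overline\alpha\le 2\alpha$ (and using super-linearity to absorb the factor $2$ in the degree argument). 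The bottom-level extra term $\mV\mW^t\bmod x^\nu$ with $\gamma=\overline\alpha$, $\nu=\overline n/\overline\alpha$, and $\mU^t\mR'$, is $O\big(\sMMa(p/\alpha,\alpha)\big)$, which is the $k=\log\overline\alpha$ term and hence already included.

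The step I expect to be the main obstacle is matching the telescoped sum to the \emph{exact} definition of $\sMMa'$ in the paper, i.e.\ checking that
$$\sum_{k=0}^{\log\overline\alpha} \frac{\overline\alpha}{2^k}\,\sMMa\!\Big(\frac{p\,2^k}{\overline\alpha},\,2^k\Big)
= \sMMa'\!\Big(\frac{p}{\overline\alpha},\,\overline\alpha\Big)$$
by re-indexing, and then replacing $\overline\alpha$ by $\alpha$ at the cost of only a constant factor; this requires invoking the super-linearity of $\sM$ (hence of $\sMMa$ in its degree argument) to control $\sMMa(2d,\cdot)$ versus $\sMMa(d,\cdot)$, and the monotonicity-type assumptions stated for $\sMMa'$. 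None of this is deep, but it is the one place where care with the definitions is essential; everything else is bookkeeping of polynomial-matrix multiplication costs.
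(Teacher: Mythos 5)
Your correctness argument (induction on the recursion using Lemma~\ref{lem:recursive-formula-for-R}, plus the two power-of-two reductions for the wrapper) matches the paper and is fine. The gap is in the cost analysis, and it is twofold. First, you evaluate $\mQ=\mU^t\mV_0\mW_0^t$ by \emph{forming $\mV_0\mW_0^t$ first}. That product is $\overline\alpha\times\overline\alpha$, so computing it by $\gamma\times\gamma$ blocks costs $(\overline\alpha/\gamma)^2\,\sMMa(\nu',\gamma)$, not the single factor $(\overline\alpha/\gamma)$ you claim; at the top level ($\gamma=1$) this is already about $\alpha^2\,\sM(p)$, which exceeds the target $O(\sMMa'(p/\alpha,\alpha))\subseteq O(\alpha^{\omega-1}\sM(p))$ since $\omega<3$. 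Moreover, your ``pad all matrices to degree $\max(m,\nu)$'' keeps degree $\approx m\le p$ in play at \emph{every} level (and at the base case gives $\sMMa(p,\alpha)$ for $\mU^t\mR'$), which is again too expensive. The missing idea — the crux of the paper's proof, borrowed from the proof of~\cite[Lemma~7]{BoJeSc08} — is to rewrite $\mU^t=[1~x^c~\cdots~x^{(\gamma-1)c}]\,\mU'$ with $c=\lfloor m/\gamma\rfloor$ and $\mU'\in\F[x]_c^{\gamma\times\overline\alpha}$, and to use the associativity $(\mU'\mV_0)\mW_0^t$: this keeps every block product of size $\gamma$ with only $\overline\alpha/\gamma$ of them, and balances \emph{all} degrees at $O(p/\gamma)$.

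Second, even taken at face value your per-level bound $(\alpha/\gamma)\,\sMMa(p\gamma/\alpha,\gamma)$ has the wrong degree argument (with $\nu=\overline n/\gamma$ and $c\approx m/\gamma$ it should be $p/\gamma$), and the claimed identification of the resulting sum with $\sMMa'$ fails: substituting $j=\log\overline\alpha-k$ in $\sum_{k}(\overline\alpha/2^k)\,\sMMa(p\,2^k/\overline\alpha,2^k)$ gives terms $2^j\sMMa(p/2^j,\overline\alpha/2^j)$, not the terms $2^j\sMMa(2^jp/\overline\alpha,\overline\alpha/2^j)$ of $\sMMa'(p/\overline\alpha,\overline\alpha)$; in particular your sum contains the term $\sMMa(p,\overline\alpha)$, which is roughly a factor $\alpha$ above the stated bound. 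With the correct per-level cost $(\overline\alpha/\gamma)\,\sMMa(p/\gamma,\gamma)$ — which is exactly what the degree-splitting of $\mU$ delivers — the same re-indexing does telescope precisely to $\sMMa'(p/\overline\alpha,\overline\alpha)$, and then $\overline\alpha<2\alpha$, $p/\overline\alpha\le p/\alpha$ finish the argument as in the paper.
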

\begin{proof}
  Correctness is a direct consequence of
  Lemma~\ref{lem:recursive-formula-for-R} and the discussion that
  followed it.  For the cost analysis, we let $C(k)$ denote the cost
  of {\sf mul\_rec} called upon parameters $\overline \alpha$ and
  $\gamma$ such that $\overline \alpha = 2^k \gamma$. Note that the
  total cost of {\sf mul} will then be at most $C(\kappa)$, with $\kappa =
  \lceil \log \alpha \rceil$.  In particular, below, we always have
  $k \le \kappa$. Another useful remark is that we always have $\nu \le
  2p/\gamma$, since $\nu=\overline n/\gamma$ and $\overline n \le 2n \le 2p$.

  If $k=0$, we have $\gamma = \overline\alpha$, so we are in the base
  case of the algorithm, where $\mV$ and $\mW$ are in
  $\F[x]^{\overline \alpha\times \overline \alpha}_\nu$. We first
  compute $\mR'=\mV\mW^t \bmod x^\nu$ in time $\sMMa(\nu,\overline
  \alpha)$.  Then, given $\mU$ in $\F[x]_m^{\overline \alpha\times
    1}$ and $\mR'$ in
  $\F[x]_\nu^{\overline\alpha\times\overline\alpha}$, we obtain $\mR=
  \mU^t \mR'$ in two steps as follows: rewriting $\mU^t$ in the form
  $\mU^t = [1~x^c~x^{2c}~\cdots~x^{(\overline\alpha-1)c}]\mU'$ with $c =
  \lfloor m/\overline\alpha \rfloor$ and
  $\mU'\in\F[x]_c^{\overline\alpha\times\overline\alpha}$, we compute
  first $\mQ = \mU'\mR'$ in time
  $\sMMa(\max\{c,\nu\},\overline\alpha)$, and then deduce $\mR$ from
  $\mQ$ using at most $\overline\alpha^2 (c+\nu-1)$ additions in $\F$.  In summary,
  \begin{align*}
   C(0) &\le \sMMa(\nu,\overline\alpha) + \sMMa(\max\{c,\nu\},\overline\alpha) + \overline\alpha^2(c+\nu-1)   \\
   & \le 2 \sMMa \!\left( \frac{2p}{\overline\alpha}, \overline\alpha \right) + 3 \, \overline\alpha \, p,
  \end{align*}
  using $c \le m/\overline\alpha \le p/\overline\alpha$ and $\nu = \overline n / \overline \alpha \le 2p/\overline\alpha$.
  Therefore, there is a constant $c_0$ with
  \begin{equation} \label{eq:C0-detailed}
  C(0) \le c_0 \cdot \sMMa\!\left (\frac{p}{\overline \alpha},\overline \alpha\right).
  \end{equation}

  Let us now bound $C(k)$ when $k \ge 1$.  Given
  $\mV,\mW\in\F[x]_\nu^{\overline \alpha\times\gamma}$, we can compute
  $\mV_0$, $\mV_1$, $\mW_0$, $\mW_1$ for free.  
  Then $\mR'$ is computed
  recursively in time $C(k-1)$, and it remains to bound the cost of
  producing the result as $\mR = \mQ +x^{\nu'} \mR'$, with
  $\mQ=\mU^t\mV_0\mW_0^t$.

  For now let us write $D(k)$ for the cost of computing $\mQ$.  Given
  $\mR'$ in $\F[x]_{m+\nu'-1}^{1\times\overline\alpha}$ with $\nu' = \nu/2$ and $\mQ$ in
  $\F[x]_{m+\nu-1}^{1\times\overline\alpha}$, we can add them together
  using at most $\overline \alpha(m+\nu'-2)$ additions in $\F$.  Since $\nu' = \overline
  n/2\gamma \le n/\gamma \le n$, we deduce that for $k\ge 1$, 
  \begin{align}\label{eq:Ck}
    C(k) \le  C(k-1) + D(k) + 2 \, \overline \alpha \, p.
  \end{align}
  In order to bound $D(k)$ let us rewrite as before $\mU^t$ as $\mU^t
  = [1~x^c~x^{2c}~\cdots~x^{(\gamma-1)c}]\mU'$ with $c = \lfloor
  m/\gamma \rfloor$ and $\mU'\in\F[x]_c^{\gamma \times \overline\alpha}$;
  we compute $\mQ$ as $\mQ=[1,x^c,x^{2c},\ldots,x^{(\gamma-1)c}]
  (\mU' \mV_0 \mW_0^t)$.  The product $\mU' \mV_0 \mW_0^t$
  involves three polynomial matrices of respective dimensions $\gamma
  \times\overline \alpha$, $\overline\alpha \times \gamma$, $\gamma
  \times\overline \alpha$, with entries of degree less than
  $\max\{c,\nu'\}$. 
  Furthermore, since $c \le m/\gamma$ and, as seen above, $\nu' \le n/\gamma$, 
  we have $\max\{c,\nu'\} \le p/\gamma$.
  Since $\gamma \le \overline\alpha$, we can proceed
  as in the second case considered in the proof of~\cite[Lemma~7]{BoJeSc08}
  to show that $\mU' \mV_0 \mW_0^t$ can be evaluated as $(\mU' \mV_0) \mW_0^t $ in time
\[
\frac{\overline\alpha}{\gamma} \, \sMMa\left(\frac{p}{\gamma},\gamma\right) + 2\,\overline\alpha\,p 
+ \frac{\overline\alpha}{\gamma} \, \sMMa\left(\frac{2p}{\gamma},\gamma\right).
\]
 Since $\mU' \mV_0 \mW_0^t$ has dimensions $\gamma\times \overline\alpha$ and degree less than $3p/\gamma$,
 reconstructing $\mQ$ from that product requires at most $\gamma \overline\alpha \cdot 3p/\gamma = 3\,\overline\alpha\,p$
 additions in $\F$. Thus, using $\overline \alpha = 2^k \gamma$, 
\[
 D(k) \le  2^k \left( \sMMa\left(\frac{2^k p}{\overline\alpha},\frac{\overline\alpha}{2^k}\right) 
+ \sMMa\left(\frac{2^{k+1}p}{\overline\alpha},\frac{\overline\alpha}{2^k}\right)\right)+ 5\,\overline \alpha\,p.
\]
Combining this bound with~(\ref{eq:Ck}) 
and the fact that $\sMMa(2d,n) = O(\sMMa(d,n))$,
we deduce that there exists a constant $c_1$ such that
for all $k\ge 1$,
\begin{equation}\label{eq:Ck-detailed}
 C(k) \le C(k-1) + c_1\cdot 2^k \sMMa\left(\frac{2^k p}{\overline\alpha},\frac{\overline\alpha}{2^k}\right).
\end{equation}

 Taking all $k=0,\dots,\kappa$ into account, we deduce from~(\ref{eq:C0-detailed}) and~(\ref{eq:Ck-detailed}) 
 that the total time of {\sf mul} is bounded as
 \begin{align*}
 C(\kappa) &\le \max \{c_0,c_1\} \cdot  
 \sum_{k=0}^{\kappa} 2^k \sMMa \left (\frac{2^k p}{\overline \alpha},\frac{\overline \alpha}{2^k}\right)
 \in O \left(\sMMa'\left (\frac{p}{\overline \alpha},\overline \alpha\right )\right ).
 \end{align*}
 The conclusion follows from the facts that $p/\overline \alpha
 \le p /\alpha$ and that $\overline \alpha < 2 \alpha$.
\end{proof}

\bigskip
 
\paragraph{Case $\alpha < \beta$} In this situation, we split the 
vector $\mW$ into vectors $\mW_1,\dots,\mW_c$, with each $\mW_i$ in 
$\F[x]_n^{\alpha \times 1}$ and $c =\lceil \beta /\alpha\rceil$ (so 
$\mW_\c$ may be padded with zeros). 
Then, algorithm {\sf mul} is applied $c$ times (namely, to the $(U,V,W_i)$)
and by Lemma~\ref{prop:recursive-routine} we obtain $R_1,\ldots,R_\beta$ in time
$$O\left(\frac{\beta}{\alpha}\sMMa'\left(\frac{p}{\alpha},\alpha\right)\right).$$

\smallskip

\paragraph{Case $\beta < \alpha$} In this case, we split 
the vectors $\mU$ and $\mV$ into $\mU_1,\dots,\mU_c$ and
$\mV_1,\dots,\mV_c$, with each $\mU_i$ and $\mV_i$ in $\F[x]_n^{\beta
  \times 1}$ and $c =\lceil \alpha /\beta\rceil$. 
As before, algorithm {\sf mul} is applied $c$ times, but now to the $(\mU_i,\mV_i,\mW)$,
for a cost of
$$O\left(\frac{\alpha}{\beta}\sMMa'\left(\frac{p}{\beta},\beta\right)\right);$$
the $c$ results thus produced are then added, for a negligible cost of $O(\alpha p)$. 

\medskip

In the next subsection, we denote by {\sf
  mul}($\mU,\mV,\mW,m,n,\alpha,\beta$) the algorithm that handles all
possible cases for 
$\alpha \le n$, 
following the discussion in
the above paragraphs.

\subsection{Computing the $R_i$'s in the general case} \label{ssec:Rgen}

We now address the case of an arbitrary $Q$. Given such a $Q$ of
degree $n$ in $\F[x]$, as well as  $U_1,\dots,U_\alpha$ in
$\F[x]_m$, $V_1,\dots,V_\alpha$ in $\F[x]_n$ and $W_1,\dots,W_\beta$
in $\F[x]_n$, we recall that our goal is to compute $$R_i=\sum_{k \le
  \alpha} U_k (V_k W_i \bmod Q),\quad i=1,\dots,\beta.$$

\begin{theorem}\label{theo:Rgen}
  Assume $\alpha \le n$, and let $p = \max(m,n)$, 
  $\alpha'=\min(\alpha,\beta)$, and $\beta'=\max(\alpha,\beta)$.  
  Then
  one can compute the polynomials $R_1,\dots,R_\beta$ in time
  $$O\left(\frac{\beta'}{\alpha'}\sMMa'\left(\frac{p}{\alpha'},\alpha'\right)\right).$$
\end{theorem}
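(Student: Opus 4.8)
The plan is to reduce the general case to the case $Q=x^n$ treated in Theorem~\ref{theo:R}, by extracting the quotients of the Euclidean divisions by $Q$. Since $\deg V_k,\deg W_i<n$, each product $V_kW_i$ has degree less than $2n-1$, so one may write $V_kW_i = S_{k,i}\,Q + (V_kW_i\bmod Q)$ with $\deg S_{k,i}\le n-2$. Summing against the $U_k$ yields
\[
R_i = \Big(\sum_{k\le\alpha}U_kV_k\Big)W_i \;-\; Q\sum_{k\le\alpha}U_kS_{k,i} \;=\; Z\,W_i - Q\,Y_i,
\]
where $Z:=\sum_{k\le\alpha}U_kV_k$ is computable in time $O(\alpha\,\sM(p))$ and $Y_i:=\sum_{k\le\alpha}U_kS_{k,i}$. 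Once $Z$ and all the $Y_i$ are known, the $R_i$ follow from $O(\beta)$ further polynomial multiplications and additions in degree $O(p)$, i.e.\ in time $O(\beta\,\sM(p))$. Since $\sMMa'(d,n)\ge n\,\sM(dn)$, one has $\tfrac{\beta'}{\alpha'}\sMMa'(\tfrac{p}{\alpha'},\alpha')\ge\beta'\,\sM(p)$, whence $(\alpha+\beta)\,\sM(p)=O\big(\tfrac{\beta'}{\alpha'}\sMMa'(\tfrac{p}{\alpha'},\alpha')\big)$; so all these contributions are negligible and it remains only to compute the $Y_i$ within the target bound.

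The key observation is that computing the $Y_i$ is itself an instance of the problem solved by Theorem~\ref{theo:R}. Put $N:=\rev(Q,n)^{-1}\bmod x^{n-1}$, which is well defined because $Q$ monic forces $\rev(Q,n)$ to have constant term $1$, and set $W_i^\sharp:=\rev\!\big(\rev(W_i,n-1)\,N\bmod x^{n-1},\,n-2\big)$ for each $i$; computing $N$ and all the $W_i^\sharp$ costs $O(\beta\,\sM(n))$. The standard reversal formula for fast Euclidean division, together with $\rev(V_kW_i,2n-2)=\rev(V_k,n-1)\,\rev(W_i,n-1)$, gives $S_{k,i}=\big(V_k\,W_i^\sharp\big)\bdiv x^{n-1}$; in particular $S_{k,i}$ depends on $V_k$ only through $V_k\bmod x^{n-1}$. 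Writing $V_kW_i^\sharp=(V_kW_i^\sharp\bmod x^{n-1})+x^{n-1}\big((V_kW_i^\sharp)\bdiv x^{n-1}\big)$ and summing against the $U_k$ yields $x^{n-1}\,Y_i = Z\,W_i^\sharp - T_i$, where $T_i:=\sum_{k\le\alpha}U_k\big(V_kW_i^\sharp\bmod x^{n-1}\big)$ is exactly the quantity produced by algorithm {\sf mul} on input $\mU$, $\mV\bmod x^{n-1}$, $\mW^\sharp$ with modulus $x^{n-1}$. Hence $Y_i$ is recovered from $T_i$ and $Z\,W_i^\sharp$ by $O(\beta)$ multiplications/additions and one exact division by $x^{n-1}$ (a shift), all negligible.

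Assembling the pieces, I would: compute $N$ and the $W_i^\sharp$; compute $Z$; invoke {\sf mul}$(\mU,\mV\bmod x^{n-1},\mW^\sharp,m,n-1,\alpha,\beta)$ to obtain the $T_i$; set $Y_i=(Z\,W_i^\sharp - T_i)\bdiv x^{n-1}$; and return $R_i = Z\,W_i - Q\,Y_i$. The only nontrivial subroutine is the call to {\sf mul}, which costs $O\big(\tfrac{\beta'}{\alpha'}\sMMa'(\tfrac{p}{\alpha'},\alpha')\big)$ by Theorem~\ref{theo:R} (using $\max(m,n-1)\le p$), and everything else costs $O((\alpha+\beta)\,\sM(p))$, which we already absorbed. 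I expect the main difficulty to be purely bookkeeping: getting the reversal operators and degree bounds right in the identity $S_{k,i}=(V_kW_i^\sharp)\bdiv x^{n-1}$, plus one small edge case — {\sf mul} requires the number of summands to be at most the modulus degree, so the case $\alpha=n$ (which would need $\alpha\le n-1$) must be handled by first peeling off a single summand $U_1(V_1W_i\bmod Q)$, whose contribution to all the $R_i$ costs $O(\beta\,\sM(p))$, and then running the argument above on the remaining $\alpha-1\le n-1$ summands.
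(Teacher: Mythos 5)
Your proposal is correct and follows essentially the same route as the paper: replace remainders by quotients via $R_i = Z\,W_i - Q\,Y_i$, turn the quotients into truncated products by the reversal trick with $\rev(Q,n)^{-1}\bmod x^{n-1}$, and reduce everything to a single call to {\sf mul} with modulus $x^{n-1}$; the paper differs only in bookkeeping, reversing the $U_k$ as well and computing $\rev(S_i,m+n-3)$ directly as $\sum_{k}\rev(U_k,m-1)\big(\widetilde V_k\widetilde W_i\bmod x^{n-1}\big)$ with $\widetilde V_k=\rev(V_k,n-1)/\rev(Q,n)\bmod x^{n-1}$, whereas you fold the inverse into $W_i^\sharp$ and recover $Y_i$ from $x^{n-1}Y_i=Z\,W_i^\sharp-T_i$, at the same cost. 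One parenthetical remark of yours is false: $S_{k,i}=(V_kW_i^\sharp)\bdiv x^{n-1}$ \emph{does} depend on the top coefficient of $V_k$ (take $Q=x^n$, $V_k=W_i=x^{n-1}$); this is harmless, though, because your algorithm only needs that $V_kW_i^\sharp\bmod x^{n-1}$ (hence $T_i$) depends on $V_k$ through $V_k\bmod x^{n-1}$, the missing top coefficient re-entering through $Z$. Finally, your explicit peeling-off of one summand when $\alpha=n$ correctly handles a corner case that the paper's {\sf mulQ} leaves implicit, since it invokes {\sf mul} with degree parameter $n-1$ while only assuming $\alpha\le n$.
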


To prove Theorem~\ref{theo:Rgen}, 
the idea is to use the well-known fact that Euclidean division reduces
to power series inversion and multiplication;
in this way, the proof of Theorem~\ref{theo:Rgen} will reduce to that of Theorem~\ref{theo:R}.

The first step is to replace remainders by quotients. 
By applying the Euclidean division equality $V_k W_i = (V_k W_i \bdiv Q)Q + (V_k W_i \bmod Q)$
and defining
$$
S_i = \sum_{k \le \alpha} U_k (V_k W_i \bdiv Q)
\quad \text{for $i \le \beta$ \,\, and} \quad
T =  \sum_{k \le \alpha} U_k V_k,
$$
we easily deduce the following lemma.
\begin{lemma}
$R_i = T W_i - Q S_i$ for $i \le \beta$.
\end{lemma}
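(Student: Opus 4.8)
The plan is to establish the identity $R_i = T W_i - Q S_i$ directly from the definitions by invoking the Euclidean division equality termwise. The statement is elementary once the right quantities are in place, so this is a short bookkeeping argument rather than anything requiring new ideas.

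First I would fix an index $i \le \beta$ and recall that for each $k \le \alpha$ the Euclidean division of $V_k W_i$ by $Q$ gives
\[
V_k W_i = (V_k W_i \bdiv Q)\,Q + (V_k W_i \bmod Q),
\]
hence
\[
V_k W_i \bmod Q = V_k W_i - (V_k W_i \bdiv Q)\,Q.
\]
Multiplying both sides by $U_k$ and summing over $k \le \alpha$ yields
\[
R_i = \sum_{k \le \alpha} U_k (V_k W_i \bmod Q)
    = \sum_{k \le \alpha} U_k V_k W_i - \sum_{k \le \alpha} U_k (V_k W_i \bdiv Q)\,Q.
\]
In the first sum on the right, $W_i$ does not depend on $k$, so it factors out as $\bigl(\sum_{k \le \alpha} U_k V_k\bigr) W_i = T W_i$; in the second sum, $Q$ is independent of $k$ and factors out as $\bigl(\sum_{k \le \alpha} U_k (V_k W_i \bdiv Q)\bigr) Q = S_i\, Q = Q S_i$. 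Combining these two observations gives exactly $R_i = T W_i - Q S_i$, as claimed. Since $i$ was arbitrary, the identity holds for all $i \le \beta$.

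There is essentially no obstacle here: the only thing to be careful about is that the manipulation is purely in the polynomial ring $\F[x]$ (no modular reduction is being performed when we factor out $W_i$ and $Q$), so the equality is an honest polynomial identity, valid regardless of degrees. This lemma will be used in the sequel to replace the computation of the remainders $V_k W_i \bmod Q$ in the definition of $R_i$ by the computation of the quotients $V_k W_i \bdiv Q$ entering $S_i$, together with the single auxiliary polynomial $T$; the latter costs only $O(\alpha\sM(p))$ to form, and quotients in turn reduce to power series inversion and multiplication, which is precisely the setting of Theorem~\ref{theo:R}.
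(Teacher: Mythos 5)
Your proof is correct and follows exactly the route the paper intends: the paper deduces this lemma directly from the Euclidean division equality $V_k W_i = (V_k W_i \bdiv Q)Q + (V_k W_i \bmod Q)$, multiplied by $U_k$ and summed over $k$, which is precisely your computation. Nothing is missing; the factoring of $W_i$ and $Q$ out of the sums is the whole content of the argument.
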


The main issue in our algorithm will be the computation of
$S_1,\dots,S_\beta$; we will actually compute their reverse
polynomials. Observe that $U_i$ has degree at most $m-1$, $V_k W_i
\bdiv Q$ has degree at most $n-2$, and $S_i$ has degree at most
$m+n-3$. 
We thus introduce the polynomials $\widetilde{S_i}=\rev(S_i,m+n-3)$;
knowing those, we can recover the polynomials $S_i$ for free.
For $k \le\alpha$ and $i \le \beta$, let us also define
$$\widetilde{U_k} = \rev(U_k,m-1), \,\,
  \widetilde{V_k} = \frac{\rev(V_k,n-1)}{\rev(Q,n)} \bmod x^{n-1}, \,\,
  \widetilde{W_i} = \rev(W_i,n-1) \bmod x^{n-1}.$$
All these polynomials are related by the following formula.
\begin{lemma}
  $\widetilde {S_i} = \sum_{k \le \alpha} \widetilde {U_k} (\widetilde {V_k} \widetilde{W_i} \bmod x^{n-1})$ for $i \le \beta$.
\end{lemma}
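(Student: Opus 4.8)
The plan is to prove the identity
\[
\widetilde{S_i} = \sum_{k \le \alpha} \widetilde{U_k}\,(\widetilde{V_k}\,\widetilde{W_i} \bmod x^{n-1})
\]
by reversing the defining relation $S_i = \sum_{k\le\alpha} U_k\,(V_k W_i \bdiv Q)$ term by term, using the standard correspondence between Euclidean division and power series inversion. First I would recall that for a polynomial $A$ of degree $< a$ and a monic polynomial $B$ of degree $b$, the quotient $A \bdiv B$ has degree at most $a-1-b$ and satisfies $\rev(A \bdiv B, a-1-b) = \big(\rev(A,a-1) / \rev(B,b)\big) \bmod x^{a-b}$, since reversing the Euclidean equality $A = (A\bdiv B) B + (A\bmod B)$ and reducing modulo the appropriate power of $x$ kills the remainder term (whose reverse is divisible by a high power of $x$). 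I would apply this with $A = V_k W_i$ (degree $< 2n-1$, so $a = 2n-1$) and $B = Q$ (degree $n$), obtaining that $V_k W_i \bdiv Q$ has degree at most $n-2$ and
\[
\rev(V_k W_i \bdiv Q,\, n-2) = \frac{\rev(V_k W_i,\, 2n-2)}{\rev(Q,n)} \bmod x^{n-1}.
\]

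Next I would simplify the numerator: $\rev(V_k W_i, 2n-2) = \rev(V_k, n-1)\,\rev(W_i, n-1)$, because reversal turns a product into a product when the degree budgets add up correctly ($\rev(FG, d+e) = \rev(F,d)\rev(G,e)$). Reducing modulo $x^{n-1}$, and using that we may freely reduce each factor modulo $x^{n-1}$ first, this equals $\big(\widetilde{V_k}' \,\widetilde{W_i}\big) \bmod x^{n-1}$ where $\widetilde{V_k}' = \rev(V_k,n-1) \bmod x^{n-1}$ — but in fact the definition of $\widetilde{V_k}$ already folds in the division by $\rev(Q,n)$, so combining the last two displays gives directly $\rev(V_k W_i \bdiv Q, n-2) = \widetilde{V_k}\,\widetilde{W_i} \bmod x^{n-1}$.

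Finally I would reverse the sum $S_i = \sum_{k\le\alpha} U_k\,(V_k W_i \bdiv Q)$ as a whole: each summand $U_k\,(V_k W_i \bdiv Q)$ has degree at most $(m-1)+(n-2) = m+n-3$, so $S_i$ has degree at most $m+n-3$, and reversal is additive, giving $\widetilde{S_i} = \rev(S_i, m+n-3) = \sum_{k\le\alpha} \rev\big(U_k\,(V_k W_i \bdiv Q),\, m+n-3\big)$. Each term factors as $\rev(U_k, m-1)\cdot\rev(V_k W_i \bdiv Q, n-2) = \widetilde{U_k}\cdot(\widetilde{V_k}\,\widetilde{W_i} \bmod x^{n-1})$ by the previous step, which yields the claimed formula. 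The main obstacle is purely bookkeeping: keeping every degree bound exact so that the reversal-of-a-product and reversal-of-a-sum identities apply on the nose (e.g. that $V_k W_i$ really has degree $\le 2n-2$ and not more, and that truncating intermediate factors modulo $x^{n-1}$ does not change the final truncation), but no genuinely hard step is involved once these bounds are tracked carefully.
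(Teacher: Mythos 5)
Your proposal is correct and follows essentially the same route as the paper: reverse the sum term by term using linearity and the product rule $\rev(ab,r+s)=\rev(a,r)\rev(b,s)$, then convert $\rev(V_kW_i\bdiv Q,\,n-2)$ into $\widetilde{V_k}\widetilde{W_i}\bmod x^{n-1}$ via the standard quotient-reversal identity (which the paper simply cites from von zur Gathen--Gerhard, whereas you re-derive it from the Euclidean equality). The degree bookkeeping you carry out matches the paper's argument.
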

\begin{proof}
Using that $\rev(ab,r+s) = \rev(a,r)\rev(b,s)$ for any polynomials 
$a,b\in\F[x]$ of respective degrees $r$ and $s$, 
we deduce from the definitions of $S_i$ and $\widetilde S_i$ that
$$\widetilde S_i = \sum_{k \le \alpha} \widetilde{U_k} \rev( V_k W_i \bdiv Q,n-2).$$
Following~\cite[p.~258]{GaGe13}, we conclude with the equalities
\begin{align*}
\rev(V_k W_i \bdiv Q,n-2) &= \frac{\rev(V_k W_i,2n-2)}{\rev(Q,n)} \bmod x^{n-1}  \\
&= \frac{\rev(V_k,n-1)\,\rev(W_i,n-1)}{\rev(Q,n)} \bmod x^{n-1}\\
&= \widetilde{V_k} \widetilde{W_i}  \bmod x^{n-1}.
\end{align*}

\end{proof}

Figure~\ref{fig2} details the algorithm we just sketched.
To prove Theorem~\ref{theo:Rgen}, it suffices to observe that the cost
boils down to one call to ${\sf mul}$ plus $O((\alpha+\beta) \sM(p))$ for all
other operations, and then to apply Theorem~\ref{theo:R}.

\begin{figure}[htbp]
\begin{center}
\fbox{
\begin{minipage}{12.2cm}
Algorithm {\sf mulQ}($\mU,\mV,\mW,m,n,\alpha,\beta,Q$)
\medskip

{\it Input:} $\mU\in\F[x]_m^{\alpha\times 1}$,
              $\mV\in\F[x]_n^{\alpha\times 1}$, $\mW\in\F[x]_n^{\beta\times 1}$\\ 
\textcolor{white}{{\it Input:}} $Q$ monic of degree $n$ in $\F[x]$ \\
{\it Assumption:} 
$\alpha \le n$\\
{\it Output:} $\mR\in\F[x]_{m+n-1}^{1\times\beta}$ such that $\mR = \mU^t (\mV \mW^t \bmod Q)$.

 \medskip

$\widetilde{\mU} := [\rev(U_k,m-1), k=1,\dots,\alpha]$\\
$\widetilde{\mV} := [\rev(V_k,n-1)/\rev(Q,n) \bmod x^{n-1}, k=1,\dots,\alpha]$\\
$\widetilde{\mW} := [\rev(W_i,n-1) \bmod x^{n-1}, i=1,\dots,\beta]$\\
$\widetilde{\mS} := {\sf mul}(\widetilde{\mU},\widetilde{\mV},\widetilde{\mW},m,n-1,\alpha,\beta)$\\
$T := \sum_{k=1,\dots,\alpha} U_k V_k$\\
$\mR := [T W_i - Q \rev(\widetilde{S_i},m+n-3), i=1,\dots,\beta]$\\
return $\mR$.
\end{minipage}
}
\caption{Algorithm {\sf mulQ}.}\label{fig2}
\end{center}
\end{figure}

\section{Algorithms for inversion and system solving} \label{sec:inverse}
We conclude in this section by establishing the cost bounds announced in Theorem~\ref{theo:mainSylv}
for structured inversion and structured system solving.
We rely on Theorem~\ref{theo:equiv} from
Section~\ref{sec:equiv} to reduce our task to the case of the Hankel operator $\nabla_{\ZZ_{m,0},\ZZ_{n,1}^t}$, 
then use the fast multiplication algorithm of
Section~\ref{sec:multiplication_alogrithms} to speed up the Morf \!/\!
Bitmead-Anderson (MBA) approach.

\begin{theorem}\label{theo:inv}
  For any invertible operator $\mcL$ associated with $(\vP,\vQ)$, we can take
  $$\sMI(\mcL, \alpha) = O\left (\sMMa''\left(\frac{m}{\alpha},\alpha\right) \right)$$
and
$$\sMS(\mcL, \alpha)\,=\,O\left (\sMMa''\left(\frac{p}{\alpha},\alpha\right) \right),
\qquad p = \max(m,n).$$
\end{theorem}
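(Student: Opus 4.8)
The plan is to reduce both questions to the Hankel operator and then run the Morf~/~Bitmead--Anderson (MBA) divide-and-conquer, replacing every internal structured matrix product by the algorithm of Theorem~\ref{theo:mul}. First I would invoke Theorem~\ref{theo:equiv}, which gives
$\sMI(\mcL,\alpha) \le \sMI(\nabla_{\ZZ_{m,0},\ZZ_{m,1}^t},\alpha+2) + O(\sD(\vP,\vQ)+\alpha\sC(\vP,\vQ)+\alpha^{\omega-1}m)$
and $\sMS(\mcL,\alpha)\le \sMS(\nabla_{\ZZ_{m,0},\ZZ_{n,1}^t},\alpha+2)+O(\sD(\vP,\vQ)+\alpha\sC(\vP,\vQ))$.
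Since $\sC(\vP,\vQ),\sD(\vP,\vQ)=O(\sM(p)\lg p)$, the additive terms are $O(\alpha\sM(p)\lg p+\alpha^{\omega-1}m)$, and this is $O(\sMMa''(p/\alpha,\alpha))$: using the lower bounds on $\sMMa''$ recorded in the introduction ($\sMMa''(p/\alpha,\alpha)$ dominates $p\,\alpha^{\omega-1}$, and also dominates $\alpha\sM(p)\lg(p/\alpha)$), a short case split on whether $\alpha\le\sqrt p$ or $\alpha>\sqrt p$ shows the remaining $\alpha\sM(p)\lg p$ is absorbed as well. Hence it suffices to bound $\sMI(\nabla_{\ZZ_{m,0},\ZZ_{m,1}^t},\alpha)$ by $O(\sMMa''(m/\alpha,\alpha))$ and $\sMS(\nabla_{\ZZ_{m,0},\ZZ_{n,1}^t},\alpha)$ by $O(\sMMa''(p/\alpha,\alpha))$.

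For the Hankel operator I would follow the MBA scheme as in~\cite[\S5.5--5.7]{Pan01}, tracking costs. The first step is the randomized regularization of Kaltofen and Saunders~\cite{KaSa91} and its structured refinements~\cite{K94,Kaltofen95}: replace $\mA$ by $\mU\mA\mV$ with $\mU,\mV$ random triangular Toeplitz matrices, which outside a hypersurface of degree $m^{O(1)}$ makes all relevant leading principal submatrices invertible; a generator of length $\alpha+O(1)$ for $\mU\mA\mV$ is obtained in time $O(\alpha\sM(m))$ by Corollary~\ref{coro:mul1}. The recursion then splits the (strongly regular) matrix into $2\times2$ blocks of halved size, recursively inverts the leading block, forms and recursively inverts the Schur complement, and recombines. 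The essential structural point is that for this operator the leading block and the Schur complement are again Hankel-like of displacement rank $\alpha+O(1)$, so every matrix handled at every level has displacement rank $O(\alpha)$; each level reduces to $O(1)$ structured products of a size-$s$ matrix of displacement rank $O(\alpha)$ by an $s\times O(\alpha)$ matrix, a few matrix--vector products, and generator compressions. By Theorem~\ref{theo:mul} (here $\sC=\sM$, $\sD$ linear, $\alpha'=\beta'=O(\alpha)$) such a product costs $O(\sMMa'(s/\alpha,\alpha))$ for $s\ge\alpha$; the matrix--vector products cost $O(\alpha\sM(s))=O(\sMMa'(s/\alpha,\alpha))$; and a compression costs $O(\alpha^{\omega-1}s)$ by~\cite[Remark~4.6.7]{Pan01}. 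Summing the products over the $2^j$ subproblems of size $m/2^j$ at level $j$, for $j=0,\dots,\log(m/\alpha)$, gives $\sum_j 2^j\sMMa'(m/(2^j\alpha),\alpha)=\sMMa''(m/\alpha,\alpha)$ by the very definition of $\sMMa''$; the $m/\alpha$ leaves of size $O(\alpha)$, handled densely in time $O(\alpha^\omega)$ each, contribute $O(m\alpha^{\omega-1})=O(\sMMa''(m/\alpha,\alpha))$, and the compressions contribute $O(\alpha^{\omega-1}m\lg m)$, again absorbed into $O(\sMMa''(m/\alpha,\alpha))$. For a possibly rectangular, possibly rank-deficient $\mA\in\F^{m\times n}$, system solving is reduced by the same randomization to determining the rank and solving with a nonsingular structured submatrix, all within the budget $O(\sMMa''(p/\alpha,\alpha))$, $p=\max(m,n)$; this is the source of the Las Vegas behavior and of the stated success condition.

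The main obstacle is not the cost bookkeeping — the telescoping to $\sMMa''$ is built into its definition — but two more delicate points. First, one must check \emph{uniformly} along the recursion that all matrices that arise (leading blocks, Schur complements, and the block products feeding the generator of $\mA^{-1}$) remain Hankel-like of displacement rank $O(\alpha)$ for the appropriate operators, so that Theorem~\ref{theo:mul} genuinely applies with $\alpha'=\beta'=O(\alpha)$ at every node, and that the $O(1)$ additive slacks in displacement rank do not accumulate beyond a constant. Second, one must push the randomized regularization through so that the algorithm is Las Vegas for \emph{arbitrary} input, not merely generic, with the announced degree bound on the failure hypersurface; for this I would rely on~\cite{KaSa91,K94,Kaltofen95} essentially verbatim, verifying only that the structured products they invoke are exactly the ones priced by Theorem~\ref{theo:mul}. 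Everything else is routine substitution of the new multiplication cost into the classical MBA analysis.
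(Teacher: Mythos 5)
Your overall architecture coincides with the paper's: reduce to the Hankel operator via Theorem~\ref{theo:equiv}, then run an MBA-type divide-and-conquer in which each block product (structured matrix of displacement rank $O(\alpha)$ times an unstructured matrix with $O(\alpha)$ columns) is priced by Theorem~\ref{theo:mul}, so that the level-by-level sum telescopes into $\sMMa''$ by its very definition. The genuine gap is in your treatment of generator growth inside the recursion. You keep the classical per-node generator compression and claim its total cost $O(\alpha^{\omega-1}m\lg m)$ is ``absorbed into $O(\sMMa''(m/\alpha,\alpha))$''. As written this is false when $\alpha=\Theta(m)$: there $\sMMa''(m/\alpha,\alpha)=\sMMa'(1,\alpha)=O(\alpha^{\omega})=O(m^{\omega})$, while $\alpha^{\omega-1}m\lg m=m^{\omega}\lg m$. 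Even with the correct level count $\lg(m/\alpha)$, absorbing $\alpha^{\omega-1}m\lg(m/\alpha)$ would need a per-scale lower bound of the form $\sMMa'(d,\alpha)=\Omega(d\,\alpha^{\omega})$, which holds for the two concrete instantiations of $\sMMa$ but is not among the properties the paper establishes for the abstract cost functions (only $dn^2\le n\sM(dn)\le\sMMa'(d,n)$ and $dn^{\omega}=O(\sMMa''(d,n))$ are available), so your accounting does not yield the stated bound. Your own caveat --- that the ``$O(1)$ additive slacks in displacement rank do not accumulate'' --- is precisely the crux: the natural block formulas for the inverse double the generator length at each combination step, so either the lengths blow up (no compression) or you are back to the compression accounting above. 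Resolving this dilemma is exactly what Section~6 of the paper does differently, and it is the source of the improvement over~\cite{BoJeSc08}: following~\cite{JeaMou10}, the recursion maintains \emph{specified} generators of fixed width, $\mY=-\mA_\ell^{-1}\mG_\ell$ and $\mZ=\mA_\ell^{-t}\mH_\ell$ (algorithms \largestrec, \largest, \lpinv), so that no compression at all is performed inside the recursion; the only compression is a single final one, of cost $O(\alpha^{\omega-1}m)$, already charged in Theorem~\ref{theo:equiv}.

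Two further points are thinner than they should be but are fixable. First, the Las Vegas behaviour: the paper certifies success of the Kaltofen--Saunders preconditioning by checking that the Schur complement of the computed leading principal block vanishes (algorithm \lpinv\ reports ``failure'' otherwise), and it must work with the singular operator $\nabla_{\ZZ_{m,0},\ZZ_{n,0}^t}$ after preconditioning, which forces carrying the last row $\widetilde\u$ of $\widetilde\mA$ along with $(\widetilde\mG,\widetilde\mH)$; your appeal to~\cite{KaSa91,Kaltofen95} ``essentially verbatim'' skips this verification and bookkeeping. Second, your absorption of the additive terms $O(\sD(\vP,\vQ)+\alpha\sC(\vP,\vQ)+\alpha^{\omega-1}m)$ into $O(\sMMa''(p/\alpha,\alpha))$ is asserted via an unexecuted case split; this is at the same level of detail as the paper's own remark that these terms are negligible, so I do not count it as a gap, but it deserves the same care as the compression issue if you write the argument out.
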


Recall that Theorem~\ref{theo:equiv} in Section~\ref{sec:equiv} proves that we can take
$$\sMI(\mcL, \alpha) \le \sMI(\nabla_{\ZZ_{m,0},\ZZ_{m,1}^t}, \alpha+2)
+O\big(\sD(\vP,\vQ) + \alpha\sC(\vP,\vQ) + \alpha^{\omega-1} m\big )$$
and
$$\sMS(\mcL, \alpha) \le \sMS(\nabla_{\ZZ_{m,0},\ZZ_{n,1}^t}, \alpha+2)
+O\big(\sD(\vP,\vQ) + \alpha\sC(\vP,\vQ)\big ),$$
so we are left with estimating the cost of inversion and system solving
for $\nabla_{\ZZ_{m,0},\ZZ_{n,1}^t}$. For this, we shall use 
the MBA approach with preconditioning introduced in~\cite{Kaltofen95}
together with the ``compression-free'' formulas from~\cite{JeaMou10} for generating
the matrix inverse.
Such formulas being expressed with products of the form 
(structured matrix) $\times$ (unstructured matrix),
this provides a way to exploit our results~for fast structured matrix
multiplication from Section~\ref{sec:multiplication_alogrithms}.
Specifically, 
starting from a $\nabla_{\ZZ_{m,0},\ZZ_{n,1}^t}$-generator $(\mG,\mH)$ of length $\alpha$ of $\mA \in \F^{m\times n}$,
we proceed in three steps as~follows.

\medskip

{\it Preconditioning.}
From $(\mG,\mH)$ we begin by deducing a generator of the preconditioned matrix 
$$\widetilde\mA = \U(\v_1)\mA\U(\v_2)^t,$$ 
where each $\U(\v_i)$ is a unit upper triangular Toeplitz matrix
defined by some random vector~$\v_i$;
as shown in~\cite{KaSa91}, such a preconditioning ensures that, with high probability,
$\widetilde\mA$ has {\it generic rank profile}, that is, 
denoting by $\widetilde\mA_k$ the leading principal submatrix 
of $\widetilde\mA$ of dimensions $k \times k$, 
$$\det\!\big(\widetilde\mA_k\big) \ne 0 \quad \mbox{for $k=1,\ldots,\rank(\widetilde\mA)$.}$$

In our case,
preconditioning will be done in such a way that 
the generator of $\widetilde\mA$ corresponds not to $\nabla_{\ZZ_{m,0},\ZZ_{n,1}^t}$ 
but to another Hankel operator, namely $\nabla_{\ZZ_{m,0},\ZZ_{n,0}^t}$
(algorithm {\sf precond}, Figure~\ref{fig:precond}).
The latter operator being singular but still {\it partly regular}~\cite[\S4.5]{Pan01}, 
such a generator will in fact be a triple $(\widetilde\mG,\widetilde\mH,\widetilde\u)$,
with $(\widetilde\mG,\widetilde\mH)$ of length $O(\alpha)$ and $\widetilde\u$ carrying
the last row of $\widetilde\mA$.
Also, we will guarantee that the matrices $\widetilde\mG$ and $\widetilde\mH$ have the
following special shape: their first $\alpha$ columns are precisely $\U(\v_1)\mG$ and $\U(\v_2)\mH$.
These two requirements will be key to 
exploit the ``compression-free'' inversion scheme of~\cite[\S4.3]{JeaMou10}
and then to move back from, say, $\widetilde\mA^{-1}$ to $\mA^{-1}$.

\medskip

{\it Computation of the leading principal inverse.}
Assuming that $\widetilde\mA$ has generic rank profile 
and given $(\widetilde\mG,\widetilde\mH,\widetilde\u)$,
we now consider computing a compact representation of the inverse of its largest nonsingular leading principal
submatrix, that is, of $\widetilde\mA_r^{-1}$ with $$r = \rank(\widetilde\mA) = \rank(\mA).$$
Since $\widetilde\mA_r$ is structured with respect to $\nabla_{\ZZ_{r,0},\ZZ_{r,0}^t}$,
its inverse is structured with respect to $\nabla_{\ZZ_{r,0}^t,\ZZ_{r,0}}$ and, 
as recalled in~(\ref{eq:mcLP}),
$$\rank\big(\nabla_{\ZZ_{r,0}^t,\ZZ_{r,0}}(\widetilde\mA_r^{-1})\big) 
= 
\rank\big(\nabla_{\ZZ_{r,0},\ZZ_{r,0}^t}(\widetilde\mA)\big) =: \rho.$$
Therefore, our second step (detailed in Section~\ref{ssec:leading_principal_inverse}) 
will be to compute a $\nabla_{\ZZ_{r,0}^t,\ZZ_{r,0}}$-generator of $\widetilde\mA_r^{-1}$
of length $\rho$. 

In fact, following~\cite[\S5]{Pan01}, 
we give an algorithm (\lpinv, Figure~(\ref{fig:lpinv})) that does {\it not}
assume $\widetilde\mA$ has generic rank profile but discovers whether this is the case or not.
It calls first an auxiliary routine, called \largest, 
which returns the size $\ell$ of the largest leading principal submatrix $\widetilde\mA_\ell$ having generic rank profile
together with a generator of the inverse $\widetilde\mA_\ell^{-1}$.
Then, from such a generator, one can check efficiently whether $\ell$ equals $\rank(\widetilde\mA)$,
which is a condition equivalent to $\widetilde\mA$ having generic rank profile.
Thus, overall, algorithm \lpinv\ generates $\widetilde\mA_r^{-1}$ 
if and only if $\widetilde\mA$ has generic rank profile (and reports 'failure' otherwise,
since this provides a way of certifying whether preconditioning $\mA$ was successful or not).

Algorithm \largest\ calls a core recursive routine \largestrec,
which can be seen as a combination of Kaltofen's algorithm Leading Principal Inverse~\cite{Kaltofen95}
and~\cite[algorithm~{\sf GenInvHL}]{JeaMou10},
which thus relies on products of the form (structured matrix) $\times$ (unstructured matrix).
Also, the generating matrices $\widetilde\mY_\ell$ and $\widetilde\mZ_\ell$ produced by algorithm \largest\
are specified to have the following shape: 
$$\widetilde\mY_\ell = -\widetilde\mA_\ell^{-1} \widetilde\mG_\ell, 
\qquad \widetilde\mZ_\ell = \widetilde\mA_\ell^{-t} \widetilde\mH_\ell,$$
where $\widetilde\mG_\ell$ is made from the first $\ell$ rows of $\widetilde\mG$, and similarly for $\widetilde\mH_\ell$.
When $\ell = r$,
the same specification is inherited by the generator $(\widetilde\mY_r, \widetilde\mZ_r,\widetilde\v)$ produced by algorithm \lpinv,
whose matrices satisfy
$$\widetilde\mY_r = -\widetilde\mA_r^{-1} \widetilde\mG_r, 
\qquad \widetilde\mZ_r = \widetilde\mA_r^{-t} \widetilde\mH_r$$
where $\widetilde\mG_r$ and $\widetilde\mH_r$ are the first $r$ rows of $\widetilde\mG$ and~$\widetilde\mH$, and
$\widetilde\v^t$ is the first row of $\widetilde\mA_r^{-1}$.

\medskip

{\it Generating inverses and solving linear systems.}
Given the rank $r$ of $\widetilde\mA$ and 
the $\nabla_{\ZZ_{r,0}^t,\ZZ_{r,0}}$-generator $(\widetilde\mY_r, \widetilde\mZ_r,\widetilde\v)$
of $\mA_r^{-1}$ as above,
it is immediate to decide whether $\widetilde\mA$ and $\mA$ are invertible and, if so, 
to deduce the $\nabla_{\ZZ_{m,0}^t,\ZZ_{m,1}}$-generator of $\mA^{-1}$ given by
$$\mY = -\mA^{-1}\mG, \qquad \mZ = \mA^{-t}\mH.$$
This corresponds to algorithm {\sf inv} in Figure~\ref{fig:algo_inv}.
Now, given an additional vector $\b \in \F^m$, 
we reduce the study of $\mA\x = \b$ to that of the equivalent 
linear system $\widetilde\mA\widetilde\x = \widetilde\b$,
where $\widetilde\b = \U(\v_1)\b$ and for which 
an algorithm can be derived directly from the one in~\cite[\S4]{KaSa91},
with the additional guarantee that if $\b = 0$ and the column rank of $\mA$ is not full, then
a nonzero solution $\x$ is obtained. This corresponds to algorithm {\sf solve} in Figure~\ref{fig:algo_solve}.
(Clearly, both {\sf inv} and {\sf solve} are Las Vegas algorithms---``always correct, probably fast'',
thanks to the specification of algorithm \lpinv.)

\medskip

The next three sections provide detailed descriptions of the algorithms mentioned above, 
namely {\sf precond}, \largestrec, \largest, \lpinv, {\sf inv}, {\sf solve},
together with their correctness and complexity proofs,
thereby establishing Theorem~\ref{theo:inv}.

\subsection{Preconditioning} \label{ssec:preconditioning}
We precondition our structured matrix $\mA$ as shown in Figure~\ref{fig:precond} below.
Here and hereafter, 
$\e_{n,i}$ denotes the $i$th unit vector in $\K^n$
and, for any given $m\times n$ matrix $\mM$ such that $n \ge \alpha$, 
we write $\M^{\mapsto \alpha}$ for the $m\times \alpha$ matrix 
obtained by keeping only the first $\alpha$ columns of $\M$.

\begin{figure}[htbp]
\begin{center}
\fbox{
\begin{minipage}{10.9cm}
Algorithm {\sf precond}($\mG,\mH,\v_1,\v_2$)
\medskip

{\it Input:} $(\mG,\mH)\in\F^{m\times\alpha} \times \F^{n\times\alpha}$ 
                  such that $\nabla_{\ZZ_{m,0},\ZZ_{n,1}^t}(\mA) = \mG\mH^t$,\\ 
\textcolor{white}{{\it Input:}} $\v_1\in\F^m$ and $\v_2\in\F^n$. \\[2mm]
{\it Output:} $(\widetilde\mG, \widetilde\mH) \in\F^{m\times(\alpha+4)}\times \F^{n\times(\alpha+4)}$ 
and $\widetilde\u \in \F^n$ such that 
\begin{itemize}
\item $\nabla_{\ZZ_{m,0},\ZZ_{n,0}^t}(\widetilde\mA) = \widetilde\mG\widetilde\mH^t$ for 
          $\widetilde\mA = \U(\v_1)\,\mA\,\U(\v_2)^t$; 
 \item $\widetilde\u^t$ is the last row of $\widetilde\mA$;
\vspace*{0.1cm}
 \item $\widetilde\mG^{\mapsto\alpha} = \U(\v_1)\mG$ and $\widetilde\mH^{\mapsto\alpha} = \U(\v_2)\mH$.
\end{itemize}

\medskip

$\mG_1 := \big[ \ZZ_{m,0}\J_m \v_1\,|\, -\e_{m,1}\big]$; 
$\mH_1 := \big[\e_{m,m} \,|\, \ZZ_{m,0}^t \v_1\big]$ \\[1mm]
$\mG_2 := \big[ \ZZ_{n,1}^t \v_2 \,|\, -\e_{n,n} \big]$; 
$\mH_2 := \big[ \e_{n,1} \,|\, \ZZ_{n,0}\J_n \v_2 \big]$ \\[1mm]
$\widetilde\mG := \big[\U(\v_1)\mG \,|\, \mG_1 \,|\, \U(\v_1)\mA\mG_2 \big]$;
$\widetilde\mH := \big[\U(\v_2)\mH \,|\, \U(\v_2)\mA^t\mH_1 \,|\, \mH_2\big]$\\[1mm]
$\widetilde\u := \U(\v_2)\mA^t \U(\v_1)^t \e_{m,m}$\\[1mm] 
return $(\widetilde\mG, \widetilde\mH, \widetilde\u)$.
\end{minipage}
}
\caption{Algorithm {\sf precond}.}\label{fig:precond}
\end{center}
\end{figure}

\begin{lemma} \label{lem:preconditioning}
Algorithm {\sf precond} works correctly in time $O(\alpha \M(p))$.
Furthermore, if the vectors $\v_1\in\F^m$ and $\v_2\in\F^n$ have 
their first entry equal to $1$ and their remaining $m+n-2$ entries 
chosen uniformly at random from a finite subset $S \subset \F$, then 
the matrix $\widetilde\mA = \U(\v_1) \mA \U(\v_2)^t$ has generic rank profile 
with probability at least $$1-r(r+1)/|S|,$$
where $r = \rank\big(\widetilde\mA\big) = \rank(\mA)$ and $|S|$ is the cardinality of $S$.
\end{lemma}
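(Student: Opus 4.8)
The plan is to prove the three output properties of Algorithm~{\sf precond} (a verification), then bound its running time, and finally analyze the random preconditioning; the substance lies in the last part, everything else being routine.

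\emph{Correctness.} I would start from the product rule~\eqref{eq:general-formula-ABC} applied to $\widetilde\mA = \U(\v_1)\cdot\mA\cdot\U(\v_2)^t$ with the intermediate matrices $\ZZ_{m,0}$ and $\ZZ_{n,1}^t$, namely
\[
\nabla_{\ZZ_{m,0},\ZZ_{n,0}^t}(\widetilde\mA) =
\nabla_{\ZZ_{m,0},\ZZ_{m,0}}(\U(\v_1))\,\mA\,\U(\v_2)^t
+ \U(\v_1)\,\nabla_{\ZZ_{m,0},\ZZ_{n,1}^t}(\mA)\,\U(\v_2)^t
+ \U(\v_1)\,\mA\,\nabla_{\ZZ_{n,1}^t,\ZZ_{n,0}^t}(\U(\v_2)^t),
\]
and then check, by directly inspecting the entries of a unit upper triangular Toeplitz matrix, that $\nabla_{\ZZ_{m,0},\ZZ_{m,0}}(\U(\v_1)) = \mG_1\mH_1^t$ and $\nabla_{\ZZ_{n,1}^t,\ZZ_{n,0}^t}(\U(\v_2)^t) = \mG_2\mH_2^t$ for the $\mG_1,\mH_1,\mG_2,\mH_2$ built in the algorithm: both displacements have rank at most one, are supported on the first row and the last column, and the prescribed shifts of $\v_1,\v_2$ (via $\ZZ$, $\J$) together with the unit vectors read off exactly those entries. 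Substituting $\nabla_{\ZZ_{m,0},\ZZ_{n,1}^t}(\mA)=\mG\mH^t$ into the middle term and collecting factors gives $\nabla_{\ZZ_{m,0},\ZZ_{n,0}^t}(\widetilde\mA)=\widetilde\mG\widetilde\mH^t$ with $\widetilde\mG,\widetilde\mH$ as returned, of width $\alpha+4$. The other two properties are immediate: $\widetilde\u^t = \e_{m,m}^t\,\U(\v_1)\,\mA\,\U(\v_2)^t = \e_{m,m}^t\widetilde\mA$ is the last row of $\widetilde\mA$, and the columns of $\widetilde\mG,\widetilde\mH$ are arranged so that $\widetilde\mG^{\mapsto\alpha}=\U(\v_1)\mG$ and $\widetilde\mH^{\mapsto\alpha}=\U(\v_2)\mH$ by construction.

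\emph{Complexity.} Each product of a Toeplitz matrix $\U(\v_1)^{\pm t}$ or $\U(\v_2)^{\pm t}$ by a vector costs $O(\sM(p))$, and $\U(\v_1)\mG$, $\U(\v_2)\mH$ are $O(\alpha)$ such products; forming $\mG_1,\mH_1,\mG_2,\mH_2$ from shifts of $\v_1,\v_2$ costs $O(p)$. Since $\mG_2$ and $\mH_1$ have $O(1)$ columns, the products $\mA\mG_2$, $\mA^t\mH_1$, $\mA^t(\U(\v_1)^t\e_{m,m})$ are $O(1)$ matrix--vector products with the Hankel-like matrix $\mA$ (or its transpose, also Hankel-like, with a generator of the same length $\alpha$ deduced for free). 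Here $\mA$ is structured for $\nabla_{\CC_\vP,\CC_\vQ^t}$ with $\vP=(x^m)$ and $\vQ=(x^n-1)$, for which $\sD(\vP,\vQ)=O(p)$ and $\sC(\vP,\vQ)=O(\sM(p))$, so Corollary~\ref{coro:mul1} bounds each of these by $O(\alpha\sM(p))$; the overall cost is thus $O(\alpha\sM(p))$.

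\emph{Probability, and the main obstacle.} Because $\v_1,\v_2$ have first entry $1$, the matrices $\U(\v_1),\U(\v_2)$ are unit upper triangular, hence invertible, so $\rank\widetilde\mA = \rank\mA = r$ no matter how the other entries are chosen. Treating the $m+n-2$ free entries of $\v_1,\v_2$ as indeterminates, every entry of $\widetilde\mA=\U(\v_1)\mA\U(\v_2)^t$ is bilinear in $(\v_1,\v_2)$, so for $1\le k\le r$ the leading principal minor $\det(\widetilde\mA_k)$ is a polynomial of total degree at most $2k$ (degree at most $k$ in the entries of each of $\v_1$ and $\v_2$, by multilinearity in the rows). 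The key input I would simply cite is the Kaltofen--Saunders preconditioning result~\cite{KaSa91}: for every $k\le r=\rank\mA$ the polynomial $\det(\widetilde\mA_k)$ is not identically zero. Hence $f:=\prod_{k=1}^r\det(\widetilde\mA_k)$ is a nonzero polynomial of total degree at most $\sum_{k=1}^r 2k = r(r+1)$, and Schwartz--Zippel gives $\Pr[f(\v_1,\v_2)=0]\le r(r+1)/|S|$ when the free entries are drawn independently and uniformly from $S$. Since $f(\v_1,\v_2)\ne 0$ is exactly the condition that $\det(\widetilde\mA_k)\ne 0$ for $k=1,\dots,r$, i.e.\ that $\widetilde\mA$ has generic rank profile, the stated bound $1-r(r+1)/|S|$ follows. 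The only genuinely computational step is the pair of Toeplitz displacement identities, which is bookkeeping; the real content of the probability statement is the non-vanishing of the minors, which I do not reprove, and the exact constant $r(r+1)$ then drops out of the degree count.
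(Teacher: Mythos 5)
Your proposal matches the paper's proof in both substance and structure: the same application of~\eqref{eq:general-formula-ABC} to $\U(\v_1)\,\mA\,\U(\v_2)^t$ followed by direct inspection of the two Toeplitz displacement identities $\nabla_{\ZZ_{m,0},\ZZ_{m,0}}(\U(\v_1))=\mG_1\mH_1^t$ and $\nabla_{\ZZ_{n,1}^t,\ZZ_{n,0}^t}(\U(\v_2)^t)=\mG_2\mH_2^t$, and the same $O(\alpha\,\sM(p))$ cost accounting (the paper bounds the $O(1)$ products with $\mA$ and $\mA^t$ via a reconstruction formula of~\cite{JeaMou10}, while you invoke Corollary~\ref{coro:mul1} with $\vP=(x^m)$, $\vQ=(x^n-1)$, which gives the same bound). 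For the probability claim the paper simply cites~\cite[Theorem~2]{KaSa91}, whereas you re-derive the bound from the KS non-vanishing of the leading minors via a bilinear degree count and Schwartz--Zippel---a harmless elaboration that yields the same constant $r(r+1)$; only note that these two displacements have rank (at most) two, not one, exactly as your two-column generators $\mG_1\mH_1^t$ and $\mG_2\mH_2^t$ indicate.
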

\begin{proof}
We start by checking $\nabla_{\ZZ_{m,0},\ZZ_{n,0}^t}(\widetilde\mA) = \widetilde\mG {\widetilde \mH}^t$,
from which correctness follows.
Writing $\mcL_1 = \nabla_{\ZZ_{m,0},\ZZ_{m,0}}(\U(\v_1))$ 
and $\mcL_2 = \nabla_{\ZZ_{n,1}^t,\ZZ_{n,0}^t}(\U(\v_2)^t)$
and applying~(\ref{eq:general-formula-ABC}) gives 
$$
\nabla_{\ZZ_{m,0},\ZZ_{n,0}^t}(\widetilde\mA)
= \mcL_1 \mA \U(\v_2)^t + \U(\v_1) \mG \mH^t \U(\v_2)^t + \U(\v_1)\mA \mcL_2,
$$
and it remains to check that $\mcL_1 = \mG_1\mH_1^t$ and $\mcL_2 = \mG_2\mH_2^t$.
Since $\U(\v_1)$ is upper triangular Toeplitz, the matrix $\mcL_1 = \ZZ_{m,0}\U(\v_1) - \U(\v_1)\ZZ_{m,0}$ 
is zero everywhere except on its last column, which is equal to $\ZZ_{m,0}\J_m\v_1$,
and on its first row, which is equal to $-\v_1^t\ZZ_{m,0}$. 
Hence $\mcL_1 = \ZZ_{m,0}\J_m\v_1 \, \e_{m,m}^t - \e_{m,1} \, (\ZZ_{m,0}^t \v_1)^t = \mG_1\mH_1^t$.
For $\mcL_2$, we proceed similarly, by noting that $\mcL_2 = \ZZ_{n,1}^t \U(\v_2)^t - \U(\v_2)^t \ZZ_{n,0}^t$  
is zero everywhere but on its first column and last row, 
equal to $\ZZ_{n,1}^t \v_2$ and $-\ZZ_{n,0} \J_n \v_2$, respectively.

Let us now bound the cost of deducing $\widetilde\mG$, $\widetilde\mH$, $\widetilde\u$ from $\mG$, $\mH$, $\v_1$, $\v_2$.
First, we set up the $m\times 2$ matrices $\mG_1$ and $\mH_1$ and the $n \times 2$ matrices $\mG_2$ and $\mH_2$
in time $O(p)$.
Then, since $\mA$ satisfies $\nabla_{\ZZ_{m,0},\ZZ_{n,1}^t}(\mA) = \mG\mH^t$,
multiplying $\mA$ or $\mA^t$ be a single vector can be done in time~$O(\alpha \M(p))$, using for example
the reconstruction formula in~\cite[(21b)]{JeaMou10}. 
Hence we obtain the products $\mA \mG_2$, $\mA^t \mH_1$, and $\mA^t \U(\v_1)^t \e_{m,m}$ in time $O(\alpha \M(p))$.
Finally, since $\mG$ and $\mH$ have $\alpha$ columns each, it remains to multiply $O(\alpha)$ vectors by the triangular
Toeplitz matrices $\U(\v_1)$ and $\U(\v_2)$, and this can be done in time $O(\alpha \M(p))$.
Overall, the cost of the algorithm is thus bounded by $O(\alpha \M(p))$.

The probability analysis is due to Kaltofen and Saunders~\cite[Theorem~2]{KaSa91}.
\end{proof}

\subsection{Computation of the leading principal inverse} \label{ssec:leading_principal_inverse}
In order to generate the inverse of the largest nonsingular leading principal submatrix of~$\mA$, 
we start with the generation of the largest leading principal submatrix
having generic rank profile. 
This is done by algorithms \largestrec\ and \largest\ displayed in Figures~\ref{fig:algo_largestrec} and~\ref{fig:algo_largest}.
In algorithm \largestrec\ the matrices $\mA_{ij}$, $\mG_i$, $\mH_j$ for $1 \le i,j \le 2$
have dimensions $m_i \times m_j$, $m_i \times \alpha$, $m_j \times \alpha$, respectively, 
and correspond to the block partitions
$$
\mA  = \begin{bmatrix} \mA_{11} & \mA_{12} \\ 
\mA_{21} & \mA_{22} \end{bmatrix}\!, 
\quad \mG = \begin{bmatrix} \mG_1 \\ \mG_2 \end{bmatrix}\!,
\quad \mH = \begin{bmatrix} \mH_1 \\ \mH_2 \end{bmatrix}.
$$
Similarly, $\u_{ij}^t$ denotes the last row of $\mA_{ij}$.

\begin{figure}[htbp]
\begin{center}
\fbox{
\begin{minipage}{12.2cm}
Algorithm \largestrec($\mG,\mH,\u$)

\medskip
{\it Input:} $(\mG,\mH,\u)\in\F^{m \times\alpha} \times \F^{n\times\alpha} \times \F^n$ 
                  such that $\alpha \le \min(m,n)$ and \\
\textcolor{white}{{\it Input:}} $\nabla_{\ZZ_{m,0},\ZZ_{n,0}^t}(\mA) = \mG\mH^t$ 
and $\u^t=\e_{m,m}^t \mA$ (the last row of $\mA$). \\ 
{\it Output:} $(\ell,\mY,\mZ,\v)
\in \NN \times \F^{\ell\times\alpha} \times \F^{\ell\times\alpha} \times \F^\ell$ 
such that $\ell$ is the order of the largest\\ \hspace*{1.375cm}
leading principal submatrix $\mA_\ell$ of $\mA$ having
generic rank profile,\\ \hspace*{1.375cm}
$\mY = -\mA_\ell^{-1}\mG_\ell$, 
$\mZ = \mA_\ell^{-t}\mH_\ell$
and $\v = \mA_\ell^{-t}\e_{\ell,1}$ (the first row of $\mA_\ell^{-1}$). \\

if $\min(m,n) < 2\alpha$ then \\
\hspace*{0.5cm} compute $\mA$ explicitly and then deduce $\ell$, $\mY, \mZ, \v$.\\
else \\
\hspace*{0.5cm} $m_1 := \lceil m/2 \rceil$; $m_2 := \lfloor m/2 \rfloor$; $n_1 := \lceil n/2 \rceil$; $n_2 := \lfloor n/2 \rfloor$ \\
\hspace*{0.5cm} $(\ell_{11},\mY_{11},\mZ_{11},\v_{11}) := \largestrec(\mG_1,\mH_1,\u_{11})$ \\
\hspace*{0.5cm} if $\ell_{11} < \min(m_1,n_1)$ then \\
\hspace*{1cm} $(\ell,\mY,\mZ,\v) := (\ell_{11},\mY_{11},\mZ_{11},\v_{11})$ \\
\hspace*{0.5cm} else \\
\hspace*{1cm} $\mG_\mS := \mG_2+\mA_{21} \mY_{11}$; $\mH_\mS := \mH_2 - \mA_{12}^t\,\mZ_{11}$ \\
\hspace*{1cm} $\u_\mS := \u_{22} - \mA_{12}^t \mA_{11}^{-t} \u_{21}$ \\
\hspace*{1cm} if the $(1,1)$ element of $\mS$ is zero then \\
\hspace*{1.5cm} $(\ell,\mY,\mZ,\v) := (\ell_{11},\mY_{11},\mZ_{11},\v_{11})$ \\
\hspace*{1cm} else \\
\hspace*{1.5cm} $(\ell_\mS,\mY_\mS,\mZ_\mS,\v_\mS) := \largestrec(\mG_\mS,\mH_\mS,\u_\mS)$ \\
\hspace*{1.5cm} $\ell := \ell_{11} + \ell_\mS$ \\[2mm]
\hspace*{1.5cm} $\mY := \left [\begin{smallmatrix} 
\mY_{11} - (\mA_{11}^{-1}\mA_{12})\mY_\mS\\[1mm]
\mY_\mS
\end{smallmatrix} \right ]$; $ \mZ := \left [\begin{smallmatrix} 
\mZ_{11} - (\mA_{11}^{-T}\mA_{21}^T)\mZ_\mS\\[1mm] \mZ_\mS \end{smallmatrix} \right ]$;\\[2mm]
\hspace*{1.5cm} $\w := -\mS^{-T} \mA_{12}^T \v_{11}$; $\v := \left [\begin{smallmatrix} 
\v_{11} - \mA_{11}^{-T} \mA_{21}^T \w\\[1mm]
\w
\end{smallmatrix} \right ]$;\\
return $(\ell,\mY,\mZ,\v)$.
\end{minipage}
}
\caption{Algorithm \largestrec.
Here $\mA_\ell \in \F^{\ell \times \ell}$ denotes the largest leading principal submatrix of $\mA$ whose rank profile is generic,
and $\mG_\ell$ and $\mH_\ell$ are the $\ell\times\alpha$ submatrices
consisting of the first $\ell$ rows of $\mG$ and $\mH$, respectively.} 
\label{fig:algo_largestrec}
\end{center}
\end{figure}

\begin{lemma} \label{lem:largest_rec}
Algorithm \largestrec\ is correct and
if $m=n$ is an integer power of two, then its cost is 
$$ O \! \left (\sMMa''\left(\frac{m}{\alpha},\alpha\right) \right)\!.$$
\end{lemma}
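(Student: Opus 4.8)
The plan is to prove both correctness and the complexity bound for \largestrec\ by induction on $\min(m,n)$, following the MBA divide-and-conquer scheme. For correctness, I would first verify the base case $\min(m,n) < 2\alpha$: here $\mA$ is reconstructed explicitly from the generator $(\mG,\mH,\u)$ using the inversion formula for the Hankel operator $\nabla_{\ZZ_{m,0},\ZZ_{n,0}^t}$ (partly regular, so the extra vector $\u$ carries the last row), and then $\ell$, $\mY$, $\mZ$, $\v$ are read off by a direct rank-profile computation on a matrix of size $O(\alpha)\times O(\alpha)$. For the recursive case, the key identities are the standard Schur-complement relations: if $\mA_{11}$ has generic rank profile and is invertible (i.e.\ $\ell_{11} = \min(m_1,n_1)$), then the leading principal submatrices of $\mA$ beyond size $\ell_{11}$ have generic rank profile iff the Schur complement $\mS = \mA_{22} - \mA_{21}\mA_{11}^{-1}\mA_{12}$ does, and $\mA_\ell^{-1}$ is assembled from $\mA_{11}^{-1}$ and $\mS_{\ell_\mS}^{-1}$ via the usual block-inverse formula. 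One must check that the generator shape is preserved: $\mG_\mS = \mG_2 + \mA_{21}\mY_{11}$ and $\mH_\mS = \mH_2 - \mA_{12}^t \mZ_{11}$ are exactly the generators of $\mS$ for $\nabla_{\ZZ_{m_2,0},\ZZ_{n_2,0}^t}$ (this is where the ``compression-free'' formulas from~\cite[\S4.3]{JeaMou10} are invoked), and that $\u_\mS$ correctly gives the last row of $\mS$; then the assembled $\mY = \bigl[\begin{smallmatrix}\mY_{11} - (\mA_{11}^{-1}\mA_{12})\mY_\mS\\ \mY_\mS\end{smallmatrix}\bigr]$ satisfies $\mY = -\mA_\ell^{-1}\mG_\ell$, and similarly for $\mZ$ and $\v$.

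For the complexity bound, I would set up the recurrence for the cost $T(m)$ in the case $m=n$ a power of two. The base case costs $O(\sM(\alpha)) \subset O(\alpha^\omega) \subset O(\sMMa''(1,\alpha))$, which fits the claimed bound when $m = O(\alpha)$. In the recursive step there are two recursive calls, each in dimension $m/2$ with the same $\alpha$, plus the cost of forming $\mG_\mS, \mH_\mS, \u_\mS, \mY, \mZ, \v, \w$. Each of these auxiliary quantities requires $O(1)$ products of the shape (structured $m/2\times m/2$ matrix of displacement rank $O(\alpha)$) $\times$ (unstructured $m/2 \times O(\alpha)$ matrix) — e.g.\ $\mA_{21}\mY_{11}$, $\mA_{12}^t\mZ_{11}$, $\mA_{11}^{-1}\mA_{12}$ — together with applications of $\mA_{11}^{-1}$ and $\mA_{11}^{-t}$ to $O(\alpha)$ vectors (available through the recursively computed compact form of $\mA_{11}^{-1}$, again via~\cite{JeaMou10}), plus some $O(\alpha\times\alpha)$ dense linear algebra such as computing $\mS^{-1}$ or $\mS^{-t}\mA_{12}^t\v_{11}$, costing $O(\alpha^\omega)$. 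By Theorem~\ref{theo:mul} (in the Hankel case, where $\sC$ and $\sD$ are $O(\sM(p))$, so the lower-order terms are absorbed), each structured-times-unstructured product with $\beta = O(\alpha)$ costs $O\bigl(\sMMa'(\tfrac{m/2}{\alpha},\alpha)\bigr) + O(\alpha\sM(m))$, which is $O\bigl(\sMMa'(\tfrac{m}{\alpha},\alpha)\bigr)$. Hence
$$
T(m) \le 2\, T(m/2) + c\cdot \sMMa'\!\left(\frac{m}{\alpha},\alpha\right)
$$
for some constant $c$, valid down to $m = \Theta(\alpha)$. Unrolling this recurrence over the $\log(m/\alpha)$ levels and using that $\sMMa'(\tfrac{m}{\alpha},\alpha)$ is superlinear in its first argument (so $2\,\sMMa'(\tfrac{m/2}{\alpha},\alpha) \le \sMMa'(\tfrac{m}{\alpha},\alpha)$), one gets $T(m) = O\bigl(\sum_{k=0}^{\log(m/\alpha)} 2^k\, \sMMa'(\tfrac{m}{2^k\alpha},\alpha)\bigr)$. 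Comparing with the definition $\sMMa''(d,n) = \sum_{j} 2^j \sMMa'(d/2^j, n)$ with $d = m/\alpha$, this sum is precisely $O\bigl(\sMMa''(\tfrac{m}{\alpha},\alpha)\bigr)$, which is the claimed bound.

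The main obstacle I anticipate is bookkeeping the generator shapes and the partial-regularity data (the extra last-row vectors $\u$, $\u_\mS$, and the first-row vectors $\v$) consistently through the recursion — in particular checking that $\mS$ inherits exactly the operator $\nabla_{\ZZ_{m_2,0},\ZZ_{n_2,0}^t}$ with generators $(\mG_\mS,\mH_\mS)$ of length $\alpha$ (not $\alpha+O(1)$, which would blow up over $\log m$ levels), and that the $(1,1)$-entry test correctly detects loss of generic rank profile. This is exactly the content that~\cite{Kaltofen95} and~\cite[\S4.3]{JeaMou10} were designed to handle, so I would lean on their reconstruction formulas rather than rederiving them; the cost analysis itself is then a routine unrolling once the per-level cost is pinned to $\sMMa'(\tfrac{m}{\alpha},\alpha)$ via Theorem~\ref{theo:mul}.
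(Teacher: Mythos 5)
Your proposal matches the paper's proof: correctness is delegated to Kaltofen's analysis together with the compression-free formulas of~\cite{JeaMou10}, and the cost bound comes from the same recurrence $C(m,\alpha)\le 2\,C(m/2,\alpha)+O\big(\sMMa'\big(\tfrac{m}{2\alpha},\alpha\big)\big)$ with base case $O(\alpha^\omega)$, unrolled into $O\big(\sMMa''\big(\tfrac{m}{\alpha},\alpha\big)\big)$. The only deviations are cosmetic: the base case costs $O(\alpha^\omega)$ (forming $\mG\mH^t$ and then recovering the entries of $\mA$), not $O(\sM(\alpha))$, and the paper additionally carries out the routine bookkeeping (rounding $\alpha$ to a power of two and using monotonicity of $\sMMa'$ in its first argument) needed to match your unrolled sum to the exact definition of $\sMMa''$.
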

\begin{proof}
Correctness follows directly from combining the analysis in~\cite[pp.~801--803]{Kaltofen95}
with the formulas for $\mG_\mS$, $\mH_\mS$, $\u_\mS$, $\mY$, $\mZ$, $\w$
given in~\cite[p.~287]{JeaMou10}.

Assuming $m=n$ is a power of two, let $C(m,\alpha)$ denote the cost of algorithm \largestrec.
We begin by showing that we can take
$$C(m,\alpha) = O(\alpha^\omega) \qquad \mbox{if $\alpha \le m < 2\alpha$.}$$
Given $\mG$ and $\mH$ we compute the product $\mG\mH^t$ in time $O(\alpha^\omega)$.
Then, starting from the last row of $\mA$ and using the fact that for $i,j > 1$ 
the $(i,j)$ entry of $\mG\mH^t$ equals $a_{i-1,j} - a_{i,j-1}$, we deduce all the entries
of $\mA$ in time $O(\alpha^2)$.

Let us now bound the cost when $m  \ge 2\alpha$.
Proceeding as in the proof of~\cite[Lemma~5]{JeaMou10},
it is easily seen that 
one can compute some generators of length at most $\alpha+2$ for the matrices 
$\mA_{21}$, $\mA_{12}^t$, $\mA_{11}^{-1}\mA_{12}$, $\mA_{22}^{-t} \mA_{21}^t$
as well as the vectors $\u_{11}$, $\u_{21}$, $\u_{22}$, $\u_\mS$
for a total time in $O(\alpha\,\M(m))$.
The number of additions is in $O(\alpha\,m)$ and we also have to perform
the four matrix products 
$\mA_{21}\mY_{11}$, $\mA_{12}^t\,\mZ_{11}$, 
$(\mA_{11}^{-1}\mA_{12})\mY_\mS$,
and $(\mA_{11}^{-T}\mA_{21}^T)\mZ_\mS$.
For example, $\mA_{12}$ has displacement rank at most $\alpha+2$ with respect to
the operator $\nabla_{\ZZ_{m/2,0},\ZZ_{m/2,1}^t}$, so that using a decomposition of the form
$\mA_{12} =\mA_{12}' + \mA_{12}''$ with $\mA_{12}'$ and $\mA_{12}''$
of displacement ranks at most $\alpha$ and $2$, respectively, 
we can evaluate the product $\mA_{21}\mY_{11}$ in time 
$\sMM(\nabla_{\ZZ_{m/2,0},\ZZ_{m/2,1}^t}, \alpha,\alpha) + O(\alpha\,\M(m))$;
by Theorem~\ref{theo:mainSylv2}, this is in $O\big(\frac{m}{2\alpha} \sMMa'(\frac{m}{2\alpha},\alpha)\big)$.
Adding up all these costs thus leads to
$$C(m,\alpha) = 2C\!\left(\frac{m}{2}, \alpha\right) 
+ O\!\left(\sMMa'\!\left(\frac{m}{2\alpha},\alpha\right)\right) \qquad \mbox{if $m \ge 2\alpha$.}$$
Hence, for some constant $c_0$,
$$C(m,\alpha) \le c_0 \left( \sum_{i=0}^{i_0-1} 2^i \sMMa'\!\left(\frac{m}{2^{i+1}\alpha},\alpha\right) 
+ 2^{i_0}\alpha^\omega\right)\!$$
with $i_0\in\NN$ such that $\alpha \le m/2^{i_0} < 2\alpha$,
that is, $i_0 = \lfloor \log(m/\alpha)\rfloor$.
Defining $\overline{\alpha} = 2^{\lceil\log(\alpha)\rceil}$,
we have $\alpha \in (\overline{\alpha}/2,\overline{\alpha}]$, 
which implies $m/\alpha \in [m/\overline{\alpha}, 2m/\overline{\alpha})$ and thus, 
since $m$ is an integer power of two, $$2^{i_0} = m / \overline{\alpha}.$$
Besides, $m/ (2^{i+1}\alpha) < m / (2^i \overline{\alpha})$,
which by assumption on $\sMMa'$ implies
$$\sMMa'\!\left(\frac{m}{2^{i+1}\alpha},\alpha\right) = O\!\left(\sMMa'\!\left(\frac{m}{2^i \overline{\alpha}},\alpha\right)\right).$$
Third, $\alpha^\omega = O\big(\sMMa'(1,\alpha)\big)$.
Consequently, for some constant $c_1$,
$$C(m,\alpha) \le c_1 \cdot \sum_{i=0}^{\log(m/\overline{\alpha})} 2^i \sMMa'\!\left(\frac{m}{2^i \overline{\alpha}},\alpha\right).$$
Now, $\alpha \le \overline{\alpha}$ implies $m/\overline{\alpha} \le m/\alpha \le \overline{m/\alpha}$,
where $ \overline{m/\alpha}$ denotes the smallest integer power of two greater than or equal to $m/\alpha$.
Hence $\log(m/\overline{\alpha}) \le \log\!\big(\overline{m/\alpha}\,\big)$ and
$\sMMa' (2^{-i} m/ \overline{\alpha}) = O\big(\sMMa'\big(2^{-i} \overline{m/\alpha}\,\big)\big)$ by assumption on $\sMMa'$,
so that the sum in the above bound on $C(m,\alpha)$ is in $O(\sMMa''(m/\alpha,\alpha))$,
as announced.
\end{proof}

\begin{figure}[htbp]
\begin{center}
\fbox{
\begin{minipage}{12.2cm}
Algorithm \largest($\mG,\mH,\u$)
\medskip

{\it Input:} $(\mG,\mH,\u)\in\F^{m \times\alpha} \times \F^{n\times\alpha} \times \F^n$ 
                  such that $\alpha \le \min(m,n)$ and \\
\textcolor{white}{{\it Input:}} $\nabla_{\ZZ_{m,0},\ZZ_{n,0}^t}(\mA) = \mG\mH^t$ 
and $\u^t=\e_{m,m}^t \mA$ (the last row of $\mA$). \\ 
{\it Output:} $(\ell,\mY,\mZ,\v)
\in \NN \times \F^{\ell\times\alpha} \times \F^{\ell\times\alpha} \times \F^\ell$ 
such that $\ell$ is the order of the  largest \\ \hspace*{1.375cm}
leading principal submatrix $\mA_\ell$ of $\mA$ having
generic rank profile, \\ \hspace*{1.375cm}
$\mY = -\mA_\ell^{-1}\mG_\ell$, 
$\mZ = \mA_\ell^{-t}\mH_\ell$
and $\v = \mA_\ell^{-t}\e_{\ell,1}$ (the first row of $\mA_\ell^{-1}$). \\

$\overline{p} := 2^{\lceil \log_2(\max(m,n)) \rceil}$ \\
compute $\overline{\alpha}\in\NN$,
$\overline{\mG},\overline{\mH} \in \F^{\overline{p} \times \overline{\alpha}}$,
and $\overline{\u} \in \F^{\overline{p}}$ such that: \\[-3mm]
\begin{itemize}
\item $(\overline{\mG},\overline{\mH}, \overline{\u})$ is a $\nabla_{\ZZ_{\overline{p},0},\ZZ_{\overline{p},0}^t}$-generator 
of length $\overline{\alpha}$ of 
$\overline{\mA} = \left [\begin{smallmatrix} \mA&0\\0&0\end{smallmatrix} \right ]\in \F^{\overline{p} \times \overline{p}}$;
\vspace*{1mm}
\item $\alpha \le \overline{\alpha} \le \min(\alpha+2,\overline{p})$;
\vspace*{1mm}
\item $\overline{\mG} = \left [\begin{smallmatrix} \mG & *\\ {*}&*\end{smallmatrix} \right ]$
and $\overline{\mH} = \left [\begin{smallmatrix} \mH &*\\ {*}&*\end{smallmatrix} \right ]$
\vspace*{1mm}
\item $\overline{\u}^t$ is the last row of $\overline{\mA}$
\end{itemize}
\vspace*{1mm}
$(\ell, \overline{\mY}, \overline{\mZ}, \v)
  := \largestrec(\overline{\mG},\overline{\mH},\overline{\u})$ \\[1mm]
$\mY := \overline{\mY}\,^{\mapsto\alpha}$; $\mZ := \overline{\mZ}\,^{\mapsto\alpha}$ \\[1mm]
return $(\ell,\mY,\mZ,\v)$.
\end{minipage}
}
\caption{Algorithm \largest.
Here $\mA_\ell \in \F^{\ell \times \ell}$ denotes the largest leading principal submatrix of $\mA$ whose rank profile is generic,
and $\mG_\ell$ and $\mH_\ell$ are the $\ell\times\alpha$ submatrices
consisting of the first $\ell$ rows of $\mG$ and $\mH$, respectively.} 
\label{fig:algo_largest}
\end{center}
\end{figure}

\begin{lemma} \label{lem:largest}
Let $p = \max(m,n)$. 
Algorithm \largest\ works correctly in time  
$$ O \! \left (\sMMa''\left(\frac{p}{\alpha},\alpha\right) \right)\!.$$
\end{lemma}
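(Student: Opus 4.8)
The plan is to reduce Lemma~\ref{lem:largest} to Lemma~\ref{lem:largest_rec} by a zero-padding argument. Put $\overline p = 2^{\lceil \log_2(\max(m,n)) \rceil}$ as in the algorithm, so $p \le \overline p < 2p$, and let $\overline{\mA} = \left[\begin{smallmatrix}\mA & 0\\ 0 & 0\end{smallmatrix}\right] \in \F^{\overline p\times\overline p}$ be the padded matrix. The first task is to justify the requirements on $(\overline{\mG},\overline{\mH},\overline{\u})$ listed in Figure~\ref{fig:algo_largest}, i.e.\ that they form a $\nabla_{\ZZ_{\overline p,0},\ZZ_{\overline p,0}^t}$-generator of $\overline{\mA}$ of length $\overline\alpha\in\{\alpha,\alpha+1,\alpha+2\}$ (capped at $\overline p$) of the prescribed shape, computable cheaply. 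Expanding $\nabla_{\ZZ_{\overline p,0},\ZZ_{\overline p,0}^t}(\overline{\mA}) = \ZZ_{\overline p,0}\,\overline{\mA} - \overline{\mA}\,\ZZ_{\overline p,0}^t$ and comparing it with the zero-embedded image $\left[\begin{smallmatrix}\mG\mH^t & 0\\ 0 & 0\end{smallmatrix}\right]$ of $\nabla_{\ZZ_{m,0},\ZZ_{n,0}^t}(\mA)=\mG\mH^t$, one sees that the two agree on the leading $m\times n$ block and differ only by a rank-$\le 1$ ``row-seam'' term supported on row $m$ (carrying the last row $\u^t$ of $\mA$) and a rank-$\le 1$ ``column-seam'' term supported on column $n$ (carrying the last column $\mA\,\e_{n,n}$ of $\mA$); also, $\overline{\u}$ is either $0$ (when $\overline p>m$) or a zero-padding of $\u$. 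Hence $\overline{\mG}$ may be taken as $\mG$ padded with $\overline p-m$ zero rows together with at most two further columns, and similarly for $\overline{\mH}$, so that $\overline{\mG}^{\mapsto\alpha}$ and $\overline{\mH}^{\mapsto\alpha}$ are precisely $\mG$ and $\mH$ padded with zero rows. Since $\u$ is given and $\mA\,\e_{n,n}$ is obtained by one matrix-vector product via the reconstruction formula already used in the proof of Lemma~\ref{lem:preconditioning}, this entire step costs $O(\alpha\,\M(p))$.

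Next I would record the two structural facts making the final truncation in \largest\ correct. First, for every $\ell\le\min(m,n)$ the leading $\ell\times\ell$ submatrix of $\overline{\mA}$ coincides with the leading $\ell\times\ell$ submatrix $\mA_\ell$ of $\mA$, whereas for $\ell>\min(m,n)$ the matrix $\overline{\mA}_\ell$ has an all-zero row (if $\ell>m$) or an all-zero column (if $\ell>n$) and is therefore singular; consequently the integer $\ell$ returned by $\largestrec(\overline{\mG},\overline{\mH},\overline{\u})$ satisfies $\ell\le\min(m,n)$ and equals the order of the largest leading principal submatrix of $\mA$ with generic rank profile, as required. Second, since $\ell\le m$ and the first $\alpha$ columns of $\overline{\mG}$ restricted to the first $m$ rows are $\mG$ (and likewise for $\overline{\mH}$), the first $\ell$ rows of $\overline{\mG}$ restricted to their first $\alpha$ columns are exactly $\mG_\ell$, and similarly $\overline{\mH}_\ell^{\mapsto\alpha}=\mH_\ell$. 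Feeding this into the output specification of \largestrec\ from Lemma~\ref{lem:largest_rec}, namely $\overline{\mY}=-\overline{\mA}_\ell^{-1}\overline{\mG}_\ell$, $\overline{\mZ}=\overline{\mA}_\ell^{-t}\overline{\mH}_\ell$ and $\v=\overline{\mA}_\ell^{-t}\e_{\ell,1}$, and projecting onto the first $\alpha$ columns yields $\mY=\overline{\mY}^{\mapsto\alpha}=-\mA_\ell^{-1}\mG_\ell$, $\mZ=\overline{\mZ}^{\mapsto\alpha}=\mA_\ell^{-t}\mH_\ell$ and $\v=\mA_\ell^{-t}\e_{\ell,1}$, which is exactly the claimed output of \largest\ (a deterministic computation, so correctness is unconditional).

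For the running time, the dominant contribution is the single call to \largestrec, which is now performed on a \emph{square} matrix of power-of-two size $\overline p$ with a generator of length $\overline\alpha$; Lemma~\ref{lem:largest_rec} bounds it by $O(\sMMa''(\overline p/\overline\alpha,\overline\alpha))$. Since $\overline p<2p$ and $\alpha\le\overline\alpha\le\alpha+2$, the monotonicity hypotheses on $\sMMa'$ (hence on $\sMMa''$) together with the elementary growth bounds $\sMMa(2d,n)=O(\sMMa(d,n))$ and $\sMMa(d,2n)=O(\sMMa(d,n))$ give $\sMMa''(\overline p/\overline\alpha,\overline\alpha)=O(\sMMa''(p/\alpha,\alpha))$. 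All remaining work — computing $\overline p$, assembling $(\overline{\mG},\overline{\mH},\overline{\u})$, and truncating $\overline{\mY},\overline{\mZ}$ to $\alpha$ columns — costs $O(\alpha\,\M(p))$, which is absorbed since $\alpha\,\M(p)\le\sMMa'(p/\alpha,\alpha)\le\sMMa''(p/\alpha,\alpha)$ by the inequality $n\,\sM(dn)\le\sMMa'(d,n)$ recalled in the introduction. Adding these bounds proves the statement.

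The step I expect to be the main obstacle is the first one: pinning down the two seam corrections precisely enough to certify both that $\overline{\mA}$ has displacement rank at most $\alpha+2$ for $\nabla_{\ZZ_{\overline p,0},\ZZ_{\overline p,0}^t}$ and that a generator can be forced into the shape $\overline{\mG}=\left[\begin{smallmatrix}\mG&*\\ *&*\end{smallmatrix}\right]$, $\overline{\mH}=\left[\begin{smallmatrix}\mH&*\\ *&*\end{smallmatrix}\right]$. Nothing deeper than the expansion of $\nabla_{\mM,\mN}(\mA\mB\mC)$ and the sparse structure of $\ZZ_{\overline p,0}$ is involved, but it requires careful index bookkeeping together with a harmless split of the cases $\overline p>m$ and $\overline p=m$; everything else is routine manipulation of the cost functions.
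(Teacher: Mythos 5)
Your overall route is the same as the paper's: pad $\mA$ to a square matrix $\overline{\mA}$ of power-of-two size $\overline p$, build a $\nabla_{\ZZ_{\overline p,0},\ZZ_{\overline p,0}^t}$-generator whose first $\alpha$ columns are the zero-padded $\mG$ and $\mH$, call \largestrec, and truncate $\overline{\mY},\overline{\mZ}$ to their first $\alpha$ columns; your ``seam'' analysis of the two rank-one correction terms, the observation that $\overline{\mA}_\ell=\mA_\ell$ for $\ell\le\min(m,n)$ while $\overline{\mA}_\ell$ is singular beyond, and the cost bookkeeping ($\sMMa''(\overline p/\overline\alpha,\overline\alpha)=O(\sMMa''(p/\alpha,\alpha))$, absorption of $O(\alpha\,\M(p))$) all match the paper's proof, and in places (the transfer of the generic-rank-profile index from $\overline{\mA}$ to $\mA$) you are more explicit than the paper.

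There is, however, one concrete gap: the cap $\overline\alpha\le\min(\alpha+2,\overline p)$ that you mention only parenthetically is not achieved by your construction, and it is needed, since Lemma~\ref{lem:largest_rec} (and the input specification of \largestrec) requires the generator length not to exceed the matrix dimension. Your padded generator has length up to $\alpha+2$, and in the boundary case $\overline p>\max(m,n)$ with $m=n=\alpha$ and $\alpha+1$ a power of two (e.g.\ $\alpha=3$, $\overline p=4$) one gets $\alpha+2=\overline p+1>\overline p$, so the call to \largestrec\ is not covered by Lemma~\ref{lem:largest_rec} as you invoke it. The paper handles exactly this case by a compression step: since $\mG$ is $\alpha\times\alpha$ it solves $\mG\,\u''=\u'$ in time $O(\alpha^\omega)$ and multiplies the generator by an elementary matrix $\mE$ that zeroes the extra column while leaving the first $\alpha$ columns (hence the required shape) untouched, reducing the length to $\alpha+1=\overline p$; the extra $O(\alpha^{\omega-1}p)$ is then absorbed via $dn^\omega=O(\sMMa''(d,n))$. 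Your proof, which budgets only $O(\alpha\,\M(p))$ for the setup, silently excludes this case; you would need either this compression argument or an ad hoc treatment of the case $m=n=\alpha$ (where one may as well compute densely in $O(\alpha^\omega)=O(\sMMa''(1,\alpha))$) to make the reduction to Lemma~\ref{lem:largest_rec} legitimate in full generality. Everything else in your argument is sound.
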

\begin{proof}
Let us show first how to generate the augmented matrix $\overline{\mA}$.
If $\overline{p} = m = n$, then $\overline{\mA} = \mA$ and it suffices to take 
$\overline{\alpha}=\alpha$, $\overline{\mG}=\mG$, and $\overline{\mH} = \mH$.
If $\overline{p} = n > m$ then $\nabla_{\ZZ_{m,0},\ZZ_{n,0}^t}(\mA) = \mG\mH^t$ leads to the following generator of
length $\overline{\alpha} := \alpha + 1$:
$$
\nabla_{\ZZ_{\overline{p},0},\ZZ_{\overline{p},0}^t}\big( \overline{\mA} \big)
= \overline{\mG}\, \overline{\mH}^t, \qquad
\overline{\mG} = \begin{bmatrix} \mG & \\ & \e_{\overline{p}-m,1} \end{bmatrix}\!, \qquad
\overline{\mH} = \big [\, \mH \,|\, \u \,\big];
$$
the range constraint on $\overline{\alpha}$ is satisfied, since $\alpha \le \min(m,n) = m < n = \overline{p}$,
and, on the other hand, the upper left corners of $\overline{\mG}$ and $\overline{\mH}$ are $\mG$ and $\mH$, respectively.
Proceeding similarly when $\overline{p} = m > n$, we take
$$
\overline{\alpha} = \alpha + 1, \qquad \overline{\mG} = \big [\, \mG \,|\, -\u' \,\big], \qquad
\overline{\mH} = \begin{bmatrix} \mH & \\ & \e_{\overline{p}-n,1} \end{bmatrix}\!,
$$
where $\u'$ is the last column of $\mA$ and can be obtained in time $O(\alpha \M(\overline{p}))$.
It remains to handle the case $\overline{p} > \max(m,n)$, 
where $\mA$ is bordered by both some zero rows and some zero columns.
In this case, we deduce from $\nabla_{\ZZ_{m,0},\ZZ_{n,0}^t}(\mA) = \mG\mH^t$ that
a $\nabla_{\ZZ_{\overline{p},0},\ZZ_{\overline{p},0}^t}$-generator of length $\alpha+2$ of $\overline{\mA}$ is given by
$$
\overline{\mG} = \begin{bmatrix} \mG & & -\u' \\ & \e_{\overline{p}-m,1}& \end{bmatrix}, \qquad 
\overline{\mH} = \begin{bmatrix} \mH & \u & \\ & & \e_{\overline{p}-n,1}\end{bmatrix}.
$$
Again, it is clear that the upper left corners of $\overline{\mG}$ and $\overline{\mH}$ 
are $\mG$ and $\mH$, respectively.
Furthermore, if $\overline{p} \ge \alpha+2$, 
we can take $\overline{\alpha} = \alpha+2$;
otherwise, since $\overline{p} > \max(m,n) \ge \min(m,n) \ge \alpha$,
we are in the situation where $\overline{p} = \alpha + 1$ and $m=n=\alpha$.
In this case we shall proceed as follows to reduce the generator length from $\alpha+2$ to $\alpha+1$
while ensuring that $\mG$ and $\mH$ are in the upper left corners of the new generator matrices:
the matrices $\overline{\mG}$ and $\overline{\mH}$ 
defined above have dimensions $(\alpha+1) \times (\alpha+2)$ and are given~by
$$
\overline{\mG} = \begin{bmatrix} \mG & & -\u' \\ & 1 & \end{bmatrix} 
\qquad\mbox{and}\qquad
\overline{\mH}^t = \begin{bmatrix} \mH^t & \\ \u^t & \\ & 1 \end{bmatrix}\!;
$$
since $\mG$ is $\alpha \times \alpha$, we can deduce from $\mG$ and $\u'$ a vector $\u'' \in \F^\alpha$ 
such that $\mG \u'' = \u'$ in time $O(\alpha^\omega)$; 
then, using the fact that
$$\mE = \begin{bmatrix} \I_\alpha & & \u'' \\ & 1 & \\ & & 1\end{bmatrix}
\qquad \Longrightarrow \qquad
\overline{\mG} \mE = \begin{bmatrix} \mG & & 0 \\ & 1 & 0 \end{bmatrix}
\quad\mbox{and}\quad
\mE^{-1} \overline{\mH}^t = \begin{bmatrix} \mH^t & -\u'' \\ \u^t &  \\  & 1 \end{bmatrix}\!,
$$
we conclude that 
a suitable $\nabla_{\ZZ_{\overline{p},0},\ZZ_{\overline{p},0}^t}$-generator of $\overline{\mA}$
is obtained by taking the first $\alpha+1 =: \overline{\alpha}$ columns of $\overline{\mG} \mE$ and $\overline{\mH} \mE^{-t}$.
Finally, obtaining the vector $\overline{\u}$ is free, since the last row of 
$\overline{\mA}$ is either the last row of $\mA$ (which is part of the input)
or the zero row.
To summarize, we can always find a suitable generator $(\overline{\mG},\overline{\mH},\overline{\u})$
of the augmented matrix $\overline{\mA}$ in time $O(\alpha\M(\overline{p}) +\alpha^\omega)$,
that is, since $\overline{p} < 2p$ and $\alpha \le p$, 
$$O(\alpha\,\M(p) + \alpha^{\omega-1}p).$$

Since $\overline{\alpha} \le \overline{p}$ and $\overline{p}$ is a power of two,
Lemma~\ref{lem:largest_rec} implies that
the output $(\ell,\mY,\mZ,\v)$ of \largestrec\ is obtained
in time $$O(\sMMa''(\overline{p} / \overline{\alpha}, \overline{\alpha}))$$
and satisfies the following:
$\ell$ is the size of the largest leading principal submatrix of~$\mA$ having generic rank profile;
furthermore, denoting this matrix by $\mA_\ell$, we have
$\v^t$ equal to the first row of $\mA_\ell^{-1}$ and
$\overline{\mY} = -\mA_\ell^{-1}\overline{\mG}_\ell$ and
$\overline{\mZ} = -\mA_\ell^{-t}\overline{\mH}_\ell$
with $\overline{\mG}_\ell$ and $\overline{\mH}_\ell$ the $\ell \times \overline{\alpha}$ submatrices
consisting of the first $\ell$ rows of $\overline{\mG}$ and $\overline{\mH}$.
Since $\mG\in\F^{m\times\alpha}$ and $\mH\in\F^{n\times\alpha}$ 
are in the upper left corners of $\overline{\mG}$ and $\overline{\mH}$
and since $\ell \le \min(m,n)$, we deduce that 
$$\overline{\mG}_\ell = [\mG_\ell \,\, *] \qquad \mbox{and} \qquad \overline{\mH}_\ell = [\mH_\ell \,\,*].$$
Consequently, by keeping only the first $\alpha$ columns of each of $\overline{\mY}$ and $\overline{\mZ}$,
we obtain the matrix pair $(\mY,\mZ)$ such that
$$\mY = -\mA_\ell^{-1}\mG_\ell \qquad\mbox{and}\qquad \mZ = -\mA_\ell^{-t}\mH_\ell.$$
This concludes the proof of correctness.
For the cost, note that the smallest integer power of two greater than or equal to 
$\overline{p} / \overline{\alpha}$ is less than $4p/\alpha$ and, on the other hand, 
and that $\overline{\alpha} = O(\alpha)$. 
Hence $\sMMa''(\overline{p} / \overline{\alpha}, \overline{\alpha}) = O\big(\sMMa''(p/\alpha,\alpha)\big)$,
and the latter expression dominates over the term in $O(\alpha\,\M(p) + \alpha^{\omega-1}p)$.
\end{proof}

\begin{figure}[htbp]
\begin{center}
\fbox{
\begin{minipage}{12.2cm}
Algorithm \lpinv($\mG,\mH,\u$)
\medskip

{\it Input:} $(\mG,\mH)\in\F^{m\times\alpha} \times \F^{n\times\alpha}$ 
                  such that $\alpha \le \min(m,n)$ and $\nabla_{\ZZ_{m,0},\ZZ_{n,0}^t}(\mA) = \mG\mH^t$;\\ 
\textcolor{white}{{\it Input:}} $\u\in\F^n$ such that $\u^t=\e_{m,m}^t \mA$ (the last row of $\mA$). \\ 
{\it Output:} $(r,\mY,\mZ,\v)$ such that $r =\rank(\mA)$, $(\mY,\mZ) = (-\mA_r^{-1}\mG_r, \mA_r^{-t}\mH_r)$, and \\
\textcolor{white}{{\it Output:}} $\v = \mA_r^{-t}\e_{r,1}$
(the first row of $\mA_r^{-1}$) if $\mA$ has generic rank profile, \\
\textcolor{white}{{\it Output:}} and $r =$ ``failure'' otherwise.

\bigskip

$(\ell, \mY_\ell, \mZ_\ell, \v_\ell) := \largest(\mG, \mH, \u)$ \\
$\mG_\mS := \mG_2+\mA_{21} \mY_\ell$; $\mH_\mS := \mH_2 - \mA_{12}^t\,\mZ_\ell$\\
$\u_\mS := \u_{22} - \mA_{12}^t \mA_\ell^{-t} \u_{21}$ \\
$r_0 :=$ the rank of $[\mG_\mS | \e_{m-\ell,1}] \, [\mH_\mS |\u_\mS]^t$\\
if $r_0 = 0$ then \\
\hspace*{0.5cm} $(r,\mY,\mZ,\v) := (\ell, \mY_\ell, \mZ_\ell, \v_\ell)$ \\
else \\
\hspace*{0.5cm} $(r,\mY,\mZ,\v) := (\mbox{``failure''}, *, *, *)$ \\
return $(r,\mY,\mZ,\v)$.
\end{minipage}
}
\caption{Algorithm \lpinv.
Here $r$ denotes the (unknown) rank of $\mA$, and
$\mA_r$ is the $r\times r$ matrix consisting of the first $r$ rows and columns of $\mA$.
Similarly, $\mG_r$ and $\mH_r$ are the $r\times\alpha$ matrices
consisting of the first $r$ rows of $\mG$ and $\mH$, respectively.} 
\label{fig:lpinv}
\end{center}
\end{figure}

\begin{lemma} \label{lem:lp_inv}
Let $p = \max(m, n)$. 
Algorithm \lpinv\ works correctly in time $$O\left(\sMMa''\left(\frac{p}{\alpha},\alpha\right)\right).$$
\end{lemma}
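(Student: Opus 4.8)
The plan is to establish correctness and the running-time bound of \lpinv\ separately, leaning on the correctness and complexity of \largest\ (Lemma~\ref{lem:largest}), on the block (Schur-complement) computations already used inside \largestrec\ (Lemma~\ref{lem:largest_rec}), and on the structured multiplication bounds of Section~\ref{sec:multiplication_alogrithms}.

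\emph{Correctness.} I would start from the output $(\ell,\mY_\ell,\mZ_\ell,\v_\ell)$ of the call to \largest\ on input $(\mG,\mH,\u)$: by Lemma~\ref{lem:largest}, $\ell$ is the order of the largest leading principal submatrix $\mA_\ell$ of $\mA$ having generic rank profile---so $\mA_\ell$ is nonsingular and $\det(\mA_j)\ne 0$ for $1\le j\le\ell$---and $\mY_\ell=-\mA_\ell^{-1}\mG_\ell$, $\mZ_\ell=\mA_\ell^{-t}\mH_\ell$, $\v_\ell=\mA_\ell^{-t}\e_{\ell,1}$. Partitioning $\mA$, $\mG$, $\mH$ relative to the leading $\ell\times\ell$ corner (so $\mA_{11}=\mA_\ell$, $\mG_1=\mG_\ell$, $\mH_1=\mH_\ell$, and $\u=[\u_{21}^t\,|\,\u_{22}^t]^t$ with $\u_{21}\in\F^\ell$) and writing $\mS=\mA_{22}-\mA_{21}\mA_\ell^{-1}\mA_{12}$ for the Schur complement of $\mA_\ell$ in $\mA$, the very same block manipulation of $\ZZ_{m,0}$ and $\ZZ_{n,0}^t$ that is performed one level deep inside \largestrec\ (cf.~\cite{Kaltofen95,JeaMou10} and the proof of Lemma~\ref{lem:largest_rec}) shows that $(\mG_\mS,\mH_\mS)$, with $\mG_\mS=\mG_2+\mA_{21}\mY_\ell$ and $\mH_\mS=\mH_2-\mA_{12}^t\mZ_\ell$, is a $\nabla_{\ZZ_{m-\ell,0},\ZZ_{n-\ell,0}^t}$-generator of $\mS$, and that $\u_\mS^t=\u_{22}^t-\u_{21}^t\mA_\ell^{-1}\mA_{12}$ is the last row of $\mS$. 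Since $\mA_\ell$ is invertible, $\rank(\mA)=\ell+\rank(\mS)$; and since $\ell\le\rank(\mA)$ always while, by definition of generic rank profile, $\mA$ has generic rank profile exactly when $\ell\ge\rank(\mA)$, it follows that $\mA$ has generic rank profile if and only if $\ell=\rank(\mA)$, equivalently if and only if $\mS=0$. Now $\nabla_{\ZZ_{m-\ell,0},\ZZ_{n-\ell,0}^t}$ is singular but partly regular~\cite[\S4.5]{Pan01}: $\mS$ is reconstructed uniquely from its image $\mG_\mS\mH_\mS^t$ together with its last row $\u_\mS$, and inspecting that reconstruction shows that $\mS=0$ if and only if the matrix $[\mG_\mS\,|\,\e_{m-\ell,1}]\,[\mH_\mS\,|\,\u_\mS]^t$ is zero, i.e.\ if and only if $r_0=0$. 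Hence \lpinv\ reports ``failure'' exactly when $\mA$ fails to have generic rank profile, and otherwise $r=\ell=\rank(\mA)$, $\mA_r=\mA_\ell$, $\mG_r=\mG_\ell$, $\mH_r=\mH_\ell$, so the returned $(\mY,\mZ,\v)=(\mY_\ell,\mZ_\ell,\v_\ell)$ meets the specification.

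\emph{Complexity.} By Lemma~\ref{lem:largest} the call to \largest\ costs $O(\sMMa''(p/\alpha,\alpha))$. The submatrices $\mA_{21}$ and $\mA_{12}^t$ are submatrices of the Hankel-like matrix $\mA$, hence themselves Hankel-like of displacement rank $\alpha+O(1)$ with respect to a suitable operator of Hankel type, with generators obtained from $(\mG,\mH,\u)$ in negligible time; the products $\mA_{21}\mY_\ell$ and $\mA_{12}^t\mZ_\ell$ (an $\alpha$-column right-hand side) therefore cost, by Theorem~\ref{theo:mainSylv2}, $O\!\big(\sMMa'(p/\alpha,\alpha)+\alpha^{\omega-1}\sM(p)\lg(p)\big)=O(\sMMa''(p/\alpha,\alpha))$ each, and assembling $\mG_\mS,\mH_\mS$ then adds only $O(\alpha p)$. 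Computing $\u_\mS$ amounts to applying $\mA_\ell^{-t}$ (available in compact form through $(\mY_\ell,\mZ_\ell,\v_\ell)$, cf.~Corollary~\ref{coro:mul1}) and the structured matrix $\mA_{12}^t$ to a single vector, in $O(\alpha\,\sM(p))$. Finally $r_0$ is obtained by forming rank factorizations of the two matrices with $\alpha+1$ columns and then the rank of the resulting matrix, which has size at most $(\alpha+1)\times(\alpha+1)$, for a total of $O(p\,\alpha^{\omega-1})$. Since $p\,\alpha^{\omega-1}=O(\sMMa''(p/\alpha,\alpha))$ (recall $d\,n^\omega=O(\sMMa''(d,n))$) and $\sMMa'(p/\alpha,\alpha)\le\sMMa''(p/\alpha,\alpha)$, every auxiliary cost is dominated by the \largest\ call, which yields the announced $O(\sMMa''(p/\alpha,\alpha))$.

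\emph{Main obstacle.} The delicate point I expect is the last part of the correctness argument: justifying rigorously that $(\mG_\mS,\mH_\mS,\u_\mS)$ genuinely represents the Schur complement $\mS$ for the singular, partly regular operator $\nabla_{\ZZ_{m-\ell,0},\ZZ_{n-\ell,0}^t}$, and that the rank-zero test on the augmented pair detects $\mS=0$ and not merely $\mG_\mS\mH_\mS^t=0$. This requires transporting the partly-regular-operator bookkeeping of~\cite[\S4.5 and \S5]{Pan01} and of~\cite{Kaltofen95,JeaMou10} carefully to the operator at hand; once that is settled, the rank identity $\rank(\mA)=\ell+\rank(\mS)$ and the characterization of generic rank profile through $\ell$ are routine, and the complexity analysis is a direct appeal to results already in hand.
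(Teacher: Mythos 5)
Your proposal follows the same overall route as the paper's proof (call to \largest, displacement structure of the Schur complement $\mS$, the test $r_0=0$ iff $\mS=0$ iff $\ell=\rank(\mA)$ iff generic rank profile, and a cost analysis dominated by the \largest\ call), but the one step you yourself flag as the main obstacle is precisely where your argument does not close, and it is a genuine gap as written. Appealing to unique reconstruction for the partly regular operator $\nabla_{\ZZ_{m-\ell,0},\ZZ_{n-\ell,0}^t}$ is not enough: when the augmented product vanishes you only learn that $\mG_\mS\mH_\mS^t=-\e_{m-\ell,1}\u_\mS^t$, not that the displacement of $\mS$ and its last row vanish separately, so injectivity of the map (displacement, last row) $\mapsto$ matrix cannot be invoked directly. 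The paper closes this with a one-line identity: since $\ZZ_{m-\ell,1}-\ZZ_{m-\ell,0}=\e_{m-\ell,1}\e_{m-\ell,m-\ell}^t$ and $\u_\mS^t$ is the last row of $\mS$, the augmented pair $([\mG_\mS\,|\,\e_{m-\ell,1}],[\mH_\mS\,|\,\u_\mS])$ is exactly a generator of $\mS$ for the \emph{invertible} operator $\nabla_{\ZZ_{m-\ell,1},\ZZ_{n-\ell,0}^t}$ (invertible because $\gcd(x^{m-\ell}-1,x^{n-\ell})=1$), so $r_0=\rank\big(\nabla_{\ZZ_{m-\ell,1},\ZZ_{n-\ell,0}^t}(\mS)\big)$ and the equivalence $r_0=0\iff\mS=0$ is immediate. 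Your route can be repaired by hand (propagating the antidiagonal relations forced by $\nabla_{\ZZ_{m-\ell,0},\ZZ_{n-\ell,0}^t}(\mS)=-\e_{m-\ell,1}\u_\mS^t$ does force $\mS=0$), but doing so is essentially re-deriving the paper's identity, so you should state it explicitly rather than defer it to ``inspecting the reconstruction''.

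On the complexity side your plan coincides with the paper's, with two small caveats. First, Theorem~\ref{theo:mainSylv2} applied to the products $\mA_{21}\mY_\ell$ and $\mA_{12}^t\mZ_\ell$ gives $O\big(\sMMa'(p/\alpha,\alpha)+\alpha\,\sM(p)\big)$, not $O\big(\sMMa'(p/\alpha,\alpha)+\alpha^{\omega-1}\sM(p)\lg(p)\big)$; the extra term you wrote is not clearly dominated by $\sMMa''(p/\alpha,\alpha)$ (for $\alpha$ close to $p$ the available bounds do not give this), but with the correct statement of the theorem the domination you need is exactly the paper's and holds. Second, ``generators of the submatrices obtained in negligible time'' glosses over a real (if routine) step: the paper borders $\mA_{21}$ with zeros to an $m\times n$ matrix and derives from $(\mG,\mH,\u)$ a length-$(\alpha+2)$ generator for a Hankel-type operator in time $O(\alpha\,\sM(p))$, then splits off the rank-$2$ part before invoking Theorem~\ref{theo:mainSylv2}; some such construction should be indicated, since a submatrix of a structured matrix is not literally structured for the same operator. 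The remaining items ($\u_\mS$ in $O(\alpha\,\sM(p))$, the rank $r_0$ in $O(\alpha^{\omega-1}p)$ via factoring the two thin matrices, and the final domination by $\sMMa''(p/\alpha,\alpha)$) match the paper.
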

\begin{proof}
By Lemma~\ref{lem:largest}, 
the call to \largest\ yields $(\ell,\mY_\ell,\mZ_\ell,\v_\ell)$ such that
$\ell$ is the order of the largest leading principal submatrix of $\mA$ having generic rank profile,
$\mY_\ell = -\mA_\ell^{-1}\mG_\ell$, $\mZ_\ell = \mA_\ell^{-t}\mH_\ell$, and $\v_\ell = \mA_\ell^{-t} \e_{\ell,1}$.
Thanks to the special shape of the matrices $\mY_\ell$ and $\mZ_\ell$, 
we can check as in~\cite{JeaMou10} that 
the expressions for $\mG_\mS$,  $\mH_\mS$, $\u_\mS$
lead to a $\nabla_{\ZZ_{m-\ell,1},\ZZ_{n-\ell,0}^t}$-generator of length $\alpha+1$ 
of the Schur complement $\mS = \mA_{22} - \mA_{21} \mA_\ell^{-1} \mA_{12}$:
$$\nabla_{\ZZ_{m-\ell,1},\ZZ_{n-\ell,0}^t}(\mS) =  [\mG_\mS | \e_{m-\ell,1}] [\mH_\mS |\u_\mS]^t.$$
Now, since both $\nabla_{\ZZ_{m-\ell,1},\ZZ_{n-\ell,0}^t}$ and $\mA_\ell$ are invertible,
$$r_0 := \rank\big(\nabla_{\ZZ_{m-\ell,1},\ZZ_{n-\ell,0}^t}(\mS)\big) = 0 
\qquad\Longleftrightarrow\qquad \mS = 0 
\qquad\Longleftrightarrow\qquad \ell = \rank(\mA).$$
Correctness then follows, since $\ell = \rank(\mA)$ if and only if $\mA$ has generic rank profile.

To bound the cost, recall first from Lemma~\ref{lem:largest} than 
$\ell$, $\mY_\ell$, $\mZ_\ell$, $\v_\ell$ are computed in time $O(\sMMa''(p/\alpha,\alpha))$.
Then we can obtain $\mG_\mS$ in time $O(\sMMa'(p/\alpha,\alpha))$
by evaluating the product $\mA_{21}\mY_\ell$ as follows.
Defining $\overline{\mA}_{21} = \left [ \begin{smallmatrix} 0 & 0 \\ 0 & \mA_{21} \end{smallmatrix}\right] \in \F^{m\times n}$
and $\overline{\mY}_\ell = \left [ \begin{smallmatrix} 0 \\ \mY_\ell \end{smallmatrix}\right] \in \F^{n \times \alpha}$,
we can deduce from $\mG,\mH,\u$ a $\nabla_{\ZZ_{m,1},\ZZ_{n,0}}$-generator  
$(\overline{\mG}_{21},\overline{\mH}_{21})$
of length $\alpha + 2$ of $\overline{\mA}_{21}$ in time $O(\alpha\,\M(p))$.
Writing $\overline{\mA}_{21} = \overline{\mA}_{21}' + \overline{\mA}_{21}''$
with $\overline{\mA}_{21}'$ of displacement rank $\alpha \le \min(m,n)$ and $\overline{\mA}_{21}''$ of displacement rank~$2$,
we can evaluate $\overline{\mA}_{21} \overline{\mY}_\ell$
in time $\sMM(\nabla_{\ZZ_{m,1},\ZZ_{n,0}^t}, \alpha,\alpha) + O(\alpha\,\M(p))$,
which by Theorem~\ref{theo:mainSylv2} is in $O(\sMMa'(p/\alpha,\alpha) + \alpha \, \M(p))$.
It remains to extract $\mA_{21} \mY_\ell$  
from $\overline{\mA}_{21} \overline{\mY}_\ell = \left [ \begin{smallmatrix} 0 \\ \mA_{21} \mY_\ell \end{smallmatrix}\right]$
and to add it to $\mG_2$, for an overhead of $O(\alpha p)$. 
Since $\alpha \, \M(p) = O(\sMMa'(p/\alpha,\alpha))$, we have thus obtained $\mG_\mS$ in time $O(\sMMa'(p/\alpha,\alpha))$.
The same cost bound can be derived for the matrix $\mH_\mS$ and,
on the other hand, the cost bound $O(\alpha \, \M(p))$ applies to the computation of $\u_\mS$
and follows from computing a generator of length $O(\alpha)$ of  $\mA_{12}^t \mA_\ell^{-t}$ and then
multiplying by and subtracting from a vector.
Finally, given $\mG_\mS$ $\mH_\mS$, $\u_\mS$, the rank $r_0$ can be deduced in time 
$O(\alpha^{\omega-1} p)$.
The conclusion for the cost then follows from the fact that $\alpha^{\omega-1} p$
and $\sMMa'(p/\alpha,\alpha)$ are both in $O(\sMMa''(p/\alpha,\alpha))$.
\end{proof}

\subsection{Generating inverses and solving linear systems} \label{ssec:algo_inv}
We conclude by showing how to apply algorithm \lpinv\ from the previous section 
to our initial problems ${\sf inv}(\mcL,\alpha)$
and ${\sf solve}(\mcL,\alpha)$. 
This corresponds to algorithms {\sf inv} and {\sf solve} given in Figures~\ref{fig:algo_inv} and~\ref{fig:algo_solve}.

\begin{figure}[htbp]
\begin{center}
\fbox{
\begin{minipage}{12.2cm}
Algorithm {\sf inv}($\mG,\mH,S$)
\medskip

{\it Input:} $(\mG,\mH)\in\F^{m\times \alpha}\times\F^{m\times \alpha}$ 
                  such that $\alpha\le m$ and $\nabla_{\ZZ_{m,0},\ZZ_{m,1}^t}(\mA)=\mG\mH^t$, \\
\textcolor{white}{{\it Input:}} and a finite subset $S\subset \F$. \\
{\it Output:} if not ``failure'', then $(\mY,\mZ) = (-\mA^{-1}\mG,\mA^{-t}\mH)$ or ``$\mA$ is singular''.\\

choose the entries of $\r_1, \r_2 \in\F^{m-1}$ uniformly at random from $S$ \\
$\v_1 := \big[1 \,|\, \r_1^t\big]^t$; $\v_2 := \big[1 \,|\, \r_2^t\big]^t$\\
$(\widetilde\mG,\widetilde\mH,\widetilde\u) := {\sf precond}(\mG,\mH,\v_1,\v_2)$\\
$(r,\widetilde\mY,\widetilde\mZ,*) := \lpinv(\widetilde\mG,\widetilde\mH,\widetilde\u)$\\
if $r \not\in\{0,1,\ldots,m\}$ then \\
\hspace*{0.5cm} return ``failure'' \\
else \\
\hspace*{0.5cm} if $r=m$ then \\
\hspace*{1cm} $\mY := \U(\v_2)^t \, {\widetilde\mY}^{\mapsto\alpha}$; 
                            $\mZ := \U(\v_1)^t\,{\widetilde\mZ}^{\mapsto\alpha}$\\
\hspace*{1cm} return $(\mY,\mZ)$ \\
\hspace*{0.5cm} else \\
\hspace*{1cm} return ``$\mA$ is singular'' \\
\end{minipage}
}
\caption{Algorithm {\sf inv}.}\label{fig:algo_inv}
\end{center}
\end{figure}

\begin{theorem} \label{thm:algo_inverse}
Algorithm {\sf inv} works correctly in time 
$O\left(\sMMa''\left(\frac{m}{\alpha},\alpha\right)\right).$
It makes $2m-2$ random choices in $\F$
and fails with probability less than~$1/2$ if $|S| \ge 2m(m+1)$.
\end{theorem}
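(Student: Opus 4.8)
The plan is to chain the three stages of algorithm {\sf inv}---the preconditioner {\sf precond}, the leading-principal-inverse routine \lpinv, and the final un-preconditioning step---relying on the guarantees of Lemmas~\ref{lem:preconditioning} and~\ref{lem:lp_inv}, and then to verify that the special shapes of the generators produced along the way are exactly what is needed to recover $-\mA^{-1}\mG$ and $\mA^{-t}\mH$. I would start by fixing the notation $\widetilde\mA = \U(\v_1)\,\mA\,\U(\v_2)^t$; since $\U(\v_1)$ and $\U(\v_2)$ are unit upper triangular Toeplitz matrices, hence invertible, $\widetilde\mA$ has the same rank as $\mA$ and is invertible if and only if $\mA$ is.

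Correctness then goes as follows. By Lemma~\ref{lem:preconditioning}, the call to {\sf precond} returns $(\widetilde\mG,\widetilde\mH,\widetilde\u)$ with $\nabla_{\ZZ_{m,0},\ZZ_{m,0}^t}(\widetilde\mA)=\widetilde\mG\widetilde\mH^t$, with $\widetilde\u^t$ the last row of $\widetilde\mA$, and---crucially---with the first $\alpha$ columns of $\widetilde\mG$ and $\widetilde\mH$ equal to $\U(\v_1)\mG$ and $\U(\v_2)\mH$ respectively. Feeding this triple to \lpinv\ and applying Lemma~\ref{lem:lp_inv} leaves two cases: either the returned value $r$ is ``failure''---which by the specification of \lpinv\ happens exactly when $\widetilde\mA$ does not have generic rank profile---and then {\sf inv} reports ``failure'', which is legitimate for a Las Vegas algorithm; or $r=\rank(\widetilde\mA)=\rank(\mA)$, together with $\widetilde\mY=-\widetilde\mA_r^{-1}\widetilde\mG_r$ and $\widetilde\mZ=\widetilde\mA_r^{-t}\widetilde\mH_r$. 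In the second case, if $r<m$ then $\mA$ is singular and {\sf inv} says so correctly, whereas if $r=m$ then $\widetilde\mA_r=\widetilde\mA$ and $\widetilde\mG_r=\widetilde\mG$, $\widetilde\mH_r=\widetilde\mH$, so the truncations to the first $\alpha$ columns satisfy ${\widetilde\mY}^{\mapsto\alpha}=-\widetilde\mA^{-1}\U(\v_1)\mG$ and ${\widetilde\mZ}^{\mapsto\alpha}=\widetilde\mA^{-t}\U(\v_2)\mH$. Substituting $\widetilde\mA^{-1}=\U(\v_2)^{-t}\mA^{-1}\U(\v_1)^{-1}$ then yields
\[
\mY=\U(\v_2)^t\,{\widetilde\mY}^{\mapsto\alpha}=-\mA^{-1}\mG
\qquad\text{and}\qquad
\mZ=\U(\v_1)^t\,{\widetilde\mZ}^{\mapsto\alpha}=\mA^{-t}\mH,
\]
which is the announced output, so {\sf inv} never returns a wrong inverse generator or a wrong singularity verdict.

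For the complexity I would add up the three contributions, all with $p=m$: {\sf precond} costs $O(\alpha\,\M(m))$ by Lemma~\ref{lem:preconditioning}; the call to \lpinv\ costs $O(\sMMa''(m/\alpha,\alpha))$ by Lemma~\ref{lem:lp_inv}, the generator length $\alpha+4=O(\alpha)$ affecting the bound only by a constant factor; and the final step---two applications of $O(\alpha)$ triangular Toeplitz matrix-by-vector products---costs $O(\alpha\,\M(m))$, while drawing and assembling $\v_1,\v_2$ costs $O(m)$. Since $\alpha\,\M(m)\le\sMMa'(m/\alpha,\alpha)\le\sMMa''(m/\alpha,\alpha)$, all these terms collapse into $O(\sMMa''(m/\alpha,\alpha))$. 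The number of random field elements drawn is the number of entries of $\r_1\in\F^{m-1}$ and $\r_2\in\F^{m-1}$ together, namely $2m-2$. Finally, {\sf inv} reports ``failure'' exactly when $\widetilde\mA$ fails to have generic rank profile, and by Lemma~\ref{lem:preconditioning} (whose hypotheses hold since $\v_1,\v_2$ have first entry $1$ and all other entries uniform in $S$) this happens with probability at most $r(r+1)/|S|$ with $r=\rank(\mA)\le m$, hence less than $1/2$ once $|S|\ge 2m(m+1)$.

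The only step that demands genuine care is the shape bookkeeping in the correctness argument: one must propagate the invariant ``the first $\alpha$ columns of the current generator equal $\U(\v_i)$ times the corresponding original generator'' first through {\sf precond} and then through \lpinv, so that the pre- and post-multiplications by $\U(\v_1)$ and $\U(\v_2)$ cancel in precisely the right way. Given Lemmas~\ref{lem:preconditioning} and~\ref{lem:lp_inv} (and Theorem~\ref{theo:mainSylv2}, on which \lpinv\ relies for its cost), this is careful bookkeeping rather than a genuine obstacle, so I expect the whole argument to be quite short.
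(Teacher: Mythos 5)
Your proposal is correct and follows essentially the same route as the paper's proof: chaining Lemma~\ref{lem:preconditioning} and Lemma~\ref{lem:lp_inv}, exploiting the shape of the first $\alpha$ columns of $\widetilde\mG$ and $\widetilde\mH$ so that the pre-/post-multiplications by $\U(\v_1)$ and $\U(\v_2)$ cancel when $r=m$, absorbing the $O(\alpha\,\sM(m))$ overheads into $O(\sMMa''(m/\alpha,\alpha))$, and invoking the Kaltofen--Saunders bound for the failure probability. Your explicit computation via $\widetilde\mA^{-1}=\U(\v_2)^{-t}\mA^{-1}\U(\v_1)^{-1}$ merely spells out the cancellation the paper states more tersely.
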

\begin{proof}
By applying Lemma~\ref{lem:preconditioning} to the case $m=n$, 
we see that $\widetilde\u^t$ is the last row of $\widetilde\mA = \U(\v_1)\mA\U(\v_2)^t$
and that 
$\widetilde\mG$ and $\widetilde\mH$ are $m\times (\alpha+4)$ matrices such that
$$
\ZZ_{m,0}\,\widetilde\mA - \widetilde\mA\,\ZZ_{m,0}^t = \widetilde\mG \widetilde\mH^t,
\qquad
\widetilde\mG^{\mapsto\alpha} = \U(\v_1)\mG, \qquad \widetilde\mH^{\mapsto\alpha} = \U(\v_2)\mH.$$
If $r \not\in\{0,1,\ldots,m\}$ then $r =$ ``failure'', and this is what we return.
Assume now that $r \in \{0,1,\ldots,m\}$. 
In this case, preconditioning has ensured that $\widetilde\mA$ has generic rank profile.
The number $r$ produced by \lpinv\ thus satisfies $r = \rank(\widetilde\mA) = \rank(\mA)$,
and $\mA$ is singular if and only if $r\ne m$.
When $r=m$, we have $\widetilde\mA = \widetilde\mA_r$, so 
the matrices $\widetilde\mY$ and $\widetilde\mZ$ produced by \lpinv\ satisfy
$\widetilde\mY = -\widetilde\mA^{-1}\widetilde\mG$
and
$\widetilde\mZ = \widetilde\mA^{-t}\widetilde\mH$.
Using the special shape of the first $\alpha$ columns of $\widetilde\mG$ and $\widetilde\mH$,
we conclude that the returned matrices $\mY$ and $\mZ$ are $-\mA^{-1}\mG$ and $\mA^{-t}\mH$, as wanted.
(Note that although it is produced by \lpinv,
the first row of the inverse of $\widetilde\mA_r$ is not needed here and denoted by $*$; 
we will need it, however, when solving linear systems at the end of this section.)

Applying Lemmas~\ref{lem:preconditioning} and~\ref{lem:lp_inv} with $p=m$ shows that
the cost of calling {\sf precond} and \lpinv\ is $O(\alpha \, \M(m) + \sMMa''(m/\alpha,\alpha))$;
on the other hand, the Toeplitz structure of $\U(\v_1)$ and $\U(\v_2)$ implies that 
$\mY$ and $\mZ$ can be deduced in time $O(\alpha \, \M(m))$ 
from  $\v_1$, $\v_2$, $\widetilde\mY$, and $\widetilde\mZ$.
Writing $\overline{\alpha}$ and $\overline{m/\alpha}$ 
for the smallest integer powers of two greater than or equal to 
$\alpha$ and $m/\alpha$,
we have
$\sMMa''(m/\alpha,\alpha) \ge \sMMa'\big(\overline{m/\alpha},\alpha\big)
\ge \overline{\alpha}\, \sMMa\big(\overline{\alpha} \cdot \overline{m/\alpha},1\big)
= \overline{\alpha} \, \M\big(\overline{\alpha}\cdot \overline{m/\alpha}\big) \ge \alpha \, \M(m)$,
from which the claimed cost bound follows.

Finally, by Lemma~\ref{lem:preconditioning}, 
the preconditioned matrix $\widetilde\mA$ has generic rank profile with probability 
$\mathcal{P} \ge 1 - \rank(\mA) \cdot (\rank(\mA)+1)/|S|$.
Since $\rank(\mA) \le m$, we have $\mathcal{P} \ge 1 -m(m+1)/|S|$, so that $|S| \ge 2m(m+1)$ implies $\mathcal{P} \ge 1/2$.
\end{proof}

For solving $\mA\,\x = \b$, we work on the equivalent (preconditioned)
system $\widetilde\mA \, \widetilde\x = \widetilde\b$ such that
$\widetilde\mA = \U(\v_1) \mA \U(\v_2)^t$ and $\widetilde\b = \U(\v_1) \b$,
for which any solution $\widetilde\x$ yields a solution $\x = \U(\v_2)^t \, \widetilde\x $.
Algorithm~{\sf solve} in Figure~\ref{fig:algo_solve} uses the following notation:
writing as before $r$ for the rank of $\mA$, we partition $\widetilde\mA$ and $\widetilde\b$ into blocks as 
$$\widetilde \mA = \begin{bmatrix} \widetilde\mA_r & \widetilde\mA_{12} \\ 
\widetilde\mA_{21} & \widetilde\mA_{21}\,\widetilde\mA_r^{-1}\,\widetilde\mA_{12} \end{bmatrix},
\qquad 
\widetilde\b = \begin{bmatrix} \widetilde\b_1 \\ \widetilde\b_2 \end{bmatrix}
$$
with 
$\mA_{12} \in \F^{r\times (n-r)}$, 
$\widetilde\mA_{21}\in \F^{(m-r)\times r}$,
$\widetilde\b_1\in\F^r$ and
$\widetilde\b_2\in\F^{m-r}$.
\begin{figure}[htbp]
\begin{center}
\fbox{
\begin{minipage}{12.2cm}
Algorithm {\sf solve}($\mG,\mH,\b,S$)

\medskip
 
{\it Input:} $(\mG,\mH)\in\F^{m\times\alpha} \times \F^{n\times\alpha}$ 
                  such that $\alpha \le \min(m,n)$ and $\nabla_{\ZZ_{m,0},\ZZ_{n,1}^t}(\mA) = \mG\mH^t$, \\
\textcolor{white}{{\it Input:}} $\b\in\F^m$, and a finite subset $S\subset \F$. \\
{\it Output:} if not ``failure'', then a nontrivial solution $\x\in\F^n$ to $\mA\,\x=\b$ \\
\hspace*{1.2cm} or ``no solution exists.''\\

choose the entries of $\r_1\in\F^{m-1}$ and $\r_2\in\F^{n-1}$ uniformly at random from $S$ \\
$\v_1 := \big[1 \,|\, \r_1^t\big]^t$; $\v_2 := \big[1 \,|\, \r_2^t\big]^t$\\
$(\widetilde\mG,\widetilde\mH,\widetilde\u) := {\sf precond}(\mG,\mH,\v_1,\v_2)$\\
$(r,\widetilde\mY,\widetilde\mZ,\widetilde\v) := \lpinv(\widetilde\mG,\widetilde\mH,\widetilde\u)$\\ [1mm]
if $r \not\in\{0,1,\ldots,m\}$ then \\
\hspace*{0.5cm} return ``failure'' \\
else \\[1mm]
\hspace*{0.4cm} $\left [\begin{smallmatrix} \widetilde\b_1 \\ \widetilde\b_2 \end{smallmatrix} \right ] := \U(\v_1) \b$ \\
\hspace*{0.5cm} $\widetilde \x_1 := \widetilde\mA_r^{-1} \widetilde\b_1$ \\
\hspace*{0.5cm} if $\widetilde\mA_{21}  \widetilde \x_1 = \widetilde{\b}_2$ then \\
\hspace*{1cm} $\widetilde \x := \left [\begin{smallmatrix} \widetilde \x_1+\widetilde\mA_r^{-1}\widetilde\mA_{12}\,\e_{n-r,1} \\ 
-\e_{n-r,1}\end{smallmatrix} \right ]$ \\
\hspace*{1cm} $\x := \U(\v_2)^t \,\widetilde\x$\\
\hspace*{1cm} return $\x$\\
\hspace*{0.5cm} else\\
\hspace*{1cm} return ``no solution exists'' \\
\end{minipage}
}
\caption{Algorithm {\sf solve}.}\label{fig:algo_solve}
\end{center}
\end{figure}

\begin{theorem} \label{thm:algo_solve}
Algorithm {\sf solve} works correctly in time $O\left(\sMMa''\left(\frac{p}{\alpha},\alpha\right)\right)$ with $p = \max(m,n)$.
It makes $m+n-2$ random choices in $\F$
and fails with probability less than~$1/2$ if $|S| \ge 2q(q+1)$ with $q = \min(m,n)$.
\end{theorem}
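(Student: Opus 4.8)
The plan is to mirror the proof of Theorem~\ref{thm:algo_inverse} almost line for line; the only genuinely new ingredient is the passage from a compact representation of $\widetilde\mA_r^{-1}$ to a (possibly nontrivial) solution of the preconditioned system, which is a Schur-complement computation. All the heavy lifting is already done by Lemmas~\ref{lem:preconditioning} and~\ref{lem:lp_inv}, and the rest is bookkeeping.

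\textbf{Correctness.} First I would apply Lemma~\ref{lem:preconditioning} (without assuming $m=n$) to see that {\sf precond} returns a $\nabla_{\ZZ_{m,0},\ZZ_{n,0}^t}$-generator $(\widetilde\mG,\widetilde\mH)$ of $\widetilde\mA=\U(\v_1)\,\mA\,\U(\v_2)^t$ together with $\widetilde\u^t$, its last row, and that $\rank(\widetilde\mA)=\rank(\mA)$. By Lemma~\ref{lem:lp_inv}, the call to \lpinv\ either returns $r=$ ``failure'' (and then so does {\sf solve}) or returns $r=\rank(\mA)$ along with $(\widetilde\mY,\widetilde\mZ,\widetilde\v)=(-\widetilde\mA_r^{-1}\widetilde\mG_r,\ \widetilde\mA_r^{-t}\widetilde\mH_r,\ \widetilde\mA_r^{-t}\e_{r,1})$, the latter occurring exactly when $\widetilde\mA$ has generic rank profile. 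In that case the leading $r\times r$ block $\widetilde\mA_r$ is invertible and, since $\rank(\widetilde\mA)=r$, the Schur complement of $\widetilde\mA_r$ in $\widetilde\mA$ vanishes, which forces precisely the block form of $\widetilde\mA$ displayed before the statement. Substituting this form into $\widetilde\mA\widetilde\x=\widetilde\b$ and writing $\widetilde\x_1:=\widetilde\mA_r^{-1}\widetilde\b_1$, one checks that the system is consistent if and only if $\widetilde\mA_{21}\widetilde\x_1=\widetilde\b_2$, and that when this holds the vector $\widetilde\x=\bigl[\widetilde\x_1+\widetilde\mA_r^{-1}\widetilde\mA_{12}\e_{n-r,1}\,;\,-\e_{n-r,1}\bigr]$ is a solution (the trailing block being free, so any value works, this one being convenient). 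Since $\widetilde\mA=\U(\v_1)\mA\U(\v_2)^t$ and $\widetilde\b=\U(\v_1)\b$ with $\U(\v_1),\U(\v_2)$ invertible, $\x=\U(\v_2)^t\widetilde\x$ solves $\mA\x=\b$, and conversely every solution of $\mA\x=\b$ is of this form; moreover, if $\b=0$ then $\widetilde\x_1=0$, so $\widetilde\x=\bigl[\widetilde\mA_r^{-1}\widetilde\mA_{12}\e_{n-r,1}\,;\,-\e_{n-r,1}\bigr]\neq 0$ whenever $r<n$ (the column rank is not full), hence $\x\neq 0$; the degenerate case $r=n$, where $\widetilde\mA_{12}$ and the trailing block are empty, simply returns the unique solution $\x=\U(\v_2)^t\widetilde\x_1$, as required by the specification.

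\textbf{Cost.} By Lemma~\ref{lem:preconditioning}, {\sf precond} costs $O(\alpha\,\M(p))$, and by Lemma~\ref{lem:lp_inv}, \lpinv\ costs $O(\sMMa''(p/\alpha,\alpha))$. Every remaining operation is $O(1)$ Toeplitz matrix--vector products (with $\U(\v_1)$ and $\U(\v_2)^t$, costing $O(\M(p))$ each) together with $O(1)$ structured matrix--vector products: the ``compression-free'' reconstruction of~\cite{JeaMou10}, applied to $(\widetilde\mG_r,\widetilde\mH_r,\widetilde\mY,\widetilde\mZ,\widetilde\v)$, lets one multiply $\widetilde\mA_r^{-1}$ by any vector in time $O(\alpha\,\M(p))$, and from $(\widetilde\mG,\widetilde\mH)$ one likewise obtains generators of length $O(\alpha)$ of the submatrices $\widetilde\mA_{12}$ and $\widetilde\mA_{21}$ for the appropriate smaller shift operators; thus $\widetilde\x_1=\widetilde\mA_r^{-1}\widetilde\b_1$, $\widetilde\mA_{21}\widetilde\x_1$, the first column $\widetilde\mA_{12}\e_{n-r,1}$, and $\widetilde\mA_r^{-1}(\widetilde\mA_{12}\e_{n-r,1})$ are each computed in $O(\alpha\,\M(p))$. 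As in the proof of Theorem~\ref{thm:algo_inverse}, $\alpha\,\M(p)=O(\sMMa''(p/\alpha,\alpha))$ (since $\sMMa''(p/\alpha,\alpha)\ge\overline\alpha\,\M(\overline\alpha\cdot\overline{p/\alpha})\ge\alpha\,\M(p)$, with $\overline\alpha$ and $\overline{p/\alpha}$ the smallest powers of two at least $\alpha$ and $p/\alpha$), so the total cost is $O(\sMMa''(p/\alpha,\alpha))$.

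\textbf{Randomness, failure probability, and the hard point.} Algorithm {\sf solve} draws $\r_1\in\F^{m-1}$ and $\r_2\in\F^{n-1}$ uniformly from $S$, i.e.\ $m+n-2$ elements; by Lemma~\ref{lem:preconditioning} (with $r=\rank(\mA)\le q=\min(m,n)$), $\widetilde\mA$ has generic rank profile -- equivalently \lpinv\ does not report ``failure'' -- with probability at least $1-r(r+1)/|S|\ge 1-q(q+1)/|S|$, which is $\ge 1/2$ as soon as $|S|\ge 2q(q+1)$; and {\sf solve} never returns a wrong answer (only possibly ``failure''), this Las Vegas property being inherited from the specification of \lpinv. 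The step I expect to require the most care is the correctness argument: verifying that the block form of $\widetilde\mA$ forced by generic rank profile makes the test $\widetilde\mA_{21}\widetilde\x_1=\widetilde\b_2$ equivalent to solvability in \emph{all} configurations of $r$ relative to $m$ and $n$, and that the trailing unit-vector component indeed yields a genuinely nonzero solution exactly when the column rank is deficient -- this is essentially~\cite[\S4]{KaSa91}, but must be matched against the present partition and the ``$\e_{n-r,1}$'' convention.
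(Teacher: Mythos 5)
Your proposal is correct and follows essentially the same route as the paper: precondition via Lemma~\ref{lem:preconditioning}, call \lpinv, use the vanishing Schur complement forced by generic rank profile to reduce consistency to the test $\widetilde\mA_{21}\widetilde\x_1=\widetilde\b_2$, build the solution with the trailing $-\e_{n-r,1}$ block (which gives nontriviality when $r<n$), and bound the cost by the same lemmas together with $\alpha\,\M(p)=O(\sMMa''(p/\alpha,\alpha))$. The probability argument (bounding $\rank(\mA)\le q=\min(m,n)$ in the Kaltofen--Saunders estimate) is exactly what the paper means by ``the same as for inversion,'' and you even spell it out slightly more explicitly.
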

\begin{proof}
Recalling that $\mA \x = \b$ is equivalent to $\widetilde \mA \, \widetilde\x = \widetilde\b$
with $\widetilde\mA = \U(\v_1)\mA\U(\v_2)^t$, $\widetilde \x = \U(\v_2)^{-t}\x$ and $\widetilde\b = \U(\v_1)\b$,
we see that the algorithm begins by generating the matrix $\widetilde\mA$.
If $r \in\{0,1,\ldots,m\}$, then $\widetilde\mA$ has generic rank profile and thus $r = \rank(\widetilde \mA) = \rank(\mA)$.
This implies $\widetilde \mA_{22} - \widetilde\mA_{21}\widetilde\mA_r^{-1}\widetilde\mA_{12} = 0$
and, partitioning $\widetilde\b$ conformally with $\widetilde\mA$, 
we deduce that the linear system $\widetilde\mA \, \widetilde\x = \widetilde\b$ is equivalent to 
$$
\begin{bmatrix} \widetilde\mA_{r} & \widetilde\mA_{12} \\[1mm] 0 & 0 \end{bmatrix}
\widetilde\x 
= \begin{bmatrix} \I_r & 0 \\[1mm] -\widetilde\mA_{21} \widetilde\mA_r^{-1}  & \I_{m-r} \end{bmatrix}
\!\!
\begin{bmatrix} \widetilde \b_1 \\[1mm] \widetilde \b_2 \end{bmatrix}\!.
$$
If $\widetilde\b_2$ is such that $-\widetilde\mA_{21} \widetilde\x_1 + \widetilde\b_2 \ne 0$
for $\widetilde \x_1 = \widetilde\mA_r^{-1} \widetilde\b_1$,
then no solution exists.
Else, it is easily checked that any vector of the form
$$\widetilde\x(\v) = \begin{bmatrix} \widetilde\x_1 + \widetilde\mA_r^{-1} \widetilde\mA_{12}\v \\[1mm] -\v\end{bmatrix}
\quad \mbox{with} \quad \v \in \F^{n-r}$$
is a solution to $\widetilde\mA \, \widetilde\x = \widetilde \b$;
furthermore, taking $\v = e_{n-r,1} \ne 0$ when $n-r > 0$ ensures that this solution is nontrivial.
Pre-multiplying this solution by $\U(\v_2)^t$ then yields a nontrivial solution to the original system,
so correctness follows.

By Lemmas~\ref{lem:preconditioning} and~\ref{lem:lp_inv}, 
calling {\sf precond} and \lpinv\ uses $O(\alpha \, \M(p) + \sMMa''(p/\alpha,\alpha))$ operations in $\F$.
Then, since $\U(\v_1)$ and $\U(\v_2)$ are Toeplitz matrices,
we can deduce $\widetilde\b = \U(\v_1) \b$ and $\x= \U(\v_2)^t \,\widetilde\x$ 
from $\v_1$, $\v_2$, $\b$, $\widetilde\x$ in time $O(\M(p))$.
Finally, given the generators $(\widetilde\mG,\widetilde\mH,\widetilde\v)$ and
$(\widetilde\mY,\widetilde\mZ,\widetilde\v)$ of $\widetilde\mA$ and $\widetilde\mA_r^{-1}$,
we can generate the submatrices $\widetilde\mA_{12}$ and $\widetilde\mA_{21}$
and perform the matrix-vector products 
$\widetilde\mA_r^{-1}\widetilde\b_1$, $\widetilde\mA_{21} \widetilde \x_1$, 
and $\widetilde\mA_r^{-1}\widetilde\mA_{12}\,\e_{n-r,1}$ in time $O(\alpha\,\M(p))$.
Recalling that $\alpha \M(p) \le \sMMa''(p/\alpha,\alpha)$, 
we conclude that the total cost is thus in $O(\sMMa''(p/\alpha,\alpha))$.

The probability analysis is the same as for inversion.
\end{proof}

{\it Remark.}
If one wants 
to sample uniformly the solution manifold of $\mA \, \x = \b$,
then it suffices to replace the unit vector $\e_{n-r,1}$ in algorithm {\sf solve} 
by a vector $\r_3\in\F^{n-r}$ whose entries are selected uniformly at random from the subset $S$.
This way of constructing $\x$ corresponds to the technique introduced 
by Kaltofen and Saunders in~\cite[Theorem~4]{KaSa91}, but requires
only $n-r$ additional random choices instead of $n$.

\bibliographystyle{siamplain}
% \bibliography{biblio}

\appendix

\newpage
\section{Proof that {$\mcL(\mA)$} and $\mcL'(\mA^{-1})$ have the same rank for $\mcL$ and $\mcL'$ as in~(\ref{eq:mcLP})}
\label{app:proof-inverse-is-structured}
Consider first the Sylvester case, where 
$\mcL(\mA) = \mM\mA - \mA\mN$ and $\mcL'(\mA^{-1}) = \mN\mA^{-1} - \mA^{-1}\mM$.
If $\mcL(\mA) = \mG \mH^t$, then pre- and post-multiplying both sides of this equality by $\mA^{-1}$ 
gives $\mA^{-1}\mM - \mN\mA^{-1} = \mA^{-1}\mG \mH^t\mA^{-1}$,
so that $\mcL'(\mA^{-1})$ equals $-\mA^{-1}\mG (\mA^{-t} \mH)^t$ and thus has the same rank as $\mcL(\mA)$.

Consider now the Stein case, where $\mcL(\mA) = \mA - \mM\mA\mN$ and $\mcL'(\mA^{-1}) = \mA^{-1} - \mN\mA^{-1}\mM$.
Defining the matrix 
$$\mB = \begin{bmatrix}
 \mA & \mM \\ \mN & \mA^{-1}
\end{bmatrix}\!,$$
we have
$$
\begin{bmatrix}\I & -\mM\mA\\ & \I \end{bmatrix}\mB \begin{bmatrix}\I & \\ -\mA\mN& \I \end{bmatrix}
= \diag\big(\mA-\mM\mA\mN, \mA^{-1}\big)
$$
and
$$
\begin{bmatrix}\I &\\ -\mN\mA^{-1} & \I \end{bmatrix}\mB \begin{bmatrix}\I & -\mA^{-1}\mM\\ & \I \end{bmatrix}
= \diag\big(\mA, \mA^{-1}-\mN\mA^{-1}\mM\big).
$$
Since the four matrices applied to $\mB$ are invertible,
we deduce that
$$\rank(\mB) = \rank(\mA-\mM\mA\mN) + \rank(\mA^{-1}) = \rank(\mA) + \rank(\mA^{-1}-\mN\mA^{-1}\mM).$$
Using $\rank(\mA) = \rank(\mA^{-1})$, we see again that $\mcL(\mA)$ and $\mcL'(\mA^{-1})$ have the same rank.

\section{Proof of Lemma~\ref{lem:inverse-for-stein}}
\label{app:proof-lem:inverse-for-stein}
Fix $i$ and $j$. For all $\ell \ge 1$,~\cite[Theorem~4.7]{PanWan03} gives
$$ \mA'_{i,j} - \CC_{P_i}^\ell\, \mA'_{i,j}\, (\CC_{Q_j}^t)^\ell = \sum_{k \le
  \alpha}\K(\CC_{P_i}, \g_{i,k},\ell)\, \K(\CC_{Q_j}, \h_{j,k},\ell)^t$$
with $\K(\CC_{P_i},\g_{i,k},\ell)$ and $\K(\CC_{Q_j},
\h_{j,k},\ell)$ as in the proof of Lemma~\ref{lemma:Aij}.  Writing
$Q_j=q_{j,0} + \cdots +q_{j,n_j} x^{n_j}$, and multiplying the former
equality on the left by $q_{j,\ell}\, \CC_{P_i}^{n_j-\ell}$, we get, for $1
\le \ell \le n_j$,
\begin{alignat*}{3}
\hspace{-1cm}	\phantom{\sum_{k \le \alpha}} q_{j,\ell} \,\CC_{P_i}^{n_j-\ell}\, \mA'_{i,j} - q_{j,\ell}\, \CC_{P_i}^{n_j} \,  \mA'_{i,j}\, (\CC_{Q_j}^t)^\ell   \\ =
	 \sum_{k \le \alpha} \CC_{P_i}^{n_j-\ell}  \K(\CC_{P_i},\g_{i,k},\ell)& \, q_{j,\ell} \,\K(\CC_{Q_j}, \h_{j,k},\ell)^t \\
	  =  \sum_{k \le \alpha} & \K(\CC_{P_i}, \g_{i,k},n_j)\,\J_{n_j} \, q_{j,\ell}\, \J_{\ell,n_j}\, \K(\CC_{Q_j}, \h_{j,k},n_j)^t,
\end{alignat*}
since the last $\ell$ columns of $\K(\CC_{P_i}, \g_{i,k},n_j)$ are
precisely $\CC_{P_i}^{n_j-\ell}\, \K(\CC_{P_i},\g_{i,k},\ell)$; note that
the equality also holds for $\ell=0$. 
Summing over all
$\ell=0,\dots,n_j$, and using the fact that $Q_j(\CC_{Q_j}^t)=0$, we deduce
\begin{align*}
\widetilde{Q_j}(\CC_{P_i})\, \mA'_{i,j} 
&= \sum_{k \le \alpha}\K(\CC_{P_i},\g_{i,k},n_j)\, \J_{n_j}\, \Y_{Q_j}\, \K(\CC_{Q_j}, \h_{j,k},n_j)^t.
\end{align*}
The rest of the proof now follows exactly that of Lemma~\ref{lemma:Aij}.

\end{document}